\documentclass[english, 11pt]{article}
\usepackage{base}
\usepackage{macros}
\usepackage{minimacros}
\usepackage[utf8]{inputenc} 
\usepackage{lipsum}         
\usepackage{geometry}       
\usepackage{hyperref}       
\usepackage{graphicx}       
\usepackage{booktabs}       
\usepackage{amssymb}
\usepackage{amsfonts}
\usepackage{amsthm}
\usepackage{amsmath}
\usepackage{thm-restate}
\usepackage{thmtools}

\newcommand{\cCF}{\cP_{\text{Factors}}}
\newcommand{\var}{\textrm{var}}

\setlist[enumerate]{label=(\arabic*)}

\makeatletter
\@ifundefined{c@theorem}{}{\renewcommand*{\theHtheorem}{\thesection.\arabic{theorem}}}
\@ifundefined{c@lemma}{}{\renewcommand*{\theHlemma}{\thesection.\arabic{lemma}}}
\@ifundefined{c@corollary}{}{\renewcommand*{\theHcorollary}{\thesection.\arabic{corollary}}}
\@ifundefined{c@proposition}{}{\renewcommand*{\theHproposition}{\thesection.\arabic{proposition}}}
\@ifundefined{c@claim}{}{\renewcommand*{\theHclaim}{\thesection.\arabic{claim}}}
\@ifundefined{c@definition}{}{\renewcommand*{\theHdefinition}{\thesection.\arabic{definition}}}
\@ifundefined{c@remark}{}{\renewcommand*{\theHremark}{\thesection.\arabic{remark}}}
\@ifundefined{c@fact}{}{\renewcommand*{\theHfact}{\thesection.\arabic{fact}}}
\@ifundefined{c@equation}{}{\renewcommand*{\theHequation}{\thesection.\arabic{equation}}}
\makeatother

\title{On Factorization of Sparse Polynomials of Bounded Individual Degree}
\author{Aminadav Chuyoon\thanks{School of Computer Science and AI, Tel Aviv University. This research was funded by the European Union (ERC, EACTP, 101142020). Views and opinions expressed are however those of the author(s) only and do not necessarily reflect those of the European Union or the European Research Council Executive Agency. Neither the European Union nor the granting authority can be held responsible for them.}\\
\texttt{chuyoon1@mail.tau.ac.il}
\and
Amir Shpilka\footnotemark[1]\\
\texttt{shpilka@tauex.tau.ac.il}}
\date{\today}
\date{}

\begin{document}

\maketitle

\begin{abstract}
    We study sparse polynomials with bounded individual degree and the class of their factors.
    In particular we obtain the following algorithmic and structural results:
    \begin{enumerate}
        \item A deterministic polynomial-time algorithm for finding \emph{all the sparse divisors} of a
        sparse polynomial with bounded individual degree. As part of this, we establish the first upper bound on the number of non-monomial irreducible factors of such polynomials. 
        
        \item  \sloppy A $\poly(n,s^{d\log \ell})$-time algorithm for recovering $\ell$ irreducible
        $s$-sparse polynomials of bounded individual degree $d$ from blackbox access to their product (which is not necessarily sparse). This partially resolves
        a question posed in \cite{DuttaST24}. 
        In particular, when $\ell=O(1)$, the algorithm runs in polynomial time.
        
        \item \sloppy Deterministic algorithms for factoring a \emph{product} of $s$-sparse polynomials of bounded individual degree $d$ from blackbox access. Over fields of characteristic zero or sufficiently large, the algorithm runs in $\poly(n,s^{d^3\log n})$-time; over arbitrary fields it runs in  $\poly(n,{(d^2)!},s^{d^5\log n})$-time. This improves upon the algorithm of \cite{bhargava2020deterministic},
        which runs in $\poly(n,s^{d^7\log n})$-time and applies only to a single sparse polynomial of bounded individual degree. In the case where the input is a single sparse polynomial,
        we give an algorithm that runs in $\poly(n,s^{d^2\log n})$-time.

        \item Given blackbox access to a \emph{product}
        of (not necessarily sparse or irreducible) \emph{factors of sparse polynomials of bounded individual degree}, we give a deterministic polynomial-time algorithm for finding all irreducible sparse multiquadratic factors of it (along with their multiplicities). This generalizes
        the algorithms of \cite{volkovich2015deterministically} and \cite{volkovich2017some}. We also show how to decide whether such a product is a complete power (in case it is defined over a field of zero or large enough characteristic), extending the algorithm of \cite{bisht2025solving}.
    \end{enumerate}
    
    Our algorithms most naturally apply over fields of zero or sufficiently large characteristic. To handle arbitrary fields, we introduce the notion of \emph{primitive divisors} for a class of polynomials, which may be of independent interest. This notion enables us to adapt ideas of \cite{bisht2025solving} and remove characteristic assumptions from most of our algorithms.
\end{abstract}

\thispagestyle{empty}
\newpage

\tableofcontents

\thispagestyle{empty}
\newpage

\clearpage
\pagenumbering{arabic}

\section{Introduction}
\subsection{Background and Motivation}\label{sec:background}
This paper is motivated by numerous open problems concerning the factorization and complexity of factors of multivariate sparse polynomials. These are $n$-variate polynomials of degree $\poly(n)$ whose number of monomials is also $\poly(n)$. In particular, this is the class of polynomials that have polynomial size $\Sigma\Pi$ circuits, which is one of the simplest classes from the algebraic complexity point of view.

Despite being structurally simple, sparse polynomials form a rich and non-trivial class of polynomials that has been extensively studied. Efficient deterministic algorithms for polynomial identity testing and reconstruction were developed for this class \cite{grigoriev1987matching,ben1988deterministic,GrigorievKS90,clausen1991zero,KS01}.
In particular, the work of Klivans and Spielman \cite{KS01} introduced a special reduction from multivariate sparse polynomials to univariate polynomials of polynomial degree, over any field, that has since found many applications, mainly for questions concerning polynomial identity testing and reconstruction. See the surveys \cite{saxena2009progress,SY10,saxena2014progress,DuttaGhosh-survey}. 

In this work we are mostly interested in understanding factors of sparse polynomials. The seminal work of Kaltofen \cite{kaltofen1985computing} shows that factors of sparse polynomials have polynomial-size algebraic circuits. A recent breakthrough result of  \cite{bhattacharjee2025closure} implies that they even have polynomial-size bounded-depth circuits. 

However, an intriguing  question that remains unsolved for over forty years concerns the complexity of factors of sparse polynomials in terms of their number of monomials. In other words, how many monomials can a factor of a sparse polynomial have.  
The following example of von zur Gathen and Kaltofen demonstrates that the sparsity of irreducible factors of polynomials may be super-polynomial.

\begin{example}[Example of \cite{von1985factoring}]\label{ex:qp-sparse}
Consider 	
\[
g(\vx) = \prod_{i=1}^{ n} (x_i - 1)
\quad\text{and}\quad
h(\vx) = \prod_{i=1}^{ n} (1 + x_i + \cdots + x_i^{n-1})+n.
\]	
It is not hard to prove that $h$ is irreducible over $\Q$.
Let
\[
f(\vx) = g\cdot h = \prod_{i=1}^{  n} (x_i^n - 1) + n \prod_{i=1}^{ n} (x_i - 1).
\]
Then, $f$ has $(2^{n+1}-1)$ monomials, and an irreducible factor $h$ of sparsity $n^{n}$, which is super-polynomial in the sparsity of $f$. 
\end{example}

This led them to pose the following question.
\begin{question}[{\cite{von1985factoring}}]\label{q:when-factors-sparse}
    Can the output size for the factoring problem  actually be more than quasi-polynomial in the sparsity of the input?
\end{question}

\sloppy This question is still wide open with only several special cases solved.
In 
\cite{volkovich2015deterministically} Volkovich proved that multilinear factors of sparse polynomials are sparse.  In a follow-up work \cite{volkovich2017some} he proved that all factors of multiquadratic sparse polynomials are themselves sparse. This result was extended to higher individual degrees by Bhargava, Saraf, and Volkovich \cite{bhargava2020deterministic}, who
considered sparse polynomials with bounded individual degree $d$, and proved a quasi-polynomial upper bound on the sparsity of factors of such polynomials. In particular, if $f$ is an $s$-sparse $n$-variate polynomial with individual degrees bounded by $d$, then every factor of $f$ has at most $s^{O(d^2 \log n)}$ monomials (see \Cref{thm: BSV bound}). This is the first nontrivial sparsity bound for factors in the regime $d>2$. 
While the result of \cite{bhargava2020deterministic} gives a quasi-polynomial upper bound on sparsity of factors of polynomials with bounded individual degree, it is believed that the "real" bound should be polynomial. 
In \cite{bisht2025solving}, the authors provide some
evidence for that by giving polynomial-time algorithms for various problems for which the existence of such algorithms is implied by the conjectured polynomial upper bound on sparsity of factors.

In their work von zur Gathen and Kaltofen gave a randomized algorithm for factorization of sparse polynomials and noted another obstacle towards obtaining deterministic factorization algorithms: 
\begin{description}
    \item ``Also note that there does not seem to be a feasible way of checking deterministically whether the output is correct, i.e., whether $f=\prod_i {f_i}^{e_i}$''  \cite{von1985factoring}.
\end{description} 
In particular, the following question is open.

\begin{question}[{\cite[Question 1]{DuttaST24}}]\label{q:factoring-sparse-irred-product}
    Can we find the factorization of polynomials whose irreducible factors are sparse?
\end{question}

In fact, Dutta et al. also asked the following special case of \Cref{q:factoring-sparse-irred-product}.

\begin{question}[{\cite[Section 5]{DuttaST24}}]\label{q:dst-factor-product-nsd}
    Given a blackbox computing the product of sparse irreducible polynomials $f_i$ with bounded individual degree, find the $f_i$'s in deterministic polynomial time.
\end{question}

The aforementioned results of \cite{volkovich2015deterministically,volkovich2017some,bhargava2020deterministic} also have consequences for \Cref{q:factoring-sparse-irred-product} and \Cref{q:dst-factor-product-nsd}. 
In \cite{volkovich2015deterministically} Volkovich gave a deterministic polynomial-time algorithm for factorization of sparse polynomials
whose irreducible factors are all multilinear (hence sparse) polynomials. In \cite{volkovich2017some} he gave an efficient deterministic algorithm for factoring multiquadratic sparse polynomials.  In \cite{bhargava2020deterministic} the authors combined
their sparsity bound with techniques similar to \cite{volkovich2015deterministically,volkovich2017some} to obtain a deterministic quasi-polynomial-time algorithm for factoring  sparse polynomials of bounded individual degree.
Dutta, Sinhababu, and Thierauf \cite{DuttaST24} gave efficient deterministic reconstruction of constant degree factors of
sparse polynomials.

One notable limitation of the algorithm of \cite{bhargava2020deterministic} is that it does not give an efficient way to find the \emph{sparse factors} of the input polynomial, even if such factors exist.
This problem is closely related to the following question asked by Dutta, Sinhababu and Thierhauf.

\begin{question}[{\cite[Question 2]{DuttaST24}}]\label{q:find-sparse-factors}
    Design a better-than-exponential time deterministic algorithm that outputs all the sparse irreducible factors of a sparse polynomial.
\end{question}

Another outstanding open problem concerns the reconstruction of rational functions whose numerator and denominator are sparse polynomials. Surprisingly, the best known deterministic algorithm for this question runs in exponential time \cite{grigoriev1994computational}. While seemingly unrelated, special cases of this question lie at the heart of our algorithms, and we solve it in two special cases where more information is provided.

Many more open problems were asked for sparse polynomials and 
the reader is referred to \cite{DuttaST24} and \cite{forbes2015complexity}.

\subsection{Our Results}\label{sec:our results}
Before  presenting our results, let us introduce the following notation.

\begin{definition}\label{def:notation}
We denote the class of $n$-variate, $s$-sparse polynomials of bounded individual degree $d$, by $\cP(n,s,d)$. We refer to polynomials in this class as $(n,s,d)$-sparse polynomials. It turns out that in many cases it makes sense to talk about the class of \emph{divisors} of $(n,s,d)$-sparse polynomials; we denote this class of polynomials by $\cCF(n,s,d)$.    
\end{definition}

We first summarize our main contributions informally as follows:
\begin{enumerate}
    \item We prove a sharp upper bound on the number of irreducible factors of $(n,s,d)$-sparse polynomials (\Cref{thm:bound on the number of irreducible factors of sparse polynomials}).
    \item Using this bound, we give the first deterministic polynomial-time algorithm that constructs \emph{all} sparse divisors of $(n,s,d)$-sparse polynomials (\Cref{finding sparse divisors of sparse polynomials}).
    \item We give a deterministic blackbox factorization algorithm for products of sparse $(n,s,d)$-sparse polynomials in quasi-polynomial time in the number of terms. This solves \Cref{q:dst-factor-product-nsd} when the number of terms is constant (\Cref{thm:factor-of-product-of-nsd}). We note that the running time of this algorithm is polynomial in the number of variables $n$.
\end{enumerate}

\begin{remark}
In stating our results we suppress the precise dependence on the field (beyond its characteristic) and on bit complexity. As usual in factorization algorithms, an additional factor reflecting the complexity of univariate factorization over the field should be included. These bounds are stated in \Cref{factorization background}.    
\end{remark}

We now give the formal statement of our results.
Our first main result gives an upper bound for the number of sparse divisors of $(n,s,d)$-sparse polynomials. While its proof is not difficult, it is instrumental in our algorithms and demonstrates the strong structural limitations that sparsity imposes.

\begin{restatable}[{\bf Divisor bound for $(n,s,d)$-sparse polynomials}]{theorem}{NumDivisors}
    Let $f\in \cP(n,s,d)$.
    The number of non-monomial irreducible factors of $f$, counted with multiplicities, is at most $d\log s$. In particular, the number of divisors of an $s$-sparse polynomial $f$ with bounded individual degree $d$, that are not divisible by any monomial, is at most $2^{d\log s}=s^{d}$.
    \label{thm:bound on the number of irreducible factors of sparse polynomials}
\end{restatable}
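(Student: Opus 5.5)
The plan is to prove the contrapositive as a lower bound on sparsity. Write $k(f)$ for the number of non-monomial irreducible factors of $f$, counted with multiplicity. We aim to show that every nonzero $f$ with individual degrees at most $d$ satisfies
\[
\|f\|_0 \;\ge\; 2^{\lceil k(f)/d\rceil},
\]
where $\|f\|_0$ is the number of monomials. Since $\|f\|_0\le s$, this gives $k(f)\le d\log s$.

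The argument is by induction on $k(f)$. If $k(f)=0$ the bound reads $\|f\|_0\ge 1$, which holds. If $k(f)\ge 1$, choose a non-monomial irreducible factor $g$ and a variable $x_i$ that $g$ depends on; this exists because $g$ is not constant. Write $f=\sum_{j=a}^{b} f_j x_i^j$ with $f_a,f_b\neq 0$ and $0\le a<b\le d$. Here $a<b$ because $g\mid f$ and $g$ is not a monomial in $x_i$ times something free of $x_i$: if $f=f_a x_i^a$, then $g$, being irreducible and not associate to $x_i$, would divide $f_a$, which does not involve $x_i$, a contradiction. The coefficients $f_a$ and $f_b$ have disjoint sets of monomials as monomials of $f$, since they come with different powers of $x_i$. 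Hence
\[
\|f\|_0\ge \|f_a\|_0+\|f_b\|_0 .
\]

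Next I track the factors. Write $f=\prod_t h_t$ as a product of irreducibles. Then $f_b=\prod_t \mathrm{lc}_{x_i}(h_t)$ and $f_a=\prod_t \mathrm{tc}_{x_i}(h_t)$, where lc and tc denote the leading and trailing coefficients in $x_i$, taken as polynomials in the remaining variables. A factor $h_t$ that does not involve $x_i$ satisfies $\mathrm{lc}_{x_i}(h_t)=\mathrm{tc}_{x_i}(h_t)=h_t$, so it remains an irreducible non-monomial factor of both $f_a$ and $f_b$. The non-monomial factors that do involve $x_i$ have total $x_i$-degree at most $\deg_{x_i} f\le d$, so there are at most $d$ of them counted with multiplicity. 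Consequently $k(f_a),k(f_b)\ge k(f)-d$. Any extra non-monomial factors coming from the coefficients $\mathrm{lc}(h_t)$ and $\mathrm{tc}(h_t)$ only increase these counts, which is harmless because the target bound is monotone in $k$. Also $f_a$ and $f_b$ have individual degrees at most $d$ and strictly fewer variables, so the induction is well founded; I will formally induct on $k$ together with the number of variables. The induction hypothesis then gives
\[
\|f\|_0\ \ge\ 2\cdot 2^{\lceil (k(f)-d)/d\rceil} \;=\; 2^{\lceil k(f)/d\rceil}.
\]
When $k(f)\le d$ this is just $\|f\|_0\ge 2$, which also follows directly from $\|f_a\|_0,\|f_b\|_0\ge 1$.

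For the "in particular" part: a divisor of $f$ that is not divisible by any monomial is, up to a scalar, a product of a sub-multiset of the at most $d\log s$ non-monomial irreducible factors. There are at most $2^{d\log s}=s^d$ such sub-multisets.

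The only delicate point is the bookkeeping in the second step. One must check that every factor not involving $x_i$ survives intact in both extreme coefficients, and that the factors involving $x_i$ are controlled by the individual degree bound $d$ rather than by the total degree; this is exactly where the hypothesis on individual degree enters.
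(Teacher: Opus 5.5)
Your proof is correct, but it is organized differently from the paper's. The paper removes monomial divisors and chooses variables $x_1,\dots,x_k$ greedily, so that the irreducible factors split into groups $I_1,\dots,I_k$ with $|I_j|\le d$; here $I_j$ consists of the factors involving $x_j$ but none of $x_1,\dots,x_{j-1}$. It then proves $|V(P_f)|\ge 2^k$ by induction along this chain. The lowest and highest $x_{k-j}$-faces of $P_{f_{j+1}}$ are copies of $P_{f_jg_0}$ and $P_{f_jg_{d_{k-j}}}$. To compare these with $P_{f_j}$, the paper invokes \Cref{number of vertices is monotone}, i.e.\ monotonicity of the vertex count under divisibility, which is a consequence of Ostrowski's theorem. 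Vertices rather than monomials are forced there because sparsity is not monotone under divisibility.

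You sidestep this by making the factor count $k(\cdot)$ the inductive quantity and applying the hypothesis directly to the extreme coefficients $f_a,f_b$. These automatically inherit every $x_i$-free irreducible factor of $f$, so the disjointness of their supports gives the doubling with no convex geometry, and you do not need to strip monomial divisors first. The paper's version gives the formally stronger bound in terms of $|V(P_f)|$, together with an explicit set of at most $\log s$ variables meeting every non-monomial factor. Your splitting yields the vertex bound too if you replace $\|f\|_0\ge\|f_a\|_0+\|f_b\|_0$ by the analogous inequality for vertices on the two disjoint extreme faces.

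One presentational fix: state the induction on the number of variables alone. Since $k(f_a)$ may exceed $k(f)$, the opening ``by induction on $k(f)$'' is not a valid measure by itself.
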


Our first main algorithmic result shows how to deterministically output all sparse divisors of such polynomials in polynomial time. Earlier, only a quasi-polynomial-time algorithm was known \cite{bhargava2020deterministic}.

\begin{restatable}[{\bf Finding all sparse divisors of $(n,s,d)$-sparse polynomials}]{theorem}{SparseDivisorFinding}
    There is a deterministic, $\poly(n,d!,s^{d})$-time algorithm which, given $f\in \cP(n,s,d)$, outputs all the
    $s$-sparse divisors of $f$ that are not divisible by any monomial, as well as the monomial of maximal degree dividing $f$.
    \label{finding sparse divisors of sparse polynomials}
\end{restatable}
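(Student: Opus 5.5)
We first divide out the maximal monomial: for each variable $x_i$, the largest power of $x_i$ dividing $f$ is the minimum exponent of $x_i$ over the support of $f$. This gives $f = \vx^{\va}\cdot f_0$, where $f_0\in\cP(n,s,d)$ is not divisible by any variable, and it outputs $\vx^{\va}$ directly. Every divisor of $f$ that is not divisible by a monomial divides $f_0$, so from now on we work with $f_0$. By \Cref{thm:bound on the number of irreducible factors of sparse polynomials}, $f_0=\prod_{j=1}^{k} g_j$ with $k\le d\log s$ irreducible non-monomial factors (with multiplicity). Hence $f_0$ has at most $s^d$ divisors up to scalars. The plan is to enumerate candidates for these divisors and keep those that are $s$-sparse and actually divide $f_0$.

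The enumeration goes through a reduction to low-variate (or univariate) factoring that preserves the factorization pattern. Using the standard machinery of \Cref{factorization background} (as in \cite{volkovich2015deterministically,bhargava2020deterministic}), we fix a variable $x_i$ and build a deterministic family of substitutions $\vx\mapsto \va+\vb t$ (or a Klivans--Spielman type map) into a bivariate polynomial in $x_i,t$. The family is chosen so that, for a good member, the irreducible factors of $f_0$ that depend on $x_i$ map to distinct irreducible factors, and Hensel lifting from that member recovers them. The factors of $f_0$ have individual degree at most $d$, and its irreducible factors number at most $d\log s$. So the required hitting set only needs to preserve irreducibility and coprimality of polynomials of the form ``resultant/discriminant of products of $\le d\log s$ factors''. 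Rather than do this, we use a simpler route that avoids sparsity bounds on irreducible factors.

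For each subset $S$ of irreducible factors, consider the candidate divisor $g_S=\prod_{j\in S}g_j$. We only care about $g_S$ that are $s$-sparse. The key point is that for an $s$-sparse candidate $g$, the test ``does $g$ divide $f_0$, and compute $f_0/g$'' can be done deterministically via sparse PIT and sparse interpolation. The quotient has individual degree $\le d$ and is the ratio of two sparse polynomials. We reduce via Kronecker/Klivans--Spielman to univariate polynomials, and there we can divide and check the remainder is zero for all points of a sparse-PIT hitting set.

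To produce the candidates, we factor a univariate image $F(t)=f_0(\text{KS-map}(t))$ of degree $\poly(n,s,d)$ over the field. We choose the map from a deterministic polynomial-size family so that, for some member, every sparse divisor $g$ of $f_0$ is determined by its image: distinct $s$-sparse polynomials of degree $\le nd$ have distinct images, and sparse interpolation from the image is possible after trying a few maps. For each subset of the univariate irreducible factors that form the image of some divisor, we sparse-interpolate an $s$-sparse candidate and test divisibility.

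\textbf{The main obstacle.} The univariate image may split into many more irreducible factors than $d\log s$, so the number of subsets is not obviously $s^d$. To handle this, we instead use a bivariate lift. We map $f_0$ to $f_0(\va+\vb t, \ldots)$ keeping one variable $x_i$ free, with $x_i$-degree at most $d$. Using a deterministic irreducibility-preserving substitution for polynomials of individual degree $\le d$ (which is where the $d!$ factor enters), the images of the $g_j$ remain irreducible and coprime. Then the bivariate factorization has exactly $k\le d\log s$ factors, yielding at most $2^k\le s^d$ subsets, each interpolated and verified in $\poly(n,d!,s)$ time. This gives the claimed $\poly(n,d!,s^d)$ total.
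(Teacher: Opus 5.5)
\textbf{There is a genuine gap at the central step of your proposal.} Your count of $2^k\le s^d$ candidates depends on a ``deterministic irreducibility-preserving substitution for polynomials of individual degree $\le d$'', computable in $\poly(n,d!,s^d)$ time, under which the images of the irreducible factors $g_j$ of $f_0$ stay irreducible and pairwise coprime. No such tool is available.
\begin{itemize}
\item The known deterministic irreducibility-preservation argument \cite{bhargava2020deterministic} has to hit resultants built from the irreducible factors themselves. Their sparsity is only known to be $s^{O(d^2\log n)}$, which is exactly why that route is quasi-polynomial.
\item A substitution as you describe would essentially settle a problem the paper poses as open in \Cref{Section:open}: testing irreducibility of an $(n,s,d)$-sparse polynomial in time polynomial in $n$ and $s^d$.
\item The paper's own generator guarantees only coprimality and square-freeness of images (\Cref{lem:G always preserves coprimality of factors}), not irreducibility, and only in characteristic $0$ or $>2d$.
\end{itemize}

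Even if you grant the substitution, two further steps are missing.
\begin{itemize}
\item A restriction to one line $\mathbf{a}+\mathbf{b}t$ with $x_i$ free cannot determine an $n$-variate $g_S$. You give no mechanism for matching factors, and fixing their scalars consistently, across the many restrictions that sparse interpolation needs.
\item Factors of $f_0$ not depending on $x_i$ become univariate in $t$ and may split arbitrarily. So the bivariate image need not have ``exactly $k$'' factors.
\end{itemize}
A smaller point: $f_0/g$ is not known to be $s$-sparse, and divisibility of univariate Kronecker/KS images does not by itself imply divisibility. Verification should instead use \cite[Lemma 42]{volkovich2017some} or \Cref{general divisibility testing}.

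\textbf{The idea you are missing is that irreducibility of images is never needed.}
\begin{itemize}
\item Write $f=\sum_i f_ix_0^i$ with $x_0\nmid f$. Normalize to $\tilde f(y,\vx)=f(yf_0,\vx)/f_0$, and set $\hat f=\tilde f(y,G_{(m)})$. This is a polynomial in $O(1)$ variables that keeps all the interpolation variables alive at once.
\item Since $\hat f$ is reversed-monic of degree $\le d$ in $y$, it has at most $d$ irreducible factors, hence at most $2^d$ divisors (\Cref{cla:reversed-monic bound}). This holds however the images of the $g_j$ split.
\item Every divisor $h=\sum_i c_ix_0^i$ of $f$ yields a divisor $\hat h=\sum_i \frac{c_i(G)}{c_0(G)}f_0(G)^iy^i$ of $\hat f$ (\Cref{cla:normalization of a polynomial}). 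Its scalar is pinned down because its free term is $1$.
\item The only unknown is $c_0$, which up to a monomial $M$ is an $s$-sparse divisor $c$ of $f_0=f|_{x_0=0}$. One recursive call therefore lists at most $s^d$ candidates for it (\Cref{thm:bound on the number of irreducible factors of sparse polynomials}). Divisors not involving $x_0$ are already in this list.
\item For each divisor $\hat h=\sum_i e_iy^i$ of $\hat f$ and each listed $c$, the values $e_i\cdot\bigl((x_1\cdots x_n)^d c\bigr)(G)/f_0(G)^i$ equal $(\tilde M c_i)(G)$ with $\tilde M=(x_1\cdots x_n)^d/M$, when $c$ is the right candidate. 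So \Cref{lem:first property of G} recovers $c_i$ after stripping monomials.
\item The at most $2^d s^d$ resulting candidates are then filtered by divisibility testing.
\end{itemize}
Overcounting is harmless because every candidate is verified. The argument therefore needs only the divisor bound and sparse interpolation, never irreducibility of images.
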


Our second main algorithmic result answers \Cref{q:dst-factor-product-nsd} when there are constantly many polynomials in the product, and in the general case it provides a quasi-polynomial-time algorithm.

\begin{restatable}[{\bf Factoring products of $(n,s,d)$-sparse polynomials}]{theorem}{OpenProblemTheorem}
    Let $\mathbb{F}$ be a field of characteristic zero or larger than $2d$.
    There is a deterministic $\poly(n,d^d,s^{d\log \ell},\ell^d)$-time algorithm which, given blackbox access to a product $f=\prod_{i=1}^{\ell}{\phi_i}$, where
    $\phi_1,\ldots,\phi_\ell\in \cP(n,s,d)$ are all (not necessarily different) irreducible, returns the $\phi_i$-s.
    Moreover, when the assumption on the field is removed, there is an algorithm that solves this problem in $\poly(n,{(d^2)!},s^{d\log \ell+d^3},\ell^d)$-time.
    \label{thm:factor-of-product-of-nsd}
\end{restatable}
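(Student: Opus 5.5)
The plan is to run an iterative ``peeling'' procedure driven by \Cref{thm:bound on the number of irreducible factors of sparse polynomials}. First, I reduce to the case where no $\phi_i$ is a monomial. The monomial $\phi_i$ are single variables times scalars. I detect them by testing divisibility of the blackbox by each $x_j$, which can be done by substituting $x_j=0$ and running PIT for products of sparse polynomials, and then divide them out.

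Next, I obtain a sparse handle on $f$. I apply a shift $\vx\mapsto\vx+\va$ and a Kronecker-type or \cite{KS01}-style map that keeps each $\phi_i$ sparse, keeps distinct factors distinct, and keeps them irreducible; this is the standard ``preserving irreducibility'' step used in \cite{bhargava2020deterministic}. I then restrict $f$ to a subset of variables and interpolate. The key point is that $f$ itself is $s^{\ell}$-sparse, which is too large. So I use a divide-and-conquer on $\ell$ instead.

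I would set up the recursion as follows. Group the factors into two halves, $f=g\cdot h$, where each half is a product of $\ell/2$ factors. The hope is that each half is a product of sparse polynomials whose factors are among the divisors of a single $(n,s,d)$-sparse polynomial. More concretely, I look at any one factor $\phi_1$. I want to find a polynomial $F\in\cP(n,s',d')$ with $\phi_1\mid F$, for example the restriction obtained via a \emph{primitive divisor} or derivative trick. Then \Cref{finding sparse divisors of sparse polynomials} lists all $s$-sparse divisors of $F$, which number at most $s^{d}$ by the divisor bound. I test each candidate $\psi$ for whether $\psi\mid f$, using a rational-function identity test: check $f/\psi$ on the blackbox via PIT of $f-\psi\cdot q$ after reconstructing $q$ recursively. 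I divide out every candidate that works and repeat. The $s^{d\log\ell}$ running time suggests exactly this shape: a recursion of depth $\log\ell$, where each level branches over $s^{d}$ candidate divisors.

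I expect the main obstacle to be verifying a candidate and obtaining the quotient deterministically. The quotient $f/\psi$ is not sparse, so it cannot be interpolated directly. My first attempt will be to represent the quotient implicitly as a blackbox, namely $f$ evaluated and divided by $\psi$ at points where $\psi\neq 0$. Correctness can then be checked at the end by PIT of $f-\prod\psi_i$, a difference of products of sparse polynomials. That identity can be checked using the known hitting sets for products of $(n,s,d)$-sparse polynomials, which are derived from the \cite{bhargava2020deterministic} sparsity bound, in $s^{O(d\log\ell)}$-size. For small characteristic, I would replace derivatives by the primitive-divisor notion, which accounts for the $(d^2)!$ and $s^{d^3}$ overhead.
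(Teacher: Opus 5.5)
Your proposal has a genuine gap: it never gives a way to \emph{generate} candidate factors from the blackbox.

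\begin{itemize}
\item The split $f=g\cdot h$ into two halves of $\ell/2$ factors cannot be carried out. You have no blackbox for any sub-product of $f$ unless you already know the factorization.
\item The step ``find an explicit $F\in\cP(n,s',d')$ with $\phi_1\mid F$'' is the whole difficulty, and you leave it unspecified. Neither primitive divisors nor a derivative trick produces such an $F$. In the paper, primitive divisors are only a tool for proving coprimality preservation and divisibility testing in small characteristic.
\item The irreducibility-preserving Kronecker/KS map you invoke is not available, and it is not needed. What the generator actually guarantees is that images of non-associate irreducible factors stay coprime (\Cref{lem:G always preserves coprimality of factors}); each image may itself split.
\item Your verification step is circular. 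Since $f/\psi$ is not sparse, it cannot be ``reconstructed recursively''. A final identity test of $f=\prod\psi_i$ can only confirm an answer, not locate one.
\item The $s^{d\log\ell}$ term does not come from a recursion of depth $\log\ell$ with branching $s^d$.
\end{itemize}

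The missing idea is to recurse on a variable rather than on $\ell$, and to use the free term as side information for a rational interpolation. The paper's route is as follows.
\begin{itemize}
\item After removing monomial factors, write $f=\sum_i f_ix_0^i$. Then $f_0$ is again a product of $\ell$ elements of $\cCF(n,s,d)$ (\Cref{lem:closure under taking free terms}). One recursive call yields the set $\cF(f_0)$ of all $s$-sparse divisors of $f_0$ not divisible by a monomial.
\item By \Cref{general bound on the number of sparse divisors}, there are at most $s^{d(2+\log\ell)}$ such divisors. This count is the true source of $s^{d\log\ell}$.
\item The normalization $\tilde f=f(yf_0,\vx)/f_0$, composed with $G_{(m)}$, is a constant-variate polynomial $\hat f$ that can be interpolated and factored.
\item Every divisor $h=\sum_i c_ix_0^i$ of $f$ yields the divisor $\sum_i \frac{c_i(G)}{c_0(G)}f_0(G)^iy^i$ of $\hat f$ (\Cref{cla:normalization of a polynomial}).
\item Since $c_0\mid f_0$, it lies in $\cF(f_0)$ up to a monomial. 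So you try each element of $\cF(f_0)$, with the $(x_1\cdots x_n)^d$ correction, against each of the $O((\ell d)^d)$ divisors of $\hat f$ of $y$-degree at most $d$. This lets you sparse-interpolate all the $c_i$.
\item The resulting candidates are pruned with the generator-based divisibility test (\Cref{general divisibility testing}). Its ``moreover'' part covers arbitrary fields, because the $\phi_i$ lie in $\cP(n,s,d)$.
\item The irreducible factors are the candidates divisible by no other candidate. Multiplicities come from \Cref{Recovering multiplicity of general factors} or \Cref{lem:Recovering multiplicity of primitive divisors}.
\end{itemize}
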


This theorem will follow as a consequence of the next, more general, result.

\begin{restatable}[{\bf Factoring products of factors of sparse polynomials}]{theorem}{generalfactorizationalgorithm}\label{General factorization algorithm}
    Let $\mathbb{F}$ be a field with $\Char(\F)=0$ or $\Char(\F)>2d$.
    There is a deterministic $\poly(n,d^d,s^{d\log \ell},\ell^d)$-time algorithm which, given blackbox access to a product $f$ of $\ell$ polynomials in $\cCF(n,s,d)$ over the field $\mathbb{F}$, returns all the $s$-sparse divisors of bounded individual degree $d$ of $f$ that are not divisible by any monomial, as well as the multiplicity $k_i$ of each $x_i$ as a factor of $f$. Moreover, if $f$ was in fact a product of polynomials in $\cP(n,s,d)$, there is such an algorithm that works over arbitrary fields, whose running time is $\poly(n,{(d^2)!},s^{d\log \ell+d^3},\ell^d)$.
\end{restatable}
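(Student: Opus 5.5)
Throughout, let $f=g_1\cdots g_\ell$ with each $g_j\in\cCF(n,s,d)$, say $g_j\mid h_j$ with $h_j\in\cP(n,s,d)$, and write $H=\prod_j h_j$. The plan has four steps.

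\textbf{Step 1: count the candidate factors.} By \Cref{thm:bound on the number of irreducible factors of sparse polynomials}, each $h_j$ has at most $d\log s$ non-monomial irreducible factors, counted with multiplicity. Hence $f$ has at most $\ell d\log s$ non-monomial irreducible factors. Every individual degree of $f$ is at most $\ell d$, and each sought divisor has individual degree at most $d$.

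\textbf{Step 2: deal with monomial factors.} To get the multiplicity $k_i$ of $x_i$, we restrict $f$ to a line in $x_i$ with the other variables taken from a hitting set for $\cCF(n,s,d)$-products. The sparsity bound of \Cref{thm: BSV bound} gives such a hitting set, but with quasi-polynomial size. To stay within the budget, we instead use the standard \cite{KS01} substitution $x_k\mapsto a_k t^{w_k}$ with weights chosen from a family of size $\poly(n,s^{d\log\ell})$. The hope is that these preserve the nonvanishing of the relevant coefficient in $x_i$. We then read off the lowest power of $x_i$.

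\textbf{Step 3: recover the divisors by lifting.} After dividing out monomials, the plan is a bivariate-then-lift procedure in the style of \cite{volkovich2017some,bhargava2020deterministic}, carried out in four moves.
\begin{enumerate}
\item Pick a shift so that $f$ becomes monic in a chosen variable and squarefree where needed. This is where $\Char(\F)=0$ or $\Char(\F)>2d$ is used: the derivative of a nonconstant factor of degree at most $2d$ does not vanish.
\item Reduce $f$ to a polynomial in few variables using a Klivans--Spielman-type map, and factor it there with a univariate/bivariate factoring algorithm.
\item Every $s$-sparse divisor $q$ of $f$ with individual degree at most $d$ has, by Step 1, an image that is a product of a sub-multiset of the irreducible images. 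Assuming the map preserves irreducibility and keeps distinct factors coprime, there are at most $\binom{\ell d\log s}{\le ?}$ such sub-multisets. Since $q$ has individual degree at most $d$, only sub-multisets of small total degree are relevant. We would try to bound the count by $s^{d\log \ell}\ell^d$, using that each $h_j$ contributes at most $2^{d\log s}=s^d$ divisors.
\item For each candidate, reconstruct the sparse polynomial from its images using sparse interpolation of $s$-sparse polynomials via \cite{KS01}, then verify divisibility against $f$ by blackbox PIT. The quotient lies in a product class with a known hitting set, so this check is feasible.
\end{enumerate}

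\textbf{Step 4: remove the characteristic assumption.} For arbitrary fields, when $f$ is a product of genuine $\cP(n,s,d)$ polynomials, we replace the squarefreeness argument by the primitive-divisor technique adapted from \cite{bisht2025solving}. This is what costs the factors $(d^2)!$ and $s^{d^3}$.

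\textbf{Main obstacle.} The hardest step is guaranteeing deterministically that the reduction map preserves irreducibility and coprimality of the factors of $f$, since $f$ itself is not sparse. I would try to argue via resultants and discriminants. Each relevant resultant is a factor of an expression in the $h_j$'s of degree controlled by $d$. A generator that hits such expressions within $s^{d\log\ell}$ trials should then suffice, and bounding that hitting-set size is the crux.
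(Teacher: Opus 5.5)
There is a genuine gap, and it is the step you yourself flag as the main obstacle. Step 3 enumerates sub-multisets of the irreducible factors of the image of $f$ under a Klivans--Spielman-type reduction. That count is controlled only if the reduction preserves irreducibility and coprimality of the factors of $f$. You assume this but never prove it, and your count is left as $\binom{\ell d\log s}{\le ?}$.

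Such substitutions generally split factors: already $x_1-x_2\mapsto t^{a}-t^{b}$ has many irreducible factors. Without preservation, the number of sub-multisets can be exponential in the degree of the image.

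The resultant/discriminant route you suggest cannot reach the target bound either. The irreducible factors of $f$ need not be sparse, and the only available sparsity bound for them is $s^{O(d^2\log n)}$ (\Cref{thm: BSV bound}). Hitting resultants of pairs of such factors therefore gives running time governed by that quantity, with $\log n$ rather than $\log\ell$ in the exponent. This is why a \cite{bhargava2020deterministic}-style reduce-and-lift argument does not yield this theorem.

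The verification in Step 3(4) also fails as stated. Blackbox PIT on $f/q$ cannot detect $q\nmid f$, since $f/q$ is then just a nonzero rational function. You need an actual divisibility test for products of elements of $\cCF(n,s,d)$. That is where the characteristic hypothesis really enters: \Cref{lem:G always preserves coprimality of factors}, via the rank of $M_{x_i}(gh,(gh)')$ for sparse $g,h$, or primitive divisors over arbitrary fields.

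The missing idea is that no irreducibility preservation is needed if you keep one variable alive and normalize. Your Step 2 is fine in spirit, but it needs only a $\poly(n,s,d)$-size generator: the trailing $x_i$-coefficient of each $g_j$ divides that of the $s$-sparse $h_j$ (\Cref{reduction for nonzero constant term}).

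After stripping monomials, write $f=\sum_i f_ix_0^i$. Form the reverse-monic $\tilde f(y,\vx)=f(yf_0,\vx)/f_0$ and set $\hat f=\tilde f(y,G_{(m)})$. By \Cref{cla:reversed-monic bound}, $\hat f$ has at most $\ell d$ irreducible factors no matter what $G_{(m)}$ does. Hence it has only $O((\ell d)^d)$ divisors of $y$-degree at most $d$. Every sparse divisor $h=\sum_i c_ix_0^i$ of $f$ yields one of them, namely $\sum_i \frac{c_i(G)}{c_0(G)}f_0(G)^iy^i$.

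The unknown denominator $c_0$ is handled by recursion. $f_0=f|_{x_0=0}$ is again a product of $\ell$ elements of $\cCF(n,s,d)$ (\Cref{lem:closure under taking free terms}). Up to a monomial, $c_0$ is one of its at most $s^{d(2+\log\ell)}$ sparse divisors (\Cref{general bound on the number of sparse divisors}), which the recursive call returns.

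For each guess $c$, the values $\bigl((x_1\cdots x_n)^dc\bigr)(G)\,e_i/f_0(G)^i$ are images of $(n,s,2d)$-sparse polynomials. They can be interpolated via \Cref{lem:first property of G}, after which candidates are pruned with \Cref{general divisibility testing}. This is exactly where the $s^{d\log\ell}\cdot\ell^d$ count comes from, with no appeal to irreducibility preservation anywhere.
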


Observe that this result is slightly weaker for fields of small characteristic. Indeed, in \Cref{Section 4} we will see that some of our methods only work for zero characteristic fields, or for fields with large enough characteristic.

In addition to our main results we provide several improvement and generalization of earlier work. 
We start by stating  results concerning factorization of sparse polynomials of bounded individual degree into irreducible factors. The first is an improvement of the best known factorization algorithm of such polynomials \cite{bhargava2020deterministic}.

\begin{restatable}{theorem}{SparseFactorization}
    There is a deterministic $\poly(n,s^{d^2\log n})$-time algorithm which, given an $s$-sparse polynomial of bounded individual degree $d$ on $n$-variables, returns its factorization into  irreducible factors.
    \label{improved BSV}
\end{restatable}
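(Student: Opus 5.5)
The plan is to follow the Bhargava--Saraf--Volkovich framework and sharpen the two places where the $d^7$ exponent arises. Let $f\in\cP(n,s,d)$. First strip off the maximal monomial divisor by inspecting exponents. Then the monomial-free part of $f$ has, by \Cref{thm:bound on the number of irreducible factors of sparse polynomials}, at most $d\log s$ non-monomial irreducible factors with multiplicity. Each of these is a divisor of $f$, so by \Cref{thm: BSV bound} it has sparsity at most $s^{O(d^2\log n)}$. Set $s' = s^{O(d^2\log n)}$. This fixes the sparsity budget for all objects handled by the algorithm.

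The algorithm runs as follows.
\begin{enumerate}
\item Handle repeated factors by passing to the squarefree decomposition. Over large characteristic, take $\gcd(f,\partial_{x_i} f)$ for a variable $x_i$ in which $f$ is not a $p$-th power. Over small characteristic, use the primitive-divisor technique alluded to in the abstract. This gives reduced polynomials, all of whose factors are divisors of $f$ and hence $s'$-sparse.
\item Use a Klivans--Spielman-type hitting set generator for $s'$-sparse (and $\poly(s')$-sparse) polynomials, of size $\poly(n,s')=\poly(n,s^{d^2\log n})$, to find a shift $\va$ and a variable $x_1$. The shift must satisfy three conditions: $f(x_1,\va+\cdot)$ stays monic-like in $x_1$ of full degree, the discriminant with respect to $x_1$ is nonzero at $\va$, and the factorization pattern is preserved. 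Preserving the pattern means that the irreducible factors of $f$ remain irreducible and pairwise coprime under a bivariate restriction $x_j\mapsto a_j + t b_j$. By a Hilbert-irreducibility-type effective statement, as in \cite{bhargava2020deterministic,volkovich2017some}, this reduces to the nonvanishing of certain polynomials built from the factors. I will aim to show that these certifying polynomials are resultants and discriminants of pairs of factors and their derivatives. Each such polynomial has degree at most $d^2$ in each variable, so its sparsity is $\poly(s')^{O(d)}$ rather than $s'^{O(d^5)}$. This is the key saving.
\item Factor the univariate or bivariate restrictions, lift via Hensel lifting, and reconstruct each factor by sparse interpolation. Reconstruction is done one variable at a time with the known sparsity bound $s'$, and the result is verified by multiplying the candidate factors and running a PIT on $f-\prod g_i^{e_i}$. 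This PIT involves only $\poly(s')$-sparse products because there are at most $d\log s$ factors.
\end{enumerate}

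The main obstacle is step 2: bounding the sparsity of the certifying polynomials so that a single hitting set of size $s^{O(d^2\log n)}$ suffices. In \cite{bhargava2020deterministic} the blow-up to $d^7$ comes from composing resultants with the bound $s'$ and taking products over all pairs of factors. My plan is to use the factor-count bound to replace ``all pairs of factors'' by at most $(d\log s)^2$ pairs. I will also take resultants only with respect to the single main variable, whose degree is at most $d$. With these choices each certifying polynomial is a determinant of a $2d\times 2d$ Sylvester matrix with $s'$-sparse entries, giving sparsity $s'^{O(d)}$. If that still yields an exponent of $d^3\log n$, I would instead hit each factor's certifying polynomial individually, by taking a product of the hitting sets' failure bounds, rather than hitting their product.
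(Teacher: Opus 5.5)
\textbf{There is a genuine gap, in your step 2, which carries the whole argument.}

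\emph{The proposed certificates do not certify irreducibility.} Nonvanishing of resultants and discriminants of the factors only guarantees that distinct factors stay coprime and squarefree after the restriction. It does not guarantee that each irreducible factor stays irreducible, and your Hensel-lifting step needs that. For example, over $\mathbb{Q}$ take $\phi=x_1^2-x_2x_3$. It is irreducible and monic in $x_1$, its discriminant is $4x_2x_3$, and there are no other factors to take resultants with. On the line $x_2=x_3=a+tb$ with $(a,b)\neq(0,0)$, the discriminant becomes $4(a+tb)^2\neq 0$. Yet $\phi$ restricts to $(x_1-a-tb)(x_1+a+tb)$. No sparse certificate for effective Hilbert irreducibility is known; this is exactly the obstacle to deterministic factoring. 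So ``I will aim to show that these certifying polynomials are resultants and discriminants'' cannot be carried out.

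\emph{Even the coprimality certificates are too expensive.} A resultant in one variable of two $s'$-sparse factors of degree at most $d$ is only known to be $(2d)!\,(s')^{2d}=s^{O(d^3\log n)}$-sparse. A Klivans--Spielman set that hits it therefore has that size. Hitting the pairs one at a time does not help, because a single resultant already has this sparsity. So at best your route gives $\poly(n,s^{d^3\log n})$, not the stated bound.

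\textbf{The idea you are missing is that the algorithm never needs to preserve the factorization pattern.}
\begin{enumerate}
\item The paper normalizes $f$ with respect to $x_0$. Then $\tilde f(y,G)$ is reversed-monic of degree at most $d$ in $y$, so it has at most $2^d$ divisors.
\item By \Cref{cla:normalization of a polynomial}, every divisor $h=\sum_i c_ix_0^i$ of $f$ appears as the divisor $\sum_i \frac{c_i(G)}{c_0(G)}f_0(G)^iy^i$ of $\tilde f(y,G)$, whether or not its image stays irreducible.
\item One recursive call factors $f_0$. By \Cref{thm:bound on the number of irreducible factors of sparse polynomials}, $f_0$ has at most $s^d$ divisors not divisible by a monomial.
\item So one can guess $c_0$ among them (up to the correction by $(x_1\cdots x_n)^d$), multiply through to get $c_i(G)$ times a monomial, and sparse-interpolate using only the bound $S=s^{O(d^2\log n)}$ of \Cref{thm: BSV bound}.
\item Fake candidates are discarded by interpolating $f(G)/g(G)$ as an $S$-sparse polynomial and checking the product exactly. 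Irreducible divisors and multiplicities come from the same exact-product tests among the divisors found.
\end{enumerate}
Thus $G$ only ever interpolates $S$-sparse polynomials, never resultants. This gives $\poly(n,S,s^d)=\poly(n,s^{d^2\log n})$ over every field, with no squarefree decomposition step. Your appeal to primitive divisors for squarefree decomposition in small characteristic is also unfounded: in the paper they are used for divisibility testing, not squarefree decomposition.
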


The second theorem generalizes the result of \cite{bhargava2020deterministic} to a significantly larger class of polynomials, while achieving improved running time  compared to their algorithm.
\begin{restatable}{theorem}{SparseProductFactorization}
    Let $\mathbb{F}$ be a field with $\Char(\F)=0$ or $\Char(\F)>2d$. 
    There is a deterministic $\poly(n,s^{d^3\log n},(d\ell)^d)$-time algorithm which, given blackbox access to a product $f$ of $\ell$ polynomials in $\cCF(n,s,d)$ over $\mathbb{F}$, returns its factorization into  irreducible factors.
    Moreover, there is an algorithm that works over arbitrary fields, whose running time is $\poly(n,{(d^2)!}, s^{d^5\log n},\ell^d)$.
    \label{improved BSV for products}
\end{restatable}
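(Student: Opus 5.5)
The plan is to first get the irreducible factorization of $f$ from the blackbox, using the machinery behind \Cref{General factorization algorithm}, and then pass to full irreducible factors using the sparsity bound of \Cref{thm: BSV bound}.

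\textbf{Step 1: the divisor bound applied to a product.} Write $f=\prod_{j=1}^{\ell} g_j$, where $g_j \mid h_j$ and $h_j\in\cP(n,s,d)$. Every irreducible factor $\phi$ of $f$ divides some $h_j$. By \Cref{thm:bound on the number of irreducible factors of sparse polynomials}, each $h_j$ has at most $d\log s$ non-monomial irreducible factors counted with multiplicity. Hence $f$ has at most $\ell d\log s$ such factors, and they come from at most $\ell$ families. By \Cref{thm: BSV bound}, each such $\phi$ is $s^{O(d^2\log n)}$-sparse and has individual degree at most $d$.

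\textbf{Step 2: reduce to finding sparse divisors.} Set $s' = s^{O(d^2\log n)}$. Since each $h_j$ is $s$-sparse, every irreducible factor of $f$ lies in $\cCF(n,s,d)$ and is itself $s'$-sparse. I would run the algorithm of \Cref{General factorization algorithm} on $f$ with sparsity parameter $s'$ in place of $s$, so that it returns all $s'$-sparse divisors of $f$ not divisible by monomials. This is legitimate because $\cCF(n,s,d)\subseteq \cCF(n,s',d)$, which keeps the input class valid. The one thing to be careful about is the $s^{d\log\ell}$ term: a naive substitution gives $s'^{d\log \ell}=s^{O(d^3\log n\log\ell)}$, which is too large by a factor of $\log\ell$ in the exponent.

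To avoid that loss, I would instead use the structure of that algorithm directly. It finds divisors by hitting-set-based reduction to few variables, then Hensel-lifting or interpolating candidate factors. I would keep only the sparsity-dependent parts (the hitting sets and the sparse interpolation) at size $\poly(s')=s^{O(d^2\log n)}$. The combinatorial enumeration over subsets of the univariate or bivariate factors has size $(d\ell)^d$, bounded via Step 1 and the degree-$d$ structure. The product of these gives the target $\poly(n,s^{d^3\log n},(d\ell)^d)$. The extra factor of $d$ in the exponent comes from the degree of $f$ in each variable being up to $\ell d$, which needs a hitting set for the discriminant or resultant-type polynomials involving pairs of candidate factors.

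\textbf{Step 3: multiplicities.} Once the candidate irreducible sparse factors $\phi_1,\dots,\phi_m$ are found, I would determine multiplicities by repeatedly testing divisibility of $f$ by $\phi_i^{e}$. This can be done by blackbox division, evaluating $f/\phi_i^e$ on a hitting set for $(\ell, s', d)$-products, which is standard. Monomial factors are handled as in \Cref{General factorization algorithm}.

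\textbf{Arbitrary fields.} In arbitrary characteristic, I would replace the derivative and squarefree steps with the primitive-divisor approach. This introduces the $(d^2)!$ and the $s^{d^3}$ overhead, and composing with $s'$ yields $s^{d^5\log n}$.

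\textbf{Main obstacle.} The hardest step is Step 2. The difficulty is getting the exponent $d^3\log n$, without an extra $\log\ell$, by controlling which parts of the general algorithm scale with the sparsity. I would first try to bound everything via the hitting set size for products of at most $\ell d\log s$ polynomials that are each $s'$-sparse, using \cite{KS01}-type generators.
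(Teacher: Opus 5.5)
\textbf{Gap in Step 2.} Step 2 is where your argument breaks, and the fix you sketch does not address the real bottleneck. Your naive substitution $s\to S=s^{O(d^2\log n)}$ into \Cref{General factorization algorithm} is correct, but it only runs in time $s^{O(d^3\log n\log\ell)}$.

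The reason is that the enumeration in that algorithm is not only over the $O((\ell d)^d)$ divisors $\hat h=\sum_i e_iy^i$ of $\hat f$. For each such $\hat h$, the algorithm also tries every element of the recursively computed list $\cF(f_0)$ of \emph{all} sparse divisors of $f_0$ as the free coefficient $c_0$. It must do this because, without $c_0(G)$, one cannot turn $e_i=\bigl(c_if_0^i/c_0\bigr)(G)$ into something sparse-interpolable. The $s^{d\log\ell}$ term is exactly the bound on $|\cF(f_0)|$ from \Cref{general bound on the number of sparse divisors}. This is a count of candidate divisors, not a hitting-set size, so keeping ``the sparsity-dependent parts'' at $\poly(S)$ does not remove it. Your proposal never says how $c_0$ is obtained from $\hat h$ without this enumeration.

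Two of your side remarks are also off. The algorithm does not use Hensel lifting. The extra $d$ in $d^3\log n$ does not come from $\deg_{x_i}f\le \ell d$; if it did, $\ell$ would appear in the exponent.

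\textbf{The missing idea.} Change what the recursion returns, and solve a rational interpolation problem instead of enumerating.
\begin{enumerate}
\item Let the recursive call on $f_0$ return its full irreducible factorization. Here $f_0$ is again a product of $\ell$ elements of $\cCF(n,s,d)$ by \Cref{lem:closure under taking free terms}, and its irreducible factors are $S$-sparse by \Cref{thm: BSV bound}.
\item Take an irreducible factor $h=\sum_i c_ix_0^i$ of $f$. Then $b=c_0$ is an $S$-sparse divisor of $f_0$ all of whose irreducible factors lie in that list, $\gcd(c_0,\dots,c_d)=1$, and $e_i$ is $c_if_0^i/c_0$ on the image of $G$. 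These are exactly the hypotheses of \Cref{thm:multi rational interpolation theorem}, with $s$ replaced by $S$.
\item That theorem recovers $c_0$ by computing, for each irreducible $\phi\mid f_0$, its multiplicity in $f_0$ and in each $c_if_0^i/c_0$, via \Cref{Recovering multiplicity of general factors}. Since the $c_i$ are only known to be $S$-sparse, this needs $m=\poly(n,d!,S^d)=\poly(n,d!,s^{d^3\log n})$. This is the true source of the $d^3$.
\item This gives one candidate per $\hat h$, hence $\poly(n,d!,S^d,(d\ell)^d)$ time overall.
\item Prune the candidates with \Cref{general divisibility testing}. Keep those not divisible by any other candidate; this is valid because every irreducible factor of a reducible candidate is itself in the list.
\item Obtain multiplicities from \Cref{Recovering multiplicity of general factors}.
\item Over arbitrary fields, use the ``moreover'' parts of these results, which gives $S^{d^3}=s^{d^5\log n}$.
\end{enumerate}

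\textbf{Step 3 is also not a valid test as written.} The quotient $f/\phi_i^e$ is not sparse, so evaluating it on a hitting set cannot certify that it is a polynomial, and deterministic blackbox divisibility of such products is not standard. What works is comparing the multiplicity of $\phi_i(G_{(m,j)})$ in $f(G_{(m,j)})$. That comparison relies on the coprimality preservation of \Cref{lem:G always preserves coprimality of factors}.
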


Note that this last result considers the class of \emph{factors of sparse polynomials of bounded individual degree}.
Indeed, we will see that it is more natural to consider this class in many cases, and several results of ours actually hold for this class rather than just for the class of sparse polynomials.

Finally, we generalize results of \cite{volkovich2015deterministically,volkovich2017some,bisht2025solving} in a similar manner to our generalization of the results of \cite{bhargava2020deterministic}.

Our next algorithm finds all the multiquadratic irreducible factors of a product of polynomials from $\cCF(n,s,d)$. Earlier, only factorization of polynomials whose irreducible factors are all multiquadratic and $s$-sparse was known \cite[Theorem 47]{volkovich2017some}.

\begin{restatable}{theorem}{MultiquadraticRecovery}
    Let $\mathbb{F}$ be a field of characteristic zero or larger than $2d$.
    There is a deterministic $\poly(n,d!,s^{d},\ell)$-time algorithm that, given blackbox access to a product $f$ of $\ell$ elements of $\cCF(n,s,d)$ defined over $\mathbb{F}$, returns all the $s$-sparse multiquadratic irreducible factors of $f$. In particular, the algorithm returns all the multilinear factors of $f$ (as these are $s$-sparse by \Cref{multilinear factors of sparse polynomials are sparse}).
    Moreover, there exists such an algorithm that works
    over arbitrary fields, whose running time is $\poly(n,(d^2)!,s^{d^2},\ell)$.
    \label{finding multiquadratic factors of a product of sparse polynomials}
\end{restatable}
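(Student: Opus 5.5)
The plan is to reduce the problem to a bounded number of candidate multiquadratic polynomials, and then certify each one by a blackbox divisibility test. Write $f=\prod_{j=1}^{\ell} h_j$ with $h_j\in\cCF(n,s,d)$, and fix for each $j$ some $H_j\in\cP(n,s,d)$ with $h_j\mid H_j$. Every irreducible non-monomial factor $g$ of $f$ divides some $H_j$. By \Cref{thm:bound on the number of irreducible factors of sparse polynomials}, each $H_j$ has at most $d\log s$ such factors. So $f$ has at most $\ell d\log s$ non-monomial irreducible factors counted with multiplicity, and the target set of $s$-sparse multiquadratic irreducible factors is polynomially small. The difficulty is that $f$ is only available as a blackbox and is not itself sparse, so the divisor-enumeration argument of \Cref{finding sparse divisors of sparse polynomials} cannot be applied to $f$ directly. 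Instead, each candidate $g$ must be reconstructed from low-dimensional restrictions of $f$.

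\textbf{Step 1 (irreducibility-preserving projection).} Use the known fact, in the style of \cite{volkovich2017some}, that irreducibility of an $s$-sparse multiquadratic polynomial is witnessed by polynomially many PIT instances for sparse polynomials of sparsity $\poly(s)$. Derive from this an explicit family of substitutions $\vx\mapsto \va+\vb\, t+\vc\, z$ (a polynomial-size hitting-set-based generator). For at least one member of the family, every $s$-sparse multiquadratic irreducible factor $g$ of $f$ maps to an irreducible bivariate polynomial. Moreover, distinct such factors, and such factors versus the other factors of the $H_j$, remain non-associate: their pairwise resultants and discriminants are sparse, of sparsity $s^{O(d)}$, and must be hit. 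Here we use that every factor involved divides one of the $s$-sparse $H_j$, whose relevant invariants are sparse.

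\textbf{Step 2 (candidate generation and lifting).} For each seed in the family, factor the bivariate restriction of $f$ with a univariate/bivariate factoring routine. For each irreducible bivariate factor, lift it to an $n$-variate candidate by varying one coordinate of the substitution at a time, i.e.\ replacing $a_i$ by further field values. For each $i$, this recovers the dependence of $g$ on $x_i$, which has degree at most $2$, by interpolation on $3$ points. Combined with sparse interpolation (\cite{KS01}), and normalising by a fixed coefficient, this reconstructs an $s$-sparse multiquadratic candidate $\tilde g$. Candidates exceeding sparsity $s$ or individual degree $2$ are discarded.

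\textbf{Step 3 (verification and multiplicities).} For each candidate $\tilde g$, test $\tilde g\mid f$. Over $\Char(\F)=0$ or $\Char(\F)>2d$, reduce to $\tilde g\mid \partial f$-type identities and to PIT for $f$ restricted to random-free lines drawn from a hitting set for the class of $s^{O(d)}$-sparse polynomials. Multiplicities are obtained by repeatedly dividing along lines, or via derivatives, since the characteristic exceeds the degree in each variable. Over small characteristic, derivatives lose information. There we replace the factors by \emph{primitive divisors} and follow \cite{bisht2025solving}: we enumerate over the possible $p$-power structures, which introduces the $(d^2)!$ factor, and use sparsity bounds $s^{d^2}$ for the relevant auxiliary polynomials.

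The main obstacle is Step 1: showing that a single explicit family of substitutions of size $\poly(n,s^d,\ell)$ simultaneously preserves irreducibility of every target multiquadratic factor and separates it from all other factors of $f$. This must work even though $f$ is not sparse. I would try to bound all the required non-vanishing conditions by sparse polynomials built from the individual $H_j$ separately, exploiting the divisor bound, rather than from $f$ itself.
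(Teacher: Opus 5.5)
Your proposal has a genuine gap at Step 1, which you flag yourself, and the rest of the argument depends on it. You need an explicit $\mathrm{poly}(n,s^d,\ell)$-size family of plane substitutions $\mathbf{x}\mapsto\mathbf{a}+\mathbf{b}t+\mathbf{c}z$ under which \emph{every} $s$-sparse multiquadratic irreducible factor of $f$ stays irreducible. That is a deterministic Hilbert-irreducibility statement, and you give no argument for it. Nothing you list certifies irreducibility: hitting resultants and discriminants gives at best coprimality and square-freeness. Nothing in \cite{volkovich2017some} or in this paper provides such a family either; the paper explicitly says its tools give no control over the number of irreducible factors after projection.

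Without irreducibility preservation, Step 2 fails. $f(\mathbf{a}+\mathbf{b}t+\mathbf{c}z)$ has total degree up to $\ell dn$, the image of a target $g$ can split into $\Theta(n)$ pieces, and you cannot enumerate subsets of factors.

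The coprimality half of Step 1 is also mis-justified. Resultants of factors are not known to be sparse, because the other factors of the $H_j$ are only known to be $s^{O(d^2\log n)}$-sparse. $\Delta_{x_i}(H_j)$ vanishes whenever $H_j$ has a repeated factor involving $x_i$. And hitting a resultant with respect to $x_i$ certifies coprimality only for substitutions that keep $x_i$ free, which a plane restriction does not.

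Step 2 also ignores the scaling problem. Factoring a restriction determines $g$'s restriction only up to a scalar that depends on the seed. Normalising a fixed coefficient to $1$ yields $g/\kappa(\mathbf{a})$ with $\kappa$ an unknown polynomial, i.e.\ a rational function, and sparse interpolation does not apply to it.

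The paper never needs irreducibility preservation.
\begin{itemize}
\item \textbf{Normalisation.} It normalises $f$ to the reverse-monic $\tilde f$ (\Cref{cla:normalization of a polynomial}). Then $\hat f=\tilde f(y,G_{(m)})$ has at most $\ell d$ irreducible factors (\Cref{cla:reversed-monic bound}). So the image $\hat\phi$ of each target $\phi=b+a_1x_0+a_2x_0^2$, reducible or not, is among the $O((\ell d)^2)$ divisors of $y$-degree at most $2$, and the scalar is fixed because the free term is $1$.
\item \textbf{Recovering the coefficients.} The price is that the coefficients of $\hat\phi$ are $a_jf_0^j/b$ evaluated on $G_{(m)}$. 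One recursive call on $f_0$ handles this: $f_0$ is again a product of $\ell$ elements of $\cCF(n,s,d)$ by \Cref{lem:closure under taking free terms}, and by \Cref{factors of multiquadratic sparse polynomials are sparse} the call returns every irreducible factor of $b$. \Cref{thm:multi rational interpolation theorem} then recovers $a_1,a_2,b$.
\item \textbf{What is actually hit.} The non-vanishing certificates are minors of the Sylvester matrix $M_{x_i}(FG,(FG)')$ of \emph{sparse} parents $F,G$, evaluated on $G_{(m,i)}$, which keeps $x_i$ alive. By \Cref{lem:G always preserves coprimality of factors} this preserves coprimality and square-freeness of all irreducible factors. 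That lemma drives the multiplicity computation (\Cref{Recovering multiplicity of general factors}), the divisibility test used for pruning, and the rational interpolation.
\item \textbf{Small characteristic.} There the $(d^2)!$ and $s^{d^2}$ come from the same kind of minors, for $M_{x_i}(\phi^{t_1+t_2},f_2)$ in the primitive-divisor argument (\Cref{G preserves coprimality of factors when one factor is sparse}). They do not come from enumerating $p$-power structures.
\end{itemize}
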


We give a similar improvement to the perfect power testing of \cite{bisht2025solving}.

\begin{restatable}{theorem}{GeneralPower}
    Let $\mathbb{F}$ be a field of characteristic zero or larger than $2d$. There is a deterministic $\poly(n,d!,s^d,\ell)$-time algorithm that, given blackbox access to a polynomial $f$ that is a product of $\ell$ polynomials in $\cCF(n,s,d)$ (defined over $\mathbb{F}$) and a positive integer $e$, returns if $f$ is a complete $e$-th power.
    \label{general complete power testing}
\end{restatable}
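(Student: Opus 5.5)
The statement to prove is \Cref{general complete power testing}: decide, from blackbox access, whether a product $f$ of $\ell$ polynomials in $\cCF(n,s,d)$ is a complete $e$-th power.

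The idea is to reduce to objects we can actually reconstruct, using \Cref{thm:bound on the number of irreducible factors of sparse polynomials} and the blackbox machinery behind \Cref{General factorization algorithm}. Write $f=\prod_{j=1}^{\ell}g_j$ with $g_j\mid h_j$ and $h_j\in\cP(n,s,d)$. Then $f$ is an $e$-th power if and only if every irreducible factor of $f$ appears with multiplicity divisible by $e$.

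\textbf{Monomial part.} We first handle the variables. The multiplicity $k_i$ of $x_i$ in $f$ can be computed exactly as in \Cref{General factorization algorithm}. We then check that $e\mid k_i$ for every $i$, and divide out $\prod_i x_i^{k_i}$ by working with the blackbox of $f/\prod_i x_i^{k_i}$.

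\textbf{Reduction to few variables.} We plan to avoid recovering $f$ globally and instead reduce to a bivariate or low-variate test using a hitting-set-preserving substitution. The key property is that being a perfect power is inherited by restrictions. Conversely, if $f$ is not an $e$-th power, we want a restriction under which $f$ stays a non-power.

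For each variable $x_i$, consider the univariate-in-$x_i$ squarefree structure of $f$. Over characteristic $0$ or $>2d$, the degree of $f$ in $x_i$ is at most $d\ell$, and $f$ is an $e$-th power iff $f\cdot \partial_{x_i}f$ satisfies the appropriate relation. Concretely, $e\,f\,\partial^2 f$ compared against $(e-1)(\partial f)^2$ fails to give an identity, so a cleaner test is needed. The cleaner test: $f$ is an $e$-th power (up to a constant) iff for every $i$, $f^{e-1}$ divides $(\partial_{x_i}f)^e$. More simply, $g=f^{1/e}$ exists iff $\partial_i f\cdot g = e f\,\partial_i g$.

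I would instead use the following formulation. Over large characteristic, $f$ is an $e$-th power iff $\gcd(f,\partial_i f)$ has the correct shape. The multiplicity of each irreducible factor $q$ of $f$ is at most $d\ell$ in variables where $q$ depends, and by \Cref{thm:bound on the number of irreducible factors of sparse polynomials} each $h_j$ has at most $d\log s$ irreducible non-monomial factors. So $f$ has at most $\ell d\log s$ distinct irreducible factors, each a factor of some $h_j$.

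\textbf{PIT step.} We then apply a hitting set for the class of factors of sparse polynomials of bounded individual degree, of size $\poly(n,s^d)$; this is where the $s^d$ in the running time comes from. We project $f$ to a bivariate polynomial
\[
F(t,y)=f(\vx_0 + t\mathbf{a}+y\mathbf{b}),
\]
chosen so that irreducible factors remain squarefree-distinct and coprime. Distinct irreducible factors $q\neq q'$ of the $h_j$'s must stay coprime, and each $q$ must stay squarefree. Both conditions are certified by nonvanishing of resultants or discriminants, which are divisors of products of few sparse polynomials. We interpolate $F$ (degree $\le nd\ell$), factor it bivariately, and check that all multiplicities are divisible by $e$.

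\textbf{Main obstacle.} The hard step is ensuring that a $\poly(s^d)$-size deterministic set of projections preserves coprimality and squarefreeness of all the relevant factors. For this I would first try to use the divisor bound together with the primitive-divisor/derivative arguments: discriminant conditions reduce to nonvanishing of $\partial_{x_i}q$ modulo $q$. This in turn should be handled via the sparse multiple $\prod_j h_j\cdot\partial_i h_j$, each of sparsity $\poly(s)$, for which the Klivans–Spielman-type hitting sets suffice.
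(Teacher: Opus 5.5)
There is a genuine gap. Your overall plan (restrict to a few variables, factor, check multiplicities) matches the paper's. But the step that makes it work is a deterministic $\poly(n,d!,s^d)$-size family of restrictions under which the images of distinct irreducible factors of $f$ stay pairwise coprime and each stays squarefree. That step \emph{is} the whole difficulty, and the mechanism you sketch for it does not work.

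The irreducible factors $q$ of $f$ lie only in $\cCF(n,s,d)$, so $\operatorname{Res}_{x_i}(q,q')$ and $\Delta_{x_i}(q)$ are not known to be sparse. The obvious sparse surrogates built from the sparse multiples $h_j$ are $\operatorname{Res}_{x_i}(h_j,h_{j'})$ and $\Delta_{x_i}(h_j)$. Both vanish identically as soon as the $h_j$ have a repeated or shared factor involving $x_i$. This cannot be excluded, and it is exactly the dangerous case. For example, take $f=q^2q'$ and $e=3$, where the only sparse multiple at hand may be $q^2q'$ itself: a restriction under which $q,q'$ become associate makes $f$ look like a cube.

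Hitting $\prod_j h_j\,\partial_{x_i}h_j$ only makes that product nonzero at the chosen point. It says nothing about whether restricted factors acquire a common or repeated factor. Also, a generic plane $\vx_0+t\mathbf a+y\mathbf b$ does not commute with forming Sylvester matrices or discriminants in a coordinate variable, so certificates phrased in $x_i$ would not transfer to it. You need a substitution that keeps $x_i$ alive and preserves $\deg_{x_i}$ of all factors; this is the paper's $G_{(m,i)}$ together with \Cref{G preserves degree}. Separately, your intermediate criterion $f^{e-1}\mid(\partial_{x_i}f)^e$ is false: $q^{e+1}$ satisfies it.

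The missing idea is \Cref{lem:G always preserves coprimality of factors}, which never requires a resultant of the factors to be nonzero. Take irreducible factors $\phi\mid h$ and $\psi\mid h'$ with $h,h'\in\cP(n,s,d)$, and write $hh'=\prod_j q_j^{e_j}$. In characteristic $0$ or $>2d$, the kernel of $M_{x_i}(hh',(hh')')$ has dimension $\sum_j\deg_{x_i}(q_j)(e_j-1)$, even when its determinant vanishes. A nonzero maximal minor is a $(4d)!s^{8d}$-sparse polynomial, so $G_{(m,i)}$ with $m=\poly(n,d!,s^d)$ cannot enlarge the kernel.

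Now write $hh'(G_{(m,i)})=\prod_k\tilde q_k^{r_k}$. The inequality $\sum_k\deg(\tilde q_k)(r_k-1)\ge\sum_j\deg_u(q_j(G_{(m,i)}))(e_j-1)$ holds in general, with equality iff the $q_j(G_{(m,i)})$ are squarefree and pairwise coprime. Combined with the kernel bound, this forces exactly the property you need.

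The algorithm then factors $f(G_{(m,i)})$ for every $i$. If some irreducible $\phi$ has multiplicity $k$ with $e\nmid k$, then for $x_i\in\var(\phi)$ every irreducible factor of $\phi(G_{(m,i)})$ divides $f(G_{(m,i)})$ exactly $k$ times. Finally, your criterion ``all multiplicities divisible by $e$'' is insufficient over non-closed fields: $2(x_1+1)^2$ is not a square over $\mathbb{Q}$. You must also check that the leading coefficient is an $e$-th power in $\F$, as the paper does.
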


Finally, we extend the polynomial identity testing result of \cite{bisht2025solving}.
\begin{restatable}{theorem}{GeneralDivisibility}
    Let $\mathbb{F}$ be a field of characteristic zero or larger than $2d$. There exists a deterministic $\poly(n,d!,s^{d},\ell)$-time algorithm that, given blackbox access to two polynomials (over $\mathbb{F}$) $f$ and $g$ that are products of $\ell$ polynomials in $\cCF(n,s,d)$, decides if $f|g$.
    Moreover, if $f$ and $g$ are actually products of $\ell$ polynomials in $\cP(n,s,d)$, there is
    an algorithm that  works over arbitrary fields, whose running time is $\poly(n,{(d^2)}!,s^{d^3},\ell)$
    \label{general divisibility testing}
\end{restatable}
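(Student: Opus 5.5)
Given blackbox access to $f$ and $g$, each a product of $\ell$ polynomials from $\cCF(n,s,d)$, I reduce deciding $f\mid g$ to polynomial identity testing. The test uses a small hitting-set-style family of shifts and projections. These preserve enough divisibility information for the relevant class.

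The first step is structural. Write $f=\prod_j \psi_j$ and $g=\prod_j \chi_j$, where each $\psi_j,\chi_j$ divides some $(n,s,d)$-sparse polynomial $F_j$ or $G_j$. By \Cref{thm:bound on the number of irreducible factors of sparse polynomials}, each $F_j$ has at most $d\log s$ non-monomial irreducible factors, counted with multiplicity. Hence $f$ and $g$ each have at most $\ell d\log s$ non-monomial irreducible factors. Every such factor is an irreducible divisor of a polynomial in $\cP(n,s,d)$, so by the \cite{bhargava2020deterministic} bound it has individual degree at most $d$. The monomial parts are handled separately: I compute the multiplicity of each $x_i$ in $f$ and in $g$, for example via degree of the lowest-order term after substituting $x_i\mapsto x_i$ and random-free shifts of the other variables from a hitting set. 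I then compare these multiplicities coordinatewise.

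The second step handles the non-monomial part. I claim $f\mid g$ if and only if, for every irreducible factor $h$ of $f$, the multiplicity of $h$ in $g$ is at least its multiplicity in $f$. I detect this through a bivariate (or few-variate) projection. Substitute $x_i\mapsto a_i+b_i t$ together with one distinguished variable $x_k$ kept free, choosing $(\va,\vb)$ from an explicit set of size $\poly(n,d!,s^d,\ell)$. This set should simultaneously satisfy three conditions:
\begin{enumerate}
\item It preserves irreducibility of every factor, using the effective Hilbert irreducibility / hitting set for factors of sparse polynomials via the \cite{KS01} map.
\item It keeps distinct irreducible factors coprime, i.e.\ their resultant with respect to $x_k$ stays nonzero.
\item It preserves the leading coefficients in $x_k$.
\end{enumerate}
The resultants and discriminants in question are computed from polynomials whose relevant data involves products of $O(d)$ sparse polynomials with $s^d$ monomials per pair. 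This is where the $s^d$ and $d!$ (from $\deg_{x_k}\le d$ and the size of discriminants of individual-degree-$d$ polynomials) enter. After projection, I factor both images using bivariate factorization and compare multiplicities.

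The main obstacle is the second condition: showing that one explicit set of substitutions works for all pairs of factors of $f$ and $g$ at once, despite not knowing these factors explicitly. My first attempt is to use that the product of all relevant resultants is a nonzero polynomial of sparsity bounded by $(\ell s)^{O(d)}$ after a suitable Kronecker/KS reduction, and apply the sparse PIT hitting set to it. The characteristic assumption $\Char(\F)>2d$ or $0$ ensures the projected factors remain squarefree-separable, so multiplicities are read correctly. For arbitrary fields, when $f$ and $g$ are products of elements of $\cP(n,s,d)$, I replace this by the primitive-divisor technique adapted from \cite{bisht2025solving}. This costs the larger parameters $(d^2)!$ and $s^{d^3}$.
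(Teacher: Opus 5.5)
\textbf{There is a genuine gap, at the step you flag as the main obstacle: condition (2).}

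To hit ``the relevant resultants'' with a sparse hitting set, you need control over $\operatorname{res}_{x_k}(\phi,\psi)$ for irreducible $\phi\mid F$, $\psi\mid G$ with $F,G\in\cP(n,s,d)$. But $\phi,\psi$ are only known to be $s^{O(d^2\log n)}$-sparse. So the only handle on $\operatorname{res}_{x_k}(\phi,\psi)$ is that it divides the $(2d)!s^{2d}$-sparse polynomial $\operatorname{res}_{x_k}(F,G)$.

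That handle disappears in exactly the cases that matter. We have $\operatorname{res}_{x_k}(F,G)\equiv 0$ whenever $F$ and $G$ share a factor. This happens, for example, when $\phi$ and $\psi$ are distinct factors of the same sparse polynomial, or when every sparse polynomial divisible by $\phi$ is also divisible by $\psi$ (the ``inseparable'' pairs behind primitive divisors). Similarly, $\Delta_{x_k}(F)\equiv 0$ as soon as $F$ has a repeated factor involving $x_k$. So the claimed nonzero $(\ell s)^{O(d)}$-sparse product of resultants is not available, and nothing in your argument forces the images of two distinct factors to stay coprime.

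The product issue by itself is avoidable. If you substitute a generator with $O(1)$ seed variables, such as $G_{(m,i)}$, and keep the seeds symbolic, each relevant polynomial stays nonzero separately. Moreover the generator is universal, so not knowing the factors is harmless.

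Your condition (1) is a separate problem. A deterministic $\poly(n,d!,s^d,\ell)$-size family preserving irreducibility of all factors would give an efficient irreducibility test for $(n,s,d)$-sparse polynomials, which the paper lists as open. It is also unnecessary. For divisibility it suffices that the witness factor's image is non-constant and coprime to the images of the other factors. One then simply tests $f(G_{(m,i)})\mid g(G_{(m,i)})$ for every $i$ by univariate division, with no factoring.

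\textbf{The missing idea is to replace nonvanishing of a determinant by preservation of a \emph{rank}.} In \Cref{lem:G always preserves coprimality of factors}, for $\phi\mid F$ and $\psi\mid G$, the paper considers the Sylvester matrix $M_{x_i}(FG,(FG)')$.
\begin{enumerate}
\item When $\operatorname{char}(\mathbb{F})=0$ or $\operatorname{char}(\mathbb{F})>2d$, its kernel has dimension $\sum_j\deg(h_j)(e_j-1)$, where $FG=\prod_j h_j^{e_j}$. This is where the characteristic assumption enters: it guarantees $h_j'\neq 0$ and $e_j\neq 0$ in $\mathbb{F}$.
\item A nonzero maximal minor of this matrix is a $(4d)!s^{8d}$-sparse polynomial of individual degree $8d^2$. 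Hence substituting $G_{(m,i)}$ with $m=\poly(n,d!,s^d)$ cannot increase the kernel dimension.
\item After substitution, the kernel dimension is at least $\sum_j\deg h_j(G_{(m,i)})(e_j-1)$, with equality iff the $h_j(G_{(m,i)})$ are squarefree and pairwise coprime. By degree preservation (\Cref{G preserves degree}), this lower bound equals the original kernel dimension.
\item Combining these forces equality, and hence coprimality, even when $\phi$ and $\psi$ divide the same sparse polynomial.
\end{enumerate}

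For arbitrary fields, pointing to ``the primitive-divisor technique'' is the right direction, but its content is again a rank argument. It uses the Sylvester matrix $M_{x_i}(f_1^{t_1+t_2},f_2^{k_1+k_2})$ in \Cref{lem:G presereves coprimality of primitve divisors}. That argument only separates non-associate primitive divisors, which is why the moreover part is restricted to products of elements of $\cP(n,s,d)$.
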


As mentioned in \Cref{sec:background}, we will have to solve some rational interpolation problems. Somewhat surprisingly, it is not known
how to deterministically reconstruct two coprime $s$-sparse polynomials $f$ and $g$ from blackbox access to their quotient $g/f$ in polynomial time.
We shall see that with some additional information on $f$, this is in fact possible.
Without this extra information, only exponential deterministic algorithms
are known for solving this problem  \cite{grigoriev1994computational}.
We note that efficient randomized algorithms for solving this problem are known, see e.g., \cite{cuyt2011sparse,HoevenLecerf21}.


\subsection{Related Work}
The questions considered in this work have been studied extensively in recent years. We organize the prior results around three themes that are central to this paper: sparsity bounds, factorization and reconstruction algorithms, and divisibility and identity testing.

\paragraph{Sparsity of Factors of Sparse Polynomials.} Volkovich \cite{volkovich2015deterministically} showed that multilinear factors
of $s$-sparse polynomials are again $s$-sparse. In \cite{volkovich2017some} he proved that  factors of an $s$-sparse multiquadratic
polynomial are $s$-sparse as well. For general polynomials of bounded individual degree $d$, Bhargava, Saraf and Volkovich \cite{bhargava2020deterministic}, obtained the first non-trivial bound on the sparsity of factors of $s$-sparse polynomials with bounded individual degree $d$, namely $s^{O(d^{2}\log n)}$. Bisht and Saxena  \cite{bisht2025derandomization} improved this bound 
to $s^{\poly(d)}$ for the class of \emph{symmetric} $s$-sparse polynomials of bounded individual degree $d$. Bisht and Volkovich \cite{bisht2025solving} obtained
an  $s^{d\log s}$ bound on the sparsity of a quotient of an $s$-sparse polynomial of bounded individual degree $d$ by a multilinear factor. Finally, the work of Forbes \cite{forbes2015deterministic} and  results in \cite{bisht2025solving} imply that the sparsity of the quotient of $f\in\cP(n,s,d)$ by a degree $r$ polynomial is bounded by $(ns)^{O(dr)}$.

\paragraph{Deterministic factorization and reconstruction.} Factorization results have largely paralleled the development of sparsity bounds. Volkovich showed in \cite{volkovich2015deterministically} how to efficiently factorize a
sparse polynomial whose factors are all multilinear. In \cite{volkovich2017some} he demonstrated
how to efficiently compute the irreducible factors of a polynomial from a blackbox access, given that all of them are multiquadratic and sparse.
In  \Cref{finding multiquadratic factors of a product of sparse polynomials} we further improve this result
by relaxing the condition on the input polynomial - we allow it to be a product of factors of some sparse polynomial of bounded individual degree, and not simply a product of sparse multiquadratic polynomials.

Using their sparsity bound, Bhargava, Saraf, and Volkovich \cite{bhargava2020deterministic} derived a deterministic quasi-polynomial algorithm for factoring sparse polynomials of bounded individual degree.  However, their methods do not yield a polynomial-time algorithm for finding the sparse divisors
of the input polynomial, so they do not imply \Cref{finding sparse divisors of sparse polynomials}. Furthermore, their algorithm does not extend to factorization of products of sparse, bounded degree polynomials, let alone products of factors of such polynomials.

Kumar, Ramanathan and Saptharishi \cite{kumar2023deterministicalgorithmslowdegree} and
Dutta, Sinhababu, and Thierauf \cite{DuttaST24},  gave an algorithm that outputs all the irreducible factors of constant degree of a sparse polynomial
in quasi-polynomial time. As their result applies to general sparse polynomials without restrictions on the individual degree, it does not follow from the methods of this paper.

In \cite{bisht2025solving}, an algorithm for deciding if a sparse polynomial of bounded individual degree is a complete power is given;
in \Cref{sec:char-0} we use our stronger methods to obtain this result when the input polynomial is a product of polynomials in $\cCF(n,s,d)$, and
we have blackbox access to it. However, this generalization does not  work when the underlying field is of small characteristic.

\paragraph{Divisibility and identity testing.}

In \cite{volkovich2017some} it is also shown how to efficiently check if two $s$-sparse polynomials of bounded individual degree $d$ divide each other;
this is improved in \Cref{general divisibility testing}, where we show a divisibility testing result for two products of factors of
sparse polynomials of bounded individual degree, given blackbox access to them. 
In the same work,  Volkovich also gave a  deterministic algorithm for testing a factorization of sparse polynomials, with
constant individual degrees, into sparse irreducible factors. That is, testing if $f=\prod g_i$
when $f$ has constant individual degrees and $g_i$-s are irreducible and sparse. We note that by  \Cref{cor:factoring-product-sparse}  we can efficiently find the $g_i$'s and verify that $f=\prod g_i$.

Forbes \cite{forbes2015deterministic} gave a quasi-polynomial-time algorithm for checking if a constant
degree polynomial divides a sparse polynomial. Note that while our result allows the dividing polynomial to be
much more complicated, Forbes' result requires no restriction on the individual degree of the given sparse polynomial. Therefore,
our methods fall short in proving his result, or any other result on sparse polynomials with no restriction on their individual degree.

In \cite{bisht2025solving}, the authors design an efficient blackbox PIT algorithm for the class $\Sigma^{[2]}\Pi\Sigma\Pi^{[\text{ind-deg d
}]}$; while doing so, they obtain some technical results about the class $\cCF(n,s,d)$. In \Cref{sec:arb-char} we show that
their methods imply a stronger result than this PIT. Showing PIT for this class is essentially equivalent to giving an algorithm for determining if two products $f,g$ of $(n,s,d)$-sparse polynomials are equal, given blackbox access to them. We show that the methods of \cite{bisht2025solving} can be used for determining
if $f$ divides $g$. In \Cref{sec:char-0} we further generalize and strengthen the technical results of \cite{bisht2025solving}, but only for fields of zero or sufficiently large characteristic.

\subsection{Proof Overview}

We start by describing some technical results about the class of sparse polynomials of bounded individual degree needed for our proofs. We then turn to give an overview of our results regarding factorization of sparse polynomials. We first discuss our result on the recovery of all irreducible sparse multiquadratic factors of a product of $(n,s,d)$-sparse polynomials, as the proof contains some of the important ideas used in the proofs of our main results. We then move on to describe the recovery of all sparse divisors of sparse polynomials of bounded individual degree and our partial solution to \Cref{q:dst-factor-product-nsd}. Finally, we briefly describe how to use our methods for factoring general $(n,s,d)$-sparse polynomials and products of them, making use of the bound of \cite{bhargava2020deterministic}.   

\subsubsection{Sparse Polynomials of Bounded Individual Degree and Their Factors}\label{sec:methods outline}

The main tool we use for obtaining our results is a generator $G:\mathbb{F}^6 \rightarrow \mathbb{F}^n$ that preserves important properties of polynomials in $\cCF(n,s,d)$. Roughly speaking, we want this map $G$ to satisfy the following two properties:
\begin{enumerate}
    \item Its image $\textrm{Im}(G)$ is an interpolation set for polynomials in $\cP(n,s,d)$.
    \item If $\phi$ and $\psi$ are non-associate\footnote{Two polynomials are associate if one is a scalar product of the other.} irreducible elements of $\cCF(n,s,d)$, then $\phi(G_i)$ and $\psi(G_i)$ are coprime as polynomials in $x_i$, where  $G_i$ should be thought of as ``$G$ when reviving the $i$-th variable'' (i.e., keeping $x_i$ alive and substituting the remaining variables according to $G$).
\end{enumerate}

The first property is easily achieved by the basic properties of the Klivans-Speilman generator (KS-generator), introduced in \cite{KS01}. We show that the second property follows from the first, together with the fact that resultants of sparse polynomials of bounded individual degree are again sparse. However, we are able to do so only when the underlying field $\mathbb{F}$ is of zero characteristic, or if its characteristic is larger than twice the bound on the individual degree. If this is not the case then we are only able to prove something a little weaker - we prove that the second property holds if one replaces the notion of irreducible polynomials with the new notion of \emph{primitive divisors}, which we first define in this paper (we describe this notion below).

Before sketching the proof that the first property of $G$ implies the second, we note that proving that will automatically imply some algorithms for sparse polynomials. For example, it is straightforward to see that by factoring the composition of the input polynomials with $G_i$, the following two results can be obtained from the (general) second property of $G$. 
\begin{enumerate}
    \item Given blackbox access to two products of elements of $\cCF(n,s,d)$,
    we can decide if one divides the other in deterministic polynomial time (\Cref{general divisibility testing}).
    \item Given blackbox access to a product of elements of $\cCF(n,s,d)$,
    we can decide if it is a complete power in deterministic polynomial time (\Cref{general complete power testing}).
\end{enumerate}
This already generalizes results from \cite{volkovich2017some}
and \cite{bisht2025solving}.
As we are able to obtain the general second property of $G$ only for some fields, these results do not apply to arbitrary fields. It turns out, however, that the aforementioned weaker version of the second property of $G$ is strong enough to obtain \Cref{general divisibility testing} over arbitrary fields, in the case we replace $\cCF(n,s,d)$ by $\cP(n,s,d)$ (see \Cref{general characteristic divisibility testing}).

In the rest of this subsection we describe our proofs of the second property, both for arbitrary fields  and for fields with zero or large enough characteristic.

\paragraph{Arbitrary Fields.}\label{arbitraty field sec:methods outline}
In \cite[Lemmas 4.5,4.6]{bisht2025solving} it is shown that the second property of $G$ holds for each pair of irreducible polynomials $\phi$ and $\psi$ that satisfy a certain non-degeneracy condition: there exist $(n,s,d)$-sparse polynomials $f,g$ for which $\phi|f, \psi|g$ and the matrix
$\begin{pmatrix}
    v_{\phi}(f) & v_{\phi}(g) \\
    v_{\psi}(f) & v_{\psi}(g)
\end{pmatrix}$
is invertible, where $v_{\phi}(f)$ is the multiplicity of $f$ as a factor of $\phi$, and so are the other entries, respectively.
That is, a pair of polynomials $\phi, \psi\in \cCF(n,s,d)$ does not 
satisfy this condition if the multiplicity of $\phi$ as a factor of an element $f$ of $\cP(n,s,d)$ determines the multiplicity of $\psi$ as a factor of $f$. In this sense, $\phi$ and $\psi$ are "inseparable" from the point of view of polynomials in $\cP(n,s,d)$. 

For each irreducible $\phi\in \cCF(n,s,d)$, consider all $\psi$s that are inseparable from it, in the sense that $\phi, \psi$ do not satisfy the aforementioned non-degeneracy condition. For each $f\in \cP(n,s,d)$, consider the vector of multiplicities of these $\psi$s as factors of $f$ - $(v_{\psi}(f))_{\psi}$. By the degeneracy assumption, these vectors are all proportional. Consider the $\mathbb{Z}$-span of these vectors, and let $(a_\psi)_{\psi}$ be a generator for this one dimensional lattice. The polynomial $\prod_{\psi}{\psi^{a_{\psi}}}$ is called a primitive divisor for the class $\cP(n,s,d)$.
Note that by construction, as the property of being "inseparable" is an equivalence relation on the class of irreducible polynomials in $\cCF(n,s,d)$, it must hold that every two primitive divisors are coprime.

The notion of a primitive divisor is a general notion, that applies to a general class of polynomials $\cP$, and therefore may be of independent interest; a formal definition of it is given in \Cref{def:primitive divisor}. Primitive divisors for a class of polynomials $\cP$ are for the class $\cP$ what irreducible polynomials are for the class of all polynomials, in the sense that every element of the class $\cP$ can be written as a product of primitive divisors for $\cP$ in a unique way; this is not too difficult to see by the construction given in the previous paragraph, and is formally done in \Cref{factorization to primitive divisors}. Therefore, it makes sense to talk about the multiplicity $v_{\phi}(f)$ of a primitive divisor $\phi$  as a factor of  $f\in\cP$.

With this definition in hand, one can show that every two non-associate primitive divisors $\phi$ and $\psi$ for a class $\cP$ satisfy the non-degeneracy condition of \cite{bisht2025solving}. As a result, we are able to apply the proof of the aforementioned result of \cite{bisht2025solving} to obtain the weaker version of the second property of $G$, namely, for every two non-associate primitive divisors $\phi, \psi$ for $\cP(n,s,d)$ it  
holds that $\gcd_{x_i}(\phi(G_i), \psi(G_i))=1$.

We further note that for every two products $f,g$ of elements of $\cP(n,s,d)$ we have that $f|g$ if and only if $v_\phi(f)\leq v_\phi(g)$ for every primitive divisor $\phi$ for $\cP(n,s,d)$; this is a simple corollary of our unique-factorization theorem. As a result, the weaker second property of $G$  is enough for giving a divisibility testing for $f$ and $g$: just
check if $f(G_i)|g(G_i)$ as polynomials in $x_i$ for every $i$. This is formally done in \Cref{general characteristic divisibility testing}.

\paragraph{Zero or Large Characteristic.}\label{special fields sec:methods outline}

In the proof of \cite[Lemma 4.5]{bisht2025solving}, Bisht and Volkovich  consider the Sylvester matrix $M_{x_i}(f^{\alpha},g^{\beta})$ of the some suitably chosen small powers of the polynomials $f,g$, viewed as polynomials in $x_i$. They then use the first property of $G$ to prove that the rank of this matrix does not change when composing with the generator $G_i$, that is,
$\rank(M_{x_i}(f^{\alpha},g^{\beta}))=\rank(M_{x_i}(f(G_i)^{\alpha},g(G_i)^{\beta}))$.
From that, they deduce that it must be the case that $\phi(G_i)$ and $\psi(G_i)$ do not share factors, as the existence of such a factor would result in rank decrease.

To obtain our result, instead of considering the Sylvester matrix $M_{x_i}(f^\alpha, g^\beta)$, we consider the matrix defining the discriminant $\Delta(fg)$
of the polynomial $fg$. That is, we consider the Sylvester matrix $M_{x_i}(fg, (fg)')$ of $fg$ and its formal derivative with respect to $x_i$, $(fg)'$. We next explain the idea behind it.

Suppose $\phi|f, \psi|g$ are two irreducible elements of $\cCF(n,s,d)$. View all polynomials as polynomials in some variable $x_i$, and
consider the Sylvester matrix $M_{x_i}(fg,(fg)')$. It is not hard to prove that its rank essentially counts the number of unique irreducible factors of $fg$ (when appropriately counted; for the formal statement, see \Cref{cla:sum of multiplicities captured by rank of the Sylvester matrix of f and f'}).
As all the determinants of minors of $M_{x_i}(fg,(fg)')$ are sparse polynomials of low degree, we are  able to show, using the first property of $G$,
that the rank of this matrix does not  change when composing with $G_i$. In particular,
this means that $\phi(G_i), \psi(G_i)$ cannot  share a common factor, as otherwise the number of unique factors of $fg(G_i)$ would be smaller than that of $fg$, in contradiction
to the fact that the rank remains unchanged.
With this result in hand, we obtain \Cref{general divisibility testing} and \Cref{general complete power testing} 
by considering the factorization of the inputs composed with the generators $G_i$.

\subsubsection{Finding Multiquadratic Factors of Products of Sparse Polynomials}\label{multiquadratic outline}
 
We next show how to use the results of \Cref{sec:methods outline} to obtain \Cref{finding multiquadratic factors of a product of sparse polynomials}. 
For the sake of presentation we assume here that $\operatorname{char}(\mathbb{F})=0$.

We first consider the following general setting. Let $f$ be a product of polynomials in $\cCF(n,s,d)$; that is, $f$ is a product of factors of $s$-sparse polynomials of bounded individual degree $d$. Consider some irreducible $\phi \in \cCF(n,s,d)$. By the second property of $G$ it follows that for every $i$ for which $x_i\in \textrm{var}(\phi)$, the highest power of $\phi(G_i)$ that divides $f(G_i)$ is precisely the multiplicity of $\phi$ as a factor of $f$. In particular, this is true whenever $\phi$ is an irreducible $s$-sparse polynomial of bounded individual degree $d$. 
With this observation in mind, we are now ready to describe the algorithm for finding multilinear factors.
Suppose $f=\prod_{i=1}^{\ell}{\phi_i^{e_i}}\in \mathbb{F}[x_0,\vx]$ where $\phi_i$ are factors of $(n+1,s,d)$-sparse polynomials. Suppose further that
we have blackbox access to $f$, and we want to recover all of its irreducible multilinear factors. \footnote{The general algorithm finds the multiquadratic, sparse, irreducible factors of the input. To keep the presentation here as simple as possible, we consider only the multilinear factors.}
Our algorithm follows the following steps:
\paragraph{Step 1 - Normalization and Recursion.}
In this step, we would like to replace our input polynomial $f$ with another polynomial $\tilde{f}$ that still captures all the information about the factors of $f$,
but is normalized (in the sense that its free term is 1 as a polynomial in $x_0$).
To do that, we use a standard method, employed for example in \cite{bisht2025solving,bhargava2020deterministic}.
Let $f=\sum_{i=0}^{\ell d}{f_ix_0^i}$ where $f_i\in \mathbb{F}[\vx]$ and let $\tilde{f}(y,x_1,\ldots,x_n) = f(f_0y,x_1,\ldots,x_n)/f_0$.\footnote{
It is easy to reduce to the case $f_0\neq 0$, so to keep the presentation as simple as possible
we assume it to be the case.}
Note that as we have blackbox access to $f$, we also have blackbox access to $f_0$ and therefore to $\tilde{f}$.
It is straightforward to see that each multilinear factor $f_i(\vx)=a_i(\vx)x_0+b_i(\vx)$ of $f$ now corresponds
to a multilinear factor $\frac{a_if_0}{b_i}y + 1$ of $\tilde{f}$.
Thus, one sees that:
\begin{enumerate}
    \item If $a_i \neq 0$, then all the information about $a_i$ and $b_i$ is captured in $\tilde{f}$.
    \item If $a_i=0$, then the factor $b_i$ cannot  be studied from $\tilde{f}$. However, note that in this case $b_i|f_0$, and therefore it
        can be studied from a factorization of $f_0$.
\end{enumerate}

Note that $f_0$ has the same structure of $f$ - it is itself a product of factors of $s$-sparse polynomials of individual degree $d$. Therefore,
by applying one recursive call, we can find all irreducible multilinear factors of $f_0$.
Thus, to understand the factors with $a_i=0$, we just need to understand, for any multilinear irreducible factor
$\phi$ of $f_0$, how many times it divides $f$.
We will show a way to do so in the next step. For now, the main benefit from this step is the normalization $\tilde{f}$ and the
calculation of the multilinear factors $\phi_i$ of $f_0$.
\paragraph{Step 2 - Obtaining the factorization.}
As we saw in the first step, there are two types of factors - those with $a_i=0$ and those with $a_i\neq 0$.
Thus, we split the algorithm into two parts:
\begin{enumerate}
    \item Recovering the factors for which $a_i = 0$. In this case, it holds that $b_i | f_0$. Therefore, as we already saw, it suffices
    to determine which irreducible factors of $f_0$ divide $f$ (and to what power). To that end, note that for a given
    sparse irreducible factor $\phi$ of $f_0$, it follows from the second property
    of $G$ and from the fact that $f$ is a product of elements of $\cCF(n+1,s,d)$, that to calculate the multiplicity of $\phi$ as a factor of $f$ it suffices
    to check how many times $\phi(G_j)$ divides $f(G_j)$,  where $x_j \in \text{var}(\phi)$. This can be easily done, for example, by factoring both polynomials,
    which is possible as they have a constant number of variables and polynomial degree.
    \item Recovering the factors for which $a_i \neq 0$. In this case, using the first step, we can
    get access to $\frac{a_i(G)f_0(G)}{b_i(G)}$, by factoring $\tilde{f}(y, G)$. Recovering $a_i$ from such an expression is the rational interpolation problem that we earlier alluded to. Here we have the additional information that $b_i$ is a multilinear factor of $f_0$. Thus, to find $b_i$, it suffices to understand,
    for each multilinear irreducible factor $\phi$ of $f_0$, what is the maximal power of $\phi$ that divides $b_i$.
    For any such factor $\phi$, using the fact that $a_i$ and $b_i$ are coprime (as otherwise $a_ix_0+b_i$ will not  be irreducible),  we have
    that $\phi^{k} | b_i$ if and only if $\phi^{t-k+1} \nmid \frac{a_if_0}{b_i}$, where $t$ is the multiplicity of
    $\phi$ as a factor of $f_0$. It follows that we just have to understand the multiplicity of $\phi$ as
    a factor of $\frac{a_if_0}{b_i}$.
    Since $a_i$ is sparse and $f_0$ is a product of polynomials in $\cCF(n,s,d)$, we have that
    $\frac{a_if_0}{b_i}$ is a product of polynomials in $\cCF(n,s,d)$ as well, so we
    can finish by computing the multiplicity of $\phi(G_j)$ as a factor of $\frac{a_i(G_j)f_0(G_j)}{b_i(G_j)}$ (where
    $x_j \in \text{var}(\phi)$). This recovers $b_i$ and subsequently $a_i$ as well, completing the algorithm.
    \end{enumerate}

In the above, we actually solved a rational interpolation problem. That is, we had access to $\frac{af}{b}$, where $b|f$, and needed to recover $a$ and $b$ from this information. We claim that in fact, the algorithm described above works in greater generality.
The exact formulation of this generalization and its proof are given in \Cref{rational interpolation theorem}
and \Cref{thm:multi rational interpolation theorem}.

We conclude with the following remarks. First, we note that it is possible to modify this algorithm to recover all the multiquadratic irreducible factors of $f$, rather than just the multilinear ones. We further note that the algorithm described above still works when the assumption on the field $\mathbb{F}$ is removed, albeit with a slightly worse running time.
These claims are proven in full detail in \Cref{sec:multiquadratic}.

\subsubsection{Finding Sparse Divisors of Sparse Polynomials}

We now describe the ideas behind our main result (\Cref{finding sparse divisors of sparse polynomials}).
Recall that our goal is to output all the sparse divisors of a given $(n,s,d)$-sparse polynomial $f$. That is, we wish to efficiently find those divisors of $f$ that can be ``written down'' efficiently.

To obtain such an algorithm, a natural thing to do is to follow the lines of the algorithm presented in \Cref{multiquadratic outline}.
The first step of this algorithm is performing a normalization and then applying a recursive call for factoring the free term $f_0$.
But now, a problem arises - this recursive call only returns the irreducible sparse factors of $f_0$, while
the free term of each factor of $f$ ($b_i$ in the language of \Cref{multiquadratic outline}) might be a \emph{reducible} sparse divisor of $f_0$ whose irreducible
factors are not sparse; in this case, the recursive call did not provide us with any valuable information about $b_i$.
In particular, we cannot  reconstruct $b_i$ by studying how many times each sparse irreducible factor of $f_0$ divides it as we did in \Cref{multiquadratic outline}. In fact,
the true reason the algorithm presented in \Cref{multiquadratic outline} works is that in that context, $b_i$ is sparse multilinear (or sparse multiquadratic in the more general case),
and therefore so are its irreducible factors; hence, we indeed get information about $b_i$ from the recursive call. In other words, the free terms there belong to a class of sparse polynomials that is closed under factorization.

This motivates considering the problem of finding all the \emph{sparse divisors} of a sparse polynomial. This way, we will  still
have information about the free terms of the factors of $f$ - the recursive call will actually compute them.
Indeed, as sparse divisors of sparse divisors of $(n,s,d)$-sparse polynomials are themselves sparse divisors of $(n,s,d)$-sparse polynomials, so this class is closed under factorization. This enables us to recover all these sparse divisors in
deterministic $\poly(n,s^{\poly(d)})$-time. 
Indeed, the recursive call returns all the sparse divisors of the free term $f_0$ rather than just the set of irreducible factors of it, so we can
directly brute-force over the list of these factors to find $b_i$. This yields \Cref{finding sparse divisors of sparse polynomials}.

At first glance, this approach may seem surprising, as the existence of such an algorithm suggests that the
possible number of outputs is bounded above by $\poly(n,s^{\poly(d)})$. Indeed,  \Cref{thm:bound on the number of irreducible factors of sparse polynomials} ensures that $s$-sparse polynomials
of bounded individual degree $d$ have no more than $d\log s$ irreducible factors (that are not monomials), and therefore, no more than $s^d$ divisors
(not divisible by a monomial) overall.
To prove \Cref{thm:bound on the number of irreducible factors of sparse polynomials}, we show that if $f\in \cP(n,s,d)$ is not divisible by any monomial,
then one can identify a set of $\log s$ variables that satisfy the property that every irreducible factor of $f$
depends on at least one of them. From this claim it easily follows that $f$ has at most $d\log s$ irreducible factors, and therefore
at most $2^{d\log s}=s^d$ divisors overall. To prove that such a set of variables exist, we analyze the Newton polytope of $f$, and
show that if the minimal set of variables satisfying the property is of size $k$, then the Newton polytope of $f$ has at least $2^k$ vertices. This is somehow intuitive, as each additional variable in the set effectively adds a dimension to the polytope and therefore doubles
the number of its vertices. 

We note that \Cref{finding sparse divisors of sparse polynomials} is not implied by the methods of \cite{bhargava2020deterministic}; indeed, 
there the bound on the sparsity of \emph{all irreducible factors} of $f$ is needed to hit all resultants of different factors; therefore,
applying their methods to this problem must result in a running time of at least $s^{O(d^2\log n)}$, which is their bound on the sparsity of the factors of $f$.

One main drawback of our algorithm is that it does not  tell us what are the \emph{irreducible} sparse divisors of the
input polynomial, as we are  not familiar with a way of efficiently identifying the irreducible polynomials from the outputs we get. On the other hand, if we are guaranteed that $f$ is a product of irreducible sparse factors then we can easily find the irreducible factors among all the sparse divisors. This establishes \Cref{cor:factoring-product-sparse}, thereby solving \Cref{q:factoring-sparse-irred-product} for the special case of $(n,s,d)$-sparse polynomials.

These ideas are also used in the proof of \Cref{thm:factor-of-product-of-nsd}.
Recall that \Cref{q:dst-factor-product-nsd} asks for a deterministic, polynomial-time algorithm that given
blackbox access to a product of $(n,s,d)$-sparse polynomials returns the factors.
While we do not know  a polynomial-time algorithm when the number of terms is unbounded, we show an algorithm that runs in quasi-polynomial time in their number, making the first
step towards a resolution of the problem. 
The algorithm
is the natural, straightforward generalization of the algorithm described above. Unfortunately, it requires
quasi-polynomial time as we actually find all the $s$-sparse divisors of bounded individual degree $d$
of $f$, and there are (possibly) quasi-polynomially many of them (as shown by \Cref{general bound on the number of sparse divisors}). 

\subsubsection{Factorization for Sparse Polynomials}\label{sec:BSV overview}

Lastly, we discuss our algorithms for factoring $(n,s,d)$-sparse polynomials and their products. These algorithms are essentially the same as the algorithms we already presented; the only difference is that now, when dealing with elements of $\cCF(n,s,d)$, we view them as elements of $\cP(n,s^{O(d^2\log n)},d)$. 

Consider for example our algorithm for finding all the sparse divisors of an $(n,s,d)$-sparse polynomial. With a slight modification, we can make it output all the $(n,s^{O(d^2\log n)}, d)$-sparse divisors of the input polynomial in $\poly(n,s^{d^2\log n})$-time. By \cite{bhargava2020deterministic}, these are all the possible divisors. Identifying the irreducible factors among these can be easily done. For example, we can check for any two divisors $\phi$ and $\psi$ if $\psi|\phi$ by trying to sparse-interpolate their quotient - which should be an $s^{O(d^2\log n)}$-sparse polynomial. 

As for our algorithm for factoring products of $\cCF(n,s,d)$ elements, it is identical to the algorithm described in \Cref{multiquadratic outline}, now taking into account that all irreducible factors of interest are $s^{O(d^2\log n)}$-sparse, rather than $s$-sparse and multiquadratic. 

\subsection{Organization}
In \Cref{sec:prelim} we give all the background needed for this paper. In \Cref{Section 3} we construct the generator $G$, which
is the main technical tool of this paper. 
In \Cref{Section 4} we prove the basic properties of $G$ we need, both for arbitrary fields and when the characteristic is zero or large enough, and apply them to obtain our divisibility testing and exact power testing results (\Cref{general divisibility testing}, \Cref{general complete power testing}). In \Cref{Section 5} we prove our rational interpolation result.

We give a \emph{meta-algorithm}, according to which all of our algorithms are modeled, in \Cref{sec:meta}. We then demonstrate how to apply it to recover the multiquadratic sparse irreducible factors of a product of polynomials in $\cCF(n,s,d)$ (\Cref{finding multiquadratic factors of a product of sparse polynomials}) in \Cref{sec:multiquadratic}.
In \Cref{sec:sparse-div-of-sparse} we bound the number of divisors of $(n,s,d)$-sparse polynomials (\Cref{thm:bound on the number of irreducible factors of sparse polynomials}), and use it to 
find the all the sparse divisors of such polynomials (\Cref{finding sparse divisors of sparse polynomials}), as well as to give one partial solution to \Cref{q:dst-factor-product-nsd} (\Cref{thm:factor-of-product-of-nsd}).
In \Cref{sec:BSV based results} we give our algorithms for factoring $(n,s,d)$-sparse polynomials and products of them, based on the bound of \cite{bhargava2020deterministic} (\Cref{improved BSV},\Cref{improved BSV for products}). The second result gives another partial solution to \Cref{q:dst-factor-product-nsd}.
We conclude with a short discussion in \Cref{Section:open}.
In \Cref{appendixA} we give different, yet weaker, proofs for some technical claims from \Cref{Section 4}. Since the proof technique is different, we decided to give the proof in spite of the results being slightly weaker.

\section{Preliminaries}\label{sec:prelim}

We use $\vx$ to denote the $n$-variate vector of variables, $\vx=(x_1,\ldots,x_n)$. In general, bold-face letters such as $\vx,\ve$ will be used to denote $n$-tuples, where the domain of the $n$-tuple will always be clear from the context.

We say that two nonzero polynomials $f,g$ are associate  if one is a scalar product of the other. We also state the following special case of Gauss' lemma.

\begin{lemma}[Gauss's Lemma]\label{lem:gauss}
A 
polynomial $f(x_0,\vx) = \sum_{i=0}^{d} c_i(\vx) x_0^i \in \F[\vx][x_0]\setminus \F[\vx]$ is irreducible if and only if it is both
irreducible in $\F(\vx)[x_0]$ and $\gcd(c_0, \dots, c_d)=1$.
\end{lemma}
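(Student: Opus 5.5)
The plan is to prove the two directions separately, working in the polynomial ring $R[x_0]$ with $R=\F[\vx]$. This is a UFD whose fraction field is $K=\F(\vx)$, and its units are exactly the nonzero constants. Write $\mathrm{cont}(f)=\gcd(c_0,\dots,c_d)$. The main tool will be the classical Gauss lemma on contents: $\mathrm{cont}(gh)=\mathrm{cont}(g)\,\mathrm{cont}(h)$ up to units in $\F^\times$. This follows from the standard prime argument. If an irreducible $p\in R$ divided all coefficients of $gh$ but none of $g$ or $h$, then reducing modulo $p$ in the domain $(R/p)[x_0]$ would give $\bar g\bar h=0$ with $\bar g,\bar h\neq 0$, which is a contradiction.

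($\Rightarrow$) Suppose $f$ is irreducible in $R[x_0]$. If $c=\mathrm{cont}(f)$ were a nonconstant polynomial, then $f=c\cdot (f/c)$ would be a factorization into two non-units. Here $f/c\notin\F$ because $\deg_{x_0}f\ge1$. So $\gcd(c_0,\dots,c_d)=1$. Now suppose $f=GH$ in $K[x_0]$ with $\deg_{x_0}G,\deg_{x_0}H\ge1$. Clearing denominators gives $a f=g h$ with $a\in R\setminus\{0\}$, $g,h\in R[x_0]$, and $\deg_{x_0} g,\deg_{x_0} h\ge 1$. Divide $g$ and $h$ by their contents to get primitive $g_1,h_1$, so that $af=\mathrm{cont}(g)\mathrm{cont}(h)\,g_1h_1$. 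Taking contents and using the multiplicativity lemma, $a\cdot 1=\mathrm{cont}(g)\mathrm{cont}(h)$ up to a unit. Cancelling this common factor gives $f=\lambda g_1h_1$ with $\lambda\in\F^\times$. Both $g_1$ and $h_1$ have positive $x_0$-degree, hence are non-units, and this contradicts the irreducibility of $f$. Finally, $f$ is not a unit of $K[x_0]$ because $\deg_{x_0} f\ge 1$.

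($\Leftarrow$) Suppose $f$ is irreducible over $K$ and primitive, and write $f=gh$ in $R[x_0]$. Irreducibility over $K$ forces one factor, say $g$, to have $x_0$-degree $0$, so $g\in R$. Then $g$ divides every $c_i$, so $g\mid\mathrm{cont}(f)=1$, and hence $g\in\F^\times$. So $f$ has no nontrivial factorization. Note also that $f$ itself is not a unit, because $f\notin\F[\vx]$.

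The main obstacle is the multiplicativity of content, and specifically the reduction modulo a prime $p$. This needs $R$ to be a UFD, so that irreducibles are prime and $R/p$ is a domain. I will state that standard fact for $\F[\vx]$. The remaining steps are routine bookkeeping with units.
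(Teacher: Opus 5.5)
Your proof is correct. It is the standard textbook argument: multiplicativity of content via reduction modulo a prime of the UFD $\F[\vx]$, then clearing denominators for ($\Rightarrow$), and a degree count plus primitivity for ($\Leftarrow$).

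The paper gives no proof of this lemma; it quotes it as a special case of Gauss's lemma. So there is no route to compare against. Your write-up makes the statement self-contained, with the only external input being that $\F[\vx]$ is a UFD.

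One small bookkeeping remark. Your reduction-mod-$p$ sketch, as written, proves only that a product of primitive polynomials is primitive. When you take contents in $af=\mathrm{cont}(g)\,\mathrm{cont}(h)\,g_1h_1$, you also use $\mathrm{cont}(c\,k)=c\,\mathrm{cont}(k)$ for $c\in\F[\vx]$. This is homogeneity of gcd in a UFD and is equally routine, so it is worth stating alongside the prime argument.
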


\subsection{Reverse Monic Polynomials and Normalization}

Throughout the paper we will use the following definition.
\begin{definition}\label{def:reverse-monic}
    A polynomial $f\in \mathbb{F}[x_0,\vx]$ is said to be reversed-monic with respect to $x_0$ if its free term
    as a polynomial in $\mathbb{F}(\vx)[x_0]$ is 1. 
    Equivalently, $f$ is reversed-monic with respect to $x_0$ if it is of the form $f=1+\sum_{i\ge1}{f_ix_0^i}$ where $f_i\in \mathbb{F}[\vx]$. 
\end{definition}

We prove the following simple claim about reversed-monic polynomials.
\begin{claim}\label{cla:reversed-monic bound}
    Let $f\in \mathbb{F}[x_0,\vx]$ be a reversed-monic polynomial with respect to $x_0$. Then, the number of irreducible factors of $f$ is at most
    $\deg_{x_0}(f)$.
\end{claim}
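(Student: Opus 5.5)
The plan is to show that every irreducible factor of $f$ has positive degree in $x_0$; the bound then follows from additivity of $x_0$-degree. Write $f=\prod_{j=1}^{m}g_j$ with each $g_j\in\mathbb{F}[x_0,\vx]$ irreducible, counted with multiplicity. By additivity of degree in the integral domain $\mathbb{F}(\vx)[x_0]$,
\[
\deg_{x_0}(f)=\sum_{j=1}^{m}\deg_{x_0}(g_j).
\]
It therefore suffices to show $\deg_{x_0}(g_j)\ge 1$ for every $j$, since this gives $m\le\deg_{x_0}(f)$.

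To prove this, suppose some factor $g_j$ has $\deg_{x_0}(g_j)=0$, that is, $g_j\in\mathbb{F}[\vx]$ and $g_j$ is not a constant.

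\begin{itemize}
\item Then $g_j$ divides every coefficient $f_i\in\mathbb{F}[\vx]$ of $f$ viewed as a polynomial in $x_0$. Indeed, $f=g_j h$, and writing $h=\sum_i h_i x_0^i$ and comparing coefficients of $x_0^i$ gives $f_i=g_j h_i$.
\item In particular $g_j$ divides the free term $f_0=1$.
\item This is impossible for a nonconstant polynomial, since the total degree of $g_j h_0$ is at least $\deg(g_j)\ge 1$ whenever $h_0\ne 0$.
\end{itemize}

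Equivalently, the gcd of the coefficients of $f$ is $1$, so by \Cref{lem:gauss} no factor of $f$ lies in $\mathbb{F}[\vx]$. Hence every irreducible factor has $x_0$-degree at least $1$, which completes the argument.

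I do not expect any real obstacle here. The only point needing care is to state that irreducible factors are by definition nonconstant, so that the counting with multiplicities is well defined and units are excluded.
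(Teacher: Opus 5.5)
Your proof is correct and follows essentially the same route as the paper: any irreducible factor not involving $x_0$ would divide the free term $f_0=1$, so every irreducible factor has positive $x_0$-degree. Additivity of $\deg_{x_0}$ then bounds the number of irreducible factors, counted with multiplicity, by $\deg_{x_0}(f)$; you simply make this last step explicit, which the paper leaves implicit.
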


\begin{proof}
    We first note that all irreducible factors of $f$ must depend on $x_0$. Indeed, write $f=\sum_{i=0}^{\deg_{x_0}(f)}{f_ix_0^i}$ where $f_i\in \mathbb{F}[\vx]$. By definition of reversed-monic polynomials, $f_0=1$. Clearly, any divisor of $f$ that does not depend on $x_0$ must divide $f_0$, and hence no such non-trivial divisors exist. 
    The claim follows, since there are at most $\deg_{x_0}(f)$ irreducible factors of $f$ depending on $x_0$. 
\end{proof}

We next define a \emph{normalization} operation that turns a polynomial into a reverse monic one.

\begin{definition}[Normalization]\label{def:normalization}
Let $f(x_0,\vx)=\sum_{i=0}^df_i(\vx)x_0^i\in\F[x_0,\vx]$, such that $x_0\nmid f$. The normalization of $f$ with respect to $x_0$ is the reverse monic polynomial
\begin{equation}\label{definition of normalization}
  \tilde{f}(y,\vx) = f(yf_0,x_1,\ldots,x_n) / f_0=1+\sum_{i=1}^{d}f_i(\vx)f_0^{i-1}(\vx)y^i.
\end{equation}
\end{definition}

Normalization is a standard and well-known technique, and different versions of our next claim can be also found in \cite{bhargava2020deterministic,bisht2025solving}. Our version of the claim describes the structure of the factorization of $\tilde{f}$, which will play an important role in our proofs and reconstruction algorithms.

\begin{claim}\label{cla:normalization of a polynomial}
    Let $f$ as in \Cref{def:normalization}.
    Suppose $f = \prod_{i=0}^{N}{h_i}\in \mathbb{F}[x_0,\vx]$ is the factorization of $f$ into irreducible factors $h_i$,
    and denote each factor by $h_i=\sum_{j=0}^{d_i}{c_{i,j}{x_0}^{j}}$ (where $c_{i,j} \in \mathbb{F}[\vx]$).
    Then, the factorization of $\tilde{f}$ is given by 
    \[\tilde{f} = \prod_{i=0}^{N}{\tilde{h}_i}\ , \textrm{where}\quad \tilde{h}_i = \sum_{j=0}^{d_i}{\frac{c_{i,j}}{c_{i,0}}{f_0}^{j}y^{j}} \ .\]
    Moreover, every factor (not necessarily irreducible) of $f$ of the form $\sum_{j=0}^{d}{\zeta_{j}{x_0}^{j}}$ (where $\zeta_{j} \in \mathbb{F}[\vx]$)
    corresponds to a factor of $\tilde{f}$ of the form $\sum_{j=0}^{d}{\frac{\zeta_{j}}{\zeta_{0}}{f_0}^{j}y^{j}}$.
    In particular, every multiquadratic factor of $f$ of the form $a_2x_0^2+a_1x_0+b$ corresponds
    to a factor of $\tilde{f}$ of the form $1+\frac{a_1f_0}{b}y+\frac{a_2f_0^2}{b}y^2$. In addition, we have blackbox access to $\tilde{f}$ from blackbox access to $f$.
\end{claim}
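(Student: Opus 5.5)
The plan is to compute directly from \Cref{def:normalization}: first check that the substitution $x_0\mapsto yf_0$ followed by division by $f_0$ is multiplicative across the factorization of $f$, then show that each resulting piece is an irreducible polynomial.

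\textbf{Step 1: the product formula.} Since $x_0\nmid f$, we have $f_0 = f(0,\vx)\neq 0$. Moreover $f_0=\prod_i h_i(0,\vx)=\prod_i c_{i,0}$, so every $c_{i,0}\neq 0$. Substituting $x_0=yf_0$ into $f=\prod_i h_i$ and dividing by $f_0=\prod_i c_{i,0}$ gives
\[
\tilde f=\prod_i \frac{h_i(yf_0,\vx)}{c_{i,0}}=\prod_i \tilde h_i ,
\]
since $h_i(yf_0,\vx)/c_{i,0}=\sum_j \frac{c_{i,j}}{c_{i,0}}f_0^jy^j$. The identical computation applied to any factorization $f=g\cdot g'$ with $g=\sum_j\zeta_jx_0^j$ shows that $\sum_j\frac{\zeta_j}{\zeta_0}f_0^jy^j$ divides $\tilde f$. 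The multiquadratic statement is just the case $d=2$. Blackbox access follows because $f_0(\vx)=f(0,\vx)$ is one query, after which $\tilde f(y,\vx)=f(yf_0(\vx),\vx)/f_0(\vx)$ takes one more query and a division.

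\textbf{Step 2: each $\tilde h_i$ is a polynomial.} This requires $c_{i,0}\mid c_{i,j}f_0^j$ for all $j\ge1$, which holds because $c_{i,0}\mid f_0$. The same argument works for an arbitrary factor $g$, since $\zeta_0\mid f_0$. In addition, $\tilde h_i$ has constant term $1$ in $y$, so it is reversed-monic.

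\textbf{Step 3: irreducibility of $\tilde h_i$ (the main obstacle).} This is the one nontrivial point, and I would handle it with \Cref{lem:gauss}. First consider the factors $h_i$ with $\deg_{x_0}h_i=0$. For these, $h_i=c_{i,0}$ and $\tilde h_i=1$. Such $h_i$ divide $f_0$ and are absorbed in the normalization, so the factorization statement should be read as being over the factors depending on $x_0$, with the others contributing trivial terms.

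Now take $h_i$ with $\deg_{x_0}h_i\ge1$. By \Cref{lem:gauss} it is irreducible in $\F(\vx)[x_0]$. The map $x_0\mapsto yf_0$ is an invertible affine change of variable over $\F(\vx)$, and scaling by $1/c_{i,0}$ is multiplication by a unit there. Hence $\tilde h_i$ is irreducible in $\F(\vx)[y]$ and has positive $y$-degree. Its constant coefficient is $1$, so the gcd of its coefficients is $1$. Applying \Cref{lem:gauss} again, $\tilde h_i$ is irreducible in $\F[y,\vx]$. This gives the factorization of $\tilde f$ into irreducibles, which is unique up to units. It also agrees with \Cref{cla:reversed-monic bound}, because only the $x_0$-dependent $h_i$ give nontrivial factors.
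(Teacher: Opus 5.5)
Your algebraic argument is essentially the paper's own. You derive the product formula from $f_0=\prod_i c_{i,0}$ and prove irreducibility of $\tilde h_i$ by passing to $\F(\vx)[x_0]$. There, $x_0\mapsto yf_0$ is an invertible change of variable and division by $c_{i,0}$ is multiplication by a unit. You then return to $\F[y,\vx]$ via \Cref{lem:gauss}, using that the free term is $1$. Two points are, if anything, more careful than the paper: your explicit check that $c_{i,0}\mid c_{i,j}f_0^j$, so that $\tilde h_i$ is a polynomial, and your remark that the $x_0$-free $h_i$ contribute $\tilde h_i=1$ rather than an irreducible factor.

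The one step that fails as written is the blackbox access. You compute $\tilde f(y_0,\valpha)$ as $f(y_0f_0(\valpha),\valpha)/f_0(\valpha)$, and this is $0/0$ whenever $f_0(\valpha)=0$. In that case $f(y_0f_0(\valpha),\valpha)=f(0,\valpha)=f_0(\valpha)=0$, while the true value is $\tilde f(y_0,\valpha)=1+f_1(\valpha)y_0$.

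Such points are not exotic. $f_0$ is an arbitrary nonzero polynomial, and the later algorithms query $\tilde f$ on a whole grid of points of the form $(y,G_{(m)}(\cdot))$ in order to interpolate $\hat f$. So a division-based procedure does not give a blackbox for $\tilde f$.

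The fix, which is what the paper does, avoids division altogether:
\begin{enumerate}
\item Query $f(x_0,\valpha)$ at $\deg_{x_0}(f)+1$ distinct values of $x_0$.
\item Interpolate this univariate polynomial to read off every coefficient $f_i(\valpha)$.
\item Output $1+\sum_{i\ge 1}f_i(\valpha)f_0(\valpha)^{i-1}y_0^i$, which is the right-hand side of \eqref{definition of normalization}.
\end{enumerate}
This costs $d+1$ queries per evaluation instead of two, but it is valid at every point.
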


\begin{proof}
    First, note that $f_0 = \prod_{i=0}^{N}{c_{i,0}}$. Next, let
    \begin{equation*}
        \tilde{f}(y,\vx)=f(yf_0,\vx) /f_0 = \prod_{i=0}^{N}{\frac{h_i(yf_0,\vx)}{c_{i,0}}} \ .
    \end{equation*}
    Define
    \begin{equation*}
    \tilde{h}_i(y,\vx) := \frac{h_i(yf_0,\vx)}{c_{i,0}} = \sum_{j=0}^{d_i}{\frac{c_{i,j}{(yf_0)}^{j}}{c_{i,0}}} = 
    \sum_{j=0}^{d_i}{\frac{c_{i,j}}{c_{i,0}}{f_0}^{j}{y}^{j}} \in \F[y,\vx] \ .    
    \end{equation*}

    We claim that $\tilde{h}_i$ is irreducible. If $d_i=0$, then $\tilde{h}_i=1$, so this is clear.
    Assume $d_i>0$. By definition, ${h_i}$ is an irreducible polynomial as an element of $\mathbb{F}[x_0,\vx]$,
    and therefore it is also irreducible as an element of $\mathbb{F}(\vx)[x_0]$. In this ring it holds that
    $\tilde{h}_i(x_0) = h_i(f_0x_0)/c_{i,0}$. In particular, $\tilde{h}_i$ differs from $h_i$ by a linear change of
    variables and a multiplication by scalar (in the field $\F(\vx)$), so it is irreducible as well. Since the free term of $\tilde{h}_i$ is $1$, it follows
    from Gauss' lemma (\Cref{lem:gauss}) that $\tilde{h}_i$ is irreducible as an element of $\mathbb{F}[x_0,\vx]$, as we wanted.

   The "moreover" part follows from the fact that the argument on the form of $\tilde{h}_i$ holds regardless of the irreducibility of $h_i$.

   The claim regarding blackbox access to $\tilde{f}$ follows from simple interpolation: given an evaluation point $(y_0, \valpha)$ we interpolate the polynomial $f(x_0,\valpha)$ as a polynomial in $x_0$ to get access to each $f_i(\valpha)$. One can then obtain $\tilde{f}(y_0,\valpha)$ simply by using the right hand side of \eqref{definition of normalization}.
   In particular, every query to $\tilde{f}$ requires at most $d+1$ queries to $f$. 
\end{proof}

\subsection{Factorization and Interpolation of Polynomials}\label{factorization background}

We start by stating a result on factorization of univariate polynomials. 
The first is the classical algorithm of Lenstra, Lenstra and Lovasz \cite{LLL}.

\begin{theorem}[Univariate factorization over $\Q$]\label{thm:lll}
    Let $f\in \Q[x]$ have degree $d$ and bit complexity $B$. Then, there is a deterministic algorithm that outputs all irreducible factors of $f$ whose running time is $\poly(d,B)$.
\end{theorem}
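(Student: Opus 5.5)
The plan is to follow the classical Lenstra--Lenstra--Lov\'asz strategy: reduce to a squarefree primitive integer polynomial, factor modulo a small prime, Hensel-lift the modular factors, and recover true factors with lattice basis reduction.

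\textbf{Preprocessing.} First clear denominators and divide by the content. This leaves a primitive $f\in\mathbb{Z}[x]$ whose coefficient bit-length is $\poly(d,B)$. Next compute $g=\gcd(f,f')$ over $\mathbb{Q}$ by the Euclidean algorithm with subresultants, which keeps bit sizes polynomial. Then $f/g$ is squarefree. The multiplicities of its factors in $f$ are recovered at the end by repeated exact division, and $g$ is handled recursively with smaller degree. So I may assume $f$ is squarefree with nonzero discriminant. The discriminant $\mathrm{disc}(f)$ and the leading coefficient have bit size $\poly(d,B)$, so their product has at most $\poly(d,B)$ prime divisors. Trying primes in increasing order therefore finds, within $\poly(d,B)$ candidates, a prime $p$ of size $\poly(d,B)$ such that $f\bmod p$ is squarefree of full degree. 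For this $p$ I factor $f\bmod p$ deterministically with Berlekamp's algorithm, whose running time is polynomial in $d$ and $p$; this is acceptable because $p$ is small.

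\textbf{Lifting.} Next I Hensel-lift the factorization $f\equiv \mathrm{lc}(f)\prod_j u_j \pmod p$ to a factorization modulo $p^k$. The exponent $k$ is chosen so that $p^k$ exceeds $2^{O(d^2)}\cdot(\text{Mignotte bound})^{O(d)}$, which keeps $k\log p=\poly(d,B)$. By the Mignotte bound, every integer factor $h$ of $f$ has coefficients bounded by $2^d\|f\|$.

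\textbf{Lattice step (main obstacle).} Take a lifted irreducible factor $u$ modulo $p^k$ that divides some irreducible integer factor $h_0$ of $f$. Consider the lattice $L$ of integer polynomials of degree $<m$ that are divisible by $u$ modulo $p^k$, for $m=\deg u,\dots,d$. Run LLL on $L$ to get a short vector $b$. The key step is the resultant argument: if some $b\in L$ has $\|b\|^d\|f\|^{\deg b}<p^{k\deg u}$, then $\gcd(b,f)$ is nontrivial and divisible by $h_0$. This holds because $\mathrm{Res}(b,h_0)$ is divisible by $p^{k\deg u}$ yet is smaller than that in absolute value, so it must vanish. Since $h_0$ itself lies in $L$ for $m>\deg h_0$ and is short by Mignotte, the $2^{(m-1)/2}$ approximation guarantee of LLL makes the output satisfy this inequality. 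Hence the first $m$ at which the condition succeeds yields $h_0=\gcd(b,f)$. Getting the constants in $k$ right so that this comparison goes through is the delicate part.

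\textbf{Iteration.} Finally, divide $f$ by $h_0$, remove the modular factors that divide $h_0$, and repeat on the quotient. There are at most $d$ rounds, each costing $\poly(d,B)$ bit operations, since LLL runs in time polynomial in the dimension and the logarithm of the entries. This gives the claimed $\poly(d,B)$ bound.
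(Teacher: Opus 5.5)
The paper gives no proof of this statement; it simply quotes the algorithm of Lenstra, Lenstra and Lov\'asz. Your outline is exactly that algorithm: squarefree reduction, a small prime $p\nmid \mathrm{lc}(f)\,\mathrm{disc}(f)$, Berlekamp, Hensel lifting, and a short vector in the lattice of polynomials divisible by $u$ modulo $p^k$.

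The one step that does not go through as written is your justification of the test in the lattice step. The divisibility $p^{k\deg u}\mid\mathrm{Res}(b,h_0)$ is correct: the image of the Sylvester map of $b,h_0$ lies in the sublattice of polynomials divisible by $u$ modulo $p^k$, and that sublattice has index $p^{k\deg u}$. However, Hadamard's inequality only gives
$$|\mathrm{Res}(b,h_0)|\le\|b\|^{\deg h_0}\|h_0\|^{\deg b},$$
and $\|h_0\|$ can exceed $\|f\|$. So the size comparison you invoke does not follow from the hypothesis $\|b\|^d\|f\|^{\deg b}<p^{k\deg u}$.

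The repair is routine, and you can choose either of two options.
\begin{enumerate}
\item State the test with the Mignotte bound, $\|b\|^{d}(2^d\|f\|)^{\deg b}<p^{k\deg u}$. Your choice $p^k>2^{O(d^2)}(\text{Mignotte bound})^{O(d)}$ already makes the LLL output pass this test once $h_0\in L$. The irreducible factors of later quotients are factors of the original $f$, so the same $k$ serves every round.
\item Keep your test exactly as written, but prove it as Lenstra, Lenstra and Lov\'asz do. Assume $(u\bmod p)\nmid(\gcd(f,b)\bmod p)$, and bound the determinant of the lattice of combinations $\lambda f+\mu b$ (of suitably bounded degrees) by $\|f\|^{\deg b}\|b\|^{d}$. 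This avoids $\|h_0\|$ entirely.
\end{enumerate}

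There is also an off-by-one to fix. With your convention (degree $<m$, $m\le d$), $h_0$ never lies in $L$ when $\deg h_0=d$. You must add that $h_0=f$ if the test fails for every $m$.

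With these changes your conclusions hold. The first successful $m$ is $\deg h_0+1$, the returned $b$ is an integer multiple of $h_0$, and $\gcd(b,f)=h_0$ because $f$ is primitive.
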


For the following theorem see e.g., \cite[Section 9]{GathenShoup1992-deterministic-factor}.

\begin{theorem}[Univariate factorization over Finite Fields]\label{thm:finite-factor}
    Let $\F_q$ be a field of size $q=p^k$ and characteristic $p>0$.
    Let $f(x)\in \F_q[x]$ be a polynomial of degree $d$. Then, there is a deterministic algorithm that outputs all irreducible factors of $f$ in time $\poly(p,\log q,d)$.
\end{theorem}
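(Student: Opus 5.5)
This statement is a cited standard result, not one the paper proves. The plan is therefore to reduce it to the classical deterministic factoring pipeline over $\F_q$, following \cite[Section 9]{GathenShoup1992-deterministic-factor}, in three steps.

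\textbf{Step 1: preprocessing.} Make $f$ monic. Then perform squarefree factorization using $\gcd(f,f')$. In characteristic $p$, if $f'=0$ then $f$ is a $p$-th power. Write $f(x)=g(x)^p$ with $g$ having coefficients $a^{p^{k-1}}$; these $p$-th roots cost $\poly(\log q)$ field operations each. Recurse on $g$.

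\textbf{Step 2: distinct-degree factorization.} Compute $\gcd(f, x^{q^i}-x)$ for $i=1,\dots,d$. Each $x^{q^i} \bmod f$ is obtained by repeated squaring in $\poly(d,\log q)$ operations. This splits $f$ into products $f_i$ whose irreducible factors all have degree $i$.

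\textbf{Step 3: equal-degree factorization (Berlekamp).} Fix one $f_i$. Form the Berlekamp subalgebra
\[
B=\{h \bmod f_i : h^q\equiv h \pmod{f_i}\}.
\]
Its $\F_q$-dimension equals the number of irreducible factors of $f_i$. A basis of $B$ is computed by linear algebra over $\F_q$.

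By CRT, $B\cong \F_q^r$, where $r$ is the number of factors. Since $\F_p\subseteq\F_q$, there is also a subalgebra $B_p=\{h: h^p\equiv h\}\cong\F_p^{r}$, again computable by linear algebra; this uses the Frobenius $\F_p$-linearly, which is possible because $\F_q$ is a $k$-dimensional $\F_p$-space. If $r\ge2$, there is a basis element $h$ of $B_p$ that is non-constant modulo $f_i$. Then
\[
f_i=\prod_{c\in\F_p}\gcd(f_i,\,h-c)
\]
is a nontrivial splitting, since each CRT component of $h$ lies in $\F_p$. This enumeration over $c\in\F_p$ is exactly where the factor $p$ enters the running time. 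Iterate with all basis elements until every factor is irreducible, which takes at most $d$ splits.

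\textbf{Running time.} Each field operation in $\F_q$ costs $\poly(\log q)$. The total is $\poly(p,\log q,d)$.

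\textbf{Main obstacle.} The key difficulty is deterministic splitting without randomness. The choice to work over the prime subfield and enumerate all of $\F_p$ resolves it, at the cost of the $p$ dependence, which the statement allows.
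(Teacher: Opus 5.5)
Your proposal is correct. The paper gives no proof of this statement; it simply quotes it from \cite[Section 9]{GathenShoup1992-deterministic-factor}. Your sketch is the classical deterministic Berlekamp algorithm on which such $\poly(p,\log q,d)$ bounds rest. It has two stages: a squarefree reduction, using $p$-th roots $a\mapsto a^{p^{k-1}}$ when $f'=0$, and then splitting a squarefree $g$ via $\gcd(g,h-c)$ for $c\in\F_p$, where $h$ lies in the algebra $\{h : h^p\equiv h \pmod{g}\}\cong\F_p^r$. The enumeration over $c\in\F_p$ is where the factor $p$ enters.

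A few points you should tighten.
\begin{enumerate}
\item Computing that algebra by $\F_p$-linear algebra assumes $\F_q$ is given explicitly as $\F_p[t]/(\mu)$. That representation lets you write $h\mapsto h^p$ as a $(k\deg g)\times(k\deg g)$ matrix over $\F_p$.
\item You claim that refining by all basis elements yields exactly the irreducible factors. This holds because the algebra contains the CRT idempotents, so any two distinct irreducible factors are separated by some basis element.
\item The distinct-degree step is harmless but unnecessary. As written it should also divide out factors already found, since $\gcd(f,x^{q^i}-x)$ collects every factor whose degree divides $i$.
\end{enumerate}
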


We next state a theorem concerning deterministic multivariate factorization in the multivariate case. See e.g., {\cite{Lenstra85det-factorization,Kaltofen85reduction}}. 
These results determine the dependence on the field, its characteristic and the bit complexity of the input polynomial, in all our algorithms
that invoke polynomial factorization.

\begin{theorem}
    \label{thm:polynomial factorization}
    There exists a deterministic algorithm that given an $n$-variate polynomial $f\in\F[\vx]$ of total degree $d$, and bit complexity $B$, outputs its factorization
    into irreducible polynomials. Its running time is $\poly(d^n)\cdot T(d,B,\F)$, where $T(d,B,\F)$ is  the running time of univariate factorization of degree $d$ polynomials, with bit complexity $B$, over $\F$.
\end{theorem}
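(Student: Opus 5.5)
The plan follows the classical \emph{reduction to univariate factorization via Hensel lifting}. Since the running time is allowed to be $\poly(d^n)$, we can afford to work with dense representations: $f$ has at most $\binom{n+d}{d}\le (d+1)^n$ coefficients. We can also afford to search deterministically over $\poly(d^n)$ candidate substitution points. Here is the approach.

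First, we reduce to the squarefree case. Compute $\gcd(f,\partial f/\partial x_j)$ for each $j$ by dense multivariate gcd computations (linear algebra over $\F$ of size $\poly(d^n)$). This splits $f$ into squarefree parts.
\begin{itemize}
\item If $\Char(\F)=p>0$ and all partial derivatives vanish, then $f$ is a $p$-th power in the variables (up to taking $p$-th roots of coefficients). We recurse on the $p$-th root, which has total degree $d/p$.
\end{itemize}

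Second, we put $f$ in general position. Apply an invertible linear change of variables $x_j\mapsto x_j+\lambda_j x_1$ ($j\ge 2$) so that $f$ becomes monic in $x_1$ up to a scalar. Then choose a point $\va=(a_2,\ldots,a_n)$ such that $f(x_1,\va)$ remains squarefree of full degree $d$. The bad $\lambda$'s and bad $\va$'s are the zeros of nonzero polynomials of degree $\poly(d)$: the leading form, and the discriminant $\mathrm{Disc}_{x_1}(f)$. Hence by Schwartz--Zippel/Combinatorial Nullstellensatz a grid of size $(d^{O(1)})^{n}$ must contain a good choice, and we can test each candidate deterministically. If $\F$ is too small, we pass to an extension of degree $O(\log d)$ and at the end take norms back down. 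This is a standard step and stays within the claimed bound.

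Third, we factor $f(x_1,\va)$ using the univariate algorithm, at cost $T(d,B,\F)$. We then Hensel-lift this factorization modulo the ideal $I=(x_2-a_2,\ldots,x_n-a_n)^{d+1}$. Each lifting step is linear algebra over truncated power series in $n-1$ variables, so it costs $\poly(d^n)$.

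Finally, we recombine the lifted local factors into true factors. Rather than trying all subsets of local factors, which is exponential in $d$, we follow the linear-algebra recombination in the style of Kaltofen/Lenstra. For each lifted factor $\hat g$, we look for a polynomial $g$ of degree at most $d$ with $g\equiv \hat g\cdot u \pmod I$, where $g\mid f$. Equivalently, we find the minimal-degree polynomial in the $\F$-linear space $\{h:\deg h\le d,\ h\equiv \hat g\, q \bmod I\}$. This is a linear system with $\poly(d^n)$ unknowns, and its solution, when it divides $f$, is the irreducible factor of $f$ whose image contains $\hat g$. Dividing out the factor and repeating yields the full factorization; the final step undoes the change of variables and reattaches multiplicities.

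I expect the main obstacle to be the recombination step. One must show that the minimal solution of the linear system is exactly the irreducible factor containing $\hat g$, which uses uniqueness of Hensel lifts and the fact that the truncation order $d+1$ exceeds every factor's degree. In small characteristic, the $p$-th root and inseparability handling in the squarefree step needs additional care.
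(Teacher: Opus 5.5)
The paper does not prove this statement; it quotes it from \cite{Lenstra85det-factorization,Kaltofen85reduction}. Your outline (shear to make $f$ monic, univariate factorization at a point, Hensel lifting, linear-algebra recombination) is essentially Kaltofen's reduction, so the route is right. But your recombination step has a genuine gap: Hensel lifting only to precision $d+1$ is not enough. Your justification (uniqueness of Hensel lifts, plus the fact that a polynomial of degree at most $d$ is determined by its truncation) does not exclude spurious solutions.

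Concretely, take $f=x_1^2-1-x_2$ over $\mathbb{Q}$ with $\mathbf{a}=0$. It is irreducible and monic in $x_1$, and $f(x_1,0)=(x_1-1)(x_1+1)$ is squarefree. The lift of $x_1-1$ is $\hat g\equiv x_1-1-\tfrac12 x_2+\tfrac18 x_2^2 \pmod{x_2^3}$. So $g=x_1-1-\tfrac12 x_2+\tfrac18 x_2^2$ has total degree $2=d$ and satisfies $g\equiv \hat g\cdot 1 \pmod{x_2^3}$, yet $f\nmid g$. In fact your solution space is cut out of the $6$-dimensional space of polynomials of total degree at most $2$ by only $3$ linear conditions. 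It therefore has dimension at least $3$, while the multiples of $f$ in it form a line. The minimal-degree solution does not divide $f$, and your procedure gets stuck.

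The missing ingredient is the resultant argument, which forces precision about $d^2$ rather than $d$. Lift modulo $I^k$ with $I=(x_2-a_2,\ldots,x_n-a_n)$ and $k>d^2$.
\begin{itemize}
\item Let $h$ be the irreducible factor of $f$ with $\hat g\mid h$ modulo $I^k$; this is where uniqueness of Hensel lifting enters.
\item Let $g\neq 0$ satisfy $\deg g\le d$ and $\hat g\mid g$ modulo $I^k$, and write $\sigma g+\tau h=\mathrm{Res}_{x_1}(g,h)$.
\item Reduce modulo $\hat g$ and $I^k$. Since $\hat g$ is monic, $(\F[x_2,\ldots,x_n]/I^k)[x_1]/(\hat g)$ is free over $\F[x_2,\ldots,x_n]/I^k$, so $\mathrm{Res}_{x_1}(g,h)\in I^k$.
\item The total degree of $\mathrm{Res}_{x_1}(g,h)$ is at most $\deg g\cdot \deg h\le d^2<k$, so it is zero. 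Since $h$ is irreducible and monic in $x_1$, this gives $h\mid g$.
\end{itemize}
Hence every nonzero solution is a multiple of $h$, and a nonzero solution of minimal total degree is a scalar multiple of $h$. Working modulo $I^{d^2+1}$ involves at most $(d^2+1)^{n-1}$ monomials, so the $\poly(d^n)$ bound survives.

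Two smaller points. Your $p$-th root step needs $\F$ to be perfect: over $\F_p(t)$, $x_1^p-t$ is irreducible yet all its partial derivatives vanish. Also, in characteristic $p$ you must argue that $\mathrm{Disc}_{x_1}$ becomes nonzero after a generic shear. This holds over a perfect field, since each irreducible factor has some nonzero partial derivative. Both points are fine over $\mathbb{Q}$ and finite fields, which is where the cited results and the paper operate.
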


Next, we state the following folklore lemma on interpolation of polynomials on a small numbers of variables.
\begin{lemma}\label{Interpolation of polynomials}
    Let $f\in \mathbb{F}[x_1,\ldots,x_n]$ be an $n$-variate polynomial of total degree $d$. Then, there is a
    deterministic polynomial-time algorithm that given blackbox access to $f$ returns its monomial representation.
    Its running time is $\poly((d+1)^n)$.
\end{lemma}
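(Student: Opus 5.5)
The statement to prove is the folklore interpolation lemma (\Cref{Interpolation of polynomials}). The plan is to reduce recovery of the monomial representation to solving a nonsingular linear system whose unknowns are the coefficients of all monomials of degree at most $d$.

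\textbf{Choosing the grid.} There are at most $(d+1)^n$ such monomials, since every monomial of total degree at most $d$ has each exponent in $\{0,\ldots,d\}$. Fix $d+1$ distinct elements $a_0,\ldots,a_d$ of $\F$, and query $f$ on the grid $S^n$, where $S=\{a_0,\ldots,a_d\}$. If $\F$ is too small, we take these elements in an extension field. This is the usual convention, and the extension of degree $O(\log d)$ is constructible deterministically in $\poly(d,p)$ time. This step is the only mild obstacle; the rest is routine.

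\textbf{Uniqueness on the grid.} The key fact is that a polynomial of individual degree at most $d$ is determined by its values on $S^n$. Equivalently, the evaluation map from the space spanned by monomials $\vx^{\ve}$ with $\ve\in\{0,\ldots,d\}^n$ to $\F^{S^n}$ is a bijection. Its matrix is the $n$-fold Kronecker (tensor) power of the $(d+1)\times(d+1)$ Vandermonde matrix $V=(a_i^j)_{i,j}$. Since $V$ is invertible because the $a_i$ are distinct, so is its tensor power.

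\textbf{Recovering the coefficients.} We invert the system by iterated univariate interpolation, one coordinate at a time. For each fixing of $x_2,\ldots,x_n$ to grid values, interpolate in $x_1$ to obtain the coefficients of $x_1^{e_1}$ as functions on the remaining grid. Then recurse on $x_2$, and so on. Each stage performs $(d+1)^{n-1}$ univariate interpolations of size $d+1$, each costing $\poly(d)$ field operations. In total this takes $n\cdot(d+1)^{n}\cdot\poly(d)=\poly((d+1)^n)$ operations and uses $(d+1)^n$ queries. Alternatively, we can simply solve the $(d+1)^n\times(d+1)^n$ linear system by Gaussian elimination, which also costs $\poly((d+1)^n)$.

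\textbf{Output and bit complexity.} Finally, we output those coefficients that are nonzero. Over $\Q$, the entries are computed via Vandermonde inverses with small integer nodes $a_i=i$. This keeps bit sizes polynomial in the input bit complexity and $(d+1)^n$, by standard bounds for Gaussian elimination.
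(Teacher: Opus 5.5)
Your proof is correct. The paper states this lemma as folklore and gives no proof; your grid interpolation on $S^n$ with $|S|=d+1$ (an invertible tensor-power Vandermonde system, solved by iterated univariate interpolation) is the standard argument it implicitly relies on, and your handling of small fields by querying over an extension field matches the convention the paper adopts in its generator construction.
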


Finally, we conclude this section by recalling the definition of Lagrange interpolation.
\begin{definition}\label{lagrange interpolation}
    Let $\alpha_1,\ldots,\alpha_r\in \mathbb{F}$ be different field elements. Then the polynomials  \begin{equation}\label{explicit lagrange interplation}A_i(x)=\prod_{1\leq j\leq r, j\neq i}{\frac{x-\alpha_j}{\alpha_i-\alpha_j}}
    \end{equation}
    satisfy $\deg(A_i)=r-1$ and $A_i(\alpha_j)=\delta_{i,j}$, where $\delta_{i,j}$ is Kronecker's delta. The $A_i$-s are called the
    Lagrange interpolating polynomials of $\alpha_1,\ldots,\alpha_r$.
\end{definition}

\subsection{Sparse Polynomials}
We give some definitions and state several known results concerning sparse polynomials.

\begin{definition}\label{sparse polynomial definition}
    An $s$-sparse polynomial $f\in \mathbb{F}[x_1,\ldots,x_n]$ is a polynomial which is a sum of at most $s$ monomials.
    It is said to be of bounded individual degree $d$ if $f$ is of degree at most $d$ as an element of
    $\mathbb{F}(x_1,\ldots,x_{i-1},x_{i+1},\ldots,x_n)[x_i]$ for every $1\leq i \leq n$.
    We denote the class of $n$-variate, $s$-sparse polynomials of bounded individual degree $d$ by $\cP(n,s,d)$, and refer to its elements as $(n,s,d)$-sparse polynomials.
    Furthermore, we denote the class of factors of elements of $\cP(n,s,d)$ by $\cCF(n,s,d)$. Note that by factors we mean all possible factors of $(n,s,d)$-sparse polynomials, not just the irreducible ones. 
\end{definition}

The next two lemmas give two special cases in which upper bounds on sparsity of factors of sparse polynomials is known.

\begin{lemma}[{\cite[Lemma 10]{GuptaKL12}}]\label{multilinear factors of sparse polynomials are sparse}
    A multilinear factor of an $s$-sparse polynomial is again $s$-sparse.
\end{lemma}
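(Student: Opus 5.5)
Let $f$ be $s$-sparse with $f = g\cdot h$, where $h$ is multilinear. The plan is to show $\|h\|_0 \le \|f\|_0$ by induction on the number of variables $h$ depends on, working one variable at a time.

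\emph{Base case.} If $h$ is a constant (depends on no variables), the claim is trivial.

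\emph{Splitting off one variable.} Otherwise pick a variable $x_1 \in \var(h)$. Multilinearity gives
\[ h = h_1 x_1 + h_0, \qquad h_0, h_1 \text{ free of } x_1, \quad h_1 \neq 0. \]
Write $f = \sum_{j} f_j x_1^j$ and $g = \sum_j g_j x_1^j$, with coefficients in the remaining variables.

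\emph{Reducing to the leading coefficient.} Let $t = \deg_{x_1} g$, so $g_t \neq 0$. The top $x_1$-coefficient of $f$ is then $f_{t+1} = g_t h_1$. Hence $h_1$ divides $f_{t+1}$, and $\|f_{t+1}\|_0 \le \|f\|_0 \le s$. Since $h_1$ is multilinear and depends on fewer variables than $h$, induction gives $\|h_1\|_0 \le \|f_{t+1}\|_0$.

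\emph{The remaining difficulty.} This bounds only the $h_1$ part, not $h_0$. Symmetrically, with $u$ the lowest $x_1$-degree in $g$, we get $f_u = g_u h_0$, so $\|h_0\|_0 \le \|f_u\|_0$. Adding these gives $\|h\|_0 \le \|f_{t+1}\|_0 + \|f_u\|_0$. This is at most $\|f\|_0$ provided $u \ne t+1$, which holds because $u \le t$. So induction does bound the whole of $h$.

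\emph{Checking the induction.} Both $f_{t+1}$ and $f_u$ are genuine $x_1$-coefficients of $f$, so each is a sum of a subset of the monomials of $f$, with the factor $x_1^{j}$ stripped off. Since $u \le t < t+1$, they come from disjoint sets of monomials of $f$. The induction is applied to the pairs $(f_{t+1}, h_1)$ and $(f_u, h_0)$. Each pair has the required form: a multilinear polynomial dividing another polynomial, on fewer variables.

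\emph{A possible degenerate case.} If $h_0 = 0$, then $h = h_1 x_1$, and induction applies to $h_1$ directly. The argument is field-independent, so no assumption on $\mathbb{F}$ is needed.

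\emph{Main obstacle.} The key step to verify is the disjointness of the two extreme coefficients. For a multilinear factor, the lowest $x_1$-degree coefficient of $f$ and the highest one are distinct, because $\deg_{x_1} h = 1$ forces a gap of exactly one. I expect this to be routine.
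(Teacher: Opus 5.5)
Your proof is correct. With $h=h_1x_1+h_0$, the extreme $x_1$-coefficients of $f=gh$ are $f_{t+1}=g_th_1\neq 0$ and $f_u=g_uh_0$, and $u\le t<t+1$. So these two coefficients use disjoint sets of monomials of $f$. Induction on $|\var(h)|$, quantified over all dividends, then gives $\|h\|_0=\|h_1\|_0+\|h_0\|_0\le\|f_{t+1}\|_0+\|f_u\|_0\le\|f\|_0$. Two small remarks:
\begin{itemize}
\item You should state $f\neq 0$ explicitly, since every polynomial divides $0$. Your base case and the fact $g_t\neq 0$ both rely on it.
\item The ``gap of exactly one'' in your last paragraph is loosely worded. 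All you need is $u\le t<t+1$, which you had already shown.
\end{itemize}

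The paper gives no proof of its own here; it simply quotes \cite{GuptaKL12}. Its own toolkit does, however, offer a shorter and different route. Every exponent vector of a multilinear $h$ lies in $\{0,1\}^n$. Each point of $\{0,1\}^n$ is a vertex of the cube, hence a vertex of any polytope inside the cube that contains it. So $\|h\|_0=|V(P_h)|$, and \Cref{number of vertices is monotone} immediately gives $\|h\|_0=|V(P_h)|\le|V(P_f)|\le\|f\|_0$.

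That route is essentially one line and yields the stronger bound $\|h\|_0\le|V(P_f)|$. But it leans on the Minkowski-sum structure of Newton polytopes of products, which underlies \Cref{number of vertices is monotone}. Your coefficient-extraction induction is fully elementary and self-contained. It is also the same top-face/bottom-face mechanism the paper uses to prove \Cref{polytopes argument}. When $h_0\neq 0$, the faces of $P_f$ that minimize and maximize the $x_1$-coordinate are disjoint lifts of $P_{f_u}$ and $P_{f_{t+1}}$. So if you run your induction with vertex counts in place of monomial counts on the $f$ side, you recover the vertex bound as well.
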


\begin{lemma}[{\cite[Lemma 32]{volkovich2017some}}]\label{factors of multiquadratic sparse polynomials are sparse}
    Let $g$ be an $s$-sparse multiquadratic polynomial (that is, $g$ is of individual degree at most two) and suppose that $f|g$. Then
    $f$ is also $s$-sparse.
\end{lemma}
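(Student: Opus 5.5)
We prove the statement by induction on the number of variables $n$, splitting $g$ along one variable and using the individual-degree bound to limit how the degree in that variable can be shared between $f$ and its cofactor. For $n=0$ everything is a constant and there is nothing to prove. For $n\ge 1$, write $g=f\cdot h$ and expand with respect to $x_n$:
\[
g=g_2x_n^2+g_1x_n+g_0, \qquad g_i\in\F[x_1,\ldots,x_{n-1}].
\]
Each $g_i$ is multiquadratic in the remaining $n-1$ variables. Writing $|p|$ for the number of monomials of $p$, we have $|g|=|g_0|+|g_1|+|g_2|\le s$, since monomials with different $x_n$-degree are distinct. Because $\deg_{x_n} f+\deg_{x_n} h=\deg_{x_n} g\le 2$, exactly one of three cases occurs.

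\emph{Case 1: $\deg_{x_n} f=0$.} Write $h=h_2x_n^2+h_1x_n+h_0$. Then $g_i=f h_i$ for each $i$, so $f$ divides every $g_i$. Choose any $i$ with $g_i\neq 0$. By the induction hypothesis applied to $g_i$ in $n-1$ variables, $|f|\le |g_i|\le s$.

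\emph{Case 2: $\deg_{x_n} f=2$.} Then $h$ does not depend on $x_n$. Writing $f=f_2x_n^2+f_1x_n+f_0$, we get $g_i=f_ih$, so each nonzero $f_i$ is a factor of $g_i$. By induction $|f_i|\le |g_i|$, and summing gives $|f|=\sum_i|f_i|\le\sum_i|g_i|\le s$.

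\emph{Case 3: $\deg_{x_n} f=\deg_{x_n} h=1$.} Write $f=ax_n+b$ and $h=cx_n+e$ with $a,c\neq 0$. Comparing coefficients,
\[
g_2=ac,\qquad g_1=ae+bc,\qquad g_0=be.
\]
Since $g_2\neq 0$ and $a\mid g_2$, induction gives $|a|\le|g_2|$. If $b=0$ we are done. Otherwise, if $e\neq 0$, then $g_0=be\neq 0$ and $b\mid g_0$, so induction gives $|b|\le|g_0|$. Together these give $|f|=|a|+|b|\le|g_2|+|g_0|\le s$.

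The one delicate sub-case is $b\neq 0$ but $e=0$, i.e.\ $x_n$ divides $h$. Then $g=x_n\cdot f\cdot c$. Since $g$ has $x_n$-degree at most $2$ and is divisible by $x_n$, we have $g_0=0$, $g_1=bc$, $g_2=ac$. Dividing out $x_n$ gives $g/x_n=f\cdot c=g_2x_n+g_1$, which is still multiquadratic and has the same number of monomials as $g$. We then rerun the argument on $g/x_n$, where now $f$ has $x_n$-degree $1$ and $c$ has $x_n$-degree $0$. Equating coefficients gives $g_2=ac$ and $g_1=bc$, so $a\mid g_2$ and $b\mid g_1$. By induction $|a|\le|g_2|$ and $|b|\le|g_1|$, hence $|f|\le|g_1|+|g_2|\le s$.

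The argument is essentially bookkeeping. The only point requiring care is Case 3: in every sub-case, $a$ and $b$ must divide distinct coefficient slices of $g$, so that their sparsities add up to at most $|g|$ rather than being bounded by a single slice each. This is exactly where individual degree at most $2$ is used: it forces the degree in $x_n$ to split as $0+2$, $2+0$ or $1+1$, and in the $1+1$ split the leading and constant coefficients of $f$ each divide a nonzero product appearing in a distinct slice of $g$. With higher individual degree this fails, since the middle coefficients of $f$ are no longer pinned to a single slice of $g$.
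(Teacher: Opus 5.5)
Your proof is correct and complete. The paper gives no argument of its own for this lemma. It is imported from \cite[Lemma 32]{volkovich2017some} and used as a black box in \Cref{sec:multiquadratic}, so there is no in-paper route to compare against. Your write-up is a valid self-contained replacement, and it is the natural one: the leading/trailing-coefficient induction that also underlies \Cref{multilinear factors of sparse polynomials are sparse}. Individual degree $2$ guarantees that in each variable either $f$ is at most linear or the cofactor $h$ does not depend on that variable.

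Two cosmetic remarks. First, the statement you actually induct on is $|f|\le|g|$ for every nonzero multiquadratic $g$ in fewer variables and every $f\mid g$; say so explicitly, noting that the lemma tacitly requires $g\neq 0$. Second, the detour through $g/x_n$ in the sub-case $e=0$ is unnecessary. The identities $g_2=ac$ and $g_1=bc$ already give $|a|\le|g_2|$ and $|b|\le|g_1|$ directly.

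Cleaner still, in Case~3 you can pair $b$ with the lowest nonzero $x_n$-coefficient of $h$ rather than with $e$. Written that way, the same induction proves $|f|\le|fh|$ whenever, for every variable $x_i$, $\deg_{x_i}f\le 1$ or $\deg_{x_i}h=0$. This single statement contains both the present lemma and \Cref{multilinear factors of sparse polynomials are sparse}.
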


In 
\cite{bhargava2020deterministic}, Bhargava, Saraf and Volkovich upper bounded the sparsity of factors of $(n,s,d)$-sparse polynomials, and obtained the best known factorization algorithm for $(n,s,d)$-sparse polynomials. We state these results next.

\begin{theorem}[{\cite[Theorem 1]{bhargava2020deterministic}}]\label{thm: BSV bound}
    $\cCF(n,s,d)\subset \cP(n,s^{O(d^2\log n)},d)$. In other words, every factor of an $s$-sparse polynomial of bounded individual degree $d$ in $n$-variables is $s^{O(d^2\log n)}$-sparse.
\end{theorem}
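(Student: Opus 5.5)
Since this theorem is quoted from \cite{bhargava2020deterministic}, here is how I would reprove it. The plan is a divide-and-conquer induction on the number of variables: establish a recurrence of the form $S(n,s,d)\le S(\lceil n/2\rceil,s,d)^{O(d)}\cdot s^{O(d^2)}$, where $S(n,s,d)$ denotes the maximal sparsity of a factor of an element of $\cP(n,s,d)$. Unrolling this for $\log n$ levels gives $s^{O(d^2\log n)}$. I have not verified that the exponent in the recurrence can be kept this small; that is where the argument could fail.

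\textbf{Preliminary reductions.} Monomial factors are harmless, so fix $f\in\cP(n,s,d)$ with no monomial factor and a factor $g\mid f$, and write $f=gh$. Split the variables as $\vx=(\mathbf{y},\mathbf{z})$ with $|\mathbf{y}|=\lceil n/2\rceil$, and expand
\[
f=\sum_{\alpha}f_\alpha(\mathbf{z})\mathbf{y}^\alpha,\qquad g=\sum_{\alpha}g_\alpha(\mathbf{z})\mathbf{y}^\alpha,\qquad h=\sum_{\alpha}h_\alpha(\mathbf{z})\mathbf{y}^\alpha .
\]
Fix a monomial order on $\mathbf{y}$ and let $\alpha_0,\beta_0$ be the leading exponents of $g$ and $h$. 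Then the leading coefficients satisfy $g_{\alpha_0}h_{\beta_0}=f_{\alpha_0+\beta_0}$, so each of $g_{\alpha_0}$ and $h_{\beta_0}$ divides an $s$-sparse polynomial in $\mathbf{z}$ alone. By induction, both are $S(\lceil n/2\rceil,s,d)$-sparse.

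\textbf{Key step.} Next I would show that every coefficient $g_\alpha$ is expressible from the $f_\beta$ and the leading coefficients by a bounded-depth computation. Comparing coefficients of $\mathbf{y}^\gamma$ in $f=gh$ in decreasing monomial order gives a triangular system in which each new $g_\alpha$ is obtained after dividing by $h_{\beta_0}$. To keep the depth of this recursion bounded, I would not run it over the exponentially many $\alpha$. Instead I would use the individual-degree bound: treat one $y$-variable at a time, so that in each variable the polynomial has degree at most $d$. Clearing denominators then gives an identity of the form
\[
h_{\beta_0}^{O(d)}\,g_\alpha = P_\alpha\bigl(\{f_\beta\},\{h_\gamma\}\bigr),
\]
and the sparsity of $g$ is bounded by the number of monomials of $\mathbf{z}$ appearing across all the $g_\alpha$.

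\textbf{Main obstacle.} The difficulty is that $P_\alpha$ also involves the coefficients $h_\gamma$, which are not known to be sparse. Moreover, naively iterating the triangular system multiplies sparsities across all $y$-variables, which would give $s^{\Omega(n)}$ rather than $s^{O(d^2\log n)}$. My first attempt would be a symmetric treatment: run the same halving argument with the roles of $\mathbf{y}$ and $\mathbf{z}$ swapped, so that the $\mathbf{y}$-structure of $g$ is controlled by induction on the $\mathbf{y}$ side and the $\mathbf{z}$-coefficients by induction on the $\mathbf{z}$ side. I would then combine the two via a Newton-polytope argument: every vertex of $\mathrm{Newt}(g)$ is a summand of a vertex of $\mathrm{Newt}(f)$, and the individual degree $d$ bounds how many lattice points can lie in each fiber. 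Making this combination lose only a factor $s^{O(d^2)}$ per level, instead of a factor depending on $n$, is the step I expect to be hardest.
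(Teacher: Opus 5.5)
Your proposal has a genuine gap, and the target recurrence is already wrong. Write $S(n)=s^{a_n}$. Then $S(n)\le S(\lceil n/2\rceil)^{O(d)}s^{O(d^2)}$ reads $a_n\le cd\,a_{n/2}+O(d^2)$, and unrolling over $\log n$ levels gives only $a_n=O(d^2)\cdot n^{\log_2(cd)}$. The exponent is polynomial in $n$, and for $cd\ge 2$ this is no better than the trivial bound $(d+1)^n$. Halving gives $s^{O(d^2\log n)}$ only if the loss per level is purely multiplicative, $S(n)\le S(n/2)\cdot s^{O(d^2)}$, and nothing in your sketch produces that.

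More seriously, the key step does not go through. The induction hypothesis reaches only the coefficients $g_\alpha$ at extremal $\mathbf{y}$-exponents (such as $g_{\alpha_0}$), because only those divide coefficients of $f$. The triangular system expresses every other $g_\alpha$ through the coefficients $h_\gamma$ of the cofactor. These are as unknown and as potentially dense as $g$ itself, so the argument is circular. Even if the right-hand side of $h_{\beta_0}^{O(d)}g_\alpha=P_\alpha$ were known to be sparse, $g_\alpha$ would only be a factor of a polynomial of individual degree $O(d^2)$. So the hypothesis for fixed $d$ would not apply, and the degree parameter would grow from level to level. The closing Newton-polytope remark does not repair this. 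The individual-degree bound limits each fiber in one coordinate direction to $d+1$ lattice points, which gives no bound on the number of lattice points of $P_g$ in terms of its number of vertices.

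The missing idea is that no induction on $n$ is needed. The bound of \cite{bhargava2020deterministic}, which the paper quotes without proof, is a statement about Newton polytopes, and the $\log n$ comes from a norm comparison, not from a recursion depth. It has two steps.

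First, $P_f=P_g+P_h$, so $|V(P_g)|\le |V(P_f)|\le s$ (\Cref{number of vertices is monotone}).

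Second, a polytope $P\subseteq[0,d]^n$ with $v$ vertices contains at most $v^{O(d^2\log n)}$ lattice points. This follows from an approximate Carath\'eodory theorem in $\ell_p$ with $p=\Theta(\log n)$. After centering, the vertices have $\ell_p$-norm $O(d)$, and $\|\cdot\|_\infty\le\|\cdot\|_p$. Hence every point of $P$ lies within $\ell_\infty$-distance less than $1/2$ of the average of some $k=O(d^2\log n)$ vertices, taken with repetition. So every lattice point of $P$ is the coordinatewise rounding of such an average, and there are at most $v^k$ such averages.

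Since the exponent vectors of $g$ are lattice points of $P_g\subseteq[0,d]^n$, $g$ has at most $s^{O(d^2\log n)}$ monomials.
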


In the next result, one should add a factor that depends on the field according to \Cref{thm:polynomial factorization}.

\begin{theorem}[{\cite[Theorem 2]{bhargava2020deterministic}}]\label{thm: BSV factorization algorithm}
    There is a deterministic $\poly(n,d^d,s^{d^7\log n})$-time algorithm, that given an $(n,s,d)$-sparse polynomial, returns its factorization into  irreducible factors.
\end{theorem}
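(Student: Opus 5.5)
This theorem is quoted from \cite{bhargava2020deterministic}, so here I only outline how I would reprove it. The basis is the sparsity bound of \Cref{thm: BSV bound}. Write $s' = s^{O(d^2\log n)}$. Every factor of the input $f\in\cP(n,s,d)$ then lies in $\cP(n,s',d)$. The plan is to reduce factorization of $f$ to factorization of polynomials in $O(1)$ variables, which we can do via \Cref{thm:polynomial factorization}, and then recover the sparse factors by sparse interpolation.

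\textbf{Step 1: preprocessing.} First I would strip off the maximal monomial dividing $f$. Then I would pick a variable, call it $x_0$, that $f$ depends on, and pass to the normalization $\tilde f$ of \Cref{def:normalization}. By \Cref{cla:normalization of a polynomial}, the irreducible factors of $f$ that depend on $x_0$ correspond bijectively to the reversed-monic irreducible factors of $\tilde f$. The factors of $f$ that do not depend on $x_0$ divide the free term $f_0$. Since $f_0$ is again in $\cP(n,s,d)$ and has fewer variables, I would handle those by recursion on the number of variables.

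\textbf{Step 2: projection to few variables.} Next I would substitute a Klivans--Spielman type generator $G$ with $O(1)$ seed variables, keeping $y$ and one further variable $x_i$ alive. The requirement on $G$ is that it does not vanish on a family of ``certificate'' polynomials. These are the resultants and discriminants, with respect to the live variable, of pairs of distinct irreducible factors of $f$, together with the leading coefficients of those factors. Each certificate is a determinant of a Sylvester matrix of dimension $O(d)$ with $s'$-sparse entries. It is therefore $s'^{O(d)}$-sparse with individual degree $O(d^2)$. The Klivans--Spielman construction then gives a hitting set of size $\poly(n,s'^{O(d)})=\poly(n,s^{O(d^3\log n)})$. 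Under this substitution, distinct irreducible factors stay coprime and squarefreeness is preserved. Hence the factorization of the projection $\tilde f(y,G_i)$, computed by \Cref{thm:polynomial factorization}, groups its irreducible factors into blocks. Each block is the image of a single irreducible factor of $\tilde f$.

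\textbf{Step 3: lifting and reconstruction.} To lift back, I would vary the seed over a set that forms an interpolation set for $(n,s',d)$-sparse polynomials. The coprimality from Step 2 lets me match the blocks consistently across seeds and across the choices of the live variable $i$. Once the blocks are matched, I obtain blackbox access to each factor of $\tilde f$. I would then sparse-interpolate each factor's coefficients; these are rational in $f_0$, so I would clear the known powers of $f_0$ and the free term $c_0$ first. Undoing the normalization via \Cref{cla:normalization of a polynomial} then gives the factors $h_i$ of $f$. Finally I would verify the output by multiplying it out, which is an identity test of sparse polynomials.

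\textbf{Main obstacle.} The hard step is showing that each irreducible factor maps to exactly one block, not several. Coprimality of distinct factors is easy to certify by a resultant, but preservation of irreducibility under the substitution is not directly certified by one sparse polynomial. My first attempt would be a Hensel-lifting argument: an irreducible factor whose image splits would yield lifted factors of sparsity $s'^{O(d)}$ that are coprime in the image but must merge in $\F[\vx]$. A suitable resultant of these lifted factors is then a nonzero sparse polynomial that $G$ must hit, giving a contradiction. Controlling the degrees and sparsities in this argument is where I expect the polynomial exponent in $d$ to grow; the stated bound is $s^{d^7\log n}$.
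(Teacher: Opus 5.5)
The paper does not reprove this cited result. Its own \Cref{improved BSV} runs in $\poly(n,s^{d^2\log n})$ time, which subsumes the statement, and it follows your Step~1. Comparing with it exposes a genuine gap in your Steps~2--3.

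\textbf{Where your argument breaks.} Hitting the resultants tells you only that the irreducible factors of $\tilde f(y,G)$ fall into blocks, one block per irreducible factor of $\tilde f$. It does not tell you which factors form a block, and your reconstruction needs exactly that. You try to close this by showing that images of irreducible factors stay irreducible, but the Hensel-lifting sketch does not work. Any lift of a splitting $h(y,G)=ab$ of an irreducible $h$ is a factorization of $h$ over a power-series completion, not in $\F[y,\vx]$. Since $h$ is irreducible, the lifted factors are never polynomials, let alone $s'^{O(d)}$-sparse ones, so there is no nonzero sparse certificate that $G$ is forced to hit. What you would need is a deterministic Hilbert irreducibility theorem for this generator, which is not available. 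Two related problems:
\begin{itemize}
\item Specializing the seed to points and ``matching blocks across seeds'' is unjustified. The hitting property only says a certificate composed with $G$ is a nonzero polynomial in the seed variables, not that it is nonzero at each seed point; the paper keeps the seed variables formal.
\item Your discriminant certificates for squarefreeness can vanish when $\operatorname{char}(\mathbb{F})\le d$.
\end{itemize}

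\textbf{The missing idea.} No preservation property beyond interpolation is needed. Because $\tilde f$ is reversed-monic in $y$, so is $\hat f=\tilde f(y,G)$. By \Cref{cla:reversed-monic bound}, $\hat f$ has at most $d$ irreducible factors and hence at most $2^d$ divisors. Every divisor $h=\sum_j c_jx_0^j$ of $f$ that depends on $x_0$ maps to one of them. So you simply enumerate all divisors $\hat h=\sum_j e_jy^j$ of $\hat f$, whether or not images split.

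\textbf{Recovering the coefficients.} For each candidate $\hat h$, the paper guesses $c_0$ among the at most $s^d$ divisors of $f_0$ supplied by the recursive call (\Cref{thm:bound on the number of irreducible factors of sparse polynomials}). It then sparse-interpolates $(c_0/f_0^j)(G)\,e_j=c_j(G)$, up to a monomial correction. This uses only that $\textrm{Im}(G)$ is an interpolation set for $S$-sparse polynomials with $S=s^{O(d^2\log n)}$ (\Cref{lem:first property of G}, \Cref{thm: BSV bound}). Note also that $e_j$ is the image of the polynomial $c_jf_0^j/c_0$, because $c_0\mid f_0$. So nothing is genuinely rational, and you never need to ``clear'' an unknown $c_0$.

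\textbf{Pruning and finishing.} Fake candidates are removed by exact division: if $g\mid f$ then $f/g$ is $S$-sparse, so you interpolate $f(G)/g(G)$ and check the product. The irreducible factors are the surviving divisors that have no nontrivial factorization among the survivors. Multiplicities come from repeated division. This gives $\poly(n,s^{d^2\log n})$ time over any field, with no resultant, discriminant, or irreducibility-preservation argument at all.
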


To obtain \Cref{thm: BSV bound}, the authors study sparse polynomials via the study of their Newton polytopes.
As we will use the language of Newton polytopes as well, we give the definition of Newton polytopes here.

\begin{definition}\label{Newton polytope}
    The Newton polytope of a polynomial $f=\sum_{e}{\gamma_ex^e}$, denoted by $P_f$, is the convex hull of the points $e\in \mathbb{R}^n$.
    The set of vertices of the polytope $P_f$ is denoted by $V(P_f)$.
\end{definition}

The following well known fact can be found in \cite[Corollary 3.5]{bhargava2020deterministic}.

\begin{lemma}\label{number of vertices is monotone}
    Let $f|g$ be two polynomials, and suppose that $g$ is $s$-sparse. Then, $|V(f)|\leq |V(g)|\leq s$.
\end{lemma}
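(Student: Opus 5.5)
The plan is to prove the first inequality from Ostrowski's theorem, that Newton polytopes are multiplicative, $P_{fh}=P_f+P_h$ as a Minkowski sum. The second inequality is immediate: every vertex of $P_g$ is the exponent vector of a monomial of $g$ with nonzero coefficient, so $|V(P_g)|\le s$. Write $g=fh$. It then suffices to exhibit an injection $V(P_f)\hookrightarrow V(P_f+P_h)=V(P_g)$.

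\textbf{Step 1 (Ostrowski).} The inclusion $P_{fh}\subseteq P_f+P_h$ is trivial, since every monomial of $fh$ is a product of a monomial of $f$ and a monomial of $h$. For the reverse inclusion it suffices to show that every vertex of $P_f+P_h$ lies in $P_{fh}$. Such a vertex $v$ is the unique maximizer on $P_f+P_h$ of some linear functional $w$. Every point of the sum maximizing $w$ has the form $a+b$ with $a$ maximizing $w$ on $P_f$ and $b$ maximizing $w$ on $P_h$. Uniqueness of $v$ therefore forces unique maximizers $a$ on $P_f$ and $b$ on $P_h$, and these are vertices with $v=a+b$. Consequently the monomial $\vx^{a+b}$ arises in $fh$ only from the product of the $\vx^a$-term of $f$ and the $\vx^b$-term of $h$. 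Its coefficient is a product of two nonzero field elements, hence nonzero, so $v\in P_{fh}$.

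\textbf{Step 2 (injection).} Given a vertex $a\in V(P_f)$, choose $w$ in the interior of the normal cone of $P_f$ at $a$, so $a$ is the unique $w$-maximizer on $P_f$. Perturb $w$ slightly inside that open cone so that it is also generic for $P_h$, meaning it avoids the finitely many hyperplanes on which two vertices of $P_h$ tie. Then $w$ has a unique maximizer $b\in V(P_h)$, and by the argument of Step 1, $a+b$ is the unique $w$-maximizer on $P_f+P_h$, hence a vertex of $P_g$.

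Map $a\mapsto a+b$. To see that this map is injective, use the fact established in Step 1 that a vertex $v$ of the Minkowski sum decomposes uniquely as $v=a+b$ with $a\in P_f$, $b\in P_h$: any functional exposing $v$ exposes exactly that pair. So two distinct vertices $a\neq a'$ cannot map to the same $v$, because $v$ determines its $P_f$-summand. This gives $|V(P_f)|\le|V(P_g)|$.

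The only step needing care is the genericity and uniqueness bookkeeping in Step 2: the perturbation of $w$ must stay inside the open normal cone at $a$ while also breaking ties on $P_h$. This works because the normal cone is open and full-dimensional, and the bad set for $P_h$ is a finite union of hyperplanes. Everything else is routine convex geometry together with the fact that $\F[\vx]$ is an integral domain.
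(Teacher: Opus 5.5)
Your proof is correct. The paper gives no argument of its own: it cites \cite[Corollary 3.5]{bhargava2020deterministic}, where the fact follows exactly as in your proposal, from Ostrowski's theorem $P_{fh}=P_f+P_h$ together with two observations: vertices of a Minkowski sum decompose uniquely into vertices of the summands, and every vertex of $P_f$ occurs as the $P_f$-summand of some vertex of the sum. The only tacit point is $g\neq 0$, which makes $h\neq 0$; the statement itself assumes this implicitly.
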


\subsection{Resultants and Discriminants}
We next discuss resultants and discriminants. See  \cite{GKZ-book} for a  detailed treatment.

\begin{definition} \label{Sylvester matrix and resultant}
    Let $f, g\in \mathbb{F}[x]$ be two univariate polynomials of degrees $d_1 ,d_2$ respectively.
    Define $\phi: \mathbb{F}[x]_{\leq (d_1-1)}\times \mathbb{F}[x]_{\leq (d_2-1)} \rightarrow \mathbb{F}[x]_{\leq (d_1+d_2-1)}$ by
    \begin{equation} \label{Sylvester map}
        \phi(h_1, h_2) = h_1g+h_2f.
    \end{equation}
    Here, $\mathbb{F}[x]_{\leq (d)}$ is the $(d+1)$-dimensional $\mathbb{F}$-vector space of univariate polynomials over $\mathbb{F}$
    of degree at most $d$. This space has a standard monomial basis given by the monomials $1, x, \ldots.,x^d$.
    
    The Sylvester matrix $M(f,g)$ of $f$ and $g$ is the $(d_1+d_2)\times (d_1+d_2)$ matrix representing $\phi$ with respect to the standard
    monomial basis. The determinant of this matrix is called the resultant of $f$ and $g$, and is denoted by $\res (f,g)\in \mathbb{F}$.
    If $f,g\in \mathbb{F}[x_0,x_1,\ldots,x_n]$ are two multivariate polynomials,
    we denote by $M_{x_0}(f,g)$ and $\res_{x_0}(f,g)\in \mathbb{F}[x_1,\ldots,x_n]$ the Sylvester matrix and
    resultant of $f$ and $g$ when viewed as elements of $\mathbb{F}(x_1,\ldots,x_n)[x_0]$.
\end{definition}

\begin{definition} \label{discriminant}
    Let $f\in \mathbb{F}[x]$ be a univariate polynomial. The discriminant of $f$ is given by
    \begin{equation}
        \Delta(f)=\res(f,f'),
    \end{equation}
    where $f'$ is the standard formal derivative of $f$.
    For a multivariate polynomial $f\in \mathbb{F}[\vx]$, denote by  $\Delta_{x_i}(f)\in \mathbb{F}[x_1,\ldots,x_{i-1},x_{i+1},\ldots, x_n]$ the discriminant of $f$ as an element
    of $\mathbb{F}(x_1,\ldots,x_{i-1},x_{i+1},\ldots,x_n)[x_i]$.
\end{definition}

\begin{lemma}\label{basic resultant properties}
    Let $\mathbb{F}$ be a field.
    \begin{enumerate}
        \item Let $f, g\in \mathbb{F}[x]$ be two monic polynomials. Then, $\res(f,g)=\prod_{\alpha: f(\alpha)=0}{g(\alpha)}$,
        where the product is over all the roots $\alpha$ of $f$ over $\overline{\mathbb{F}}$.
        \item $f, g \in \mathbb{F}[x]$ have a common factor in $\mathbb{F}[x]$ if and only if their resultant vanishes.
        \item If $\text{char}(\mathbb{F})=0$, a polynomial $f$ is square free if and only if $\Delta(f) \neq 0$. Moreover,
        the last statement is also true if $\text{deg}(f)<\text{char}(\mathbb{F})$.
        \item \label{item:dim=deg} Let $f,g \in \mathbb{F}[x_0,\vx]$ be two $s$-sparse polynomials of bounded individual degree $d$. Then,
        their Sylvester matrix $M_{x_0}(f,g)$ is a $2d\times 2d$ matrix whose entries are $s$-sparse elements of $\mathbb{F}[\vx]$,
        and moreover, $\dim(\ker {M_{x_0}(f,g)})=\deg_{x_0}(\gcd(f,g))$. In particular,
        \begin{enumerate}
            \item If $f, g \in \mathbb{F}[x_0,\vx]$ are two $s$-sparse polynomials of bounded individual degree $d$, 
            then $\res_{x_0}(f, g) \in \mathbb{F}[\vx]$ is a $(2d)!s^{2d}$-sparse polynomial of bounded individual
            degree $2d^2$.
            \item If $f\in \mathbb{F}[x_0,\vx]$ is an $s$-sparse polynomial of bounded individual degree $d$,
            then $\Delta_{x_0}{f} \in \mathbb{F}[\vx]$ is a $(2d)!s^{2d}$-sparse polynomial of bounded individual
            degree $2d^2$. 
        \end{enumerate}
    \end{enumerate}
\end{lemma}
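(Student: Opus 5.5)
The statement is a list of standard facts, so I would handle each item separately, using the Sylvester map $\phi(h_1,h_2)=h_1g+h_2f$ from \Cref{Sylvester matrix and resultant} as the common tool.

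\textbf{Items (1)--(3).}
\begin{enumerate}
\item For the product formula, I would write $f=\prod_{i=1}^{d_1}(x-\alpha_i)$ and $g=\prod_{j=1}^{d_2}(x-\beta_j)$ over $\overline{\mathbb{F}}$.
\begin{enumerate}
\item I would regard the roots $\alpha_i,\beta_j$ as indeterminates.
\item The resultant is then a polynomial in these indeterminates that vanishes whenever some $\alpha_i=\beta_j$, so $\prod_{i,j}(\alpha_i-\beta_j)$ divides it.
\item Comparing degrees and the leading coefficient in the $\beta$'s gives $\res(f,g)=\prod_{i,j}(\alpha_i-\beta_j)$ up to sign, and this equals $\prod_i g(\alpha_i)$ up to the standard sign convention.
\end{enumerate}
\item For the common-factor criterion, note that $\res(f,g)=0$ exactly when $\phi$ has a nontrivial kernel. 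A pair $(h_1,h_2)\neq 0$ with $h_1g=-h_2f$ and $\deg h_2<d_2$ forces $g\nmid h_2$, so $\gcd(f,g)\neq 1$. Conversely, if $h=\gcd(f,g)$ is nontrivial, then $(f/h,\,-g/h)$ is a nonzero kernel element.
\item For the discriminant, I would apply (2) to the pair $(f,f')$.
\begin{enumerate}
\item If $p^2\mid f$ then $p\mid f'$, so $\Delta(f)=0$.
\item Conversely, suppose $f=pq$ with $p$ irreducible and $p\nmid q$. Then $f'=p'q+pq'$, and $p\mid f'$ iff $p\mid p'$.
\item Since $p\mid p'$ with $\deg p'<\deg p$ forces $p'=0$, this happens only when $p'=0$, which is impossible in characteristic $0$ or when $\deg p<\Char(\F)$.
\end{enumerate}
\end{enumerate}

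\textbf{Item (4).} First I would bound the matrix $M_{x_0}(f,g)$. Its entries are coefficients $f_i,g_j\in\mathbb{F}[\vx]$ of powers of $x_0$. These are $s$-sparse with individual degree at most $d$. The matrix has size $(d_1+d_2)\times(d_1+d_2)\le 2d\times 2d$.

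Next I would prove the kernel formula over the field $K=\mathbb{F}(\vx)$. Let $h=\gcd(f,g)$ with $\deg_{x_0}h=k$. I claim the kernel of $\phi$ is exactly
\[
\{(u\,f/h,\;-u\,g/h)\;:\;\deg u<k\}.
\]
\begin{enumerate}
\item These pairs are clearly in the kernel.
\item Conversely, $h_1g=-h_2f$ gives $(f/h)\mid h_1$, since $f/h$ and $g/h$ are coprime.
\item The degree bound on $h_1$ then forces $\deg u<k$.
\end{enumerate}
Hence $\dim\ker M_{x_0}(f,g)=k=\deg_{x_0}\gcd(f,g)$. Gauss' lemma (\Cref{lem:gauss}) lets me identify the gcd over $K$ with the gcd in $\mathbb{F}[x_0,\vx]$ up to factors free of $x_0$, which do not affect $\deg_{x_0}$.

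For (a), I would expand $\det M$ by the Leibniz formula.
\begin{enumerate}
\item There are at most $(2d)!$ permutations.
\item Each term is a product of at most $2d$ entries, each $s$-sparse, so each term has at most $s^{2d}$ monomials.
\item The total sparsity is therefore at most $(2d)!\,s^{2d}$.
\item Each product of $2d$ entries has individual degree at most $2d\cdot d=2d^2$.
\end{enumerate}
Part (b) follows from (a) applied to $f$ and $\partial f/\partial x_0$, since $\partial f/\partial x_0$ is again $s$-sparse of individual degree at most $d$.

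\textbf{Main obstacle.} The only delicate point is the degree bookkeeping in item (4). When $\deg_{x_0}f$ or $\deg_{x_0}g$ is less than $d$, the matrix is smaller than $2d$. I would treat $2d\times 2d$ as an upper bound, with the sizes determined by the actual $x_0$-degrees, and everything else is routine.
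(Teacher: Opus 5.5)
Your proposal is correct and follows essentially the same route as the paper. The paper proves only the kernel-dimension claim of item (4), using the same argument: any kernel element $(w_1,w_2)$ factors as $w_1=(f/\gcd(f,g))\,\tilde w$ and $w_2=-(g/\gcd(f,g))\,\tilde w$ with $\deg\tilde w<\deg_{x_0}\gcd(f,g)$. It treats items (1)--(3) and the sparsity bounds (a), (b) as standard, and your arguments for them (the root-product identity up to sign, the kernel criterion for a common factor, the $p\mid p'$ argument for the discriminant, and the Leibniz count giving $(2d)!\,s^{2d}$ and individual degree $2d^2$) are the usual ones and are fine.
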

    
\begin{proof}
    The first three statements are standard. The last statement is also well known, and its explanation is given for example in \cite{volkovich2017some},
    in the language of subresultants. We give the full proof of the "moreover" part here as well,
    as it plays an important role in the proof of \Cref{lem:G always preserves coprimality of factors}.
    
    Suppose $f, g \in \mathbb{F}[x_0,\vx]$ are two $s$-sparse polynomials of bounded individual degree $d$. Think of both
    as elements of $\mathbb{F}(\vx)[x_0]$, and write $f=\prod_{i}{h_i^{a_i}}, g=\prod_{i}{h_i^{b_i}}$, where $h_i$ are
    all the irreducible factors of $fg$, and $a_i,b_i\ge 0$ (for example, if $f, g$ are coprime, then
    $a_ib_i=0$ for every $i$). Note that $\gcd (f,g)=\prod_{i}{h_i^{\min {(a_i, b_i)}}}$ and $\operatorname{lcm} (f,g)=\prod_{i}{h_i^{\max {(a_i, b_i)}}}$. 
    
    Consider an element $(w_1, w_2)\in \ker {M_{x_0}(f,g)}$. Then, by definition,
    \begin{equation}\label{being in the kernel of sylvester}
        w_1g=-w_2f.
    \end{equation}
    Therefore, by considering the factorization into  irreducible factors of both sides,
    it must be the case that $\prod_i{h_i^{\max(a_i-b_i,0)}}|w_1$ and $\prod_i{h_i^{\max(b_i-a_i,0)}}|w_2$.
    Thus, we can write 
    \begin{equation}\label{gcd decomposition}
        w_1=\prod_i{h_i^{\max(a_i-b_i,0)}\tilde{w}_1}, \quad w_2=\prod_i{h_i^{\max(b_i-a_i,0)}\tilde{w}_2} \ .
    \end{equation}
    It follows that we have
    \begin{equation}
    \begin{aligned}\label{connection to lcm}
        w_1g=\prod_i{h_i^{b_i+\min(a_i-b_i, 0)}}\tilde{w}_1=\prod_i{h_i^{\max(a_i,b_i)}}\tilde{w}_1=\operatorname{lcm} (f,g)\tilde{w}_1 \\
        w_2f=\prod_i{h_i^{a_i+\min(b_i-a_i, 0)}}\tilde{w}_2=\prod_i{h_i^{\max(a_i,b_i)}}\tilde{w}_2=\operatorname{lcm} (f,g)\tilde{w}_2  \ .
    \end{aligned}
    \end{equation}
    Combining \eqref{being in the kernel of sylvester} and \eqref{connection to lcm}, we get that $\tilde{w}_1 =-\tilde{w}_2$. Since $\deg w_1,\deg w_2\leq d-1$, we get from \eqref{gcd decomposition} that an element in $\ker {M_{x_0}(f,g)}$ corresponds to a polynomial $\tilde{w}$ of degree at most
    \begin{equation}
    \begin{aligned}
        d-1-\sum_i{\max(a_i-b_i,0)\deg{h_i}} &= \sum_i{(a_i-\max(a_i-b_i,0))\deg(h_i)}-1 \\
        &=\sum_i{\min(a_i,b_i)\deg(h_i)}-1=\deg_{x_0}(\gcd (f,g))-1 \ .
    \end{aligned}
    \end{equation}
    Therefore, $\ker {M_{x_0}(f,g)}$ is isomorphic to $\mathbb{F}[x]_{\leq (\deg_{x_0} \gcd(f,g)-1)}$, and hence
    is of dimension $\deg_{x_0} \gcd(f,g)$, as claimed.
\end{proof}

\section{Generator Construction}\label{Section 3}

In this section, we construct our generator, which is a map $G: \mathbb{F}^{6} \rightarrow \mathbb{F}^n$ that satisfies the properties described in \Cref{sec:methods outline}.

In what follows we let $m$ be some parameter (in our setting we will usually set it to $m=(nsd)^2$ when dealing when $(n,s,d)$-sparse polynomials). To define our generator $G_{(m)}$ we will use the following notational conventions. Let  $q= \Theta(m)$ be a prime (for concreteness $q$ can be the first prime larger than $m$). 
We shall also assume that $|\F|>ndq$. If this is not the case then we work over an extension field $\F\subset\mathbb{E}$  of size $dnq<|\mathbb{E}|\leq (dnq)^2$.

 Fix the following sets:
\begin{enumerate}
    \item A set of integers $\cI = \{1,\ldots,m\}$ of size $m$, that contains $m$ different residues mod $q$. 
    \item A set of field elements $\cA = \{\alpha_1,\ldots,\alpha_m\} \subset \mathbb{F}$ of size $m$. For every $1 \leq i \leq m$ let
    $A_i$ be the Lagrange interpolating polynomials for $\cA$.
    \item A set of field elements $\cB = \{\beta_1,\ldots,\beta_n\} \subset \mathbb{F}$ of size $n$. For every $1 \leq i \leq n$ let
    $B_i$ be the Lagrange interpolating polynomials for $\cB$.
    \item A set of field elements $\cC = \{\gamma_1,\ldots,\gamma_n\} \subset \mathbb{F}$ of size $n$. For every $1 \leq i \leq n$ let
    $C_i$ be the Lagrange interpolating polynomials for $\cC$.
    \item A set of field elements ${\cT} = \{\delta_1,\ldots,\delta_{ndq}\} \subset \mathbb{F}$ of size $ndq$. 
\end{enumerate}

To define our generator we first define the generator of Klivans and Spielman and state some of its properties.

\begin{definition}[The KS-generator]\label{def:KS}
In the notations above, the KS-generator is defined as  
\begin{equation}
    {G}_{(m)}^{\text{KS}}(x,y,z,w) = \sum_{i=1}^{m} {A_i(y)\left((1 + B_1(z)(w-1))x^{i\bmod q},\ldots,(1 + B_n(z)(w-1))x^{i^n \bmod q}\right)}\ .
\end{equation}    
\end{definition}

\begin{theorem}[{\cite[Theorem 11]{KS01}}]\label{thm:KS}
Let $f\in\F[\vx]$ be an $(n,s,d)$-sparse polynomial. Let $m=(nds)^2$.
Then, $f$ can be reconstructed in time $\poly(n,s,d)$ from the following set of evaluations:
\[
\left\{ f\left({G}_{(m)}^{\text{KS}}(x,y,z,w) \right) \mid   \; y\in \cA,\;  z\in \cB, \; x,w\in \cT\; \right\}.
\]
\end{theorem}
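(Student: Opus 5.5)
The plan is to reduce the multivariate reconstruction to univariate sparse interpolation via the Kronecker-type substitution $x_j \mapsto x^{j^{\,t}\bmod q}$ underlying the KS-generator. The generator is built so that each choice of $(y,z,w)$ selects one substitution. Setting $y=\alpha_k$ picks the $k$-th summand, since $A_i(\alpha_k)=\delta_{i,k}$. Setting $z=\beta_j$ makes $1+B_t(\beta_j)(w-1)$ equal to $w$ for $t=j$ and to $1$ otherwise. Hence
\[
f\bigl(G^{\text{KS}}_{(m)}(x,\alpha_k,\beta_j,w)\bigr)=f\bigl(x^{k\bmod q},\ldots,w\,x^{k^j\bmod q},\ldots,x^{k^n\bmod q}\bigr).
\]
We also use $f(G(x,\alpha_k,z,1))$, obtained by taking $w=1$ (or by noting that the $w$-free part is recovered anyway).

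For a fixed $k$, consider the univariate map $x_t\mapsto x^{k^t\bmod q}$. The degree in $x$ is at most $ndq$, and the degree in $w$ is at most $d$. Since $|\cT|=ndq$, we can interpolate the bivariate polynomial $f(G(x,\alpha_k,\beta_j,w))$ from the evaluations with $x,w\in\cT$ (using \Cref{Interpolation of polynomials}, after enlarging $\cT$ by one point if needed). This gives its full monomial representation.

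The key step is the KS isolation argument. Distinct exponent vectors $e\neq e'$ of $f$, with entries at most $d$, collide under the map $e\mapsto \sum_t e_t k^t \bmod q$ only if $k$ is a root mod $q$ of the nonzero polynomial $\sum_t (e_t-e'_t)z^t$. That polynomial has degree at most $n$, and it is nonzero mod $q$ because $q>d$. So each pair is bad for at most $n$ values of $k$, and there are fewer than $s^2/2$ pairs. Because $m=(nds)^2$ exceeds $ns^2$ (and the elements of $\cI$ are distinct mod $q$), some $k$ is good, meaning every monomial of $f$ maps to a distinct power of $x$. Such a $k$ can be recognized as one maximizing the number of monomials of $f(G(x,\alpha_k,\cdot,1))$: collisions can only cancel or merge monomials, never create new ones. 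The caveat is that merged coefficients might cancel to zero; we handle this by taking the maximum count over $k$ anyway, since a good $k$ attains exactly $s'=\|f\|$, which upper-bounds every other count.

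For a good $k$, each univariate monomial $c\,x^{E}$ corresponds to a unique monomial $c\,\vx^{e}$ of $f$, and the task is to read off $e$. In the interpolation of $f(G(x,\alpha_k,\beta_j,w))$, the monomial $x^E$ appears as $c\,w^{e_j}x^{E}$, because $w$ marks $x_j$ and the $x$-exponent is unchanged. So $e_j$ is simply the $w$-degree of the coefficient attached to $x^E$. Doing this for all $j\in[n]$ recovers $e$ completely, together with its coefficient $c$.

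The total work is $m\cdot n$ bivariate interpolations of degree $\poly(n,d,s)$, so the running time is $\poly(n,s,d)$. The main obstacle is ensuring the good-$k$ selection is robust to coefficient cancellation and that the degree bounds match $|\cT|$. Both are handled by the counting argument above and the choice $q=\Theta(m)$.
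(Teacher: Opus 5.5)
Your proposal is correct and follows the paper's route. The paper likewise specializes $y=\alpha_i$, $z=\beta_j$ to obtain exactly your substitution $x_j\mapsto w\,x^{i^j\bmod q}$, and then simply defers to the analysis of \cite[Theorem 11]{KS01}. You reconstruct that analysis in full: collision-freeness mod $q$, selecting a good $k$ by maximum term count, and reading $e_j$ off the $w$-degree. One small point: the $x$-degree is at most $nd(q-1)<ndq$, so $\cT$ never needs to be enlarged.
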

\begin{proof}
    Observe that when setting $y=a_i$ and $z=b_j$ we obtain
\[    {G}_{(m)}^{\text{KS}}(x,a_i,b_j,w) = \left(x^{i\bmod q},\ldots,x^{i^{j-1}\bmod q},wx^{i^j \bmod q},x^{i^{j+1}\bmod q},\ldots,x^{i^{n}\bmod q}\right)\ .\]
The claim follows from the analysis in \cite[Theorem 11]{KS01}.
\end{proof}

Another tool that we will use is the generator defined in \cite{shpilka2009improved}, which is given by the formula
\begin{equation}
    G^{\text{SV}}(u,v) = \left(C_1(v)u,\ldots,C_n(v)u\right) \ . 
\end{equation}
The basic property of this generator is that it makes it possible to "revive" a variable of our choice. E.g., by setting $v=\gamma_j$ we get the vector that has zeros in all coordinates except of the $j$-th coordinate where we have the variable $u$.

We are now ready to define our generator.

\begin{definition}[The generator $G_{(m)}$]\label{def:Gm}
    The generator $G_{(m)}:\mathbb{F}^{6} \rightarrow \mathbb{F}^{n}$ is defined as 
    \begin{equation}
    G_{(m)}(x,y,z,w,u,v) = {G}_{(m)}^{\text{KS}}(x,y,z,w) + G^{\text{SV}}(u,v)\ .
\end{equation}
    We further define the generators obtained from reviving a variable. Denote with ${G}_{(m)}^{\text{KS}}(x,y,z,w)_i$ the $i$-th coordinate of ${G}_{(m)}^{\text{KS}}(x,y,z,w)$.
    
    For any $1\leq i \leq n$, let $G_{(m,i)}$ be the generator obtained from $G_{(m)}$
    by restricting to $u=u-{G}_{(m)}^{\text{KS}}(x,y,z,w)_{i}$ and $v=\gamma_i$.
    That is,
    \begin{equation}\label{eq:reviving}
        G_{(m,i)}(x,y,z,w,u) = \left({G}_{(m)}^{\text{KS}}(x,y,z,w)_{1},\ldots,u,\ldots,{G}_{(m)}^{\text{KS}}(x,y,z,w)_{n}\right)\ .
    \end{equation}
\end{definition}

Let's explain this definition.
The first part ${G}_{(m)}^{\text{KS}}$ enables us to interpolate $(n,s,d)$-sparse polynomials (\Cref{thm:KS}). 
The second part allows us to “revive” a variable of our choice, a feature that will play an important role in the analysis.

The following lemma shows that our generator indeed satisfies the first property described in \Cref{sec:methods outline}.

\begin{lemma}\label{lem:first property of G}
    Let $f$ be an $(n,s,d)$-sparse polynomial and $m=(nds)^2$.
    Then  one can recover $f$ from $f(G_{(m)})$ in time $\poly(m)$.
    In particular, it holds that $f=0$ if and only if $f(G_{(m)})=0$.
    In fact, for every $1\leq i\leq n$, the statement is also true when replacing $G_{(m)}$ with $G_{(m,i)}$ and with ${G}_{(m)}^{\text{KS}}$.
\end{lemma}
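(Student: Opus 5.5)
The plan is to reduce everything to \Cref{thm:KS}: show that from the polynomial $f(G_{(m)})$, and likewise from $f(G_{(m,i)})$ or $f({G}_{(m)}^{\text{KS}})$, we can read off every evaluation in the set of \Cref{thm:KS}. Then the reconstruction procedure of \cite{KS01} recovers $f$ in $\poly(n,s,d)=\poly(m)$ time.

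\textbf{Starting from ${G}_{(m)}^{\text{KS}}$.} The polynomial $f({G}_{(m)}^{\text{KS}}(x,y,z,w))$ lies in $\F[x,y,z,w]$.
\begin{itemize}
\item Its degree in $x$ is at most $ndq$, since each coordinate has $x$-degree less than $q$ and the total degree of $f$ is at most $nd$.
\item Its degrees in $y$, $z$, $w$ are bounded by $nd(m-1)$, $nd(n-1)$ and $nd$ respectively.
\end{itemize}
So it has a number of variables bounded by $4$ and degree $\poly(m)$. Its dense representation therefore has $\poly(m)$ size, and we can either be handed it or interpolate it via \Cref{Interpolation of polynomials} in $\poly(m)$ time.

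Substituting $y\in\cA$, $z\in\cB$ and $x,w\in\cT$ gives exactly the evaluations required by \Cref{thm:KS}. There are $m\cdot n\cdot (ndq)^2=\poly(m)$ of them, and each costs $\poly(m)$ to compute. The set $\cT$ exists because we assumed $|\F|>ndq$, passing to an extension field if necessary.

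\textbf{From $G_{(m)}$.} Set $u=0$. For any $v$, $G^{\text{SV}}(0,v)=0$, so $f(G_{(m)})|_{u=0}=f({G}_{(m)}^{\text{KS}})$, and we reduce to the previous case.

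\textbf{From $G_{(m,i)}$.} Substitute $u={G}_{(m)}^{\text{KS}}(x,y,z,w)_i$. This is a polynomial of $\poly(m)$ degree, and the substitution gives back $f({G}_{(m)}^{\text{KS}})$, since by \eqref{eq:reviving} only the $i$-th coordinate differs. Again we reduce to the first case.

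\textbf{The ``in particular'' statement.} If $f\neq 0$, then $f$ is recovered from $f(G)$. Hence $f(G)$ cannot equal $0(G)=0$, because the zero polynomial would then have the same image and the two reconstructions would coincide. The converse direction is trivial.

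\textbf{Main obstacle.} The only delicate point is making sure that the substitutions $y=\alpha_j$, $z=\beta_k$ indeed produce the exact evaluation points of \cite[Theorem 11]{KS01}, namely
\[
\left(x^{j\bmod q},\ldots,wx^{j^k\bmod q},\ldots\right).
\]
This holds because $A_{j'}(\alpha_j)=\delta_{j,j'}$ and $B_{k'}(\beta_k)=\delta_{k,k'}$, and it is already verified in the proof of \Cref{thm:KS}. The remaining work is bookkeeping of degrees to keep all interpolation within $\poly(m)$.
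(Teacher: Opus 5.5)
Your proof is correct and follows the same route as the paper. The paper simply notes that the images of $G_{(m)}$, $G_{(m,i)}$ and ${G}_{(m)}^{\text{KS}}$ all contain the image of ${G}_{(m)}^{\text{KS}}$ and then invokes \Cref{thm:KS}; your specializations $u=0$ and $u={G}_{(m)}^{\text{KS}}(x,y,z,w)_i$, together with the degree bookkeeping, make exactly that containment explicit.
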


\begin{proof}
    For  every $1\leq i\leq n$, the image of each one of $G_{(m)}$, $G_{(m,i)}$ and $G^{\text{KS}}_{(m)}$ includes image of $G^{\text{KS}}_{(m)}$, so the claim follows from \Cref{thm:KS}.
\end{proof}

We conclude this section with the following three corollaries of the previous lemma.

\begin{corollary}\label{G hits factors of sparse polynomials}
    Suppose $f\in \cCF(n,s,d)$.
    Then $f(G_{(m)}) \neq 0$ for $m=(nds)^2$.
    In fact, for every $1\leq i\leq n$, the statement is also true when replacing $G_{(m)}$ with $G_{(m,i)}$ and with ${G}_{(m)}^{\text{KS}}$.
\end{corollary}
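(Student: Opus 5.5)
Since $f\in\cCF(n,s,d)$, by definition there is some $g\in\cP(n,s,d)$ with $f\mid g$; write $g=f\cdot h$. The plan is to deduce nonvanishing of $f$ under the generator from nonvanishing of $g$, using the multiplicativity of substitution. We may assume $f\neq 0$, as the statement is implicitly about nonzero factors; then $g\neq 0$ as well, since we may choose $g$ to be a nonzero element of $\cP(n,s,d)$ (every nonzero polynomial divides only nonzero polynomials in the sense relevant here, and $0$ is excluded from being a meaningful witness).

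The main step is to invoke \Cref{lem:first property of G} with $m=(nds)^2$. Since $g$ is a nonzero $(n,s,d)$-sparse polynomial, that lemma gives $g(G_{(m)})\neq 0$. The same conclusion holds for $G_{(m,i)}$ for every $1\le i\le n$, and for ${G}_{(m)}^{\text{KS}}$, because the lemma covers all three maps.

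Substitution is a ring homomorphism $\F[\vx]\to\F[x,y,z,w,u,v]$ (respectively into the polynomial ring in the variables of $G_{(m,i)}$ or of ${G}_{(m)}^{\text{KS}}$). Hence
\[
g(G_{(m)})=f(G_{(m)})\cdot h(G_{(m)}).
\]
If $f(G_{(m)})=0$, the right-hand side would vanish, contradicting $g(G_{(m)})\neq0$. Therefore $f(G_{(m)})\neq 0$. The identical argument applies verbatim with $G_{(m,i)}$ and with ${G}_{(m)}^{\text{KS}}$ in place of $G_{(m)}$.

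There is essentially no obstacle in this argument. The only point requiring care is that the target ring (a polynomial ring, possibly over the extension field $\mathbb{E}$) is an integral domain, although only the trivial direction "a factor of a nonzero product is nonzero" is actually needed.
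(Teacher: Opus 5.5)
Your proof is correct and is essentially the paper's argument: take a nonzero witness $g\in\cP(n,s,d)$ with $f\mid g$, apply \Cref{lem:first property of G} to get $g(G)\neq 0$ for each of $G_{(m)}$, $G_{(m,i)}$ and ${G}_{(m)}^{\text{KS}}$, and conclude from $f(G)\mid g(G)$. Your parenthetical justification for $g\neq 0$ is muddled, since every polynomial divides $0$; it is cleaner to state the convention the paper uses implicitly, namely that $\cCF(n,s,d)$ consists of factors of \emph{nonzero} elements of $\cP(n,s,d)$, so a nonzero witness $g$ exists by definition.
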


\begin{proof}
    Let $g\in \cP(n,s,d)$ so that $f|g$, and let $G$ be any one of $G_{(m)}$, $G_{(m,i)}$ and ${G}_{(m)}^{\text{KS}}$.
    \Cref{lem:first property of G} implies that $g(G)\neq 0$. The claim follows from
    the fact that $f(G)|g(G)$.
\end{proof}

\begin{corollary}\label{G preserves degree}
    Let $f\in \cCF(n,s,d)$.
    Then, $\deg_{u}(f(G_{(m,i)}))=\deg_{x_i}(f)$ for every $1\leq i\leq n$ and $m=(nsd)^2$.
\end{corollary}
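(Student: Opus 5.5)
The plan is to compare leading coefficients with respect to $x_i$ and $u$. Expand $f$ as a polynomial in $x_i$:
\[
f=\sum_{j=0}^{k} f_j\, x_i^j,
\]
where $k=\deg_{x_i}(f)$, each $f_j$ does not depend on $x_i$, and $f_k\neq 0$. By \eqref{eq:reviving}, the $i$-th coordinate of $G_{(m,i)}$ is $u$, and no other coordinate involves $u$: they are the coordinates of ${G}_{(m)}^{\text{KS}}$. Hence
\[
f(G_{(m,i)})=\sum_{j=0}^{k} f_j\bigl({G}_{(m)}^{\text{KS}}\bigr)\, u^j .
\]
This immediately gives $\deg_u(f(G_{(m,i)}))\le k$. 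Equality holds if and only if $f_k({G}_{(m)}^{\text{KS}})\neq 0$. Here we regard $f_k$ as an $n$-variate polynomial not depending on $x_i$, so substituting $G_{(m,i)}$ or ${G}_{(m)}^{\text{KS}}$ into it gives the same result.

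It therefore remains to show that the leading coefficient $f_k$ is not killed by ${G}_{(m)}^{\text{KS}}$. Since $f\in\cCF(n,s,d)$, choose $g\in\cP(n,s,d)$ with $g=fh$. View $g$, $f$ and $h$ in $\F[\vx\setminus x_i][x_i]$. This ring is an integral domain, so leading coefficients multiply: $\mathrm{lc}_{x_i}(g)=f_k\cdot \mathrm{lc}_{x_i}(h)$.

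Now $\mathrm{lc}_{x_i}(g)$ is the coefficient of a fixed power of $x_i$ in $g$. Its monomials are therefore obtained from a subset of the monomials of $g$ by deleting $x_i$, so it is itself an $(n,s,d)$-sparse polynomial, and it is nonzero. By \Cref{lem:first property of G} (the ${G}_{(m)}^{\text{KS}}$ version, with $m=(nsd)^2$), we get $\mathrm{lc}_{x_i}(g)({G}_{(m)}^{\text{KS}})\neq 0$. Since $f_k({G}_{(m)}^{\text{KS}})$ divides this nonzero polynomial, it is nonzero as well. This is the same argument as in \Cref{G hits factors of sparse polynomials}, now applied to the leading coefficient.

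The only point needing care is this transfer of nonvanishing from $\mathrm{lc}_{x_i}(g)$ to $f_k$. The key is to apply the hitting property to the sparse polynomial $\mathrm{lc}_{x_i}(g)$, not to $f_k$ itself, since $f_k$ need not be sparse.
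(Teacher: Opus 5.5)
Your proof is correct and follows essentially the same route as the paper. Both arguments compare the $x_i$-leading coefficient of $f$ with the $u$-leading coefficient of $f(G_{(m,i)})$, observe that $\mathrm{lc}_{x_i}(f)$ divides the sparse polynomial $\mathrm{lc}_{x_i}(g)$, and use the hitting property of ${G}_{(m)}^{\text{KS}}$ to conclude that it does not vanish. The paper phrases this last step via \Cref{G hits factors of sparse polynomials}, whose proof is exactly your divisibility-transfer argument.
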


\begin{proof}
    Consider the leading coefficient\footnote{The leading coefficient of a polynomial in a variable $x$ is the coefficient of the highest power of $x$ appearing nontrivially in the polynomial.} 
$\operatorname{LC}_{x_i}(f)$ of $f$ as a polynomial in $x_i$, and let $g \in \cP(n,s,d)$ be any polynomial such that $f \mid g$.
Then the leading coefficient $\operatorname{LC}_{x_i}(g)$ of $g$ (viewed as a polynomial in $x_i$) is divisible by $\operatorname{LC}_{x_i}(f)$.
Since $g$ is $s$-sparse, its leading coefficient is also $s$-sparse.
Therefore, $\operatorname{LC}_{x_i}(f)$ is a factor of an $(n-1,s,d)$-sparse polynomial, and hence, by
\Cref{G hits factors of sparse polynomials}, we have
$\operatorname{LC}_{x_i}(f)(G^{\text{KS}}_{(m)}) \neq 0$ when substituting all variables except $x_i$.

By~\eqref{eq:reviving}, $\operatorname{LC}_{x_i}(f)(G^{\text{KS}}_{(m)})$ is precisely the leading coefficient of
$f(G_{(m,i)})$ as a polynomial in $u$.
Therefore, we must have $\deg_u f(G_{(m,i)}) = \deg_{x_i}(f)$.
\end{proof}

\begin{corollary}\label{reduction for nonzero constant term}
    There is a deterministic $\poly(n,d,s,\ell)$-time algorithm that, given blackbox access to a product $f$ of $\ell$ polynomials in $\cCF(n+1,s,d)\subset\mathbb{F}[x_0,\vx]$, returns the multiplicity $k$ of $x_0$ as a factor of $f$.
    Moreover, if $f={x_0}^{k}g$, one can obtain blackbox access to $g$ from blackbox access
    to $f$. 
\end{corollary}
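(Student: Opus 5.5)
The plan is to find $k$ by looking at $f$ along the revived line $G_{(m,0)}$, with $m=((n+1)ds)^2$. Write $f=\prod_{j=1}^{\ell}\phi_j$ with $\phi_j\in\cCF(n+1,s,d)$. Each $\phi_j$ factors as $x_0^{k_j}\psi_j$ with $x_0\nmid\psi_j$, and $k=\sum_j k_j$. The degree of $f$ in each variable is at most $\ell d$, so $k\le \ell d$.

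\textbf{Step 1: the $x_0$-multiplicity survives the substitution.} It suffices to show that the lowest power of $u$ appearing in $f(G_{(m,0)})$ is exactly $u^k$. Under $G_{(m,0)}$ the variable $x_0$ becomes $u$, and the other variables are replaced by KS-coordinates that do not involve $u$. Hence $f(G_{(m,0)})=u^k\prod_j\psi_j(G_{(m,0)})$. It remains to check that $u\nmid \psi_j(G_{(m,0)})$, which amounts to showing $\psi_j(0,G^{\text{KS}}_{(m)})\neq 0$.
\begin{itemize}
\item Let $g_j\in\cP(n+1,s,d)$ be a polynomial with $\phi_j\mid g_j$.
\item Write $g_j=x_0^{t}h_j$ with $x_0\nmid h_j$. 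Then $h_j(0,\vx)$ is a nonzero $(n,s,d)$-sparse polynomial.
\item Since $x_0$ is irreducible and $x_0\nmid\psi_j$, we have $\psi_j\mid h_j$. Setting $x_0=0$ gives $\psi_j(0,\vx)\mid h_j(0,\vx)$.
\item So $\psi_j(0,\vx)$ is a nonzero element of $\cCF(n,s,d)$. By \Cref{G hits factors of sparse polynomials}, applied with $G^{\text{KS}}_{(m)}$ on the variables $x_1,\ldots,x_n$, we get $\psi_j(0,G^{\text{KS}}_{(m)})\neq 0$.
\end{itemize}
This step is the only real subtlety: a factor of a sparse polynomial must stay nonzero after restricting $x_0=0$. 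The issue is resolved by first stripping the $x_0$-power from $g_j$.

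\textbf{Step 2: reading off $k$.} The polynomial $f(G_{(m,0)})$ has only a constant number of variables, namely $x,y,z,w,u$. Its degree is $\poly(n,d,s,\ell)$, because the KS coordinates have degree below $q=O(m)$ in $x$ and degree $\poly(m,n)$ in $y,z,w$. We obtain blackbox access to it through blackbox access to $f$. It is also enough to fix $y,z,w,x$ at a suitable point. Since $\prod_j\psi_j(0,G^{\text{KS}}_{(m)})$ is a nonzero polynomial of polynomial degree, some point from a grid of polynomial size will keep it nonzero; if needed, we pass to a small extension field to get such a grid. At such a point, $f$ becomes a univariate polynomial in $u$ of degree at most $\ell d$. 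We interpolate it from $\ell d+1$ queries and read off the lowest exponent present, which is $k$. Alternatively, one can skip fixing the point and interpolate in all five variables using \Cref{Interpolation of polynomials}.

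\textbf{Step 3: blackbox access to $g$.} Here $g=f/x_0^k$. For an input $(a_0,\va)$ with $a_0\neq 0$, output $f(a_0,\va)/a_0^k$. For $a_0=0$, the value $g(0,\va)$ is the coefficient of $x_0^k$ in $f(x_0,\va)$. We compute it by interpolating the univariate polynomial $f(x_0,\va)$, of degree at most $\ell d$, from $\ell d+1$ queries at nonzero points. Each query to $g$ therefore costs $O(\ell d)$ queries to $f$, and the total running time is $\poly(n,d,s,\ell)$.
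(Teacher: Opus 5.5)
Your proof is correct and follows essentially the same route as the paper. You strip the $x_0$-power from the sparse multiple $g_j$, so that the free term of each $x_0$-free factor divides a nonzero $(n,s,d)$-sparse polynomial. You then use the hitting property of the generator to show that substituting it for $x_1,\ldots,x_n$ preserves the $x_0$-multiplicity, read off $k$ by interpolating the resulting constant-variate polynomial, and obtain blackbox access to $g$ by univariate interpolation in $x_0$. One small fix: in your fixed-point variant of Step 2 you do not know in advance which grid point is good, so you should take the minimum lowest $u$-exponent over all grid points. The full five-variable interpolation you also mention, which is what the paper does, avoids this issue.
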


\begin{proof}
    Let $f={x_0}^{k}\prod_{i}{f_i}$ be the factorization of $f$ into irreducible factors, where $f_i \neq x_0$. For each $i$, let $g_i$ be an $(n+1,d,s)$-sparse polynomial such that $f_i|g_i$. Let $m=((n+1)ds)^2$.
    Consider the polynomial 
    \[f(x_0, G_{(m)})={x_0}^{k}\prod_{i}{f_i(G_{(m)})}\ .\]
    Fix some $i$, and denote $f_i = \sum_{j=0}^{d}{f_{i,j}{x_0}^{j}}$ and $g_i = \sum_{j=0}^{d}{g_{i,j}{x_0}^{j}}$, where $f_{i,j},g_{i,j} \in \mathbb{F}[\vx]$.
    As $x_0 \nmid f_i$, it follows that $f_{i,0}\neq 0$ for every $i$. By dividing out powers of $x_0$ we may therefore assume that $x_0\nmid g_i$ (note that when doing so, we do not change the sparsity of $g_i$).
    
    Since $g_i$ is an $(n+1,d,s)$-sparse polynomial it follows that $g_{i,0}$ is $(n,s,d)$-sparse polynomial. Furthermore, for every $i$ we have $f_{i,0}|g_{i,0}$.  Therefore, it follows from \cref{G hits factors of sparse polynomials} that
    $f_{i,0}(G_{(m)})\neq 0$, and hence $x_0\nmid f_i(G_{(m)})$. That is, the multiplicity of $x_0$ as a factor
    of $f$ is the same as its multiplicity as a factor of $f(x_0, G_{(m)})$. As the latter is
    a polynomial on a constant number of variables of bounded individual degree $\poly(n,d,s,\ell)$,
    we can interpolate it using \Cref{Interpolation of polynomials} and directly find $k$ (which is the minimal power of $x_0$ dividing all monomials
    of $f(x_0, G_{(m)})$).
    
    For the "moreover" part, let $f={x_0}^{k}g$, and
    suppose $\valpha=(\alpha_0,\alpha_1,\ldots,\alpha_n)\in \mathbb{F}^{n+1}$ is a point we want to evaluate $g$ on.  Consider $h(x)=f(x,\alpha_1,\ldots,\alpha_n)=x^kg(x,\alpha_1,\ldots,\alpha_n)$ as a polynomial in $\mathbb{F}[x]$.
    One can use a standard one-variable interpolation (as stated in \Cref{Interpolation of polynomials}) to recover $h(x)$ from blackbox access to $f$,
    and therefore to recover the polynomial $g(x,\alpha_1,\ldots,\alpha_n)=h(x)/{x^k}$.
    By plugging $x=\alpha_0$ one can then recover $g(\valpha)$, as we wanted.
\end{proof}

\begin{corollary}[Finding monomial divisors]\label{cor:monomial-div}
    There is a deterministic $\poly(n,d,s,\ell)$-time algorithm that, given blackbox access to a product $f$ of $\ell$ polynomials in $\cCF(n,s,d)\subset\mathbb{F}[\vx]$, returns the multiplicity $k_i$ of $x_i$ as a factor of $f$. 
    Moreover, if $M$ is the largest monomial divisor of $f$, such that $f=Mg$, then we can obtain blackbox access to $g$ from blackbox access to $f$. 
\end{corollary}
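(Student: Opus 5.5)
The plan is to reduce \Cref{cor:monomial-div} to \Cref{reduction for nonzero constant term}, applied once for each variable $x_i$ in place of $x_0$, followed by an iterative peeling of variables.

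\textbf{Step 1: computing $k_i$.} Fix $1\le i\le n$ and rename $x_i$ as the distinguished variable $x_0$. The remaining $n-1$ variables play the role of $\vx$. Each factor of $f$ is in $\cCF(n,s,d)$, which is exactly $\cCF((n-1)+1,s,d)$ in the new naming. So \Cref{reduction for nonzero constant term} returns the multiplicity $k_i$ of $x_i$ in $f$ in $\poly(n,d,s,\ell)$ time. Repeating this for all $n$ variables costs only a factor of $n$, and the largest monomial divisor is $M=\prod_i x_i^{k_i}$.

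\textbf{Step 2: blackbox access to $g=f/M$.} I would do this directly rather than by iterating blackboxes, to avoid nesting queries. Given a point $\valpha\in\F^n$, I would substitute
\[
x_j=\alpha_j t \quad\text{for all } j,
\]
to get the univariate polynomial
\[
h(t)=f(\alpha_1t,\ldots,\alpha_nt)=t^{\sum_j k_j}\Big(\prod_j\alpha_j^{k_j}\Big)\,g(\alpha_1t,\ldots,\alpha_nt).
\]
This breaks down when some $\alpha_j=0$. A more robust route is to peel one variable at a time, which is what the ``moreover'' part of \Cref{reduction for nonzero constant term} provides.

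Concretely, define $g_0=f$ and $g_i=g_{i-1}/x_i^{k_i}$. To evaluate $g_i(\valpha)$, restrict $g_{i-1}$ to the line where $x_i=x$ is free and the other coordinates equal $\valpha$. Interpolate this univariate polynomial of degree at most $\ell d$ using $\ell d+1$ evaluations of $g_{i-1}$, divide by $x^{k_i}$, and plug in $x=\alpha_i$.

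The main obstacle is that this naive recursion makes $(\ell d+1)^n$ queries to $f$, which is exponential. I avoid it by observing that the restriction to a line can be done for all variables simultaneously. Take the $n$-variate polynomial
\[
p(t_1,\ldots,t_n)=f(t_1,\ldots,t_n),
\]
but evaluate it only on a grid near $\valpha$. Better still, use the explicit formula
\[
g(\valpha)=\text{coefficient of } \textstyle\prod_j x_j^{k_j}
\]
in the expansion of $f$ around the points. Since $x_j^{k_j}\mid f$, I would write $f=\sum_{e}c_e(\valpha)\prod_j (x_j)^{e_j}$ locally.

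I would instead use the following cleaner trick. Pick a shift $\vbeta$ with all $\beta_j\ne0$ and consider the univariate polynomial
\[
q(t)=f(\alpha_1+\beta_1 t\,\mathbb{1}_{\alpha_1=0},\ldots).
\]
Honestly, the simplest correct approach is different. For $\valpha$ with all coordinates nonzero, $g(\valpha)=f(\valpha)/\prod_j\alpha_j^{k_j}$ can be computed with a single query. To evaluate at a point with zero coordinates, note that $g$ restricted to the line $\valpha+t(\vbeta-\valpha)$ is a univariate polynomial of degree at most $n\ell d$. Choose $\vbeta$ so that at most $n\ell d+1$ values of $t$ yield points with all coordinates nonzero. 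Avoiding at most $n$ bad values of $t$ is possible when the field is large enough, passing to an extension if needed as in \Cref{Section 3}. Compute $g$ at those points by single queries, interpolate, and evaluate at $t=0$. This takes $\poly(n,d,\ell)$ queries, finishing the proof.
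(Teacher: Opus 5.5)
Your final argument is correct and is essentially the paper's proof, which uses the specific line $\valpha+(t,\ldots,t)$. You compute each $k_i$ by applying \Cref{reduction for nonzero constant term} with $x_i$ in the role of $x_0$. You answer a query at $\valpha$ by $f(\valpha)/M(\valpha)$ when all coordinates are nonzero. Otherwise you interpolate $g$ along a line through $\valpha$ at $n\ell d+1$ parameter values that avoid the at most $n$ values where some coordinate vanishes, and then set $t=0$.

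When writing it up, discard the abandoned attempts and change ``at most $n\ell d+1$ values'' to ``at least''. Also state the condition on the direction of the line: its $j$-th coordinate must be nonzero whenever $\alpha_j=0$, and the direction $(1,\ldots,1)$ works.
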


\begin{proof}
    This is an immediate corollary of \Cref{reduction for nonzero constant term}. We find the maximal power of each variable dividing $f$. This gives us the monomial $M$. To get blackbox access to $g$ we do the following: Given a point $\valpha\in\F^n$, if all coordinates of $\valpha$ are nonzero then we can compute $g(\valpha)=f(\valpha)/M(\valpha)$. If some coordinates of $\valpha$ are zero, so we cannot divide by $M(\valpha)$, then we do the following. Consider the substitution $\vx\leftarrow \valpha + (t,\ldots,t)$, where $t$ is a new variable. By evaluating $\tilde{g}_\valpha(t):=(f/M)(\valpha + (t,\ldots,t))$ at more than $dn$ values of $t$ such that $t\not\in \{-\alpha_1,\ldots,-\alpha_n\}$, we can interpolate $\tilde{g}(t)$. Then we can evaluate $g(\valpha)=\tilde{g}_\valpha(0)$. Thus, any query to $g$ requires $O(dn)$ queries to $f$.
\end{proof}

\begin{remark}
    The first part of the previous corollary is an easy special case of the technical lemmas in \Cref{Section 4}. In particular, it is always possible to determine the multiplicity of an irreducible sparse polynomial as a factor of such an $f$ using the factorization of $f(x_0,G_{(m)})$. However, as we show in the next section, a larger value of $m$ is required in general.
\end{remark}


\section{Divisibility Testing and Exact Power testing}\label{Section 4}
In this section we prove our divisibility testing and exact power testing results. 
For that we first have to prove some technical results about our generator $G$.
We split the discussion on these technical results into two parts. In \Cref{sec:arb-char} we obtain some results that are valid over
arbitrary fields; in \Cref{sec:char-0} we prove a stronger version of these results, that is only valid over zero or large enough characteristic.

\subsection{Arbitrary Characteristic Results}\label{sec:arb-char}
In this subsection we build on the methods of \cite{bisht2025solving} to obtain our divisibility testing results over fields of arbitrary characteristic.
To do that, we define and study the notion of primitive divisors for a class of polynomials and develop the tools needed for our algorithms; this is done in \Cref{sec:arb-char tools}. In  \Cref{sec:arb-char algs} we use these tools to obtain our algorithms.

\subsubsection{Arbitrary Characteristic Tools}\label{sec:arb-char tools}
We start with the definition of a \emph{primitive divisor}.

\begin{definition}\label{def:primitive divisor}
Let $\cP \subset \mathbb{F}[\vx]$ be a class of polynomials. We say that a polynomial
$g \in \mathbb{F}[\vx]$ is a primitive divisor for $\cP$ if it satisfies the following two conditions.
\begin{enumerate}
    \item For every $f \in \cP$ and every $t \in \mathbb{N}$, the polynomial $\gcd(f, g^t)$ is a power of $g$.
    Equivalently, for every $f \in \cP$, there exists $k \ge 0$ and a polynomial $\tilde g$ coprime to $g$
    such that $f = g^k \cdot \tilde g$.
    \item No non-scalar multiple of $g$ satisfies the first condition.
\end{enumerate}
\end{definition}

\begin{example}\label{irreducible elements of C are primitive}
    An irreducible polynomial $f \in \cP$ is clearly a primitive divisor for $\cP$.
\end{example}

\begin{remark}
    Every primitive divisor for $\cP$ divides some polynomial $f \in \cP$.
Indeed, suppose that $g$ is a primitive divisor for $\cP$ that does not divide any polynomial in $\cP$.
Then, for every $f \in \cP$, we have $\gcd(f,g)=1$, and hence $g$ is coprime to every element of $\cP$.
It follows that $\gcd(f,g^2)=1$ for every $f \in \cP$, so $g^2$ also satisfies the first condition.
This contradicts the second condition for $g$.
\end{remark}

\begin{remark}
    It is not true that primitive divisors for $\cP$ are necessarily elements of $\cP$. Consider for example the class of polynomials $\cP=\{g^2,g^3\}$ for some non-constant polynomial $g\in \mathbb{F}[\vx]$. Then, the only primitive divisor (up to multiplication by scalars) for $\cP$ is $g$, which is not an element of $\cP$.
\end{remark}

In some sense, primitive divisors are to a class of polynomials $\cP$ what irreducible polynomials are to the class of all polynomials.
To make this statement precise, we prove that every $f\in\cP$ can be written uniquely as a product of primitive divisors. That is, primitive
divisors for $\cP$ are indeed the "minimal building blocks" for the polynomials in $\cP$.

To prove that, we start by showing that any two different primitive divisors for a class $\cP$ are coprime.
\begin{lemma}\label{lem:different primitive divisors are coprime}
    Let $\cP\subset \mathbb{F}[\vx]$ be a class of polynomials, and let $g$ and $h$ be two non-associate primitive divisors for it.
    Then $gcd(g,h)=1$.
\end{lemma}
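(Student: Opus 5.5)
Let $g$ and $h$ be non-associate primitive divisors for $\cP$, and suppose for contradiction that $\gcd(g,h)=r$ is non-constant. The plan is to show that $g$ and $h$ are then forced to be associate, via the maximality condition (2) of \Cref{def:primitive divisor}. The tool is a lemma about the irreducible factors of a primitive divisor.

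\textbf{Key lemma.} Fix a primitive divisor $g$ and an irreducible factor $p$ of $g$. For every $f\in\cP$, condition (1) gives $f=g^k\tilde g$ with $\gcd(\tilde g,g)=1$. Hence $v_p(f)=k\,v_p(g)$. So for every irreducible $q\mid g$ we get $v_q(f)=\frac{v_q(g)}{v_p(g)}\,v_p(f)$: the multiplicity vector of $f$ on the irreducible factors of $g$ is a nonnegative multiple of the exponent vector of $g$. Moreover, if $q\nmid g$ then $q$ may or may not divide $f$, with no constraint.

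\textbf{Main step.} Take an irreducible $p\mid r$, so $p$ divides both $g$ and $h$. Let $S_g$ and $S_h$ be the sets of irreducible factors of $g$ and $h$. Applying the key lemma to $g$ and to $h$, for every $f\in\cP$ the values $v_q(f)$ for $q\in S_g\cup S_h$ are all determined by $v_p(f)$. They are proportional to $v_p(f)$ with fixed positive ratios $\lambda_q$, and these ratios are consistent on $S_g\cap S_h$ since both equal $v_q(f)/v_p(f)$ whenever some $f$ has $v_p(f)>0$.

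Consider $m=\prod_{q\in S_g\cup S_h} q^{c_q}$, where $(c_q)$ is the primitive integer vector proportional to $(\lambda_q)$. Then every $f\in\cP$ has the form $m^{t}\cdot(\text{coprime part})$, so $m$ satisfies condition (1). The exponent vector of $g$ is an integer multiple $a\cdot(c_q)$ restricted to $S_g$. Condition (2) for $g$ rules out a proper non-scalar multiple of $g$ satisfying (1), and it should force $g=m$ up to scalar, and likewise $h=m$, so $g\sim h$, a contradiction.

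\textbf{Main obstacle.} The delicate point is the degenerate case where no $f\in\cP$ is divisible by $p$, so the ratios are undefined. There the remark before the lemma (a primitive divisor divides some element of $\cP$) should rescue us. If $p\mid g$ and $g\mid f_0$ for some $f_0\in\cP$, then $v_p(f_0)>0$, so the ratios are defined. A second subtlety is making "$g$ is a proper divisor of $m$" precise: $m$ must be a multiple of $g$, possibly after raising to a power. I would argue with $g\mid m^{a}$: if $S_h\not\subseteq S_g$ then $m^{a'}$ for suitable $a'$ is a strict non-scalar multiple of $g$ satisfying (1), contradicting (2). Hence $S_h\subseteq S_g$, symmetrically $S_g=S_h$. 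Minimality of exponents then gives $g\sim h$, again using (2) on $\gcd$-type powers.
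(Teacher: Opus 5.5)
Your overall plan (build a common multiple of $g$ and $h$ that satisfies condition (1), then invoke condition (2)) is sound. Your key lemma, the consistency of the ratios (using that $g$ divides some element of $\mathcal{P}$), and the fact that $m$ satisfies (1) are all correct. The gap is exactly at the step you call delicate.

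\begin{itemize}
\item The claim that (2) ``should force $g=m$ up to scalar'' cannot be proved as you sketch it. Your sketch does not use $g\not\sim h$, and it already fails for $\mathcal{P}=\{g_0^2\}$ with $g_0$ irreducible. There the primitive divisor is $g_0^2$, while your construction (with $h=g$) gives $m=g_0$.
\item Your repair via powers is not justified. A power of a polynomial satisfying (1) need not satisfy (1): in the paper's example $\mathcal{P}=\{g^2,g^3\}$, $g$ satisfies (1) but $g^2$ does not. So ``$m^{a'}$ for suitable $a'$ is a multiple of $g$ satisfying (1)'' is precisely what must be proved, and an arbitrary $a'$ with $g\mid m^{a'}$ does not work.
\item You assert without proof that the exponent vector of $g$ is an \emph{integer} multiple of $(c_q)$ on $S_g$. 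Since $(c_q)$ restricted to $S_g$ need not be primitive, a priori this multiple is only a positive rational. Its integrality genuinely uses condition (2).
\end{itemize}

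Here is one way to close the gap.

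\begin{enumerate}
\item Write $v_q(g)=a\,c_q$ for $q\in S_g$. For $f\in\mathcal{P}$ write $v_q(f)=t_f\,c_q$ for $q\in S_g\cup S_h$; you showed $t_f\in\mathbb{Z}_{\ge 0}$. Let $k_f$ be the exponent in $f=g^{k_f}\tilde g$. Comparing on $S_g$ gives $t_f=a\,k_f$.
\item Suppose $a=\alpha/\beta$ in lowest terms with $\beta>1$. Then $\beta\mid k_f$ for every $f$, so $g^{\beta}$ satisfies (1), contradicting (2). Hence $a\in\mathbb{Z}_{>0}$.
\item Now $f=(m^a)^{k_f}\cdot(\text{part coprime to } m)$, so $m^a$ satisfies (1). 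Up to a scalar, $m^a=g\cdot\prod_{q\in S_h\setminus S_g}q^{a c_q}$, so (2) for $g$ gives $S_h\subseteq S_g$. Symmetrically $S_g\subseteq S_h$.
\item Thus $g\sim m^a$ and $h\sim m^b$, so one divides the other. Applying (2) to the smaller one forces $a=b$, i.e.\ $g\sim h$.
\end{enumerate}

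For comparison, the paper sidesteps the integrality issue. It takes the coprime pair $(k,t)$ with $k\,(v_{f_i}(g))_i=t\,(v_{f_i}(h))_i$ on the common irreducible factors $f_i$. It then sets $\tilde\phi=g^k\prod_j h_j^{t r_j}$, where $h_j$ runs over the irreducible factors of $h$ not dividing $g$, with multiplicity $r_j$ in $h$. This is a common multiple of $g$ and $h$ with integer exponents by construction. Every $\psi\in\mathcal{P}$ has exponent pair $c(k,t)$ with respect to $g$ and $h$, so $\psi=\tilde\phi^{c}\cdot(\text{coprime part})$. Since $g\not\sim h$, $\tilde\phi$ is a non-scalar multiple of at least one of them, contradicting (2).
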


\begin{proof}
    Suppose for the sake of contradiction that $\gcd(g, h)\neq 1$, and let $f_1,\ldots,f_N$ be the irreducible factors of it, for $N\ge 1$.
    Write:
    \begin{equation}
        g = \prod_{1\leq i\leq N}{{f_i}^{u_i}}\cdot \prod_j{g_j^{e_j}}\ , \quad h = \prod_{1\leq i\leq N}{{f_i}^{v_i}}\cdot \prod_j{h_j^{r_j}}
    \end{equation}
    where $f_i, g_j, h_j$ are all different irreducible polynomials, and all the exponents $u_i,v_i,e_j,r_j$ are positive integers.
    
    We claim that the vectors $(u_1,\ldots,u_N)$ and $(v_1,\ldots,v_N)$ are proportional. Indeed, consider
    some $\phi\in\cP$ for which $g|\phi$. Then, by the first condition of \Cref{def:primitive divisor}, we must have $k'$ and $t'$ for which 
    \[\phi=g^{k'}\cdot\ \tilde{g},\; \text{ where }\; \gcd(g, \tilde{g})=1\; \text{ and} \]
    \[\phi=h^{t'}\cdot\ \tilde{h}, \; \text{ where }\;\gcd(h, \tilde{h})=1\ .\]
    Therefore, by considering the multiplicities of the $f_i$'s as factors of $\phi$, we must have that    \begin{equation}\label{proportional vectors} k'(u_1,\ldots,u_n)=t'(v_1,\ldots,v_n).
    \end{equation}
    Let $(k,t):=(k'/\gcd(k',t'), t'/\gcd(k',t'))$.
    That is, $(k,t)$ is the smallest pair of positive integers satisfying \eqref{proportional vectors}.
    Consider the polynomial given by:
    \begin{equation}    \tilde{\phi}=\prod_{1\leq i\leq n}{f_i^{ku_i}}\cdot \prod_j{g_j^{ke_j}}\cdot \prod_j{h_j^{tr_j}}.
    \end{equation}
    Note that as $k,t\ge 1$ ($k\ge 1$ because $g|\phi$, and $t\ge 1$ by \eqref{proportional vectors}), $\phi$ is a non-trivial multiple of $g$ and $h$. 
    
    We claim that $\tilde{\phi}$ satisfies the first condition of \Cref{def:primitive divisor}. Indeed, let  $\psi\in\cP$.
    Since $g$ and $h$ are primitive divisors, we can write:
    \begin{align}\label{f tilde}
        \psi &=g^{k_0}\cdot\ \hat{g} \text{ where } \gcd(g, \hat{g})=1 . \\
        \psi & =h^{t_0}\cdot\ \hat{h} \text{ where } \gcd(h, \hat{h})=1.\notag
    \end{align}
    By examining the multiplicities of $f_i$ as factors of $\psi$, we get that $k_0(u_1,\ldots,u_N)=t_0(v_1,\ldots,v_N)$.
    As $(k,t)$ is the minimal pair of positive integers satisfying \eqref{proportional vectors}, we must have that $k_0=ck$ and $t_0=ct$
    for some non-negative integer $c$. Now, by plugging this information back to \eqref{f tilde}, it can be easily checked that
    $\psi={\tilde{\phi}}^c\cdot \hat{\psi}$ for some polynomial $\hat{\psi}$ coprime to $\tilde{\phi}$. This proves that $\tilde{\phi}$ satisfies the first condition
    of \Cref{def:primitive divisor}. However, this is a contradiction to the fact that $g$ and $h$ satisfy the second condition of it.
    This proves that $g$ and $h$ cannot share an irreducible factor, and completes the proof.
\end{proof}

We next state a lemma capturing the intuition that primitive divisors play a similar role to irreducible polynomials.

\begin{lemma}\label{factorization to primitive divisors}
    Let $\cP$ be some class of polynomials, and let $f\in\cP$. Then, $f$ can be uniquely written as a product of primitive divisors
    for $\cP$.
\end{lemma}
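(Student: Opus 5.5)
The plan is to mimic the construction sketched in the proof overview. We build, for each irreducible factor of $f$, the primitive divisor containing it. Then we show these primitive divisors multiply to $f$, and use \Cref{lem:different primitive divisors are coprime} for uniqueness.

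\emph{Existence.} Fix an irreducible factor $\phi$ of $f$ and consider the finite set $S_\phi$ of irreducible factors $\psi$ of $f$ that are ``inseparable'' from $\phi$. By this I mean: for every $h\in\cP$, the multiplicity vector $(v_\psi(h))_{\psi\in S_\phi}$ lies on a fixed line through the origin. More precisely, $S_\phi$ should be an equivalence class of the relation ``all multiplicity vectors over $\cP$ restricted to $\{\phi,\psi\}$ are proportional, with $v_\phi(h)=0\iff v_\psi(h)=0$''. I would check that this relation is an equivalence relation on irreducible divisors of elements of $\cP$.

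On a class $S$, every $h\in\cP$ gives a vector $(v_\psi(h))_{\psi\in S}$ lying in a one-dimensional subspace of $\mathbb{Q}^S$, and the vector is nonzero for $h=f$. The integer points of this line arising from $\cP$ generate a rank-one subgroup of $\mathbb{Z}^S$. Let $(a_\psi)$ be its generator with nonnegative entries, and set $g_S=\prod_{\psi\in S}\psi^{a_\psi}$.

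By construction every $h\in\cP$ equals $g_S^{c}\cdot\tilde h$ with $\tilde h$ coprime to all $\psi\in S$, hence coprime to $g_S$. This gives the first condition of \Cref{def:primitive divisor}. For the second condition, suppose $g'=g_S\cdot r$ with $r$ nonconstant also satisfies the first condition. If $r$ has an irreducible factor outside $S$, then taking $h=f$ (which is divisible by $g_S$, so $g'\mid h$ or $\gcd(h,g'^t)$ must be a power of $g'$) yields a contradiction with the definition of $S$. The same happens if the exponents of $g'$ restricted to $S$ are a proper multiple of $(a_\psi)$, since some $h$ realizes the generator. Here I would argue using a $\mathbb{Z}$-combination of realized vectors; this needs care because only nonnegative combinations of elements are directly available. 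I would use the fact that $\gcd(h,g'^t)$ being a power of $g'$ forces each realized vector to be a multiple of $g'$'s vector, and hence so is the generated lattice.

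Then $f=\prod_S g_S^{c_S}$, with $S$ ranging over the classes meeting $f$'s factors. This holds because the multiplicities of all irreducible factors match class by class.

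\emph{Uniqueness.} Suppose $f=\prod_j p_j^{e_j}$ with pairwise non-associate primitive divisors $p_j$. By \Cref{lem:different primitive divisors are coprime} they are pairwise coprime. Hence each irreducible factor of $f$ lies in exactly one $p_j$, and $p_j^{e_j}=\gcd(f,p_j^{N})$ for large $N$ is determined by $p_j$. It remains to see that the set $\{p_j\}$ is forced. Any primitive divisor $p$ dividing $f$ shares an irreducible factor with some $g_S$ above, so $p$ is associate to $g_S$ by \Cref{lem:different primitive divisors are coprime}, since both are primitive. Therefore the $p_j$ are exactly the $g_S$ (up to scalars), and the exponents are determined as well.

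The main obstacle is verifying the second (maximality) condition for $g_S$, namely showing the lattice generator gives the minimal admissible exponent vector. I would first try the argument above, that admissibility of $g'$ forces every realized vector into $\mathbb{Z}\cdot(\text{exponents of }g')$.
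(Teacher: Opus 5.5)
Your proof is correct, but for existence it takes a different route from the paper. The paper never constructs the primitive divisor explicitly. It fixes an irreducible factor $f_0$ of $f$. Among the divisors of $f$ that are divisible by $f_0$ and satisfy the first condition, it takes one, $g_0$, that is maximal under divisibility. The second condition then comes almost for free. By maximality, a non-scalar multiple $h$ of $g_0$ satisfying the first condition cannot divide $f$. So the first condition applied to $f$ with $t=1$ forces $\gcd(f,h)=1$, contradicting $f_0\mid\gcd(f,h)$. The paper then writes $f=g_0^k\tilde g_0$ and iterates on the irreducible factors of $\tilde g_0$, using \Cref{lem:different primitive divisors are coprime} to keep the pieces coprime.

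You instead describe the primitive divisor through $\phi$ concretely as $\prod_{\psi\in S}\psi^{a_\psi}$. Here $S$ is the class of irreducibles whose multiplicity is a fixed positive multiple of $v_\phi$ on all of $\cP$, and $a$ generates the $\mathbb{Z}$-span of the realized multiplicity vectors. This is the picture from the paper's proof overview. Your route gives more information: an explicit description of every primitive divisor, with pieces from different classes coprime by construction, so the coprimality lemma is needed only for uniqueness. That part you handle exactly as the paper does. The paper's route is shorter: it only uses that $f$ has finitely many divisors and needs no lattice argument.

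Your maximality step needs tightening in two places:
\begin{itemize}
\item The case where $r$ has an irreducible factor $\chi\notin S$ does not follow from $h=f$ alone.
\item The aside that ``some $h$ realizes the generator'' is false in general. For $\cP=\{g^2,g^3\}$ the generator is $1$, realized by no element.
\end{itemize}
The fact you name at the end handles both cases uniformly. Suppose $g'=g_S r$ satisfies the first condition. Then each $h\in\cP$ has some $k(h)\ge 0$ with $v_\chi(h)=k(h)\,v_\chi(g')$ for every irreducible $\chi\mid g'$. Comparing with a fixed $\phi\in S$ shows every such $\chi$ is equivalent to $\phi$, hence lies in $S$. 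Then every realized vector on $S$ lies in $\mathbb{Z}b$, where $b$ is the exponent vector of $g'$, so $a=cb$ for some integer $c$. Since $b\ge a$ componentwise and all $a_\psi>0$, this forces $c=1$ and $b=a$, so $r$ is a scalar.
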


\begin{proof}
    Let $f\in\cP$. Consider some irreducible factor $f_0$ of $f$. Clearly, $f_0$ satisfies the first condition of \Cref{def:primitive divisor}.
    Let $g_0$ be a maximal divisor of $f$ that satisfies the property that it is divisible by $f_0$ and satisfies the first condition of \Cref{def:primitive divisor}. We claim that $g_0$ must be a primitive divisor.
    For this we must show that it satisfies the second condition of \Cref{def:primitive divisor}. Assume for a contradiction that there is some $h$, which is a non-trivial multiple of $g_0$, that satisfies the first condition of \Cref{def:primitive divisor}.
    By maximality of $g_0$, $h \nmid f$. From the first condition we obtain that $\gcd(f,h)=1$. However,  since $f_0\mid g_0$ and $g_0\mid h$, we get that $f_0|\gcd(f,h)$, in contradiction.
    
    Thus, we identified some primitive divisor $g_0|f$. By definition, we can write $f=g_0^k\tilde{g}_0$ with $\gcd(g_0,\tilde{g}_0)=1$. If $\tilde{g}_0=1$ then we are done. If not, fix some irreducible factor  $f_1| \tilde{g}_0$. By the same reasoning as before,
    there is a multiple of $f_1$, denoted $g_1$, which is a primitive divisor for $\cP$, and that divides $f$.
    By \Cref{lem:different primitive divisors are coprime}, and by the fact that $f_1$ divides $g_1$ but not $g_0$, it follows that
    we must have that $g_0$ and $g_1$ are coprime. Therefore, we can write $f=g_0^{k_0}g_1^{k_1}\tilde{g}_1$,
    where $\tilde{g}_1$ is coprime to both $g_0$ and $g_1$. Continuing this way, we can obtain the desired factorization of 
    $f$ to a product of primitive divisors.
    
    To end the proof, we note that the uniqueness of the factorization thus obtained follows from \Cref{lem:different primitive divisors are coprime}.
\end{proof}

We can therefore define the multiplicity with which a primitive divisor divides a polynomial in the class.

\begin{definition}\label{multiplicity of primitive divisor}
    Let $f\in \cP$, and let $g$ be a primitive divisor for $\cP$.
    The multiplicity of $g$ as a factor of $f$, denoted by $v_g(f)$, is the power of $g$ appearing in the factorization of $f$ to primitive divisors. 
\end{definition}

Next, we discuss the connection of the definition of primitive divisors and the non-degeneracy condition of \cite{bisht2025solving}. The proof of the next lemma follows from \cite[Lemma 4.5,4.6]{bisht2025solving}.

\begin{lemma}[{\cite[Lemma 4.5,4.6]{bisht2025solving}}]\label{lem:BV lemma}
    Let $\phi$ and $\psi$ be two non-associate \emph{irreducible} polynomials. Fix some $i$ for which $x_i\in \textrm{var}(\phi)$.
    Suppose that there are $s$-sparse polynomials $f$ and $g$
    of bounded individual degree $d$ so that $\phi|f$ and $\psi|g$. Denote with $v_{\phi}(f)$ the multiplicity of $\phi$ as a factor of $f$, and similarly for $\psi$ and $g$. Suppose that the matrix
    $\begin{pmatrix}
        v_{\phi}(f) & v_{\phi}(g) \\
        v_{\psi}(f) & v_{\psi}(g)
    \end{pmatrix}$
    is invertible.
    Then, the polynomials $\phi(G_{(m,i)})$ and $\psi(G_{(m,i)})$ are coprime as polynomials in $u$ for some explicit $m=\poly(n,{(d^2)}!,s^{d^3})$.
\end{lemma}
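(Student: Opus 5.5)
The approach is to reduce coprimality of $\phi(G_{(m,i)})$ and $\psi(G_{(m,i)})$ to a rank statement about a Sylvester matrix whose minors are sparse, and then invoke \Cref{lem:first property of G}. The invertibility of the multiplicity matrix lets us choose nonnegative integers $\alpha,\beta$, bounded by roughly $d$ (the multiplicities are at most $d$ because $\deg_{x_i}\le d$), such that $F=f^{\alpha}$ and $H=g^{\beta}$ satisfy $v_\phi(F)>v_\phi(H)$ and $v_\psi(F)<v_\psi(H)$, or the reverse. Concretely, take $\alpha=v_\psi(g)+1$ and $\beta=v_\psi(f)+1$ up to adjustment; the sign of the determinant decides which of the two inequalities can be achieved by scaling. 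Then $\gcd(F,H)$, as polynomials in $x_i$, contains $\phi^{\min}$ and $\psi^{\min}$ with strict deficits relative to $\operatorname{lcm}$ at both $\phi$ and $\psi$. The polynomials $F$ and $H$ have individual degree at most $d^2$ and sparsity at most $s^{d}$.

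Next, apply item \ref{item:dim=deg} of \Cref{basic resultant properties}. It gives $\rank M_{x_i}(F,H)=\deg_{x_i}F+\deg_{x_i}H-\deg_{x_i}\gcd(F,H)$. Choose a nonsingular maximal minor $N$ of $M_{x_i}(F,H)$. Its determinant is a polynomial in the remaining variables with sparsity at most $(2d^2)!\,(s^{d})^{2d^2}$ and individual degree $\poly(d)$. \Cref{lem:first property of G}, with $m=\poly(n,(d^2)!,s^{d^3})$, then shows that $\det N$ does not vanish under $G^{\text{KS}}_{(m)}$. \Cref{G preserves degree} says that the degrees in $x_i$ are preserved as degrees in $u$, so the Sylvester matrix of $F(G_{(m,i)})$ and $H(G_{(m,i)})$ in $u$ is exactly the substitution of $M_{x_i}(F,H)$. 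Its rank is therefore at least the rank before substitution, and trivially at most that rank. Hence $\deg_u\gcd(F(G_{(m,i)}),H(G_{(m,i)}))=\deg_{x_i}\gcd(F,H)$.

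Finally, suppose $\phi(G_{(m,i)})$ and $\psi(G_{(m,i)})$ shared an irreducible factor $\rho$ in $u$. Write $F$ and $H$ as products of irreducible factors, each depending on $x_i$ or not; the factors free of $x_i$ do not affect degrees in $x_i$. Substitution sends $\gcd(F,H)$ to a common divisor of the images, and we compare its degree with the degree of the true $\gcd$ of the images. Because $\rho$ divides both $\phi(G)$ and $\psi(G)$, its multiplicity in $F(G)$ is at least $v_\phi(F)\,\mathrm{mult}_\rho(\phi(G))+v_\psi(F)\,\mathrm{mult}_\rho(\psi(G))$, and similarly for $H(G)$. Taking the minimum of sums exceeds the sum of minima whenever the inequalities point opposite ways. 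This produces extra degree in the gcd, which contradicts the equality of ranks; here \Cref{G preserves degree}, applied to each irreducible factor, ensures that no degree is lost when taking images. The main obstacle is this last counting step. The claim is $\deg\gcd(\text{images})\ge\deg(\text{image of }\gcd)+\deg\rho$, and proving it requires tracking multiplicities of the other irreducible factors carefully, so that contributions from $\gcd(F,H)$ are not double-counted. I would handle it by comparing $\rho$-adic valuations factor by factor, using $\min(a+b,c+e)\ge\min(a,c)+\min(b,e)$, with strict inequality in the opposite-signs case.
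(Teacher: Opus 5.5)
Your proposal is correct and is essentially the argument the paper sketches, following \cite{bisht2025solving}: raise $f,g$ to $O(d)$ powers so that the $\phi$- and $\psi$-multiplicity differences have strictly opposite signs, use sparsity of the Sylvester minors together with \Cref{lem:first property of G} and \Cref{G preserves degree} to show that $\operatorname{rank} M_{x_i}$, and hence $\deg\gcd$, is preserved under $G_{(m,i)}$, and conclude that a common factor of $\phi(G_{(m,i)})$ and $\psi(G_{(m,i)})$ would enlarge the gcd (your valuation count spells out what the paper leaves as a one-liner). The one slip is your tentative choice $\alpha=v_\psi(g)+1$, $\beta=v_\psi(f)+1$, which can fail (e.g.\ $v_\phi(f)=v_\psi(f)=v_\psi(g)=1$ and $v_\phi(g)=2$ make the $\psi$-difference zero); the paper's mediant choice $\alpha=v_\phi(g)+v_\psi(g)$, $\beta=v_\phi(f)+v_\psi(f)$ always gives differences $\pm\det$ and keeps $\alpha,\beta\le 2d$.
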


The condition of the lemma can be better understood in terms of primitive divisors, as demonstrated by the following claim.

\begin{claim}\label{BV in terms of primitive divisors}
    Let $\cP$ be a class of polynomials, and let $\phi$ and $\psi$ be two irreducible polynomials that divide some (not necessarily the same) polynomial in $\cP$.
    Then, the following are equivalent.
    \begin{enumerate}
        \item $\phi$ and $\psi$ divide the same primitive divisor of $\cP$.\label{item:same-prim-div}
        \item For every $f,g\in \cP$ it holds that the matrix
            $\begin{pmatrix}
            v_{\phi}(f) & v_{\phi}(g) \\
            v_{\psi}(f) & v_{\psi}(g)
        \end{pmatrix}$
        is not invertible.\label{item:degenerate}
    \end{enumerate}
\end{claim}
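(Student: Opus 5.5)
We prove both directions via the primitive-divisor factorization of \Cref{factorization to primitive divisors}. Since $\phi$ divides some element of $\cP$, the construction in that proof gives a primitive divisor $g_\phi$ for $\cP$ with $\phi\mid g_\phi$. It is unique up to scalars: two non-associate primitive divisors are coprime by \Cref{lem:different primitive divisors are coprime}, so they cannot share the factor $\phi$. Define $g_\psi$ in the same way. The basic observation is this. If $f\in\cP$ factors as $f=\prod_g g^{v_g(f)}$ over primitive divisors, then the irreducible $\phi$ occurs in exactly one of these $g$, namely $g_\phi$. Writing $a_\phi\ge 1$ for the multiplicity of $\phi$ in $g_\phi$, we get
\[
v_\phi(f)=a_\phi\, v_{g_\phi}(f)\qquad\text{for every } f\in\cP .
\]

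\textbf{(\ref{item:same-prim-div})$\Rightarrow$(\ref{item:degenerate}).} Suppose $g_\phi=g_\psi=:g$. Then $v_\phi(f)=a_\phi v_g(f)$ and $v_\psi(f)=a_\psi v_g(f)$ for every $f\in\cP$. The columns of the matrix are therefore $v_g(f)(a_\phi,a_\psi)^T$ and $v_g(h)(a_\phi,a_\psi)^T$ (writing $h$ for the second polynomial so it is not confused with $g$). Both are multiples of the same vector, so the determinant vanishes.

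\textbf{(\ref{item:degenerate})$\Rightarrow$(\ref{item:same-prim-div}).} We argue by contrapositive. Suppose $g_\phi$ and $g_\psi$ are non-associate, hence coprime. We need $f,h\in\cP$ with $v_\phi(f)v_\psi(h)-v_\phi(h)v_\psi(f)\neq0$. The plan is to show that the pair $(v_{g_\phi},v_{g_\psi})$ is not forced to be proportional on $\cP$.

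Assume for contradiction that all vectors $(v_{g_\phi}(f),v_{g_\psi}(f))$, $f\in\cP$, lie on one line through the origin. This line must meet the positive quadrant, because some $f$ is divisible by $\phi$, so $v_{g_\phi}(f)\ge1$. If the line were an axis, say every $f$ had $v_{g_\psi}(f)=0$, then no element of $\cP$ would be divisible by $\psi$, contradicting the hypothesis. So the line is spanned by a primitive integer vector $(k,t)$ with $k,t\ge1$, and every $f\in\cP$ satisfies $(v_{g_\phi}(f),v_{g_\psi}(f))=c(k,t)$ with $c\in\mathbb{Z}_{\ge0}$.

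Now copy the argument of \Cref{lem:different primitive divisors are coprime} and set $\tilde g=g_\phi^k g_\psi^t$. By the factorization lemma and coprimality, each $f\in\cP$ can be written as $f=g_\phi^{ck}g_\psi^{ct}\hat f=\tilde g^c\hat f$ with $\hat f$ coprime to $g_\phi g_\psi$, hence coprime to $\tilde g$. So $\tilde g$ satisfies condition 1 of \Cref{def:primitive divisor}. But $\tilde g$ is a non-scalar multiple of $g_\phi$, since $t\ge1$ and $g_\psi$ is non-constant. This contradicts condition 2 for $g_\phi$.

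Hence some $f,h\in\cP$ have non-proportional vectors $(v_{g_\phi},v_{g_\psi})$. Multiplying the rows by $a_\phi,a_\psi\neq0$ preserves invertibility, so the matrix built from $v_\phi,v_\psi$ is invertible.

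The main obstacle is the second direction: ruling out the degenerate axis case and checking that $\tilde g$ genuinely satisfies condition 1. Both steps rely on \Cref{lem:different primitive divisors are coprime} and on the hypothesis that $\phi$ and $\psi$ each divide some member of $\cP$.
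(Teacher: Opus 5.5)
Your proof is correct and is essentially the paper's argument. The forward direction uses $v_\phi(f)=v_\phi(g_\phi)\,v_{g_\phi}(f)$ exactly as the paper does, and the converse is the paper's sub-claim that two non-associate primitive divisors cannot have proportional multiplicity vectors on the whole class (otherwise $g_\phi^k g_\psi^t$ satisfies condition 1 and contradicts condition 2 for $g_\phi$), followed by rescaling the rows by $a_\phi,a_\psi\neq 0$. Your handling of the degenerate cases is a bit more explicit than the paper's lattice-generator phrasing: you choose a primitive generator $(k,t)$ with $k,t\ge 1$, take nonnegative integer multipliers $c$, and use that $\phi$ and $\psi$ each divide some member of the class. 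The proof is otherwise the same.
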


\begin{proof}
    First, note that the term "the primitive divisor that $\phi$ divides" is well defined, as by \Cref{lem:different primitive divisors are coprime} there is at most one (up to multiplication by scalars) such primitive divisor, and by the fact that $\phi$ divides some polynomial in $\cP$ there is at least one such divisor; clearly, the same is also true for $\psi$. 
    
    For the implication \eqref{item:same-prim-div}$\implies$ \eqref{item:degenerate} note that if $h$ is the primitive divisor that $\phi$ and $\psi$ divide, by \Cref{lem:different primitive divisors are coprime} and \Cref{factorization to primitive divisors} it follows that for every $f\in \cP$ we have that
    $(v_\phi(f), v_\psi(f))=v_h(f)(v_\phi(h), v_\psi(h))$. Therefore, it is clear that for every pair $f,g\in \cP$ the vectors $(v_\phi(f), v_\psi(f))$ and $(v_\phi(g), v_\psi(g))$ are proportional, and hence the matrix whose columns are these vectors is not invertible.

    The next claim  establishes the implication \eqref{item:degenerate}$\implies$ \eqref{item:same-prim-div}.
    \begin{claim}\label{cla:different primitive divisors are non-degenerate}
        Let $h_1$ and $h_2$ be two non-associate primitive divisors for a class $\cP$. Then, there are polynomials $f_1,f_2\in \cP$ for which the matrix
        $\begin{pmatrix}
            v_{h_1}(f_1) & v_{h_1}(f_2) \\
            v_{h_2}(f_1) & v_{h_2}(f_2)
        \end{pmatrix}$
        is invertible.
    \end{claim}
    \begin{proof}
        Consider the set of vectors $(v_{h_1}(f), v_{h_2}(f))_{f\in \cP}$.
        To prove the claim, it is enough to prove that there are two independent vectors in this set. Suppose for the sake of contradiction that this is not the case. Then, it follows that the $\mathbb{Z}$-span of this set is a one dimensional lattice, generated by some vector $(k,t)$. 
        
        We claim that $h:=h_1^{k}h_2^{t}$ must satisfy the first condition of \Cref{def:primitive divisor}. Indeed, consider some element $\tilde{f}\in \cP$. By \Cref{factorization to primitive divisors} we can write $\tilde{f}=h_1^{v_{h_1}(f)}h_2^{v_{h_2}(f)}g$, for some polynomial $g$ coprime to both $h_1$ and $h_2$. By definition of $(k,t)$, there is some non-negative integer $c$ for which $(v_{h_1}(f),v_{h_2}(f)) = c(k,t)$. Thus, $\tilde{f}=h_1^{ck}h_2^{ct}g=h^cg$, where $g$ is coprime to $h$.
        That is, $h$ indeed satisfies the first condition of \Cref{def:primitive divisor}. As $h_1$ is a primitive divisor, $v_{h_1}(\tilde{f})\ge 1$ for some $\tilde{f}\in 
        \cP$, and therefore we must have that $k\ge 1$. Similarly, we obtain $t\ge 1$. That is, $h$ is a non-trivial multiple of $h_1$ and $h_2$ satisfying the first condition of \Cref{def:primitive divisor}, in contradiction.
        This concludes the proof.
    \end{proof}
    Observe that \Cref{cla:different primitive divisors are non-degenerate} implies  $\neg$ \eqref{item:same-prim-div}$\implies$ $\neg$\eqref{item:degenerate}. Indeed,
    let $h_1$ and $h_2$ be the primitive divisors that $\phi$ and $\psi$ divide, respectively. Suppose that $h_1$ and $h_2$ are not associated. By \Cref{cla:different primitive divisors are non-degenerate}, there are $f_1, f_2\in \cP$ such that the matrix $\begin{pmatrix}
            v_{h_1}(f_1) & v_{h_1}(f_2) \\
            v_{h_2}(f_1) & v_{h_2}(f_2)
    \end{pmatrix}$
    is invertible. From divisibility we conclude
    \begin{equation*}
        \begin{aligned}
            \begin{pmatrix}
            v_{h_1}(f_1) & v_{h_1}(f_2) \\ v_{h_2}(f_1) & v_{h_2}(f_2)
            \end{pmatrix}
            =
            \begin{pmatrix}
            \frac{v_{\phi}(f_1)}{v_{\phi}(h_1)} & \frac{v_{\phi}(f_2)}{v_{\phi}(h_1)} \\ \frac{v_{\psi}(f_1)}{v_{\psi}(h_2)} & \frac{v_{\psi}(f_2)}{v_{\psi}(h_2)}
            \end{pmatrix}
        \end{aligned}
    \end{equation*}
    and therefore 
    \begin{equation*}
        \det \begin{pmatrix}
        v_{\phi}(f_1) & v_{\phi}(f_2) \\ v_{\psi}(f_1) & v_{\psi}(f_2) \end{pmatrix}
        =
        v_{\phi}(h_1)v_{\psi}(h_2)
        \det \begin{pmatrix}
        v_{h_1}(f_1) & v_{h_1}(f_2) \\ v_{h_2}(f_1) & v_{h_2}(f_2)\end{pmatrix}
        \neq 0.
    \end{equation*}
    Hence the matrix
    $\begin{pmatrix}
            v_{\phi}(f_1) & v_{\phi}(f_2) \\
            v_{\psi}(f_1) & v_{\psi}(f_2)
    \end{pmatrix}$
    is invertible. This concludes the proof.
\end{proof}

\Cref{BV in terms of primitive divisors} explains in a more natural way the condition from \Cref{lem:BV lemma}: if the matrix of multiplicities is degenerate, then $\phi$ and $\psi$ divide the same primitive divisor for $\cP(n,s,d)$, and are therefore "inseparable" from the point of view of this class.
This motivates us to formulate an analog of \Cref{lem:BV lemma} for every two primitive divisors of  $\cP(n,s,d)$.

\begin{lemma}\label{lem:G presereves coprimality of primitve divisors}
    Let $\phi$ and $\psi$ be two non-associate primitive divisors for $\cP(n,s,d)$.
    Fix some $1\leq i\leq n$ such that $x_i\in\textrm{var}(\phi)$. Then for some explicit value $m=\poly(n,{(d^2)}!,s^{d^3})$ we have that $\phi(G_{(m,i)})$
    is coprime to $\psi(G_{(m,i)})$, when both are viewed as polynomials in $u$.
\end{lemma}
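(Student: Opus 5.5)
The plan is to reduce the statement to \Cref{lem:BV lemma}, applied to pairs of irreducible factors of $\phi$ and $\psi$, and to use \Cref{BV in terms of primitive divisors} to verify its non-degeneracy hypothesis. Throughout, let $\cP=\cP(n,s,d)$.

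First, suppose for contradiction that $\phi(G_{(m,i)})$ and $\psi(G_{(m,i)})$ share a nonconstant common factor in $u$. Write $\phi=\prod_a \phi_a^{e_a}$ and $\psi=\prod_b\psi_b^{r_b}$ as products of irreducibles. Then $\phi(G_{(m,i)})=\prod_a\phi_a(G_{(m,i)})^{e_a}$, and similarly for $\psi$. Hence some $\phi_a(G_{(m,i)})$ and some $\psi_b(G_{(m,i)})$ share a factor that depends on $u$.

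By \Cref{lem:different primitive divisors are coprime}, $\phi_a$ and $\psi_b$ are non-associate. Each is an irreducible factor of the primitive divisor $\phi$ or $\psi$, respectively. By the remark after \Cref{def:primitive divisor}, each therefore divides some element of $\cP$. Moreover, $\phi$ is the unique primitive divisor containing $\phi_a$, and $\psi$ the unique one containing $\psi_b$. Since these two primitive divisors are non-associate, the implication $\neg$\eqref{item:same-prim-div}$\implies\neg$\eqref{item:degenerate} of \Cref{BV in terms of primitive divisors} gives $f,g\in\cP$ with
$\begin{pmatrix} v_{\phi_a}(f) & v_{\phi_a}(g)\\ v_{\psi_b}(f)& v_{\psi_b}(g)\end{pmatrix}$ invertible. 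Invertibility forces one of $f,g$ to be divisible by $\phi_a$ and one by $\psi_b$. Swapping the columns if necessary, we may assume $\phi_a|f$ and $\psi_b|g$, which is exactly the setting of \Cref{lem:BV lemma}.

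The main obstacle is the hypothesis in \Cref{lem:BV lemma} that $x_i\in\var(\phi_a)$. We only know $x_i\in\var(\phi)$, so some irreducible factor of $\phi$ depends on $x_i$, but the particular $\phi_a$ that collides may not. We handle this by cases.

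\textbf{Case $x_i\in\var(\phi_a)$ or $x_i\in\var(\psi_b)$.} The lemma applies directly, using the symmetric version with roles swapped in the second subcase. It gives coprimality, a contradiction.

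\textbf{Case both $\phi_a$ and $\psi_b$ are free of $x_i$.} Then $\phi_a(G_{(m,i)})$ and $\psi_b(G_{(m,i)})$ are constants in $u$. These constants are nonzero by \Cref{G hits factors of sparse polynomials}, since both polynomials divide elements of $\cP$, taking $m\ge (nsd)^2$. So they cannot share a nonconstant factor. This contradicts the choice of $\phi_a,\psi_b$.

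Hence in every case the shared factor cannot exist. Since $\phi(G_{(m,i)})$ is nonzero (again by \Cref{G hits factors of sparse polynomials}), we conclude that $\phi(G_{(m,i)})$ and $\psi(G_{(m,i)})$ are coprime in $u$. The value of $m$ is the one from \Cref{lem:BV lemma}, namely $\poly(n,(d^2)!,s^{d^3})$, which also dominates $(nsd)^2$.

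If \Cref{lem:BV lemma} is used in the form requiring $x_i$ in the variables of the first polynomial only, the symmetric subcase needs checking. Its proof via the Sylvester-matrix rank of $f^\alpha,g^\beta$ in $x_i$ is symmetric in the two polynomials, so applying it with $(\psi_b,g)$ and $(\phi_a,f)$ swapped suffices.
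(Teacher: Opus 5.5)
Your proof is correct, but it follows a different route from the paper's. The paper does not use \Cref{lem:BV lemma} as a black box. Instead:
\begin{itemize}
\item It takes $f_1,f_2\in\cP(n,s,d)$ for which the multiplicity matrix of the primitive divisors $\phi,\psi$ themselves is invertible (\Cref{cla:different primitive divisors are non-degenerate}).
\item It forms $h_1=f_1^{t_1+t_2}$ and $h_2=f_2^{k_1+k_2}$.
\item It re-runs the Bisht--Volkovich Sylvester-rank argument on $M_{x_i}(h_1,h_2)$. The rank is preserved under $G_{(m,i)}$, hence so is the degree of the gcd, and a common factor of $\phi(G_{(m,i)})$ and $\psi(G_{(m,i)})$ would increase that degree.
\end{itemize}

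You instead lift the irreducible-polynomial statement to primitive divisors. Any collision between $\phi(G_{(m,i)})$ and $\psi(G_{(m,i)})$ is a collision between some irreducible factors $\phi_a,\psi_b$. These lie in distinct primitive divisors, so by the direction $\neg(1)\Rightarrow\neg(2)$ of \Cref{BV in terms of primitive divisors} they satisfy exactly the non-degeneracy hypothesis of \Cref{lem:BV lemma}.

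Your route is modular: it avoids redoing the rank argument, and it makes precise the paper's informal remark that \Cref{BV in terms of primitive divisors} explains the BV condition. The cost is the case split on whether the colliding factors involve $x_i$. The paper avoids this because it works with $\phi$ and $\psi$ as wholes; in fact neither argument uses $x_i\in\var(\phi)$ except to make the statement nontrivial. Your approach also needs the $m$ of \Cref{lem:BV lemma} to depend only on $(n,s,d)$, which it does.

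Two small points:
\begin{itemize}
\item The symmetric subcase needs no inspection of the BV proof. It is just \Cref{lem:BV lemma} applied to the pair $(\psi_b,\phi_a)$ with $(g,f)$. That multiplicity matrix is obtained by swapping both rows and columns, so its determinant is unchanged.
\item The column swap should be justified by noting that a nonzero determinant forces one of $v_{\phi_a}(f)v_{\psi_b}(g)$ or $v_{\phi_a}(g)v_{\psi_b}(f)$ to be nonzero. The phrase ``one of $f,g$ is divisible by $\phi_a$ and one by $\psi_b$'' would by itself allow both to divide the same polynomial.
\end{itemize}
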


The proof will be nearly identical to the proof of \Cref{lem:BV lemma}, which captures \cite[Lemma 4.6]{bisht2025solving}, with the main difference being  that \Cref{lem:BV lemma} concerns irreducible polynomials while \Cref{lem:G presereves coprimality of primitve divisors} considers primitive divisors. We give a sketch of it for the sake of completeness.

\begin{proof}
    In this proof, whenever we refer to a primitive divisor, we mean a primitive divisor for $\cP(n,s,d)$.

    By \Cref{BV in terms of primitive divisors}, there are $f_1,f_2\in \cP(n,s,d)$ for which the matrix
    \[\begin{pmatrix} k_1 & t_1 \\ k_2 & t_2 \end{pmatrix}:=
    \begin{pmatrix} v_{\phi}(f_1) & v_{\phi}(f_2) \\ v_{\psi}(f_1) & v_{\psi}(f_2) \end{pmatrix}\]
    is invertible.
    Let $h_1=f_1^{t_1+t_2}$ and $h_2=f_2^{k_1+k_2}$. By a simple calculation it can be shown that, up to switching $h_1$ and $h_2$,
    we have $h_1=\gcd(h_1,h_2)\phi^{a}H_1$ and $h_2=\gcd(h_1,h_2)\psi^{b}H_2$, where $a,b\ge 1$ and $H_1,H_2$ are
    coprime to $\phi,\psi$ (by unique factorization and properties of primitive divisors).
    
    Consider the Sylvester matrix of $M_{x_i}(h_1,h_2)$. As both $f_1$ and $f_2$ have individual degrees bounded by $d$ we have that $k_1+k_2,m_1+m_2\leq 2d$. Therefore, $h_1$ and $h_2$ are $s^{2d}$-sparse polynomials of bounded individual degree $2d^2$. Thus, by \Cref{basic resultant properties}, determinants of minors of $M_{x_i}(h_1,h_2)$ are $(4d^2)!s^{8d^3}$-sparse polynomials of individual degree at most
    $8d^4$. 
    \Cref{lem:first property of G} implies that for 
    \[m={(n(8d^4)((4d^2)!s^{8d^3}))}^{2}=\poly(n,{(d^2)}!,s^{d^3})\] 
    no nonzero determinant of a minor vanishes when composing with $G_{(m,i)}$. Therefore, it follows that 
    \[\operatorname{rank}(M_{x_i}(h_1,h_2))=\operatorname{rank}(M_{u}(h_1(G_{(m,i)}),h_2(G_{(m,i)})))\ .\] \Cref{basic resultant properties}\eqref{item:dim=deg}    then ensures that $\deg_{x_i}(\gcd(h_1,h_2))=\deg_u(\gcd(h_1(G_{(m,i)}), h_2(G_{(m,i)})))$. Hence, composition with $G_{(m,i)}$ did not change
    the degree in $u$. Consequently,  $\phi(G_{(m,i)})$ and $\psi(G_{(m,i)})$ are coprime as polynomials in $u$, since otherwise  their common factor would increase the degree of the gcd.

    This concludes the sketch of the proof. We note that  in \Cref{lem:G always preserves coprimality of factors},
    we prove a similar claim, where we explain in more detail the technical points in the argument above.
    These are also explained, of course, in \cite[Lemma 4.5,4.6]{bisht2025solving}.
\end{proof}

We next state an important corollary.
\begin{corollary} \label{G preserves coprimality of factors when one factor is sparse}
    Let $\phi\in\cP(n,s,d)$ be an irreducible polynomial,
    and let $\psi\in\cCF(n,s,d)$ be an irreducible factor of some $g\in \cP(n,s,d)$, which is not associate to $\phi$.
    Fix some $1\leq i\leq n$ such that $x_i\in\var(\phi)$. Then, for an explicit $m=\poly(n,{(d^2)}!,s^{d^3})$, we have that
    $\phi(G_{(m,i)})$ is coprime to $\psi(G_{(m,i)})$, when viewed as polynomials in $u$.
    Moreover, if $\phi$ is multiquadratic, we can take $m=\poly(n,{(d^2)}!,s^{d^2})$.
\end{corollary}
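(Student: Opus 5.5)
The plan is to reduce to \Cref{lem:G presereves coprimality of primitve divisors}. First, since $\phi\in\cP(n,s,d)$ is irreducible, $\phi$ itself is a primitive divisor for $\cP(n,s,d)$. To see this, note that $\gcd(f,\phi^t)$ is always a power of $\phi$, so $\phi$ satisfies the first condition of \Cref{def:primitive divisor}. For the second condition, let $h$ be a non-scalar multiple of $\phi$ satisfying the first condition. Taking $f=\phi\in\cP(n,s,d)$, we need $\gcd(\phi,h^t)=\phi$ to be a power of $h$. This forces $h$ to be associate to $\phi$, giving the second condition.

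Next, $\psi$ divides $g\in\cP(n,s,d)$. By \Cref{factorization to primitive divisors}, $\psi$ divides a unique (up to scalars) primitive divisor $\Psi$ for $\cP(n,s,d)$. We claim $\Psi$ is not associate to $\phi$. If it were, $\psi\mid\phi$, so $\psi$ would be associate to $\phi$ by irreducibility, a contradiction. Now apply \Cref{lem:G presereves coprimality of primitve divisors} to $\phi$ and $\Psi$, with $x_i\in\var(\phi)$. This gives that $\phi(G_{(m,i)})$ and $\Psi(G_{(m,i)})$ are coprime in $u$ for $m=\poly(n,(d^2)!,s^{d^3})$. Since $\psi(G_{(m,i)})$ divides $\Psi(G_{(m,i)})$, it is also coprime to $\phi(G_{(m,i)})$. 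Note that $\psi(G_{(m,i)})\neq 0$ by \Cref{G hits factors of sparse polynomials}, although coprimality with a nonzero polynomial only needs divisibility.

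For the multiquadratic improvement, I would redo the proof of \Cref{lem:G presereves coprimality of primitve divisors} with better multiplicity bounds. There we pick $f_1,f_2$ with an invertible multiplicity matrix $\begin{pmatrix} k_1&t_1\\k_2&t_2\end{pmatrix}$ and set $h_1=f_1^{t_1+t_2}$, $h_2=f_2^{k_1+k_2}$. The size of $m$ is governed by the exponents: each entry is at most $d$, so $h_1,h_2$ are $s^{2d}$-sparse with individual degree $2d^2$, and the Sylvester minors become $s^{O(d^3)}$-sparse. When $\phi$ is multiquadratic, $v_\phi(f)\le d/\deg_{x_i}\phi$ is still not small, but we can choose the pair more cleverly. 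Take $f_1=\phi$ itself, which is legitimate since $\phi\in\cP(n,s,d)$, and take $f_2=g$, where $g$ is the given polynomial with $\psi\mid g$ and $\phi\nmid g$. Then $k_1=1$, $k_2=0$ and $t_2\ge1$, so the matrix is invertible. We should apply the argument to $\phi$ and $\Psi$, with multiplicities taken with respect to $\Psi$.

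If instead $\phi\mid g$, we can replace $g$ by $g/\phi^{v_\phi(g)}$. This is a quotient of a sparse polynomial by a power of a multiquadratic irreducible polynomial, but its sparsity is not obviously controlled, which is the hitch. Alternatively, keep $g$ and use the matrix $\begin{pmatrix}1&v_\phi(g)\\0&v_\psi(g)\end{pmatrix}$, which is still invertible. Then $h_1=\phi^{v_\phi(g)+v_\psi(g)}$ and $h_2=g$. The exponent on $\phi$ is at most $d$, and $\phi$ has individual degree at most $2$. Hence $h_1$ has individual degree at most $2d$ and sparsity at most $s^{d}$, while $h_2$ is $(n,s,d)$-sparse. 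The Sylvester matrix of $h_1,h_2$ in $x_i$ then has dimension $O(d)$ with entries of sparsity $s^d$, so its minors are $(O(d))!\,s^{O(d^2)}$-sparse with individual degree $O(d^2)$. This yields $m=\poly(n,(d^2)!,s^{d^2})$. The rest of the argument (rank preservation via \Cref{lem:first property of G} and \Cref{basic resultant properties}\eqref{item:dim=deg}) goes through verbatim.

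The main obstacle is checking that the decomposition $h_1=\gcd(h_1,h_2)\phi^aH_1$, $h_2=\gcd\cdot\Psi^bH_2$ with $a,b\ge1$ still holds for these asymmetric exponents. This is where I expect to spend care: choose exponents $(\alpha,\beta)$ so that $\alpha\cdot(1,0)$ and $\beta\cdot(v_\phi(g),v_\Psi(g))$ are incomparable coordinatewise, while keeping $\alpha\le d+1$ and $\beta\le 1$.
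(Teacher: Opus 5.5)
Your proposal is correct and follows the paper's proof. For the main statement you pass to the primitive divisor $\Psi$ containing $\psi$ and apply \Cref{lem:G presereves coprimality of primitve divisors}; for the multiquadratic case you take $f_1=\phi$ and $f_2=g$, so $(k_1,k_2)=(1,0)$, $h_1=\phi^{t_1+t_2}$ and $h_2=g$. The decomposition you flag as the obstacle is immediate: $\gcd(h_1,h_2)=\phi^{t_1}$, which leaves $a=b=t_2\ge 1$. One small slip: $t_1+t_2$ can reach $2d$ rather than $d$ when $x_i\notin\var(\Psi)$, but this only changes constants.
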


\begin{proof}
    Intuitively, this should follow from \Cref{lem:G presereves coprimality of primitve divisors} (recalling that an irreducible polynomial in $\cP(n,s,d)$ is also a primitive divisor for it \Cref{irreducible elements of C are primitive}).
    Technically speaking we have in hand an irreducible polynomial $\psi$, which is not necessarily a primitive divisor (as it might not belong to $\cP(n,s,d)$). We get around this issue in a similar manner to the proof of \Cref{BV in terms of primitive divisors}. We instead consider the (unique, up to multiplication by scalars)
    primitive divisor $\tilde{\psi}$ that $\psi$ divides.
    Clearly $\phi$ and $\tilde{\psi}$ are non-associate, so we can just apply \Cref{lem:G presereves coprimality of primitve divisors} with $\phi$ and $\tilde{\psi}$ and conclude the claim.
    
    For the "moreover" part, we note that the proof of \Cref{lem:G presereves coprimality of primitve divisors} actually gives a better result when $\phi$ is multiquadratic. We describe what changes should be made in the proof of \Cref{lem:G presereves coprimality of primitve divisors} to see it; from now on, we use the notations from that argument.
    
    In this case, as $\phi\in \cP(n,s,d)$ is irreducible, one can take $f_1=\phi$, and  $f_2$ to be some polynomial that $\tilde{\psi}$ divides. Therefore, we have that $k_1=1, k_2=0$.
    It follows that:
    \begin{itemize}
        \item $h_1=f_1^{t_1+t_2}=\phi^{t_1+t_2}$ is a polynomial of degree at most $2(t_1+t_2)\leq 4d$, instead of $2d^2$, as a polynomial in $x_i$. Its sparsity is bounded above by $s^{2d}$. 
        \item $h_2=f_2^{k_1+k_2}=f_2$ is a polynomial of degree at most $d$, instead of $2d^2$, as a polynomial in $x_i$. Its sparsity is bounded above by $s$.
    \end{itemize}
    As a result, it follows that any determinant of a minor of $M_{x_i}(h_1,h_2)$ is an $s^{O(d^2)}$-sparse polynomial, of bounded individual degree at most $O(d^2)$. This implies that it is sufficient to take
    $m=\poly(n,{(d^2)}!,s^{d^2})$ in this case.
\end{proof}

\subsubsection{Arbitrary Characteristic Algorithms}\label{sec:arb-char algs}
Our next goal is to obtain a divisibility testing result for products of sparse polynomials of bounded individual degree.
To do that, we need the following lemma, which is the heart of the divisibility testing algorithm. This lemma will play
a crucial part in \Cref{Section 5}, when we prove our rational interpolation result.

\begin{lemma}\label{lem:Recovering multiplicity of primitive divisors}
    There is a deterministic $\poly(n,{(d^2)}!,s^{d^3},\ell)$-time algorithm that,
    given blackbox access to a product $f$ of $\ell$ $(n,s,d)$-sparse polynomials, and blackbox access to a primitive divisor $\phi$ for $\cP(n,s,d)$, returns the multiplicity of $\phi$ as a factor of $f$ (in the sense of \Cref{multiplicity of primitive divisor}). In fact, the algorithm only needs to know the values of $f$ on the image of the generator $G_{(m)}$, for some explicit value $m=\poly(n,{(d^2)}!,s^{d^3})$. 
    Moreover, if $\phi$ is an irreducible $s$-sparse polynomial, this algorithm works if $f$ is a product of polynomials in $\cCF(n,s,d)$ (rather than $\cP(n,s,d)$); if furthermore $\phi$ is multiquadratic, taking $m=\poly(n,{(d^2)}!,s^{d^2})$ suffices, and the running time reduces to $\poly(n,{(d^2)}!,s^{d^2},\ell)$.
\end{lemma}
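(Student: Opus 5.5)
The plan is to reduce the multiplicity computation to a univariate-in-$u$ factorization problem after composing with the revived generator $G_{(m,i)}$, and to invoke \Cref{lem:G presereves coprimality of primitve divisors} (or \Cref{G preserves coprimality of factors when one factor is sparse} in the ``moreover'' cases) to guarantee that composition neither merges $\phi$ with other factors nor loses information.

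First I fix an index $i$ with $x_i\in\var(\phi)$. Such an $i$ can be found from blackbox access to $\phi$: $\phi$ divides some polynomial in $\cP(n,s,d)$, so $\phi\in\cCF(n,s,d)$. By \Cref{G preserves degree}, $\deg_u\phi(G_{(m,i)})=\deg_{x_i}\phi$, so I test whether $\phi(G_{(m,i)})$ depends on $u$. Next, write $f=\prod_{j=1}^{\ell}f_j$ with $f_j\in\cP(n,s,d)$. By \Cref{factorization to primitive divisors} each $f_j$ factors uniquely into primitive divisors, hence
\[
f=\phi^{k}\cdot\prod_{\psi\not\sim\phi}\psi^{e_\psi},\qquad k=v_\phi(f)=\sum_j v_\phi(f_j).
\]
Composing with $G=G_{(m,i)}$ gives $f(G)=\phi(G)^k\prod\psi(G)^{e_\psi}$. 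By \Cref{lem:G presereves coprimality of primitve divisors}, each $\psi(G)$ is coprime to $\phi(G)$ as a polynomial in $u$; here $\psi(G)\neq0$ by \Cref{G hits factors of sparse polynomials}. For the conclusion we need coprimality in the ring $\F(x,y,z,w)[u]$ together with the fact that $\phi(G)$ has positive $u$-degree. Then
\[
k=\max\{t:\ \phi(G)^t \mid f(G)\ \text{in } \F(x,y,z,w)[u]\}.
\]
This follows because $\phi(G)$ is nonconstant in $u$ and shares no $u$-dependent factor with the remaining product. A possible subtlety is that the $\psi(G)$ may share factors free of $u$ with $\phi(G)$. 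I handle this by working in $\F(x,y,z,w)[u]$, where such factors are units.

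Algorithmically, $f(G)$ and $\phi(G)$ are $5$-variate polynomials of degree $\poly(m,d,\ell,n)$. The degree bound holds because the KS coordinates have degree $<q$ in $x$ and $\poly(m)$ in $y$, and each variable appears to degree at most $d\ell$. Both are interpolated via \Cref{Interpolation of polynomials}, using only values of $f$ on the image of $G_{(m)}$, since $G_{(m,i)}$ is a reparametrization of it. I then compute $k$ by repeated division in $\F(x,y,z,w)[u]$, or equivalently by factoring with \Cref{thm:polynomial factorization} and comparing the multiplicities of the $u$-dependent irreducible factors of $\phi(G)$. Since $k\le d\ell$, this takes $\poly(n,(d^2)!,s^{d^3},\ell)$ time.

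For the ``moreover'' part, $\phi$ is an irreducible $s$-sparse polynomial and each $f_j\in\cCF(n,s,d)$. I factor every $f_j$ into irreducibles, so that every irreducible factor $\psi\not\sim\phi$ lies in $\cCF(n,s,d)$. \Cref{G preserves coprimality of factors when one factor is sparse} then gives coprimality of $\phi(G)$ and $\psi(G)$, with the improved $m=\poly(n,(d^2)!,s^{d^2})$ when $\phi$ is multiquadratic. The same division argument yields $k=v_\phi(f)$, which in this case is the ordinary multiplicity. The main obstacle I expect is the bookkeeping that the coprimality statements are in $u$ over the function field. Once that is made precise, the rest is routine.
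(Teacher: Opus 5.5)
Your proposal is correct and follows essentially the paper's argument: you factor $f$ into primitive divisors (into irreducibles in the ``moreover'' case), apply \Cref{lem:G presereves coprimality of primitve divisors} (resp.\ \Cref{G preserves coprimality of factors when one factor is sparse}) together with \Cref{G preserves degree} to show that $\phi(G_{(m,i)})$ divides $f(G_{(m,i)})$ exactly $v_\phi(f)$ times as polynomials in $u$, and compute this power by interpolation and division. The differences are cosmetic: you locate an $i$ with $x_i\in\var(\phi)$ by testing $u$-dependence, whereas the paper takes the minimum of the computed powers over all $i$; and your remark that coprimality is meant over $\F(x,y,z,w)[u]$ matches what the paper's Sylvester-rank argument actually establishes.
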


\begin{proof}
    Fix some $i$ for which $x_i \in \text{var}(\phi)$, and let $m=\poly(n,{(d^2)}!,s^{d^3})$ so that the conclusion of \Cref{lem:G presereves coprimality of primitve divisors} holds.  \Cref{G preserves degree} implies that $\deg_u(\phi(G_{(m,i)})) = \deg_{x_i}(\phi)\ge 1$.
    
    By \Cref{factorization to primitive divisors} we can write $f=\phi^k\cdot \prod{\psi}_j$,
    where $\psi_j$ are  primitive divisors for $\cP(n,s,d)$ that are not associate to $\phi$. From \cref{lem:G presereves coprimality of primitve divisors} it follows that for every $\psi_j$
    we have that $\phi(G_{(m,i)})$ is coprime to $\psi_j(G_{(m,i)})$, when both are viewed as polynomials in $u$. In particular, we have that  $\gcd_u(\phi(G_{(m,i)}), \prod{\psi_j(G_{(m,i)})})=1$. As $\deg_u(\phi(G_{(m,i)}))\ge 1$, we must have that the highest power of $\phi(G_{(m,i)})$ that divides $f$ as a polynomial in $u$ is exactly $k$.

    Thus, to recover the desired multiplicity, we will calculate, for every $1\leq i\leq n$, 
    the highest power of $\phi(G_{(m,i)})$ that divides $f$ as polynomials in $u$, and return the minimal value.
    This power can be calculated by interpolating both $\phi(G_{(m,i)})$ and $f(G_{(m,i)})$ using
    \cref{Interpolation of polynomials} and  computing the highest power of $\phi(G_{(m,i)})$ dividing $f(G_{(m,i)})$ as polynomial in $u$ (just by
    regular polynomial division, there is no need for polynomial factorization here).
    Note that the degree of both polynomials is at most $\poly(n,d,s,m,\ell)$ so the claim on the running time holds.
    This ends the description of the algorithm.

    As for the "moreover" part, it follows from the same proof, by considering the factorization into irreducible polynomials (rather than to primitive divisors), and by replacing
    the use of \Cref{lem:G presereves coprimality of primitve divisors} with
    \Cref{G preserves coprimality of factors when one factor is sparse}.
\end{proof}

We are finally ready to obtain our divisibility testing result. This is a generalization of the divisibility testing result 
of \cite[Theorem 7]{volkovich2017some}.

\begin{theorem}\label{general characteristic divisibility testing}
    There exists a deterministic $\poly(n,{(d^2)}!,s^{d^3},\ell)$-time algorithm that, given blackbox access to two products $f,g$ of $\ell$ $(n,s,d)$-sparse polynomials, decides if $f|g$.
    Moreover, if all irreducible factors of $f$ are in fact sparse (that is, $f$ is a product of irreducible $(n,s,d)$-sparse polynomials), this algorithm works even if $g$ is a product of polynomials in $\cCF(n,s,d)$ (rather than in $\cP(n,s,d)$); if furthermore all factors of $f$ are multiquadratic, then the running time of this algorithm can be reduced to $\poly(n,{(d^2)}!,s^{d^2},\ell)$.
\end{theorem}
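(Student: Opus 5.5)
The plan is to reduce divisibility to comparing multiplicities of primitive divisors, and to compute those multiplicities with \Cref{lem:Recovering multiplicity of primitive divisors}.

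\textbf{Reduction to multiplicities.} Write $f=\prod_{j} f_j$ and $g=\prod_j g_j$ with $f_j,g_j\in\cP(n,s,d)$. By \Cref{factorization to primitive divisors}, each $f_j$ and $g_j$ factors uniquely into primitive divisors for $\cP(n,s,d)$. By \Cref{lem:different primitive divisors are coprime}, distinct primitive divisors are pairwise coprime, so $v_\phi$ is additive over products. Hence $f=\prod_\phi \phi^{v_\phi(f)}$ and $g=\prod_\phi \phi^{v_\phi(g)}$ with pairwise coprime bases. It follows that $f\mid g$ if and only if $v_\phi(f)\le v_\phi(g)$ for every primitive divisor $\phi$ dividing $f$.

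\textbf{Computing the multiplicities without knowing the $\phi$'s.} Fix $i\in[n]$ and set $F_i=f(G_{(m,i)})$ and $G_i=g(G_{(m,i)})$, viewed as polynomials in $u$ over $\F(x,y,z,w)$, with $m$ as in \Cref{lem:G presereves coprimality of primitve divisors}. By that lemma, the images $\phi(G_{(m,i)})$ of distinct primitive divisors with $x_i\in\var(\phi)$ are pairwise coprime in $u$. By \Cref{G preserves degree}, each such image has positive $u$-degree, while primitive divisors not depending on $x_i$ map to constants in $u$. The claim is that $f\mid g$ if and only if, for every $i$, $F_i$ divides $G_i$ as polynomials in $u$ over $\F(x,y,z,w)$.
\begin{itemize}
\item ``Only if'' is immediate.
\item For ``if'': if $v_\phi(f)>v_\phi(g)$, pick $i$ with $x_i\in\var(\phi)$. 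Coprimality shows the exponent of $\phi(G_{(m,i)})$ in $F_i$ exceeds its exponent in $G_i$, so $F_i\nmid G_i$.
\end{itemize}

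\textbf{The algorithm.} For each $i$, interpolate $F_i$ and $G_i$ using \Cref{Interpolation of polynomials}. They have $6$ variables and degree $\poly(n,d,m,\ell)$. Then test $u$-divisibility by pseudo-division, or by verifying that the quotient is polynomial in $u$. Membership of the $u$-quotient is all that is needed; no factorization is required. The running time is $\poly(n,(d^2)!,s^{d^3},\ell)$, as claimed.

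\textbf{Moreover parts.} If $f$ is a product of irreducible $(n,s,d)$-sparse polynomials, repeat the argument with irreducible factors in place of primitive divisors. Use \Cref{G preserves coprimality of factors when one factor is sparse}: each irreducible factor $\phi$ of $f$ lies in $\cP(n,s,d)$, and every irreducible factor of $g$ lies in $\cCF(n,s,d)$. This gives that $\phi(G_{(m,i)})$ is coprime in $u$ to the images of all non-associate irreducible factors of $g$ and of $f$. Exponent comparison then works as before, with $g$ allowed to be a product of elements of $\cCF(n,s,d)$. When all factors of $f$ are multiquadratic, the improved $m=\poly(n,(d^2)!,s^{d^2})$ from that corollary yields the stated running time.

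\textbf{Main obstacle.} The hardest step is the ``if'' direction in the moreover case. There one must ensure that no irreducible factor of $g$ that is not sparse collapses onto $\phi(G_{(m,i)})$ after substitution. This is exactly what \Cref{G preserves coprimality of factors when one factor is sparse} provides, by passing to the primitive divisor containing $\psi$.
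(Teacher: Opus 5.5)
Your proposal is correct and is essentially the paper's proof. Like the paper, you test $f(G_{(m,i)})\mid g(G_{(m,i)})$ in $u$ for every $i$; for the converse you pick a primitive divisor (or, in the moreover case, an irreducible sparse factor of $f$) whose multiplicity drops, and use \Cref{lem:G presereves coprimality of primitve divisors} (resp.\ \Cref{G preserves coprimality of factors when one factor is sparse}) to see that the exponent of its image is preserved — which is exactly what the paper does via the proof of \Cref{lem:Recovering multiplicity of primitive divisors}, and your explicit extension of $v_\phi$ to products of elements of $\cP(n,s,d)$ via pairwise coprimality fills in a step the paper leaves implicit.
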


\begin{proof}
    Fix some $m=\poly(n,{(d^2)}!,s^{d^3})$ such that \Cref{lem:Recovering multiplicity of primitive divisors} applies. 
    For every $1\leq i\leq n$ check if $f(G_{(m,i)})$ divides $g(G_{(m,i)})$, and return that $f|g$ if all these checks passed.
    Clearly, if $f|g$ we return the correct answer. Suppose now that $f\nmid g$. By \Cref{factorization to primitive divisors}, we know that
    there is a primitive divisor $\phi$ whose multiplicity as a factor of $f$ exceeds its multiplicity in $g$.
    Then, by the proof of \Cref{lem:Recovering multiplicity of primitive divisors} it follows that for every $i$ for which
    $x_i\in \textrm{var}(\phi)$, the highest power $k$ for which $\phi(G_{(m,i)})^k$ divides $f(G_{(m,i)})$ 
    actually equals the multiplicity of $\phi$ as a factor of $f$, and analogously for $g(G_{(m,i)})$. In particular, for such values of $i$ it cannot 
    be the case that $f(G_{(m,i)})|g(G_{(m,i)})$, so we will indeed return that $f\nmid g$.

    As for the "moreover" part, it follows from the same proof, when considering the factorization of $f$ and $g$ into irreducible polynomials (rather than to primitive divisors), and by using the "moreover" part of \Cref{lem:Recovering multiplicity of primitive divisors}.
\end{proof}

As we saw in this subsection, the method of \cite{bisht2025solving} is pretty strong for dealing with problems concerning
the factorization of products of sparse polynomials of bounded individual degree. 
Indeed, we will see in \Cref{Section 5} that \Cref{lem:Recovering multiplicity of primitive divisors}
is what we need for our rational interpolation result, and as a consequence, for our algorithm that recovers multiquadratic sparse
irreducible factors of a product of sparse polynomials of bounded individual degree. 

However, this method has some limitations. Consider for example the problem of deciding if a given product of sparse polynomials
of bounded individual degree is a complete power. A natural thing to do is to apply an algorithm that is similar to the one
presented in \Cref{general characteristic divisibility testing} - check if $f(G_{(m,i)})$ are complete powers for every $i$.
This algorithm, however, does not work. To see why, suppose our input polynomial is a primitive divisor of the form $\phi=\phi_1\phi_2$
where $\phi_1,\phi_2$ are two non-associate irreducible polynomials.
The methods of this subsection do not say anything about the interaction of $\phi_1(G_{(m,i)})$ and $\phi_2(G_{(m,i)})$.
In particular, it might be the case that for every $i$, $\phi_1(G_{(m,i)})=\phi_2(G_{(m,i)})$, so that $\phi(G_{(m,i)})$ is
a perfect square.

The goal of the next subsection is to generalize the results of this subsection from dealing only with primitive divisors to
dealing with irreducible polynomials. This allows us, for example, to make the suggested complete power testing
work. Moreover, it enables us to generalize all results from the class of products of sparse polynomials to the class
of products of factors of sparse polynomials.

Our generalization, however, only works for fields of characteristic zero, or when the characteristic is larger
than twice the bound on the individual degree. In particular, the results of the present subsection are the best we can do for small characteristic fields.  

\subsection{Zero or Large Characteristic Results}\label{sec:char-0}
In this subsection we prove stronger versions of the results of \Cref{sec:arb-char},
both in terms of generality and   running times. 
However, the results of this subsection are valid only when
$\operatorname{char}(\mathbb{F})=0$ or when $\operatorname{char}(\mathbb{F})>2d$ (where $d$ is the bound on the individual degree).

Similar to \Cref{sec:arb-char}, in \Cref{sec:zero-char tools} we build the tools that we will  use in our algorithms, and in \Cref{sec:zero-char algs} we give our algorithms.

\subsubsection{Zero or Large Characteristic Tools}\label{sec:zero-char tools}

\sloppy The main technical tool we will need is a generalization of \Cref{lem:G presereves coprimality of primitve divisors}
and \Cref{lem:Recovering multiplicity of primitive divisors} for the case of irreducible factors rather than primitive divisors.
We prove the following analogue of \Cref{lem:G presereves coprimality of primitve divisors}.

\begin{lemma}\label{lem:G always preserves coprimality of factors}
    Let $\F$ be a field of characteristic zero, or of characteristic larger than $2d$.
    Let $\phi,\psi$ be non-associate irreducible polynomials in $\cCF(n,s,d)$ that divide the $(n,s,d)$-sparse polynomials $f$, and $g$ (possibly identical to $f$), respectively. Fix some $i$ for which $x_i \in \text{var}(\phi)$. Then, for 
    some explicit value $m=\poly(n,d!,s^d)$, we have that
    $\phi(G_{(m,i)})$ is coprime to $\psi(G_{(m,i)})$, when viewed as polynomials in $u$.
    Moreover, $\phi(G_{(m,i)})$ and $\psi(G_{(m,i)})$ are both square free as polynomials in $u$.
\end{lemma}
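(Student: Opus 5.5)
We follow the outline in the proof overview and work with the Sylvester matrix of $F:=fg$ (or $F:=f$ when $g$ is associate to $f$) and its derivative $F':=\partial F/\partial x_i$, instead of the matrix $M_{x_i}(h_1,h_2)$ used in \Cref{lem:G presereves coprimality of primitve divisors}.

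First we record the linear-algebraic fact behind this choice. View $F\in\F(\vx\setminus x_i)[x_i]$ and factor it into irreducibles as $F=c\prod_j p_j^{e_j}$. Every irreducible factor that depends on $x_i$ divides some $(n,s,d)$-sparse polynomial, so it has $x_i$-degree at most $d$. Hence each multiplicity $e_j\le 2d$ is smaller than $\Char(\F)$. By \Cref{basic resultant properties}\eqref{item:dim=deg},
\[\dim\ker M_{x_i}(F,F')=\deg_{x_i}\gcd(F,F').\]
Because $\Char(\F)=0$ or $\Char(\F)>2d$, we have $p_j\nmid p_j'$ and $e_j\neq 0$ in $\F$. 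Therefore $\gcd(F,F')=\prod_j p_j^{e_j-1}$, and
\[\rank M_{x_i}(F,F')=\deg_{x_i}F+\deg_{x_i}F'-\sum_j (e_j-1)\deg p_j.\]
The sum $\sum_j \deg_{x_i}p_j$ over the distinct irreducible factors of $F$ is exactly the quantity that a collision of $\phi(G_{(m,i)})$ with $\psi(G_{(m,i)})$, or a repeated factor inside either of them, would strictly decrease.

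The same identity holds after substitution, with one caveat. In the $u$-univariate world a factor of multiplicity $e\le 2d$ still has $e\ne0$ in $\F$. Using \Cref{G preserves degree} together with the fact that $F$ is $s^2$-sparse of individual degree $2d$, we get $\deg_u F(G_{(m,i)})=\deg_{x_i}F$ and $\deg_u F'(G_{(m,i)})=\deg_{x_i}F'$. So the matrix $M_u(F(G_{(m,i)}),F'(G_{(m,i)}))$ is exactly the entrywise substitution of $M_{x_i}(F,F')$. Its kernel dimension equals $\deg_u F(G_{(m,i)})$ minus the total $u$-degree of the distinct irreducible factors of $F(G_{(m,i)})$. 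Here the multiplicities are at most $4d$ if we merge factors, so we should instead restrict to $\Char(\F)>2d$ with $F=fg$ of degree $2d$. We also have to check that multiplicities do not pass $\Char(\F)$ under merging. If they can, we handle it by noting that only the rank inequality is needed and that the derivative of $q^{e}$ with $p\mid e$ only makes the gcd larger, which pushes the inequality in our favor.

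Next we show that the rank is preserved. The entries of $M_{x_i}(F,F')$ are coefficients of $F$ and $F'$, each $s^2$-sparse. Every minor is a determinant of size at most $8d$ in such entries, so it is $(8d)!\,s^{O(d)}$-sparse of individual degree $O(d^2)$. By \Cref{lem:first property of G}, with $m=\poly(n,d!,s^{d})$, no nonzero minor vanishes under $G_{(m,i)}$. Hence
\[\rank M_u\big(F(G_{(m,i)}),F'(G_{(m,i)})\big)=\rank M_{x_i}(F,F').\]
This gives $\deg_u\operatorname{rad}(F(G_{(m,i)}))=\deg_{x_i}\operatorname{rad}_{x_i}(F)$.

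The conclusion then follows by a counting argument. Every irreducible factor $p_j$ of $F$ that depends on $x_i$ maps to $p_j(G_{(m,i)})$, which has the same positive $u$-degree by \Cref{G preserves degree}. Factors of $F$ that do not depend on $x_i$ map to constants in $u$. Since $\operatorname{rad}(\prod_j p_j(G_{(m,i)}))$ has $u$-degree at most $\sum_j\deg_u p_j(G_{(m,i)})$, equality forces the $p_j(G_{(m,i)})$ to be pairwise coprime and square free. Both $\phi$ and $\psi$ are among the $p_j$. For $\psi$ this needs $x_i\in\var(\psi)$; otherwise $\psi(G_{(m,i)})$ is a nonzero constant by \Cref{G hits factors of sparse polynomials} and the claim is trivial. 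This yields the coprimality and the square-freeness.

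The main obstacle is making the characteristic bookkeeping airtight. The identity $\gcd(F,F')=\prod p^{e-1}$ needs every multiplicity to be nonzero in $\F$, both before and after substitution, and after substitution two distinct $p_j$ may merge and produce multiplicity up to $2d$. This is exactly why we need $\Char(\F)>2d$ rather than $>d$. A second, more delicate point is to prove the rank equality only in the direction that matters, namely that substitution cannot increase the rank. If the bookkeeping fails, we would argue this inequality directly.
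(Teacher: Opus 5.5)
Your proposal is correct and is essentially the paper's proof. It uses the same Sylvester matrix $M_{x_i}(fg,(fg)')$, the same rank preservation via a nonvanishing maximal minor under $G_{(m,i)}$, and the same counting step; your radical-degree formulation is equivalent to the paper's $\sum_j \deg h_j\,(e_j-1)$ bookkeeping because $\deg_u F(G_{(m,i)})=\deg_{x_i}F$.

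The worries at the end are unnecessary. Since $\deg_u F(G_{(m,i)})\le 2d$, no multiplicity after substitution exceeds $2d$ (not $4d$). Also, the direction you actually need is that substitution does not \emph{decrease} the rank (it can never increase it), and that is exactly what the nonvanishing minor gives.
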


\begin{proof}
    Fix some $i$ for which $x_i \in \text{var}(\phi)$. From now on, view all polynomials as polynomials in $x_i$.
    Consider $M_{x_i}(fg,(fg)')$, the Sylvester matrix of $fg$ and its formal derivative with respect to $x_i$.
    Let $fg=\prod_{j}{h_j^{e_j}}$ be the factorization of $fg$ into irreducible factors (as a polynomial in $x_i$ over $\F(\vx\setminus x_i)$).

    \begin{claim}\label{cla:sum of multiplicities captured by rank of the Sylvester matrix of f and f'}
        Let $\F$ as in \Cref{lem:G always preserves coprimality of factors}.
        Let $f\in \mathbb{F}[x]$ be a non-constant univariate polynomial of degree at most $2d$.
        Suppose $f=\prod_{i}{h_i^{e_i}}$ is the factorization of $f$ into irreducible polynomials.
        Then $\dim(\ker {M(f,f')}=\sum_i{\deg(h_i)(e_i-1)}$.
    \end{claim}
    \begin{proof}
        Clearly, $f'=\sum_{i}{e_ih_i^{e_i-1}h_i'\hat{h}_i}$, where we denote by $\hat{h}_i$ the
        product of the factors of $f$ different than $h_i$.
        As $h_i$ is a non-constant polynomial of degree at most $2d$, the conditions on $\operatorname{char}(\mathbb{F})$ imply $h_i'\neq 0$. Since $h_i$ is irreducible, it must be the case that it is coprime to $h_i'$.
        Moreover, as $h_i$ and $\hat{h}_i$ are clearly coprime, and as $e_i\neq 0$
        (because of the condition on $\operatorname{char}(\mathbb{F})$), we have that the multiplicity of $h_i$ as a factor of $f'$
        is precisely $e_i-1$. As $h_i$ are the only irreducible factors of $f$, it follows that
        $\gcd(f,f')=\prod_{i}{h_i^{e_i-1}}$. \Cref{basic resultant properties} then implies that
        $\dim(\ker {M(f,f')}=\deg(\gcd(f,f')) = \deg(\prod_{i}{h_i^{e_i-1}}) = \sum_i{\deg(h_i)(e_i-1)}$, as claimed.
    \end{proof}
    Continuing with the proof, we clearly have that $\deg_{x_i}(fg)\leq 2d$. Therefore,
    by \Cref{cla:sum of multiplicities captured by rank of the Sylvester matrix of f and f'} $r:=\dim(\ker {M_{x_i}(fg,(fg)')}) = \sum_j{\deg_{x_i}(h_j)(e_j-1)}$. As $M_{x_i}(fg,(fg)')$ is
    an $(4d-1)\times (4d-1)$ matrix, it contains an  invertible $(4d-1-r)\times (4d-1-r)$ minor $M$.
    As $fg$ and $(fg)'$ are $s^2$-sparse, it follows from  \Cref{basic resultant properties}
    that the entries of $M$ are $s^2$-sparse polynomials of bounded individual degree $2d$, and hence $\det M$
    is an $(4d)!s^{8d}$-sparse polynomial of individual degree $8d^2$. Therefore,  \Cref{G hits factors of sparse polynomials} implies that for $m={(n(8d^2)({(4d)}!s^{8d}))}^2=\poly(n,d!,s^d)$, it holds that $\det M({G}_{(m,i)}) \neq 0$ (where we substitute $G_{(m,i)}$ to all coordinates except the $i$-th).
    Note that $\det M({G}_{(m,i)})$ is precisely the determinant
    of a $(4d-1-r)\times (4d-1-r)$ minor of the matrix $M_{u}(fg(G_{(m,i)}),(fg')(G_{(m,i)}))$, and therefore it follows that
    \begin{equation}\label{upper bound on rank}
        \dim(\ker {M_{u}(fg(G_{(m,i)}),(fg)'(G_{(m,i)}))} \leq r.
    \end{equation}
    Now, we know that $fg(G_{(m,i)})=\prod_j{{h_j(G_{(m,i)})}^{e_j}}$; let
    $fg(G_{(m,i)})=\prod_k{{\tilde{h}_k}^{r_k}}$ be the factorization into  irreducible factors of $fg(G_{(m,i)})$.
    We now claim the following.
    \begin{claim}\label{score of factorization always maxinal for irreducible factorization}
        \begin{equation*}            \sum_k{\deg{{\tilde{h}_k}}(r_k-1)}\ge \sum_j{\deg {h_j(G_{(m,i)})}}(e_j-1)   
        \end{equation*}
        and equality holds if and only if the set $\{h_j(G_{(m,i)})\}$ is a set of square free, pairwise coprime polynomials.
    \end{claim}
    \begin{proof}
        By considering the factorization into irreducible polynomials, we can write $h_j(G_{(m,i)})=\prod_k{{\tilde{h}_k}^{e_{j,k}}}$.
        Now, for every $k$ we have that $\sum_j{e_je_{j,k}}=r_k$. Therefore:
        \begin{align*}
             \sum_k{\deg{{\tilde{h}_k}}(r_k-1)} &= \sum_k{\deg{{\tilde{h}_k}}(\sum_j{e_je_{j,k}}-1)} \\
             &\ge \sum_k{\deg{{\tilde{h}_k}}(\sum_j{e_{j,k}}(e_j-1))} = \sum_j{(e_j-1)\sum_k{e_{j,k}}\deg{{\tilde{h}_k}}} \\
             &= \sum_j{\deg{h_j(G_{(m,i)})}(e_j-1)}.
        \end{align*}
        This proves the inequality. Note that equality holds if and only if, for every $k$, $\sum_j{e_{j,k}=1}$. But this clearly
        holds if and only if the set $\{h_j(G_{(m,i)})\}$ is a set of square free, pairwise coprime polynomials, as claimed.
    \end{proof}
    Thus, from  \Cref{score of factorization always maxinal for irreducible factorization}, 
    \Cref{cla:sum of multiplicities captured by rank of the Sylvester matrix of f and f'} and \eqref{upper bound on rank} we deduce that:
    \begin{align*}
        r &\ge \dim(\ker {M_{u}(fg(G_{(m,i)}),(fg(G_{(m,i)}))')} = \sum_k{\deg{{\tilde{h}_k}}(r_k-1)} \\
        &\ge \sum_j{\deg {h_j(G_{(m,i)})}}(e_j-1)=\sum_j{\deg {h_j}}(e_j-1) = r
    \end{align*}
    Here we used that $\deg h_j(G_{(m,i)}) = \deg h_j$, where the degree is with respect to $u$ in the LHS and the with respect to $x_i$ in the RHS. 
    Indeed, note that when viewed as an element of $\mathbb{F}[\vx]$, we have that $h_j\in \cCF(n,s^2,2d)$, therefore the equality follows from
    \Cref{G preserves degree}.
    
    Thus, it follows that we are in the equality case of 
    \Cref{score of factorization always maxinal for irreducible factorization}, and therefore the set
    $\{h_j(G_{(m,i)})\}$ is a set of square free, pairwise coprime polynomials. This ends the proof, as $\{h_j\}$ contains,
    by definition, $\phi$. If $\psi$ is also there then we are done. Otherwise, in the case where $x_i\notin \textrm{var}(\psi)$, the claim is trivial.
\end{proof}


\begin{remark}
    In fact, we can use the results of \Cref{sec:arb-char} to say something even if $d<\operatorname{char}(\mathbb{F})\leq 2d$: Consider the primitive divisors $h_1, h_2$ that $\phi, \psi$ divide. There are two cases:
    \begin{itemize}
        \item If $h_1$ and $h_2$ are not associates, then the claim follows from \Cref{lem:G presereves coprimality of primitve divisors}, but now we must take $m=\poly(n,{(d^2)}!,s^{d^3})$.
        \item If $h_1=h_2$, consider some $\tilde{h}\in \cP(n,s,d)$ so that $h_1=h_2|\tilde{h}$.
        Then, we can replace the polynomial $fg$ in the proof above by the polynomial $\tilde{h}$ and get the result for $m=\poly(n,d!,s^{d})$. Note that now $\deg_{x_i}(\tilde{h})\leq d$, so indeed we only need that $\operatorname{char}(\mathbb{F})\ge d+1$ (or $\operatorname{char}(\mathbb{F})=0$).
    \end{itemize}
\end{remark}

\subsubsection{Zero or Large Characteristic Algorithms}\label{sec:zero-char algs}

From \Cref{lem:G always preserves coprimality of factors}, we are  able to obtain the following generalization of \Cref{lem:Recovering multiplicity of primitive divisors}.
\begin{lemma}\label{Recovering multiplicity of general factors}
    Let $\F$ be a field of characteristic zero, or of characteristic larger than $2d$. 
    There is a deterministic $\poly(n,d!,s^d,\ell)$-time algorithm that, given blackbox access to a product, $f$, of $\ell$ polynomials in $\cCF(n,d,s)$ and  blackbox access to an irreducible polynomial $\phi$ in $\cCF(n,s,d)$, both defined over $\mathbb{F}$, returns the multiplicity of $\phi$ as a factor of $f$. In fact, the algorithm only needs to know the values of $f$ and $\phi$ on the image
    of the generator $G_{(m)}$ for some explicit $m=\poly(n,d!,s^d)$.
\end{lemma}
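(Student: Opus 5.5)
The plan is to mirror the proof of \Cref{lem:Recovering multiplicity of primitive divisors}, replacing \Cref{lem:G presereves coprimality of primitve divisors} with \Cref{lem:G always preserves coprimality of factors}. Fix $m=\poly(n,d!,s^d)$ large enough that \Cref{lem:G always preserves coprimality of factors} and \Cref{G preserves degree} hold. Write $f=\prod_{t=1}^{\ell}g_t$ with $g_t\in\cCF(n,s,d)$, and let $g_t\mid \hat g_t$ with $\hat g_t\in\cP(n,s,d)$. Factor $f=\phi^k\prod_j\psi_j^{c_j}$ into irreducibles, with each $\psi_j$ non-associate to $\phi$. Every $\psi_j$ divides some $g_t$, hence some $\hat g_t$. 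Likewise $\phi\in\cCF(n,s,d)$ divides some $\hat g\in\cP(n,s,d)$, for instance the sparse polynomial witnessing membership of $\phi$ in the class.

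Now fix $i$ with $x_i\in\var(\phi)$. Applying \Cref{lem:G always preserves coprimality of factors} to the pair $(\phi,\psi_j)$, with sparse multiples $\hat g$ and $\hat g_t$, gives that $\phi(G_{(m,i)})$ and $\psi_j(G_{(m,i)})$ are coprime in $u$. The "moreover" part of that lemma adds that $\phi(G_{(m,i)})$ is square free. By \Cref{G preserves degree}, $\deg_u\phi(G_{(m,i)})=\deg_{x_i}\phi\ge1$. Therefore $\phi(G_{(m,i)})$ is coprime to $\prod_j\psi_j(G_{(m,i)})^{c_j}$, and the largest $e$ with $\phi(G_{(m,i)})^e\mid f(G_{(m,i)})$ is exactly $k$. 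For $i$ with $x_i\notin\var(\phi)$, $\phi(G_{(m,i)})$ is a constant in $u$ and gives no information, so those coordinates are discarded.

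The algorithm is then as follows. For each $i$, interpolate $\phi(G_{(m,i)})$ and $f(G_{(m,i)})$ as polynomials in the five variables $x,y,z,w,u$ using \Cref{Interpolation of polynomials}. The degrees are $\poly(n,d,m,\ell)$, so this is polynomial time, and it uses only values on the image of $G_{(m)}$: indeed $G_{(m,i)}$ is obtained from $G_{(m)}$ by setting $v=\gamma_i$ and shifting $u$. Discard the $i$ for which $\deg_u\phi(G_{(m,i)})=0$. By \Cref{G preserves degree}, the remaining $i$ are exactly those with $x_i\in\var(\phi)$. For each such $i$, compute the maximal power of $\phi(G_{(m,i)})$ dividing $f(G_{(m,i)})$ in $\F(x,y,z,w)[u]$ by repeated division, and output this value. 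Taking the minimum over such $i$ is also safe. No factoring is required, so the running time is $\poly(n,d!,s^d,\ell)$.

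The main point to check is that \Cref{lem:G always preserves coprimality of factors} really applies to each pair. It requires irreducible polynomials in $\cCF(n,s,d)$, each dividing an $(n,s,d)$-sparse polynomial. The individual $\psi_j$ satisfy this, even though $f$ itself is not sparse. A small subtlety is that divisibility must be measured in $\F(x,y,z,w)[u]$ rather than $\F[x,y,z,w,u]$. The lemma gives coprimality in $u$, which is what this computation needs, and the leading coefficients are nonzero by \Cref{G hits factors of sparse polynomials}.
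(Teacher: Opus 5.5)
Your proposal is correct and is essentially the paper's proof. You apply \Cref{lem:G always preserves coprimality of factors} to $\phi$ and each non-associate irreducible factor $\psi_j$ of $f$ (each lies in $\cCF(n,s,d)$ via the sparse multiple of the $g_t$ it divides), use \Cref{G preserves degree} to get $\deg_u\phi(G_{(m,i)})\ge 1$ when $x_i\in\var(\phi)$, and read off the multiplicity by repeated division in $u$. The only differences are cosmetic: the paper takes the minimum over all $i$ rather than discarding coordinates with $x_i\notin\var(\phi)$, and the square-freeness from the ``moreover'' part, which you invoke, is not needed for the argument.
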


\begin{proof}
    Let $m=\poly(n,d!,s^d)$ as required by \Cref{lem:G always preserves coprimality of factors}.
    For every $1\leq i\leq n$, using \Cref{Interpolation of polynomials}, we can
    recover $\phi(G_{(m,i)})$ and $f(G_{(m,i)})$. Therefore, using regular division of univariate polynomials, we can find the largest $k$
    for which for every $i$ it holds that $\phi(G_{(m,i)})^k|f(G_{(m,i)})$ as polynomials in $u$.

    Denote with $t$ the multiplicity of $\phi$ as a factor of $f$.
    We claim that $k=t$.
    Indeed, for every $i$ it holds that $\phi(G_{(m,i)})^k|f(G_{(m,i)})$, so clearly $t\leq k$. For the inequality in the other direction,
    fix some $i$ for which $x_i \in \textrm{var}(\phi)$, and write $f=\phi^k\cdot \prod_{j}{\psi_j}$
    where $\psi_j$ are all irreducible and non-associate to $\phi$.
    Then, by the \Cref{lem:G always preserves coprimality of factors}, $\phi(G_{(m,i)})$ is coprime to $\psi_j(G_{(m,i)})$ for any $j$ (as polynomials in $u$), and therefore
    $\phi(G_{(m,i)})\nmid \prod_{j}{\psi_j(G_{(m,i)})}$. Note that here we used that $x_i\in \textrm{var}(\phi)$, as in this case we know that $\deg_{u}(\phi(G_{(m,i)}))=\deg_{x_i}(\phi)\ge 1$ by \Cref{G preserves degree}.
    Thus, $\phi(G_{(m,i)})$ divides $f(G_{(m,i)})$ precisely $k$ times as polynomials in $u$, and
    the lemma follows.
\end{proof}

As we did in the previous subsection, we can use the proof of \Cref{Recovering multiplicity of general factors} to obtain a divisibility testing
result. Since the results of the present subsection apply to general irreducible factors of sparse polynomials, the resulting
divisibility testing algorithms are valid not just for products of sparse polynomials, but for products of factors of sparse polynomials. In contrast, the results in \Cref{sec:arb-char algs} apply only to products of polynomials from $\cP(n,s,d)$.

\begin{theorem}\label{thm:div-text-l-product-char-0
}
    Let $\F$ be a field of 
    characteristic
    zero or larger than $2d$.
    There exists a deterministic $\poly(n,d!,s^d,\ell)$-time algorithm that given blackbox access
    to polynomials $f$ and $g$, such that each is a product of $\ell$ polynomials in $\cCF(n,s,d)$,  decides if $f|g$.
\end{theorem}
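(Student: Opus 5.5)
The algorithm mirrors \Cref{general characteristic divisibility testing}, but relies on \Cref{lem:G always preserves coprimality of factors} instead of the primitive-divisor lemma. Fix $m=\poly(n,d!,s^d)$ as required by \Cref{lem:G always preserves coprimality of factors} (and hence by \Cref{Recovering multiplicity of general factors}). For every $1\le i\le n$, interpolate the univariate polynomials $f(G_{(m,i)})$ and $g(G_{(m,i)})$ in $u$, after fixing the remaining seeds $x,y,z,w$ to range over the appropriate finite sets. Equivalently, interpolate them as polynomials in the constantly many variables of $G_{(m,i)}$ using \Cref{Interpolation of polynomials}. Then test whether $f(G_{(m,i)})$ divides $g(G_{(m,i)})$ by ordinary polynomial division. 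Output ``$f\mid g$'' if and only if all $n$ tests pass. The degrees involved are $\poly(n,d,s,m,\ell)$, so the running time is $\poly(n,d!,s^d,\ell)$.

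Completeness is immediate: if $f\mid g$, then $f(G_{(m,i)})\mid g(G_{(m,i)})$ for every $i$. For soundness, suppose $f\nmid g$. By unique factorization there is an irreducible $\phi$ with $v_\phi(f)>v_\phi(g)$. Since $f$ is a product of elements of $\cCF(n,s,d)$, $\phi$ divides one of these factors and so lies in $\cCF(n,s,d)$. If $\phi$ is a monomial, i.e.\ $\phi=x_j$, then \Cref{cor:monomial-div} detects the discrepancy directly. Alternatively, the general argument below already covers this case, since $x_j$ divides an $(n,s,d)$-sparse polynomial.

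Now fix $i$ with $x_i\in\var(\phi)$. Write $f=\phi^{a}F$ and $g=\phi^{b}H$ with $a>b$, where $F$ and $H$ are products of irreducibles in $\cCF(n,s,d)$ that are not associate to $\phi$. By \Cref{lem:G always preserves coprimality of factors}, $\phi(G_{(m,i)})$ is coprime in $u$ to each such irreducible factor composed with $G_{(m,i)}$. By \Cref{G preserves degree}, $\deg_u\phi(G_{(m,i)})=\deg_{x_i}\phi\ge1$. Hence the exact power of $\phi(G_{(m,i)})$ dividing $f(G_{(m,i)})$ is $a$, and the exact power dividing $g(G_{(m,i)})$ is $b$. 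Since $a>b$, $f(G_{(m,i)})\nmid g(G_{(m,i)})$, and the test rejects. This is exactly the argument in the proof of \Cref{Recovering multiplicity of general factors}.

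The main subtle point is applying \Cref{lem:G always preserves coprimality of factors} to each pair $(\phi,\psi)$ with $\psi$ an irreducible factor of $F$ or $H$. This needs sparse witnesses $f',g'\in\cP(n,s,d)$ with $\phi\mid f'$ and $\psi\mid g'$, which exist by the definition of $\cCF(n,s,d)$. Here $m$ depends only on $(n,s,d)$ and not on $\ell$, so the same generator works for all pairs at once. Coprimality of $\phi(G_{(m,i)})$ with each $\psi(G_{(m,i)})$ then gives coprimality with their product, which is exactly what the multiplicity argument requires.
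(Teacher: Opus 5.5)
Your proposal is correct and follows essentially the same route as the paper: test $f(G_{(m,i)})\mid g(G_{(m,i)})$ in $u$ for every $i$. Soundness then comes from the exact-multiplicity argument of \Cref{Recovering multiplicity of general factors}, which rests on \Cref{lem:G always preserves coprimality of factors} and \Cref{G preserves degree}. One caution: do the division on the fully interpolated polynomials over $\F(x,y,z,w)$, as in your ``equivalently'' formulation, and not seed-by-seed on the small evaluation sets of the KS generator, because the coprimality guarantee of \Cref{lem:G always preserves coprimality of factors} holds over the function field in the seeds and need not survive specialization.
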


\begin{proof}
    Fix some $m=\poly(n,d!,s^{d})$, satisfying the conclusion of \Cref{Recovering multiplicity of general factors}. 
    For every $1\leq i\leq n$ check if $f(G_{(m,i)})$ divides $g(G_{(m,i)})$ (as polynomials in $u$; throughout this argument, we view them in this way), and return that $f|g$ if all these checks passed. Note that no polynomial factorization is needed to do that; regular univariate polynomial
    division actually suffices.
    Clearly, if $f|g$ we return the correct answer. Suppose now that $f\nmid g$; let $\phi$ be an irreducible factor whose multiplicity as a factor of $f$ is larger than its multiplicity as a factor of $g$.
    Then, by the proof of \Cref{Recovering multiplicity of general factors} it follows that for every $i$ for which
    $x_i\in \textrm{var}(\phi)$, the highest powers $k_1,k_2$ for which $\phi(G_{(m,i)})^{k_1}$ divides $f(G_{(m,i)})$ and $\phi(G_{(m,i)})^{k_2}$ divides $g(G_{(m,i)})$,
    actually equal the multiplicity of $\phi$ as a factor of $f$ and $g$, respectively. In particular, for these values of $i$ it cannot 
    be the case that $f(G_{(m,i)})|g(G_{(m,i)})$, so we will indeed return that $f\nmid g$.
\end{proof}

We can now prove \Cref{general divisibility testing}. We repeat its statement for convenience.

\GeneralDivisibility*

\begin{proof}
    The proof follows from combining \Cref{general characteristic divisibility testing} and \Cref{thm:div-text-l-product-char-0 }.
\end{proof}

Taking a closer look at the proof of \Cref{thm:div-text-l-product-char-0
}, we are  able to obtain \Cref{general complete power testing} that generalizes a result of \cite{bisht2025solving}. As mentioned at the end of the previous subsection, we currently do not know a version that works
over arbitrary fields, as we need the full power of \Cref{lem:G always preserves coprimality of factors}.

\GeneralPower*

\begin{proof}
    As before, fix some large enough $m=\poly(n,d!,s^d)$ so that \Cref{lem:G always preserves coprimality of factors} holds. For every $1\leq i\leq n$ compute the factorization of $f(G_{(m,i)})$, using \Cref{Interpolation of polynomials}
    and \Cref{thm:polynomial factorization}.
    From these factorizations, it is easy to check, for every $1\leq i\leq n$, if $f(G_{(m,i)})$ is a complete $e$-th power.
    Indeed, a polynomial is an $e$-th power if and only if every irreducible factor of it divides it with multiplicity divisible by $e$, and its leading coefficient is an $e$-th power as an element of $\mathbb{F}$. The former can be easily checked for $f(G_{(m,i)})$ from its factorization into  irreducible factors. As for the latter, it can be directly checked for every polynomial $h\neq 0$. For instance:
    \begin{itemize}
        \item If $\mathbb{F}=\mathbb{F}_{q}$ for some prime power $q$, check if the leading coefficient $\alpha$ of $h$ satisfies that $\alpha^{\frac{q-1}{e}}=1$.
        \item If $\mathbb{F}=\mathbb{Q}$, use \Cref{thm:lll} to check if $x^{e}-\alpha$ has a rational root, where $\alpha$ is the leading coefficient of $h$.
    \end{itemize}
    We return that $f$ is a complete $e$-th power if $f(G_{(m,i)})$ is a complete $e$-th power for every $1\leq i\leq n$.
    
    Clearly, if $f$ is a complete $e$-th power, the algorithm returns the right answer.
    So, suppose that $f$ is not a complete $e$-th power. Then, by the discussion above, one of the following must hold.
    \begin{itemize}
        \item There's an irreducible factor $\phi|f$ so that
        its multiplicity as a factor of $f$ is some integer $k$, not divisible by $e$.
        Consider some value $i$ for which $x_i\in \textrm{var}(\phi)$, and consider some irreducible factor
        $\tilde{\phi}$ of $\phi(G_{(m,i)})$. We note that such an irreducible factor of positive degree in $u$ exists, by \Cref{G preserves degree}.
        We claim that the multiplicity of $\tilde{\phi}$ as a factor of $f(G_{(m,i)})$
        is $k$. Indeed, by \Cref{lem:G always preserves coprimality of factors} it is coprime to $\psi(G_{(m,i)})$ for every irreducible $\psi|f$
        that is not associate to $\phi$. In addition, it divides $\phi(G_{(m,i)})$ precisely once (by the "moreover" part of the lemma).
        Therefore $\tilde{\phi}$ is an irreducible factor of $f(G_{(m,i)})$ that divides it exactly $k$ times, hence $f(G_{(m,i)})$ is not
        a complete $e$-th power. Thus, we will indeed return that $f$ is not an $e$-th power.
        \item $f=\alpha g^e$ for some $g\in \mathbb{F}[\vx]$ and $\alpha \in \mathbb{F}$, so that $\alpha$ is not an $e$-th power in $\mathbb{F}$. In this case,
        for every $1\leq i\leq n$ have that $f(G_{(m,i)})=\alpha g(G_{(m,i)})^e$ is not an $e$-th power as well, hence we indeed return that $f$ is not an $e$-th power.
    \end{itemize}
    This completes the proof.
\end{proof}

\section{A Rational Interpolation Theorem}\label{Section 5}
In this section we prove our rational interpolation result.

We state our rational interpolation result. The result requires a lot of information about the functions involved, but in our applications we show how to obtain that information.

\begin{theorem}\label{rational interpolation theorem}
    Suppose we have  blackbox access to some rational function of the form $\frac{af}{b}$, defined over a field $\F$, where:
    \begin{enumerate}
        \item $a$ is an $(n,s,d)$-sparse polynomial.
        \item We have  blackbox access $f$, which is a product of $\ell$ polynomials in $\cCF(n,s,d)$. 
        \item We have a set $\cF$ of $s$-sparse irreducible factors of $f$, call them $\phi_i$.
        \item $b$ is an $s$-sparse factor of $f$ whose irreducible factors are all elements of $\cF$.
        \item $a$ and $b$ are coprime.
    \end{enumerate}
    Then, there is an algorithm that recovers $a$ and $b$ from  evaluating $\frac{af}{b}$ on the image of the generator $G_{(m)}$ for some explicit $m$ (depending on $\chr(\F)$), such that the running time is $\poly(m,\ell)$, where:
    \begin{itemize}
        \item If $\chr({\F})=0$ or $\chr(\F)>2d$, one can take $m=\poly(n,d!,s^{d})$.
        \item Over arbitrary fields, $m=\poly(n,{(d^2)}!,s^{d^3})$.   If we further assume that all elements of $\cF$ are multiquadratic, then we can take $m=\poly(n,{(d^2)}!,s^{d^2})$. 
    \end{itemize} 
    \end{theorem}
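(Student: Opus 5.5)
The plan is to determine the multiplicity of each $\phi\in\cF$ in $b$ by computing its multiplicity in the rational function $\frac{af}{b}$, and then to recover $b$ as the product of the $\phi$'s with these multiplicities and $a$ by sparse interpolation. Write $b=c\prod_{\phi\in\cF}\phi^{k_\phi}$ with $c\in\F^{*}$. We want the $k_\phi$; the constant $c$ can be absorbed into $a$, since only the pair $(a,b)$ up to scalars matters. Let $t_\phi$ be the multiplicity of $\phi$ in $f$, which we compute first. Since $\phi\in\cF$ is an $s$-sparse irreducible polynomial and $f$ is a product of $\ell$ elements of $\cCF(n,s,d)$, \Cref{Recovering multiplicity of general factors} (characteristic zero or $>2d$) or the ``moreover'' part of \Cref{lem:Recovering multiplicity of primitive divisors} (arbitrary fields, and the multiquadratic improvement) applies. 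Both use only the values of $f$ on $\textrm{Im}(G_{(m)})$, and the values of $\phi$ there are computable because $\phi$ is given explicitly.

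Next, $\frac{af}{b}$ is itself a polynomial, because $b\mid f$. It is also a product of at most $\ell+1$ elements of $\cCF(n,s,d)$: $a\in\cP(n,s,d)$, and $f/b=\prod_\phi\phi^{t_\phi-k_\phi}\cdot(\text{rest of } f)$. Note that a divisor of a product of elements of $\cCF$ is again such a product, since each irreducible factor of $f$ lies in $\cCF(n,s,d)$. So we may apply the same multiplicity lemma to the blackbox $\frac{af}{b}$ restricted to $\textrm{Im}(G_{(m)})$ and obtain $r_\phi:=v_\phi(af/b)=v_\phi(a)+t_\phi-k_\phi$. Coprimality of $a$ and $b$ now decides everything. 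If $k_\phi>0$ then $v_\phi(a)=0$, so $r_\phi=t_\phi-k_\phi<t_\phi$. If $k_\phi=0$ then $r_\phi=v_\phi(a)+t_\phi\ge t_\phi$. Hence $k_\phi=\max(t_\phi-r_\phi,0)$. This is the step I expect to need the most care: one must verify that the hypotheses of the multiplicity lemmas hold for $af/b$, whose structure we do not see directly. In the arbitrary-field case this is clean only because $\phi$ is an irreducible sparse polynomial, so the ``moreover'' clause covers products of $\cCF$ elements, and we use exactly that.

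With all $k_\phi$ known, set $b'=\prod_\phi\phi^{k_\phi}$, which is explicit. Then $a'=\frac{af}{b}\cdot\frac{b'}{f}$ is a scalar multiple of $a$ and hence $(n,s,d)$-sparse. To interpolate it via \Cref{lem:first property of G}, we need its values on $\textrm{Im}(G_{(m)})$, and there $f$ may vanish. To handle this we work with univariate restrictions. For each fixed setting of the other generator variables, $\frac{af}{b}(G_{(m)})$ and $f(G_{(m)})$ are polynomials in the generator variables of degree $\poly(m,\ell)$, and $f(G_{(m)})\ne0$ by \Cref{G hits factors of sparse polynomials}. So we interpolate $\frac{af}{b}(G_{(m)})$, $b'(G_{(m)})$ and $f(G_{(m)})$ as $6$-variate polynomials via \Cref{Interpolation of polynomials}, perform the exact polynomial division $\frac{af}{b}(G_{(m)})\,b'(G_{(m)})/f(G_{(m)})=a'(G_{(m)})$, and then recover $a'$ by the KS reconstruction of \Cref{thm:KS}. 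Finally we output $(a',b')$, which equals $(a,b)$ up to a common scalar. The value of $m$ is the maximum of the value required by the multiplicity lemma in each regime and $(nsd)^2$ for interpolation, giving the bounds stated. The running time is $\poly(m,\ell)$ times $|\cF|\le\poly(\cdot)$ multiplicity computations.
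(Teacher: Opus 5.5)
Your proposal is correct and follows the paper's proof essentially step for step. You compute $t_\phi=v_\phi(f)$ and $v_\phi(af/b)$ via \Cref{Recovering multiplicity of general factors} (or the ``moreover'' part of \Cref{lem:Recovering multiplicity of primitive divisors} over arbitrary fields), use the coprimality of $a$ and $b$ to read off $v_\phi(b)$, and then recover $a$ by division and sparse interpolation. Your final step is more careful than the paper's one-line version: you divide the interpolated polynomials $\frac{af}{b}(G_{(m)})\,b'(G_{(m)})$ by $f(G_{(m)})$ exactly, and you note that $(a,b)$ is determined only up to a common scalar. The paper instead writes $a(G_{(m)})=f(G_{(m)})/b(G_{(m)})$, where $\frac{af}{b}(G_{(m)})\,b(G_{(m)})/f(G_{(m)})$ is meant.
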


\begin{proof}
    Assume first that the characteristic of the field is zero or larger than $2d$.
    Fix some $m=\poly(n,d!,s^{d})$ for which the conclusion of \Cref{Recovering multiplicity of general factors} holds.
    We start by recovering $b$.  As all irreducible factors of $b$ belong to $\cF$, it suffices
    to understand the multiplicity of each $\phi \in \cF$ as a factor of $b$.
    Fix such an irreducible factor $\phi\in\cF$. Using \cref{Recovering multiplicity of general factors} we can compute the multiplicity of $\phi$ as a factor of $f$, which we denote by $t$.
    
    As $a$ and $b$ are coprime, it follows that, for any $k$, $\phi^k|b$ if and only if $\phi^{t-k+1} \nmid \frac{af}{b}$.
    Therefore, it is sufficient to understand the multiplicity of $\phi$ as a factor of $\frac{af}{b}$.
    As $f$ is a product of factors of $s$-sparse polynomials, $b| f$, and $a$ is $s$-sparse, $\frac{af}{b}$ is also
    a product of factors of $s$-sparse polynomials. We can therefore apply \cref{Recovering multiplicity of general factors} to learn the multiplicity of $\phi$ as a factor of $\frac{af}{b}$. We conclude that we can learn the factorization of $b$ into irreducible polynomials.
    In particular, we can explicitly recover $b$ using \cref{lem:first property of G}.
    To learn $a$, we can simply compute $a(G_{(m)})=f(G_{(m)})/b(G_{(m)})$ and apply \cref{lem:first property of G}.

    For fields of small characteristic, and the multiquadratic case, we repeat the same argument, replacing the use of \Cref{Recovering multiplicity of general factors} by the "moreover" part of \Cref{lem:Recovering multiplicity of primitive divisors}.
\end{proof}

We next state a slight generalization of the theorem above that we will later use.

\begin{theorem}\label{thm:multi rational interpolation theorem}
    Suppose we have blackbox access to a set of rational functions, defined over a field $\F$, of the form $\frac{a_jf^j}{b}$, for $1\leq j \leq D$, where:
    \begin{enumerate}
        \item $a_j$ is an $(n,s,d)$-sparse polynomial for every $1\leq j \leq D$.
        \item $f$ is a product of $\ell$ polynomials in $\cCF(n,s,d)$
        to which we have blackbox access.
        \item We have a set $\cF$ of $s$-sparse irreducible factors of $f$, call them $\phi_i$.
        \item $b$ is an $s$-sparse factor of $f$ whose irreducible factors are all elements of $\cF$.
        \item $\gcd(a_1,\ldots,a_D, b) = 1$.
    \end{enumerate}
    Then, there is an algorithm that recovers the $a_j$-s and $b$ from evaluations of $\frac{a_jf^j}{b}$ on the image of the generator $G_{(m)}$, for some explicit $m$ (depending on $\chr(\F)$), such that the running time is $\poly(m,\ell,D)$, where:
    \begin{itemize}
        \item If $\chr(\F)=0$ or $\chr(\F)>2d$, one can take $m=\poly(n,d!,s^{d},D)$.
        \item For arbitrary fields we have  $m=\poly(n,{(d^2)}!,s^{d^3},D)$. If we further assume that all elements of $\cF$ are multiquadratic, then we can take $m=\poly(n,{(d^2)}!,s^{d^2},D)$. 
    \end{itemize}
\end{theorem}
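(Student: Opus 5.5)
We follow the proof of \Cref{rational interpolation theorem}, replacing the single quotient $\frac{af}{b}$ by the family $\frac{a_jf^j}{b}$ and the coprimality of $a$ and $b$ by the weaker condition $\gcd(a_1,\ldots,a_D,b)=1$. First fix $m$ so that \Cref{Recovering multiplicity of general factors} applies to products of at most $\ell D+D$ polynomials in $\cCF(n,s,d)$. Over arbitrary fields, or when every element of $\cF$ is multiquadratic, we use instead the ``moreover'' part of \Cref{lem:Recovering multiplicity of primitive divisors}. Its $m$ depends only on $n,d,s$, and the dependence on $D$ enters only through the degrees of the interpolated polynomials. Each $\frac{a_jf^j}{b}$ is a genuine polynomial, since $b\mid f$ and $j\ge 1$. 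It equals $a_j\cdot f^{j-1}\cdot (f/b)$, so it is a product of at most $1+(j-1)\ell+\ell\le D\ell+1$ elements of $\cCF(n,s,d)$, because $f/b$ is a product of factors of the $\ell$ factors of $f$. So all the multiplicity-recovery tools apply to it, using only its values on the image of $G_{(m)}$.

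To recover $b$, we determine for each $\phi\in\cF$ the exponent $e_\phi=v_\phi(b)$. Let $t=v_\phi(f)$, computed by \Cref{Recovering multiplicity of general factors}. For each $j$ compute $\mu_j=v_\phi\!\left(\frac{a_jf^j}{b}\right)=v_\phi(a_j)+jt-e_\phi$. Since $\gcd(a_1,\ldots,a_D,b)=1$, either $e_\phi=0$ or some $a_j$ has $v_\phi(a_j)=0$. In both cases
\[
e_\phi=\max_{1\le j\le D}\bigl(jt-\mu_j\bigr).
\]
To see this, note that $jt-\mu_j=e_\phi-v_\phi(a_j)\le e_\phi$ always holds. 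If $e_\phi>0$, some $j$ attains equality. If $e_\phi=0$, the maximum is $\le 0$; in that case we should take $\max(0,\cdot)$, i.e.
\[
e_\phi=\max\Bigl(0,\max_{1\le j\le D}\bigl(jt-\mu_j\bigr)\Bigr).
\]
This gives the factorization of $b$ over $\cF$. We obtain $b(G_{(m)})=\prod_\phi \phi(G_{(m)})^{e_\phi}$ and recover $b$ explicitly by \Cref{lem:first property of G}, as $b$ is $(n,s,d)$-sparse. (In the primitive-divisor setting the same computation works with $v_\phi$ interpreted via the ``moreover'' clause, since $\phi$ is an irreducible $s$-sparse polynomial.)

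Finally, recover each $a_j$ by computing $a_j(G_{(m)})=\frac{a_jf^j}{b}(G_{(m)})\cdot b(G_{(m)})/f(G_{(m)})^j$ as a rational function of the seed variables. This is a polynomial, so univariate or low-variate division suffices, and \Cref{lem:first property of G} then interpolates the sparse $a_j$.

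The main obstacle is the exponent bookkeeping. Unlike the single-function case, no individual $a_j$ need be coprime to $b$, so the identity ``$\phi^k\mid b$ iff $\phi^{t-k+1}\nmid \frac{af}{b}$'' fails. We replace it with the max-formula above, and must verify that the multiplicity lemmas remain valid for these larger products. The latter only affects the degree of the interpolated polynomials, hence the $\poly(m,\ell,D)$ running time.
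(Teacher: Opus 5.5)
Your proposal is correct and is essentially the paper's proof. Your formula $e_\phi=\max\bigl(0,\max_{j}(jt-\mu_j)\bigr)$ is a closed-form restatement of the paper's criterion that $\phi^k\mid b$ iff $\phi^{tj-k+1}\nmid \frac{a_jf^j}{b}$ for some $j$, and both rest on $\gcd(a_1,\ldots,a_D,b)=1$ forcing $\phi\nmid a_j$ for some $j$ whenever $\phi\mid b$; you also helpfully make explicit that each $\frac{a_jf^j}{b}$ is a product of at most $D\ell+1$ elements of $\cCF(n,s,d)$, and you should keep only the version with the outer $\max(0,\cdot)$, since the uncorrected formula fails when $\phi\nmid b$ but $\phi$ divides every $a_j$.
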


\begin{proof}
    The proof goes along the same lines as the proof of \cref{rational interpolation theorem}. The only difference is that now,
    when recovering $b$, it is does not necessarily hold that  $\gcd(a_j,b)=1$. To overcome this obstacle,
    observe that for every $\phi \in \cF$ of multiplicity $t$ as a factor of $f$, it holds that,
    for every $k\ge 1$, $\phi^k|b$ if and only if there is $1\leq j\leq D$ for which $\phi^{tj-k+1} \nmid \frac{a_jf^j}{b}$.
    This is true by the same reasoning as before. Indeed, there is some $1\leq j\leq D$ for which $\phi \nmid a_j$, as otherwise $\phi | \gcd(a_1,\ldots,a_D,b)$.
    The algorithm then proceeds in the exact same way as before.
\end{proof}

\makeatletter
\@ifundefined{c@theorem}{}{\renewcommand*{\theHtheorem}{\thesection.\arabic{theorem}}}
\@ifundefined{c@lemma}{}{\renewcommand*{\theHlemma}{\thesection.\arabic{lemma}}}
\@ifundefined{c@corollary}{}{\renewcommand*{\theHcorollary}{\thesection.\arabic{corollary}}}
\@ifundefined{c@proposition}{}{\renewcommand*{\theHproposition}{\thesection.\arabic{proposition}}}
\@ifundefined{c@claim}{}{\renewcommand*{\theHclaim}{\thesection.\arabic{claim}}}
\@ifundefined{c@definition}{}{\renewcommand*{\theHdefinition}{\thesection.\arabic{definition}}}
\@ifundefined{c@remark}{}{\renewcommand*{\theHremark}{\thesection.\arabic{remark}}}
\@ifundefined{c@fact}{}{\renewcommand*{\theHfact}{\thesection.\arabic{fact}}}
\@ifundefined{c@equation}{}{\renewcommand*{\theHequation}{\thesection.\arabic{equation}}}
\makeatother

\section{Meta-Algorithm}\label{sec:meta}

In this section we present a meta-algorithm that will be used, with some variations, in the proofs of Theorems~\ref{finding sparse divisors of sparse polynomials}, \ref{thm:factor-of-product-of-nsd}, \ref{General factorization algorithm}, \ref{improved BSV}, \ref{improved BSV for products} and \ref{finding multiquadratic factors of a product of sparse polynomials}, which state our algorithmic results. Since \Cref{thm:factor-of-product-of-nsd} follows as a corollary of \Cref{General factorization algorithm}, we will not discuss it here.   

Each of these algorithms follows the outline of the meta-algorithm, and in their proofs we will refer to steps of the meta-algorithm and describe how they are implemented in the relevant setting.


\begin{enumerate}
\item{\bf Setting:}\label{meta:setting}

We assume that we want to learn properties of a polynomial $f$ such that $f$ is either 
\begin{enumerate}
    \item $(n,s,d)$-sparse (\Cref{finding sparse divisors of sparse polynomials}, \Cref{improved BSV})
    \item a product of polynomials in $\cCF(n,s,d)$ (\Cref{General factorization algorithm}, \Cref{improved BSV for products}, \Cref{finding multiquadratic factors of a product of sparse polynomials})
\end{enumerate}

and we want to find a list $\cF(f)$ of divisors of $f$ having certain properties. For every variable $x_i$, the polynomial $f|_{x_i=0}$ is also of the same type as $f$. This is clearly true when $f$ is $(n,s,d)$-sparse. When $f$ is a product of polynomials in $\cCF(n,s,d)$, this follows from \Cref{lem:closure under taking free terms}, whose formulation and proof are given at the end of this section.

In what follows, for the sake of convenience, we assume that $f$ is an $n+1$-variate polynomial and is an element of $\mathbb{F}[x_0,\vx]$.

\item{\bf Preprocessing:}\label{meta:preprocess}

\begin{enumerate}
    \item Remove from $f$ its largest monomial divisor (using \Cref{cor:monomial-div}).
    \item Normalize $f$ to get $\tilde{f}(y,\vx) = f(yf_0,x_1,\ldots,x_n) / f_0$. Set $\hat{f}=f(y,G)$ for an appropriate generator $G$. Observe that $\tilde{f}(y,\vx)$ is reverse monic with respect to $y$, and hence so is $\hat{f}$. By \Cref{cla:reversed-monic bound} it then follows that $\hat{f}$ has at most $\deg_{y}(\hat{f})\leq d$ irreducible factors if $f$ is $(n,s,d)$-sparse, and at most $\deg_{y}(\hat{f})\leq \ell d$ if $f$ is a product of $\ell$ polynomials in $\cCF(n,s,d)$.
    \item Use \Cref{Interpolation of polynomials} to interpolate $\hat{f}$ as a polynomial in $O(1)$ many variables. Then apply \Cref{thm:polynomial factorization} to obtain its factorization into irreducible factors. 
\end{enumerate}


    \item{\bf Recursive call to solve for $f_0$:}\label{meta:recursive} 
    Recursively solve the problem for $f_0=f|_{x_0=0}$, obtaining a list $\cF(f_0)$ of divisors not divisible by a monomial or irreducible factors, according to the problem.
    
\item{\bf Generating a list of candidate solutions:}\label{meta:list}

By \Cref{cla:normalization of a polynomial}, every divisor $h=\sum_{i=0}^{d}{c_ix_0^i}$ of $f$ gives rise to a divisor $\sum_{i=0}^{d}{\frac{c_i}{c_0}f_0^iy^i}$ of $\tilde{f}$ and hence to a divisor $\hat{h}=\sum_{i=0}^{d}{\frac{c_i(G)}{c_0(G)}f_0^iy^i}$ of $\hat{f}$. We would like to get access to $c_i(G)$ in order to interpolate them, and through that obtain access to $h$. This leads us to the rational interpolation problem, where we have some information about $c_0$ (or even $c_0$ itself) coming from $\cF(f_0)$.

\begin{enumerate}
    \item In  \Cref{finding sparse divisors of sparse polynomials}, 
    \Cref{General factorization algorithm} we will have that $c_0\in \cF(f_0)$. In \Cref{improved BSV} $c_0$ is a product of elements of $\cF(f_0)$, but we have enough time to enumerate all relevant products so we can pretend that $c_0\in \cF(f_0)$.

    In fact, the above statement is not entirely accurate. The recursive call returns the set $\cF(f_0)$ of all divisors of $f_0$ that are not divisible by a monomial, together with the largest monomial dividing $f_0$. It might be the case that $c_0$ is a divisor of $f_0$ that \emph{is} divisible by a monomial. In this case, $c_0$ will not belong to $\cF(f_0)$. However, $c_0$ does lie in $\cF(f_0)$ \emph{up to a multiplication by a monomial}. As we shall see, this suffices for our purposes. 
    
    \item In \Cref{improved BSV for products} and \Cref{finding multiquadratic factors of a product of sparse polynomials} the situation is a bit different. Here, $c_0$ may be a product of elements from $\cF(f_0)$, and we cannot go over all possible ways to obtain $c_0$. However, in this case we will have that $\cF(f_0)$ contains all the irreducible factors of $c_0$, which is what \Cref{thm:multi rational interpolation theorem} requires.
\end{enumerate}

Accordingly, for each factor $\hat{h}=\sum_{i=0}^{d}e_i y^i$ we compose a list $\cF_{\hat{h}}$ as follows:

\begin{enumerate}
    \item When we have a list containing (up to multiplication by a monomial) the correct $c_0$ (\Cref{finding sparse divisors of sparse polynomials}, \Cref{General factorization algorithm}, \Cref{improved BSV}), we iterate over all possible candidates $c_0$ in that list. For each such candidate, we interpolate the polynomials of the form $\bigl((x_1\cdots x_n)^d c_0 / f_0^i\bigr)(G)\, e_i$ in order to recover the coefficients $c_i$. We then add to $\cF_{\hat h}$ those polynomials $\sum_{i=0}^{d} c_i x_0^i$ that are valid candidates for the problem, after dividing by their largest monomial divisor:
    \begin{enumerate}
        \item In \Cref{finding sparse divisors of sparse polynomials},  \Cref{General factorization algorithm} we keep all $(n,s,d)$-sparse polynomials.
        \item In \Cref{improved BSV} we keep all $(n,S,d)$-sparse polynomials, where $S=s^{O(d^2\log n)}$ is the upper bound on sparsity of factors of $(n,s,d)$-sparse polynomials obtained in \Cref{thm: BSV bound}.
    \end{enumerate}
    The term $(x_1\cdots x_n)^d$ is needed to deal with the fact that $c_0$ is in $\cF(f_0)$ only up to division by a monomial. We shall explain this point in full detail in the proof of \Cref{finding sparse divisors of sparse polynomials}.

    \item  If we only have a list of irreducible factors of $c_0$ (\Cref{improved BSV for products} and \Cref{finding multiquadratic factors of a product of sparse polynomials}), apply \Cref{thm:multi rational interpolation theorem} on the coefficients of $\hat{h}$ to recover potential $c_0$ and $c_1,\ldots,c_d$. We put the resulting polynomial $\sum_{i=0}^{d}{c_ix_0^i}$ in the list $\cF_{\hat{h}}$ if:
    \begin{enumerate}
        \item For  \Cref{improved BSV for products} it is $(n,S,d)$-sparse. 
        \item For \Cref{finding multiquadratic factors of a product of sparse polynomials}, it is multiquadratic and $(n,s,d)$-sparse.
    \end{enumerate}
    We note that, in both cases, we set $\cF_{\hat{h}}=\cF(f_0)$ for $\hat{h}=1$.
\end{enumerate}

\item{\bf Pruning the list:}\label{meta:prune}

Depending on the application we either have to keep divisors or irreducible divisors. We do that in two steps - we first find the divisors in the list of candidates, and then, if needed, identify the irreducible polynomials among them. The implementation of the two steps vary in the different algorithms; we give here a solution that works for all cases, though not always optimal, and provide a better implementation in the proofs of the algorithms.

In all cases we apply \Cref{general divisibility testing} to remove from the list those polynomials that do not divide $f$. This already suffices for \Cref{finding sparse divisors of sparse polynomials} and \Cref{General factorization algorithm}. 

To obtain irreducible divisors as required in \Cref{improved BSV}, \cref{improved BSV for products}, \cref{finding multiquadratic factors of a product of sparse polynomials}, we apply \Cref{general divisibility testing} to each two polynomials $g,h$ in the list of divisors to check whether $g$ divides $h$, and keep only those divisors that are not divisible by any other divisor.


\item{\bf Computing multiplicities:}\label{meta:multiplicities}

To compute multiplicities, where needed, we apply \Cref{Recovering multiplicity of general factors} or \Cref{lem:Recovering multiplicity of primitive divisors}, depending on the characteristic. Here as well there are better solutions in some cases; these are described in the proofs of the algorithms.

\end{enumerate}

The analysis of the running time of the algorithm depends on the exact setting. However, in all settings we perform exactly one recursive call so the overall complexity is determined by the cost of the different steps of the algorithm, and it does not blow up because of the recursive call.

\begin{remark}
    The described meta-algorithm provides the outline that all our algorithms follow. It is, however, \emph{not} an algorithm that \emph{generalizes} all of them. It is possible to formulate two concrete algorithms  which all our results follow - one generalizing Theorems \ref{finding sparse divisors of sparse polynomials}, \ref{General factorization algorithm} and \ref{improved BSV}, and one generalizing Theorems \ref{improved BSV for products} and \ref{finding multiquadratic factors of a product of sparse polynomials}. However, the precise formulations of these general algorithms are rather complicated and not particularly intuitive; hence, we have chosen not to include them in the present paper.
\end{remark}

We close this section by proving the following lemma.

\begin{lemma}\label{lem:closure under taking free terms}
    Let $f$ be a product of $\ell$ polynomials in $\cCF(n+1,s,d)\subset \mathbb{F}[x_0,\vx]$. Suppose further that $x_0\nmid f$, and let $f_0$ be the free term of $f$ as a polynomial in $x_0$. Then, $f_0$ is a product of $\ell$ elements of $\cCF(n,s,d)\subset \mathbb{F}[\vx]$.
\end{lemma}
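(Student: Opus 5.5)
Write $f=\prod_{j=1}^{\ell}g_j$ with $g_j\in\cCF(n+1,s,d)$. The free term of a product in $x_0$ is the product of the free terms, so $f_0=\prod_{j}(g_j)_0$, where $(g_j)_0=g_j|_{x_0=0}$. Since $x_0\nmid f$, no $g_j$ is divisible by $x_0$, hence every $(g_j)_0$ is nonzero. It therefore suffices to prove the single-factor statement: if $g\in\cCF(n+1,s,d)$ and $x_0\nmid g$, then $g|_{x_0=0}\in\cCF(n,s,d)$.

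For the single-factor statement, pick $h\in\cP(n+1,s,d)$ with $g\mid h$, say $h=g\cdot q$. Following the proof of \Cref{reduction for nonzero constant term}, we first remove the powers of $x_0$ from $h$: write $h=x_0^{k}h'$ with $x_0\nmid h'$. Then $h'$ is still $s$-sparse, still has individual degree at most $d$, and still lies in $\F[x_0,\vx]$. Because $x_0$ is irreducible and $x_0\nmid g$, the factor $x_0^k$ must divide $q$, so $g\mid h'$. Thus we may assume $x_0\nmid h$, and hence $h|_{x_0=0}\neq 0$.

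Now substitute $x_0=0$ into $h=gq$. This gives $h|_{x_0=0}=g|_{x_0=0}\cdot q|_{x_0=0}$, so $g|_{x_0=0}$ divides $h|_{x_0=0}$. The polynomial $h|_{x_0=0}$ consists exactly of the monomials of $h$ that do not involve $x_0$. It is therefore an $s$-sparse polynomial in $\vx$ with individual degree at most $d$, i.e., it lies in $\cP(n,s,d)$, and it is nonzero. Hence $g|_{x_0=0}\in\cCF(n,s,d)$, and so $f_0$ is a product of $\ell$ elements of $\cCF(n,s,d)$.

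The only delicate point is the reduction ensuring $h|_{x_0=0}\neq 0$. Without it, the free term of $h$ could vanish, and the divisibility $g_0\mid h_0$ would carry no information, since everything divides $0$ and $0\notin\cP(n,s,d)$ in any useful sense. Dividing out $x_0^k$ fixes this, and it is valid because $x_0\nmid g$.
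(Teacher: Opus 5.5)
Your proof is correct and takes the same approach as the paper. The paper observes that $g_{i,0}$ divides the lowest-order coefficient in $x_0$ of the sparse multiple $\hat g_i$, which is exactly the free term of your $h'=h/x_0^{k}$. Your explicit division by $x_0^{k}$ and the check that the free term is nonzero simply spell out what the paper's ``trailing monomial'' phrasing leaves implicit.
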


\begin{proof}
    Write $f=\prod_{i=1}^{\ell}{g_i}$, where $g_i\in \cCF(n+1,s,d)$. Denote further $g_i=\sum_{j=0}^{d}{g_{i,j}x_0^j}$, where $g_{i,j}\in \mathbb{F}[\vx]$, and let $\hat{g}_i$ be an $(n+1,s,d)$-sparse polynomial that $g_i$ divides. First, note that $f_0=\prod_{i=1}^{\ell}{g_{i,0}}$. Next, observe that $g_{i,0}$ divides the trailing monomial of $\hat{g}_i$ as a polynomial in $x_0$. As this trailing monomial is clearly $(n,s,d)$-sparse, it follows that $g_{i,0}\in \cCF(n,s,d)$. Consequently, $f_0$ is indeed a product of $\ell$ polynomials in $\cCF(n,s,d)$. 
\end{proof}


\section{Finding Multiquadratic Factors}\label{sec:multiquadratic}

In this section we prove \Cref{finding multiquadratic factors of a product of sparse polynomials}. 

\MultiquadraticRecovery*

\begin{proof}

    We follow the outline of the meta-algorithm described in \Cref{sec:meta}. As this is the first time we implement the meta-algorithm, we give the full details of the implementation. We describe the algorithm for the case  $\chr(\F)=0$ or $\chr(\F)>2d$ and then explain which components change when we work over arbitrary fields.

    To simplify the presentation we shall assume that the input $f$ is an element of $\mathbb{F}[x_0,\vx]$, and is a product of $\ell$ elements of $\cCF(n+1,s,d)$.
    As described in \Cref{sec:meta}\ref{meta:preprocess}, we begin with a preprocessing step.

    \paragraph{Preprocessing:}

    By \Cref{cor:monomial-div} we may assume that $f$ is not divisible by any monomial. In particular, we may assume that $x_0\nmid f$. Write $f=\sum_{i=0}^{\ell d}{f_ix_0^i}$,
    where $f_i\in \mathbb{F}[\vx]$. Let $\tilde{f}$ be the normalization of $f$ as in \Cref{def:normalization}. By \Cref{cla:normalization of a polynomial} $\tilde{f}$ is reverse monic and we have blackbox access to it. Denote \[\hat{f}:=\tilde{f}(y,G),\]
    for $G=G_{(m)}$, with $m$ as given in \Cref{thm:multi rational interpolation theorem}. 
    Using \Cref{Interpolation of polynomials} and \Cref{thm:polynomial factorization}, obtain the factorization of $\hat{f}$.
    Note that, as $\tilde{f}$ is reversed-monic with respect to $y$, so is $\hat{f}$. Thus, by \Cref{cla:reversed-monic bound},
        $\hat{f}$ has at most $\ell d$ irreducible factors. Consequently, it has at most $O({(\ell d)}^{2})$ (not necessarily irreducible) factors of degree at most two in $y$ overall. We note that $\hat{f}$ is a polynomial in $O(1)$ variables of degree $\poly(m, \ell)$, therefore the running time of this step is at most $\poly(m, \ell)$.

\paragraph{Recursive call:}  

    We would like to first solve the problem for $f_0$. By \Cref{lem:closure under taking free terms}, $f_0$ is also a product of $\ell$ elements of $\cCF(n,s,d)$. Hence, we can, by applying one recursive call, find the set of all multiquadratic, $s$-irreducible factors of it. Denote this set by $\cF$.

\paragraph{Generating a list of candidate solutions:}

        Recall that \Cref{cla:normalization of a polynomial} shows that for every
        irreducible factor $\phi=b+a_1x_0+a_2x_0^2$ of ${f}$, there is a corresponding irreducible factor $\tilde{\phi}=1+\frac{a_1f_0}{b}y+\frac{a_2f_0^2}{b}y^2$ of $\tilde{f}$, which corresponds to some (not necessarily irreducible) quadratic factor of $\hat{f}$, which we denote by $\hat{\phi}=\tilde{\phi}(y,G)$. Our plan is to learn the $(a_1,a_2,b)$'s, corresponding to the different factors, from the factorization of from $\hat{f}$ and $\cF$. 
        
        For that, we run over all factors $\hat{\phi}$ of $\hat{f}$, that have degree at most two as polynomials in $y$. For each such $\hat{\phi}$ we run the algorithm of \Cref{thm:multi rational interpolation theorem} to try and recover corresponding $a_1,a_2$ and $b$ (or $(a_1,b)$ if $\deg_y(\hat{\phi})=1$). If $b+a_1x_0+a_2x_0^2$ is an $s$-sparse multiquadratic polynomial then we add it to the list of candidate factors. 
        
        \begin{claim}
            In the notation above, assume that $\phi=b+a_1x_0+a_2x_0^2$  corresponds to a multiquadratic (or multilinear) irreducible factor of $f$. Then the procedure described above will add $\phi$ to the list of candidate multiquadratic factors when executed on $\hat{\phi}$.
        \end{claim}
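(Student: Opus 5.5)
The plan is to check that every hypothesis of \Cref{thm:multi rational interpolation theorem} holds when it is applied to the coefficients of $\hat{\phi}$, with $D=2$ (or $D=1$ when $\deg_y(\hat\phi)=1$). If so, the theorem returns exactly $(a_1,a_2,b)$ up to a common scalar. These are $s$-sparse and multiquadratic by assumption, so $\phi$ passes the filter and is added to the list.

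First, I would identify the data available to the algorithm. By \Cref{cla:normalization of a polynomial}, $\tilde\phi=1+\frac{a_1f_0}{b}y+\frac{a_2f_0^2}{b}y^2$ divides $\tilde f$. Substituting $G=G_{(m)}$ gives $\hat\phi=\tilde\phi(y,G)$, which is a product of irreducible factors of $\hat f$. Since the procedure runs over all factors of $\hat f$ of degree at most two in $y$, it reaches $\hat\phi$. The degree in $y$ is exactly $\deg_{x_0}\phi$, because $a_2f_0^2/b$ (resp.\ $a_1f_0/b$) is nonzero on $G$. This holds since $a_2$ and $f_0$ are products of elements of $\cCF$, and \Cref{G hits factors of sparse polynomials} applies. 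The coefficients of $y^j$ in $\hat\phi$ are therefore the evaluations of $\frac{a_jf_0^j}{b}$ on $\mathrm{Im}(G_{(m)})$. This is precisely the access model of \Cref{thm:multi rational interpolation theorem}, with the product of $\ell$ elements of $\cCF(n,s,d)$ taken to be $f_0$ (by \Cref{lem:closure under taking free terms}). One subtlety is that $\hat f$ is factored only up to scalars, so $\hat\phi$ is determined only up to a constant. Since its free term in $y$ is $1$, normalizing the free term fixes this.

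Second, I would verify the hypotheses in turn.
\begin{itemize}
\item Each $a_j$ is $(n,s,d)$-sparse, since $\phi$ is an $s$-sparse multiquadratic polynomial and $a_j$ is one of its coefficients in $x_0$.
\item $b=\phi|_{x_0=0}$ divides $f_0$, and it is $s$-sparse and multiquadratic.
\item Every irreducible factor of $b$ divides $f_0$. It is also a factor of a multiquadratic $s$-sparse polynomial, hence $s$-sparse by \Cref{factors of multiquadratic sparse polynomials are sparse}, and multiquadratic. So it lies in $\cF$, which the recursive call supplies as the set of all such factors of $f_0$.
\item $\gcd(a_1,a_2,b)=1$, because $\phi$ is irreducible and depends on $x_0$; this is Gauss's lemma, \Cref{lem:gauss}.
\end{itemize}
The assumption that $x_0\nmid f$ guarantees $b\neq 0$.

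With these checks in place, \Cref{thm:multi rational interpolation theorem}, in its multiquadratic-$\cF$ variant for arbitrary fields or its $\chr(\F)=0$ or $>2d$ variant, recovers $a_1,a_2,b$, so $\phi$ is added to the list. The step I expect to be delicate is matching the access model exactly: making sure $m$ is chosen as the theorem requires, and that the normalization of $\hat\phi$ yields the exact rational functions $\frac{a_jf_0^j}{b}$ rather than scalar multiples of them.
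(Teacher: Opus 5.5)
Your proposal is correct and follows the paper's proof. Both verify the hypotheses of \Cref{thm:multi rational interpolation theorem} for the coefficients $\frac{a_jf_0^j}{b}$ of $\hat{\phi}$: sparsity of the $a_j$, $f_0$ being a product of $\ell$ elements of $\cCF(n,s,d)$, the irreducible factors of $b$ lying in $\cF$ via \Cref{factors of multiquadratic sparse polynomials are sparse}, and $\gcd(a_1,a_2,b)=1$ from irreducibility. Your extra remarks, that $\hat{\phi}$ is reached in the enumeration, keeps its $y$-degree under $G$, and has its scalar fixed by the free term, are correct details the paper leaves implicit.
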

        \begin{proof}
            We need to show that we have all the required information to run \Cref{thm:multi rational interpolation theorem} successfully. Indeed:
            \begin{enumerate}
            \item As $\phi$ is $s$-sparse, $a_1,a_2$ are $s$-sparse as well.
            \item $f_0$ is a product of $\ell$ elements of $\cCF(n,s,d)$ to which we have a blackbox access.
            \item By the recursive call, we have a set $\cF$ containing all multiquadratic, $s$-sparse irreducible factors of $f_0$.
            \item As $\phi$ is multiquadratic then so is $b$. Therefore, by \Cref{factors of multiquadratic sparse polynomials are sparse}, all irreducible factors of $b$ are $s$-sparse and multiquadratic as well. Since $b|f_0$, these factors must be elements of $\cF$.
            \item From the irreducibility of   $\phi$ we obtain $\gcd(a_1,a_2,b)=1$.
        \end{enumerate}
        Thus, \Cref{thm:multi rational interpolation theorem} will return $a_1,a_2$ and $b$.
        \end{proof}
        Therefore, by running over all factors of $\hat{f}$ of degree at most $2$ in $y$,
        we are guaranteed to have included all irreducible multiquadratic factors of $f$ in the list of candidates. 
        Since $f$ may have multiquadratic factors that do not depend on $x_0$, we also add the elements of $\cF$ to the list of candidates for factors of $f$.

        Our next step is finding all the ``true'' factors. Indeed, our list of candidates 
        may contain "fake" candidates as well. For example, these may come from a successful interpolation process
        to some factor $\phi$ that has degree higher than two in some variable $x_i$, a property that was not maintained in the composition, factorization and interpolation process.

\paragraph{Pruning the list:}        
        
  We now have a set of $O((\ell d)^2)$ candidates for divisors of $f$, all of which are multiquadratic and $s$-sparse, and we are left with the problem of identifying the real irreducible divisors among them and computing their multiplicities. 
  
  By running the algorithm of \Cref{general divisibility testing} we decide, for every candidate $\phi\in \cF$, if it really divides $f$. We now have a set of candidates, all of which are multiquadratic and $s$-sparse, that divide $f$, and we have to identify the irreducible ones. 

   \Cref{factors of multiquadratic sparse polynomials are sparse} implies that since each factor $\phi$ is multiquadratic and $s$-sparse, then so are all its factors. In particular, it is divisible by some multiquadratic irreducible factor of $f$, which must be in the list. Therefore, for every pair $\phi, \psi$, we check if $\psi|\phi$ (using \Cref{general divisibility testing}). The candidates $\phi$ for which there are no other candidates that divide them are the desired irreducible factors.

\paragraph{Computing the multiplicities:}
   
    We have found all the multiquadratic, $s$-sparse irreducible factors of $f$. To compute their multiplicities as factors of $f$ we invoke \Cref{Recovering multiplicity of general factors}.

\paragraph{Arbitrary fields:}

    For arbitrary fields we let $m$ be as in \Cref{thm:multi rational interpolation theorem} for the case of multiquadratic polynomials. Instead of using \Cref{general divisibility testing} we apply the "moreover" part of \Cref{general characteristic divisibility testing}. Similarly, instead of \Cref{Recovering multiplicity of general factors} we rely on  the "moreover" part of \Cref{lem:Recovering multiplicity of primitive divisors}.

\paragraph{Running time:}

    Note that for fields of characteristic zero or larger than $2d$, each step runs in time  $\poly(n, d!, s^d, \ell)$. This is the running time guaranteed by \Cref{thm:multi rational interpolation theorem}, \Cref{general divisibility testing}, and  \Cref{Recovering multiplicity of general factors}.
    Since we have one recursive call the total running time is also $\poly(n, d!, s^d, \ell)$. For arbitrary fields, the running time is $\poly(n, (d^2)!, s^{d^2}, \ell)$, as this is what we obtain from \Cref{thm:multi rational interpolation theorem}, the "moreover" part of \Cref{general characteristic divisibility testing} and the "moreover" part of \Cref{lem:Recovering multiplicity of primitive divisors}.

    \medskip
    
    This completes the description, and analysis, of the algorithm.
\end{proof}

\section{Computing all Sparse Divisors of a Sparse Polynomial}\label{sec:sparse-div-of-sparse}

In this section, we analyze the set of sparse divisors of a sparse polynomial of bounded individual degree, and give a polynomial time
algorithm for constructing it. We then apply our methods to give a partial solution to the problem
raised in \cite{DuttaST24} regarding the recovery of sparse irreducible polynomials of bounded
individual degree from blackbox access to their product (\Cref{q:dst-factor-product-nsd}).

\subsection{Number of Divisors of Sparse Polynomials of Bounded Individual Degree}
We first prove an upper bound on the number of divisors (that are not divisible by any monomial) of sparse polynomials with bounded individual degree.
\NumDivisors*

\begin{proof}
    By dividing by a monomial, we may assume that $f$ is not divisible by any $x_i$.
    We choose a sequence of variables in the following iterative way. First, choose some $x_{i_1}\in \textrm{var}(f)$.
    Suppose we have   already chosen $x_{i_1},\ldots,x_{i_k}$, and consider the factorization of $f$ into irreducible polynomials.
    If there is an irreducible factor $\phi$ that contains none of the variables chosen thus far, we choose some $x_{i_{k+1}} \in \textrm{var}(\phi)$. 
    If no such factor exists, then the process terminates.
    
    Assume that in the process we chose $k$ variables (in sequence), which we denote without loss of generality by $x_1,x_2,\ldots,x_k$ (i.e., we assume that $i_1=1$ etc.).
    The sequence of chosen variables naturally corresponds to a sequence of \emph{multisets} of irreducible factors of $f$.
    The first multiset $I_1$ is the multiset of irreducible factors that depend on $x_1$ (counted with multiplicities); the second multiset $I_2$ consists of irreducible factors that do not depend on $x_1$
    but depend on $x_2$; the $j$-th multiset $I_j$ is the multiset of irreducible factors that do not  depend on $x_1,\ldots,x_{j-1}$ but
    depend on $x_j$.
    It is easy to see that \(\{I_j\}_{1\leq j\leq k}\) form a partition of the multiset of irreducible factors of $f$.

    As $f$ is of bounded individual degree $d$ and every factor of $I_j$ depends on $x_j$, it must be the case
    that $|I_j|\leq d$ as a multiset. Thus, the total number of irreducible factors of $f$, counted with multiplicities, is at most $kd$. Therefore, to
    prove our theorem, it is sufficient to show that $k\leq \log s$.

    To this end, let $h_j$ be the product of all the polynomials in $I_j$, so that $f=\prod_{j=1}^{k}{h_j}$.
    \begin{claim}\label{polytopes argument}
        Denote $f_j=\prod_{t=k-j+1}^{k}{h_t}$. Then, $2^{j}\leq |V(P_{f_j})|$, where $P_{f_j}$ is the Newton polytope
        corresponding to $f_j$ and $V(P_{f_j})$ is the set of its vertices (as defined in \Cref{Newton polytope}).
    \end{claim}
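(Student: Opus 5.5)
We will prove \Cref{polytopes argument} by induction on $j$. Two standard facts about Newton polytopes will be used throughout: Ostrowski's theorem, $P_{gh}=P_g+P_h$ (Minkowski sum), and the fact that the number of vertices of a Minkowski sum is at least the number of vertices of each summand. The latter also follows from \Cref{number of vertices is monotone}, since every vertex of $P_g$ extends to a vertex of $P_g+P_h$ via a generic supporting direction.

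\textbf{Base case.} For $j=1$ we have $f_1=h_k$, which is non-constant and depends on $x_k$. Its Newton polytope is therefore not a single point, so it has at least $2$ vertices.

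\textbf{Inductive step.} Set $f_{j}=h_{k-j+1}\cdot f_{j-1}$ and write $t=k-j+1$. By construction of the sets $I_\cdot$, every factor in $I_{t+1},\ldots,I_k$ avoids $x_t$. Hence $f_{j-1}$ does not depend on $x_t$, and $P_{f_{j-1}}$ lies in the hyperplane $\{e_t=0\}$. On the other hand, $h_t$ depends on $x_t$, and it is not divisible by $x_t$ because $f$ has no monomial divisor. So $P_{h_t}$ contains a point with $e_t=0$ and a point with $e_t=\deg_{x_t}h_t\ge 1$.

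Consider the two faces of $P_{f_j}=P_{h_t}+P_{f_{j-1}}$ minimizing and maximizing the linear functional $e_t$. The functional $e_t$ is constant on $P_{f_{j-1}}$. Therefore these faces are $F_{\min}+P_{f_{j-1}}$ and $F_{\max}+P_{f_{j-1}}$, where $F_{\min}$ and $F_{\max}$ are the corresponding faces of $P_{h_t}$. They lie at heights $0$ and $\deg_{x_t}h_t>0$, so they are disjoint. Each of them is a Minkowski sum with $P_{f_{j-1}}$, so by the vertex-monotonicity fact each has at least $|V(P_{f_{j-1}})|\ge 2^{j-1}$ vertices. Vertices of a face are vertices of the polytope, and the two faces are disjoint, so $|V(P_{f_j})|\ge 2\cdot 2^{j-1}=2^j$.

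\textbf{Conclusion.} Taking $j=k$ gives $f_k=f$, so $2^k\le |V(P_f)|\le s$ (the last inequality since $f$ is $s$-sparse), and hence $k\le\log s$.

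The main obstacle is justifying the vertex-monotonicity fact $|V(A+B)|\ge |V(B)|$ for the face Minkowski sums cleanly. The plan is to argue directly: for each vertex $v$ of $B$, pick a linear functional $\lambda$ uniquely maximized at $v$ on $B$, perturb it generically so that it has a unique maximizer $a$ on $A$ while keeping $v$ as the unique maximizer on $B$, and observe that $a+v$ is then a vertex of $A+B$. Distinct $v$ yield distinct vertices, because the maximizer of $\lambda$ on $A+B$ decomposes uniquely as a sum of maximizers.
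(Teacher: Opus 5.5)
Your proof is correct and is essentially the paper's argument. The paper writes $h_{k-j}=\sum_i g_i x_{k-j}^i$ and observes that the lowest and highest faces of $P_{f_{j+1}}$ in the $x_{k-j}$-direction are copies of $P_{f_jg_0}$ and $P_{f_jg_{d_{k-j}}}$. Each of these has at least $|V(P_{f_j})|$ vertices by \Cref{number of vertices is monotone}, which is exactly your $F_{\min}+P_{f_{j-1}}$ and $F_{\max}+P_{f_{j-1}}$ once you apply Ostrowski.

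One small slip is in your base case. The fact that $h_k$ is ``non-constant and depends on $x_k$'' does not by itself exclude a one-point Newton polytope; for example, $h_k=x_k^2$ is a monomial. You need $x_k\nmid h_k$, which holds because $f$ has no monomial divisors, as you already use in the inductive step. Alternatively, start the induction at the empty product, which has $|V|=1$.
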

    We first conclude the proof of the theorem and then prove the claim.
    By the claim, it follows that $2^{k}\leq |V(P_{f_k})|=|V(P_f)|\leq s$, and thus $k\leq \log s$, as we wanted to show.
    Note that we used here that $f$ is $s$-sparse, to get, using \Cref{number of vertices is monotone}, that $|V(P_f)|\leq s$. Note that as $d\log s$ is an upper bound on the number of factors, including multiplicities, the number of divisors of $f$ is at most $2^{d\log s}=s^d$.
\end{proof}

    \begin{proof}[Proof of \Cref{polytopes argument}]
        The proof is by induction of $j$. For $j=1$, the claim follows from  the fact that $f_1=h_k$ is not
        a monomial (as we removed all monomial divisors of $f$).
        
        For the inductive step, suppose that $2^{j}\leq |V(P_{f_j})|$. We will show that $2^{j+1}\leq |V(P_{f_{j+1}})|$.
        To this end, express $h_{k-j}$ as a polynomial in $x_{k-j}$. That is, $h_{k-j}=\sum_{i=0}^{d_{k-j}}{g_ix_{k-j}^{i}}$ where $g_i\in \mathbb{F}[x_{k-j+1},\ldots,x_n]$. By construction, $x_{k-j}\in \textrm{var}(h_{k-j})$,
        so $d_{k-j}=\deg_{x_{k-j}}(h_{k-j})>0$. 
        Moreover, since $x_{k-j} \nmid h_{k-j}$, it must be the case that $g_0 \neq 0$.
        Observe that $f_{j+1}=f_jh_{k-j}=\sum_{i=0}^{d_{k-j}}{f_{j}g_ix_{k-j}^{i}}$, where $f_jg_i\in \mathbb{F}[x_{k-j+1},\ldots,x_n]$.
        We claim the following.
        \begin{enumerate}
            \item Any vertex $u\in V(P_{f_{j}g_{0}})\subset \mathbb{R}^{n+j-k}$ corresponds to a vertex $(0,u)\in \mathbb{R}^{n+j-k+1}$
            of $V(P_{f_{j+1}})$.
            \item Any vertex $v\in V(P_{f_{j}g_{d_{k-j}}})\subset \mathbb{R}^{n+j-k}$ corresponds to a vertex $(d_{k-j},v)\in \mathbb{R}^{n+j-k+1}$
            of $V(P_{f_{j+1}})$.
            \item We have $|V(P_{f_{j}g_{0}})|\ge 2^{j}$ and $|V(P_{f_{j}g_{d_{k-j}}})|\ge 2^{j}$.
        \end{enumerate}
        The first and second claims are clear from the fact that the described sets of vertices are precisely the sets
        of vertices with the lowest/largest value on the first coordinate (that is, the coordinate that corresponds to the
        exponent of $x_{k-j}$), respectively. For the third claim, note that by \Cref{number of vertices is monotone}
        and the induction hypothesis we get
        \begin{equation*}
        |V(P_{f_{j}g_{0}})|\ge |V(P_{f_j})|\ge 2^{j}, \quad \text{and} \quad |V(P_{f_{j}g_{d_{k-j}}})|\ge |V(P_{f_j})|\ge 2^{j} \ .
        \end{equation*}
        Therefore, we showed that overall there are at least $2^{j+1}$ elements in $V(P_{f_{j+1}})$, completing the proof of inductive step.
    \end{proof}

\begin{remark}
    The above theorem is easily seen to be tight, for example by considering $f(x_1,\ldots,x_n) = \prod_{i=1}^{\log s}({{x_i}^d - 1})$
    over $\mathbb{C}$.
\end{remark}

As corollary we obtain the following upper bound on the number of sparse divisors of a product of factors of $(n,s,d)$-polynomials.

    \begin{corollary}\label{general bound on the number of sparse divisors}
        Let $f$ be a product of $\ell$ elements of $\cCF(n,s,d)$.
        The number of $(n,s,d)$-sparse divisors of $f$ that are not divisible by a monomial is at most 
        \begin{equation*}
             \notag
             \sum_{i=0}^{d\log s}\binom {\ell d\log s}{i}\leq s^{d(2+\log \ell)}.
        \end{equation*}
    \end{corollary}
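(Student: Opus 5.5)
Fix an $(n,s,d)$-sparse divisor $h$ of $f$ that is not divisible by any monomial, and write $f=\prod_{t=1}^{\ell} g_t$ with $g_t\mid \hat g_t$ for some $\hat g_t\in\cP(n,s,d)$. The plan is to show that the irreducible factors of $h$ come from a multiset of size at most $\ell d\log s$, and that $h$ itself has at most $d\log s$ irreducible factors counted with multiplicity. The count then follows: $h$ is determined, up to scalars, by a sub-multiset of size at most $d\log s$ of a fixed multiset of size at most $\ell d\log s$, so the number of such $h$ is at most $\sum_{i=0}^{d\log s}\binom{\ell d\log s}{i}$.

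\textbf{Step 1 (the ambient multiset).} Divide $f$ by its largest monomial divisor (a monomial divisor cannot contribute to $h$). Let $M$ be the multiset of non-monomial irreducible factors of $f$, counted with multiplicity. Each such factor divides some $g_t$, and hence divides $\hat g_t$. Apply \Cref{thm:bound on the number of irreducible factors of sparse polynomials} to each $\hat g_t$: it has at most $d\log s$ non-monomial irreducible factors with multiplicity. The non-monomial irreducible factors of $g_t$, with multiplicity, form a sub-multiset of those of $\hat g_t$, so $|M|\le \ell d\log s$.

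\textbf{Step 2 (factors of $h$).} The divisor $h$ is itself $(n,s,d)$-sparse and has no monomial factor. By \Cref{thm:bound on the number of irreducible factors of sparse polynomials} it therefore has at most $d\log s$ irreducible factors with multiplicity. Since $h\mid f$, its multiset of irreducible factors is a sub-multiset of $M$ of size at most $d\log s$. Sub-multisets of a multiset of size $N$ with $i$ elements are at most $\binom{N}{i}$ in number, since each is the image of some $i$-subset of the positions. Taking $N=\ell d\log s$ and summing over $i\le d\log s$ gives the stated sum.

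\textbf{Step 3 (the estimate).} The only real work is the numeric bound. With $N=\ell d\log s$ and $k=d\log s\le N$,
\[\sum_{i=0}^{k}\binom{N}{i}\le (k+1)\binom{N}{k}\le (k+1)\Big(\frac{eN}{k}\Big)^{k}=(d\log s+1)(e\ell)^{d\log s}.\]
Now $(e\ell)^{d\log s}=s^{d\log(e\ell)}=s^{d(\log e+\log\ell)}$ with $\log e<1.45$. It remains to absorb the factor $d\log s+1\le s^{0.55 d}$, which holds for $s\ge 2$ (routine). The degenerate case $s=1$ is trivial, since then there are no non-monomial divisors and the count is $1$. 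Combining gives the bound $s^{d(2+\log\ell)}$.

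I expect the main obstacle to be this final inequality, in particular the small cases. If the margin is tight, I would fall back on the cruder estimate $\sum_{i\le k}\binom{N}{i}\le (N+1)^{k}$. This gives $(\ell d\log s+1)^{d\log s}$, which is weaker, so the estimate in Step 3 is the primary route.
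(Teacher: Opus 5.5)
Your Steps 1 and 2 are the paper's argument. Applying \Cref{thm:bound on the number of irreducible factors of sparse polynomials} to a sparse multiple of each of the $\ell$ factors bounds the non-monomial irreducible factors of $f$ by $\ell d\log s$. Applying it to the sparse divisor $h$ itself shows that $h$ corresponds to a sub-multiset of size at most $d\log s$. That part is correct and gives the sum $\sum_{i\le d\log s}\binom{\ell d\log s}{i}$.

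Step 3, however, does not prove the stated inequality, and the problem is not confined to checking small cases. Write $k=d\log s$, so that $N=\ell k$ and the target $s^{d(2+\log\ell)}$ equals $(4\ell)^k$.

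\emph{First inequality.} The bound $\sum_{i\le k}\binom{N}{i}\le (k+1)\binom{N}{k}$ needs $\binom{N}{k}$ to be the largest term, i.e.\ $k\le N/2$. So it holds only for $\ell\ge 2$. For $\ell=1$ the left side is $2^k$ while the right side is $k+1$. The paper splits off $\ell=1$, where the sum is exactly $s^d$.

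\emph{Absorption step.} The claim $d\log s+1\le s^{0.55d}$ is false for small $d\log s$; for example, $s=2$, $d=1$ gives $2\le 2^{0.55}\approx 1.46$. Worse, your intermediate bound $(k+1)(e\ell)^k$ itself exceeds $(4\ell)^k$ whenever $0<k\le 4$, since $(k+1)(e/4)^k>1$ there. So no choice of absorption can rescue this route in that range, even though the true sum is fine (e.g.\ $1+\ell\le 4\ell$ when $k=1$).

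\emph{Repair.} Use the standard estimate $\sum_{i\le k}\binom{N}{i}\le (eN/k)^k$, valid for all $k\le N$. With $\theta=k/N=1/\ell\le 1$,
\[
\theta^{k}\sum_{i\le k}\binom{N}{i}\;\le\;\sum_{i\le k}\binom{N}{i}\theta^{i}\;\le\;(1+\theta)^{N}\;\le\; e^{\theta N}=e^{k}.
\]
Hence the count is at most $(e\ell)^{d\log s}=s^{d(\log e+\log\ell)}\le s^{d(2+\log\ell)}$. There is no polynomial prefactor to absorb and no case split on $\ell$, $s$ or $d$.
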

    \begin{proof}
        We assume without loss of generality that $f$ is not divisible by any monomial.

        Suppose that $f=\prod_{i=1}^{\ell}{\phi_i}$ where $\phi_i\in \cCF(n,s,d)$, and denote by $g_i$ an $(n,s,d)$-sparse polynomial that $\phi_i$ divides.
        
        By \Cref{thm:bound on the number of irreducible factors of sparse polynomials}, each $g_i$ has at most $d\log s$ irreducible
        factors (we repeat a factor several times according to its multiplicity), so this must also be the case for $\phi_i$. Therefore, $f$ has at most $\ell d\log s$ irreducible factors, which we denote by $\{\psi_j\}$. We note that this is a multiset where factors may appear several times according to their multiplicity in the factorization of $f$.
          
        We claim that any $(n,s,d)$-sparse divisor of $f$ is a product of at most $d\log s$ such factors $\psi_j$.
        Indeed, any divisor of $f$ is a product of a subset of the $\psi_j$s, and by \Cref{thm:bound on the number of irreducible factors of sparse polynomials}, any such $s$-sparse divisor cannot  be a product of more than $d\log s$ of the $\psi_j$s. Thus, the number of
        $s$-sparse factors of bounded individual degree $d$ of $f$ is bounded from above by        \begin{equation}\label{bound for product}
            \sum_{i=0}^{d\log s}\binom{\ell d\log s}{i}.
        \end{equation}
        Now, if $\ell=1$, \eqref{bound for product} equals $2^{d\log s}=s^d$. If $\ell \ge 2$, we have that $\binom{\ell d\log s}{d\log s}$ is 
        the largest summand in \eqref{bound for product}, and it can be easily seen that our expression \eqref{bound for product} is bounded above by $d\log s\binom{\ell d\log s}{d\log s}$.
        Either way, we get that the number of relevant factors is upper bounded by
        \begin{align*}
            \max\bc{s^d, d\log s \binom{\ell d\log s}{d\log s}} &\leq \max\bc{s^d, d\log s \left(\frac{e\ell d\log s}{d\log s}\right)^{d\log s}}\\ &= \max\bc{s^d, d\log s (e\ell)^{d\log s}}\\
            &=  (d\log s) {s}^{d(\log \ell+\log e)} \\
            &\leq s^{d(2+\log \ell)},
        \end{align*}
        where the last inequality follows from the fact that for every $c>0$ and $x \geq 1$ it holds that $(ce)\ln x \leq x^c$, which
        can be verified by noting that $\frac{\ln x}{x^c}$ is maximized at $x=e^{1/c}$. Applying  this inequality for $c=d(2-\log e)$ and using the fact that $e\ln 2(2 - \log e) > 1$ we obtain the claimed result.
    \end{proof}

\subsection{Finding Sparse Divisors of Sparse Polynomials with Bounded Individual Degree}

In this section we prove \Cref{finding sparse divisors of sparse polynomials}. We recall its statement.

\SparseDivisorFinding*

\begin{proof}
    As in the proof of \Cref{finding multiquadratic factors of a product of sparse polynomials}, we follow the steps of the meta-algorithm of \Cref{sec:meta}, first studying fields such that  $\Char(\F)=0$ or $\Char(\F)>2d$, and then  arbitrary fields. For the sake of presentation, we shall assume that $f\in \cP(n+1,s,d)\subset \mathbb{F}[x_0,\vx]$.
    
    \paragraph{Preprocessing:} This step is performed as in the proof of \Cref{finding multiquadratic factors of a product of sparse polynomials}. We first learn the monomial divisors of $f$. We thus assume $x_0\nmid f$.
    Write $f=\sum_{i=0}^{d}{f_ix_0^i}$, where $f_i\in \mathbb{F}[\vx]$, and note that $f_0\neq 0$ by our assumption.
    Furthermore, $f_0\in \cP(n,s,d)$. Let $\tilde{f}$ denote the normalization of $f$ and $\hat{f}:=f(y,G)$, for $G=G_{(m)}$ with $m=(2nsd)^2$, as given by \Cref{lem:first property of G} for $(n,s,2d)$-sparse polynomials.  Using \Cref{Interpolation of polynomials} and \Cref{thm:polynomial factorization}, obtain the factorization of $\hat{f}=f(y,G)$
        into irreducible polynomials. As $\tilde{f}$ is reversed-monic with respect to $y$, so is $\hat{f}$. By \Cref{cla:reversed-monic bound} it follows that $\hat{f}$ has at most $d$ irreducible factors, and therefore at most $2^d$ divisors overall.
        Since $\hat{f}$ has $O(1)$ variables and degree $\poly(m)$, this step runs in time $\poly(m)$.

    \paragraph{Recursive call:}

    Since $f_0$ is clearly an $(n,s,d)$-sparse polynomial,  we can, by applying one recursive call, find the set of all $s$-sparse divisors of it. Denote this set by $\cF(f_0)$. By \Cref{thm:bound on the number of irreducible factors of sparse polynomials}, we have that $|\cF(f_0)|\leq s^d$.

    \paragraph{Generating, and pruning, a list of candidate solutions: }
    
    By \Cref{cla:normalization of a polynomial}, every divisor $h=\sum_{i=0}^{d}c_ix_0^i$ of $f$ corresponds
    to a divisor $\tilde{h}=\sum_{i=0}^{d}\frac{c_i}{c_0}{f_0}^i{y}^i$ of $\tilde{f}$, and therefore to a divisor
    \[\hat{h} :=\tilde{h}(y,G)=\sum_{i=0}^{d}{\frac{c_i(G)}{c_0(G)}f_0(G)^iy^i}\] 
    of $\hat{f}$.
    As before, we run over all divisors $\hat{h}=\sum_{i=0}^{d}e_iy^i$ of $\hat{f}$. For each such divisor we produce a list $\cF_{\hat{h}}$ of at most $|\cF(f_0)|$ potential candidates for $s$-sparse divisors of $f$ as follows: For each $c\in\cF(f_0)$, let $\hat{e}_i=\left((x_1\cdot \ldots \cdot x_n)^d c\right)(G)\frac{e_i}{f_0^i(G)}$. Using \Cref{lem:first property of G} we sparse interpolate $\hat{e}_0,\hat{e}_1,\ldots, \hat{e}_d$ to $(n,s,2d)$- polynomials $\tilde{e}_0,\tilde{e}_1,\ldots, \tilde{e}_d$. If the interpolation failed, for any reason, then we move to the next $c\in\cF(f_0)$. If the interpolation succeeded then we compute the largest monomial $M$ that divides each of the polynomials $\tilde{e}_0,\tilde{e}_1,\ldots \tilde{e}_d$. If $h=\frac{\sum_{i=0}^{d}{\tilde{e}_ix_0^i}}{M}$ is an $(n,s,d)$-sparse polynomial then we run   \Cref{general divisibility testing} to check whether it divides $f$, and in that case
    we add it to $\cF_{\hat{h}}$.

    We let $\cF'(f_0)$ be those elements of $\cF(f_0)$ that divide $f$ (again, using \Cref{general divisibility testing} to check that).
    
    We return the set $\cF(f)=\cF'(f_0)\cup \left(\cup_{\hat{h}\mid \hat{f}} \cF_{\hat{h}}\right)$, which is the union of all the sets produced by the algorithm.

    \begin{claim}
       $\cF(f)$ is the set of all the divisors $h=\sum_{i=0}^{d}{c_ix_0^i}$ of $f$. 
    \end{claim}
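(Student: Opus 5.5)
The claim has two directions, and I will treat them separately. Throughout, "divisor" means an $s$-sparse divisor of $f$ that is not divisible by any monomial. One direction is soundness: every element of $\cF(f)$ is a genuine divisor. The other is completeness: every genuine divisor $h$ lands in $\cF(f)$.

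Soundness is immediate from the construction. Every element added to $\cF_{\hat h}$ or to $\cF'(f_0)$ has passed two tests. It is $(n+1,s,d)$-sparse (the elements of $\cF(f_0)$ are so by the recursive call), and it has passed the divisibility test of \Cref{general divisibility testing}, applied with $\ell=1$, which is valid over the current field. The only point to add is that the output is not divisible by a monomial. For $\cF_{\hat h}$ this holds because we divided by $M$ and $x_0\nmid h$; for $\cF'(f_0)$ it holds by the recursion.

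For completeness, let $h=\sum_{i=0}^{d}c_ix_0^i$ be a divisor of $f$ with no monomial factor. I split into two cases.

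\emph{Case $\deg_{x_0}h=0$.} Then $h=c_0$ divides $f_0$. Hence $h\in\cF(f_0)$, and it survives the pruning into $\cF'(f_0)$.

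\emph{Case $\deg_{x_0}h>0$.} By \Cref{cla:normalization of a polynomial}, $\hat h=\sum_i \frac{c_i(G)}{c_0(G)}f_0(G)^iy^i$ divides $\hat f$, so the algorithm enumerates it. Since $c_0\neq 0$ is a divisor of the $(n,s,d)$-sparse polynomial $f_0$, it is $s$-sparse; this follows from the truncation argument of \Cref{lem:closure under taking free terms}, because $c_0$ divides the trailing coefficient of $f$. Moreover $c_0$ equals $Nc$ for some monomial $N$ and some $c\in\cF(f_0)$ not divisible by a monomial. Taking this $c$, we get
\[
\hat e_i=\bigl((x_1\cdots x_n)^d c\, c_i/c_0\bigr)(G)=\bigl((x_1\cdots x_n)^d c_i/N\bigr)(G).
\]
Now $N$ has individual degrees at most $d$, since $c_0\mid f_0$ and $f_0$ has individual degree at most $d$. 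So $(x_1\cdots x_n)^dc_i/N$ is a genuine polynomial. It is $s$-sparse and has individual degree at most $2d$. By \Cref{lem:first property of G}, with $m$ chosen for $(n,s,2d)$, the interpolation of $\hat e_i$ returns exactly $\tilde e_i=(x_1\cdots x_n)^dc_i/N$.

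The main obstacle is checking that dividing by the largest common monomial $M$ of the $\tilde e_i$ recovers $h$ exactly, rather than some monomial multiple of it. Since $\sum_i\tilde e_ix_0^i=\frac{(x_1\cdots x_n)^d}{N}h$ and $h$ has no monomial divisor, the largest monomial dividing every $\tilde e_i$ is exactly $(x_1\cdots x_n)^d/N$. Dividing by it therefore yields $h$ on the nose. It then passes the sparsity check, passes the divisibility check, and enters $\cF_{\hat h}$.

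A secondary point to note is the effect of other choices of $c$, or of failed interpolations. These only produce spurious candidates, and pruning removes them, so they do not affect the claim.
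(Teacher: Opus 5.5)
Your argument is the paper's argument. You write $c_0=Nc$ with $c\in\cF(f_0)$, note that $\hat e_i=\bigl((x_1\cdots x_n)^d c_i/N\bigr)(G)$ is the image of an $(n,s,2d)$-sparse polynomial, and interpolate it via \Cref{lem:first property of G}. You are more careful than the paper on two points it leaves implicit. One is the case $\deg_{x_0}h=0$. The other is the check that dividing by the largest common monomial of the $\tilde e_i$ returns exactly $h$, which uses that $h$ has no monomial factor.

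One justification is wrong, however. You claim $c_0$ is $s$-sparse \emph{because} it divides the $(n,s,d)$-sparse polynomial $f_0$, and you cite \Cref{lem:closure under taking free terms}. Divisors of $s$-sparse polynomials are not $s$-sparse in general; the best known bound is the one in \Cref{thm: BSV bound}. The truncation argument in that lemma only shows that $c_0$ divides $f_0$; it says nothing about sparsity.

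The correct reason is elementary. Each $c_i$ is the coefficient of $x_0^i$ in the $s$-sparse polynomial $h$, so its monomials are distinct monomials of $h$. Hence every $c_i$ is $s$-sparse, in particular $c_0$ and therefore $c=c_0/N$. You also need this same fact, without stating it, when you assert that $(x_1\cdots x_n)^d c_i/N$ is $s$-sparse. With this fix the proof is complete.
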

    \begin{proof}
        Since $\cF(f)$ only contains divisors of $f$, we need to show that it contains all the divisors of $f$.
        
        Recall that every divisor $h=\sum_{i=0}^{d}{c_ix_0^i}$ of $f$ gives rise to a divisor        $\hat{h}=\sum_{i=0}^{d}{\frac{c_i(G)}{c_0(G)}f_0^i(G)y^i}$ of $\hat{f}$. Moreover, $c_0$ is an $(n,s,d)$-sparse polynomial satisfying $c_0 \mid f_0$. Therefore, for if $M$ is the largest monomial dividing $c_0$ then $c_0=Mc$ for some  $c \in \cF$.
        
        Denote $e_i=\frac{c_i(G)}{c_0(G)}f_0^i(G)$, so that $\hat{h}=\sum_{i=0}^{d}e_iy^i$. Observe that for $\tilde{M}=(x_1\cdot \ldots \cdot x_n)^d/M$ we have
        \[\hat{e}_i=\left((x_1\cdot \ldots \cdot x_n)^d c\right)(G)\frac{e_i}{f_0^i(G)}=\tilde{M}(G)\cdot \frac{c_0(G) \frac{c_i(G)}{c_0(G)}f_0^i(G)}{f_0^i(G)}=\tilde{M}(G)c_i(G).\]
        Since $c_i$ is an $(n,s,d)$-sparse polynomial and $\tilde{M}$ is a monomial of individual degree bounded by $d$, $\tilde{M}c_i$
        is an $(n,s,2d)$-sparse polynomial.
        Therefore, \Cref{lem:first property of G} will return the polynomial $\tilde{M}c_i$ when run on $\hat{e}_i$. Consequently, we will add $h=\sum_{i=0}^{d}c_i x_0^i$ to $\cF_{\hat{h}}$ and hence to $\cF(f)$.
    \end{proof}

\paragraph{Arbitrary fields:}

   The assumption on the field is used only to invoke \Cref{general divisibility testing} for pairs of $(n+1,s,d)$-sparse polynomials. \cite[Lemma 42]{volkovich2017some} gives a deterministic algorithm for divisibility testing of pairs of $(n+1,s,d)$-sparse polynomials whose  running time is $\poly(n,d!,s^d)$. Therefore,
    by replacing \Cref{general divisibility testing} with this result, we can remove the assumption on the field.

\paragraph{Running time:}
    As before, the preprocessing step requires $\poly(nsd)$ time. \Cref{lem:first property of G} requires $\poly(nsd)$ time.
    \Cref{general divisibility testing} and \cite[Lemma 42]{volkovich2017some}  take $\poly(n, d!, s^d)$ time. Since we make one recursive call the total running time is $\poly(n, d!, s^d)$. 
    \end{proof}

\begin{remark}
    The algorithm above solves another rational interpolation problem. Suppose we have blackbox access to
    $\frac{af_0}{b}$, where $f_0$ is a known $s$-sparse polynomial of bounded individual degree $d$, and $a,b$ are unknown, relatively prime, $(n,s,d)$-sparse polynomials. Suppose further that $b|f_0$. Then it is possible to recover $a,b$ in $\poly(n,d!,s^d)$-time.
    This algorithm works over arbitrary fields as well. Indeed, we first compute all the divisors of $f_0$ using \Cref{finding sparse divisors of sparse polynomials}  and then proceed as above.
\end{remark}

As a corollary we obtain the following result that settles \Cref{q:factoring-sparse-irred-product} for the case of bounded individual degree polynomials. We note that this result also follows from the techniques of earlier work such as \cite{bhargava2020deterministic,volkovich2015deterministically}, but we explicitly state it as we could not find it in the literature.

\begin{corollary}[{\bf \Cref{q:factoring-sparse-irred-product} for $(n,s,d)$-sparse polynomials}]\label{cor:factoring-product-sparse}
    Let $\mathbb{F}$ be a field.
   There exists a deterministic, $\poly(n,d!,s^{d})$-time algorithm that given an $(n,s,d)$-sparse polynomial $f$ (over $\mathbb{F})$ that factors as a product of irreducible $s$-sparse polynomials $f=\prod g_i^{e_i}$, outputs all the $g_i$s. Moreover, there is a deterministic algorithm for computing the multiplicities $e_i$, where:
    If $\Char(\F)=0$ or $\Char(\F)>2d$, the algorithm runs in time $\poly(n,d!,s^d)$.
    If $\mathbb{F}$ is arbitrary, the algorithm runs in time $\poly(n,(d^2)!,s^{d^3})$.
\end{corollary}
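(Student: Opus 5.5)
The plan is to run the algorithm of \Cref{finding sparse divisors of sparse polynomials} on $f$. After removing the largest monomial divisor using \Cref{cor:monomial-div}, it returns the list $\cF(f)$ of all $s$-sparse divisors of $f$ that are not divisible by a monomial. By \Cref{thm:bound on the number of irreducible factors of sparse polynomials} this list has at most $s^d$ elements, and producing it takes $\poly(n,d!,s^d)$ time over any field.

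By hypothesis every irreducible factor $g_i$ is $s$-sparse. Each non-monomial $g_i$ is therefore in $\cF(f)$, and a monomial $g_i$ must be a single variable $x_j$, which appears in the monomial divisor already computed. It remains to pick out the irreducible elements of $\cF(f)$.

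\begin{itemize}
\item An irreducible element $h\in\cF(f)$ is one divisible by no other non-associate element of the list. Indeed, if $h$ were reducible it would be divisible by some $g_i$ with $g_i\neq h$. That $g_i$ is in the list, because it is irreducible, $s$-sparse, and not a monomial (since $h$ is not divisible by a monomial).
\item Divisibility between two list elements can be tested deterministically over any field in $\poly(n,d!,s^d)$ time with \cite[Lemma 42]{volkovich2017some}, since both are $(n,s,d)$-sparse. This is the same substitution used in the proof of \Cref{finding sparse divisors of sparse polynomials}.
\item Doing this for all $O(s^{2d})$ pairs keeps the total time at $\poly(n,d!,s^d)$.
\end{itemize}

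For the multiplicities $e_i$ there are two cases. If $\Char(\F)=0$ or $\Char(\F)>2d$, apply \Cref{Recovering multiplicity of general factors} with $\ell=1$ and $\phi=g_i\in\cP(n,s,d)$, which takes $\poly(n,d!,s^d)$ time. Over an arbitrary field, apply the ``moreover'' part of \Cref{lem:Recovering multiplicity of primitive divisors}. Since $g_i$ is an irreducible $s$-sparse polynomial and $f\in\cP(n,s,d)$, that lemma returns $v_{g_i}(f)$ in $\poly(n,(d^2)!,s^{d^3})$ time. For variables $x_j$ the multiplicity comes directly from \Cref{cor:monomial-div}.

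The one delicate step is justifying the irreducibility filter. It relies on all irreducible factors being sparse and non-monomial, so that they are all present in $\cF(f)$. Associates need care here, since each divisor may appear up to scalars; I would fix a normalization, for example making the leading coefficient $1$.

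Alternatively, one can get the multiplicities with no characteristic assumption and fewer tools. Repeatedly divide $f$ by $g_i$: each quotient is still a divisor of $f$, so it is $s$-sparse only if the remaining factors are. Testing divisibility with the same sparse test up to $d$ times recovers $e_i$ in $\poly(n,d!,s^d)$ time. This gives a better bound than the one stated, so the weaker stated bound certainly holds either way.
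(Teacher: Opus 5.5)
Your main argument is correct and is the paper's proof. You enumerate the $s$-sparse divisors via \Cref{finding sparse divisors of sparse polynomials}, keep those divisible by no other list element using the test of \cite[Lemma 42]{volkovich2017some}, and get multiplicities from \Cref{Recovering multiplicity of general factors} or the ``moreover'' part of \Cref{lem:Recovering multiplicity of primitive divisors}. One small slip: the constant $1$ is in the list and divides everything, so the filter must range over non-constant elements only, as the paper states.

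The closing ``alternative'' for the multiplicities is not justified, and you should drop the claimed $\poly(n,d!,s^d)$ bound over arbitrary fields. The quotients $f/g_i^k$ are divisors of $f$, but nothing forces them to be $s$-sparse. For example, take $f=x^d-1$ over $\mathbb{C}$, so $s=2$ and every irreducible factor is $2$-sparse; then $f/(x-1)=1+x+\cdots+x^{d-1}$ is $d$-sparse. In general the only a priori sparsity bound for such quotients is the $s^{O(d^2\log n)}$ of \Cref{thm: BSV bound}. The test of \cite[Lemma 42]{volkovich2017some} needs both inputs to be $(n,s,d)$-sparse, so you cannot apply it to these quotients at the claimed cost.

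A sound version of your idea is to test $g_i^k\mid f$ directly. You only need to try $k$ with $k\deg_{x_j}(g_i)\le d$ for all $j$, since otherwise $g_i^k\nmid f$ trivially. For those $k$, $g_i^k$ is $(n,s^d,d)$-sparse, and the same test costs $\poly(n,d!,s^{d^2})$. That is within the stated arbitrary-field bound, but it is not $\poly(n,d!,s^d)$. Since you offered this only as an alternative, the corollary still follows from your main route.
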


\begin{proof}
    Computing the multiplicities $k_i$ of $x_i$ as factors of $f$ is trivial as $f$ is given to us (it is $(n,s,d)$-sparse). Therefore, we may assume that $f$ is not divisible by any monomial.
    By \Cref{finding sparse divisors of sparse polynomials} we can find the set $\cF$ of all the $s$-sparse divisors of $f$. By \Cref{thm:bound on the number of irreducible factors of sparse polynomials}, we have that $|\cF|\leq s^d$. As all irreducible factors $g_i$ of $f$ are $s$-sparse, $g_i\in \cF$.
    Thus, we are left with the problem of identifying the $g_i$s from $\cF$. To that end, for every pair of two (non-constant) elements of $\cF$, use the divisibility testing of \cite{volkovich2017some} to check if one divides the other. All non-constant elements of $\cF$ that are not divisible by any other member of $\cF$ are clearly the desired irreducible factors $g_i$.

    As for the "moreover" part, we compute multiplicities using \Cref{Recovering multiplicity of general factors} if $\Char(\F)=0$ or $\Char(\F)>2d$; and \Cref{lem:Recovering multiplicity of primitive divisors} for arbitrary fields.
\end{proof}

\subsection{Factoring a Product of Sparse Irreducible Polynomials}\label{sec:factor-product-sparse-irred}

In this subsection we show our partial solution to the open problem from \cite{DuttaST24} (\Cref{q:dst-factor-product-nsd}).
As shown in the previous sections, it is natural to consider products of factors of sparse polynomials, as well as to seek a larger class of sparse divisors beyond the irreducible ones. Accordingly, we deduce our algorithm for solving \Cref{q:dst-factor-product-nsd} from the following more general theorem.

\generalfactorizationalgorithm*

\begin{proof}

    We again use the meta-algorithm of \Cref{sec:meta}. 
    Let $f\in \mathbb{F}[x_0,\vx]$ be a product of $\ell$ polynomials in $\cCF(n+1,s,d)$.

    \paragraph{Preprocessing:} This step is identical to the corresponding step in the proof of \Cref{finding multiquadratic factors of a product of sparse polynomials}, where we use $G=G_{(m)}$ for $m=(2nds)^2$, in the definition of $\hat{f}$.  Note that $\tilde{f}$ has at most $\ell d$ irreducible factors (as polynomials in $y$). Hence, it has $O((\ell d)^d)$ divisors of degree at most $d$ in $y$.

\paragraph{Recursive call:}

Since $f_0$ is a product of $\ell$ polynomials in $\cCF(n,s,d)$ (by \Cref{lem:closure under taking free terms}), we can find the set of all $(n,s,d)$-sparse divisors of $f_0$ not divisible by a monomial by applying one recursive call. Denote this set by $\cF(f_0)$. By \Cref{general bound on the number of sparse divisors}, we have that $|\cF(f_0)|\leq \poly(s^{d\log \ell})$.

\paragraph{Generating, and pruning, a list of candidate solutions: }

This step is identical to the corresponding step in the proof of \Cref{finding sparse divisors of sparse polynomials}. Indeed, generating the set of candidates from $\cF(f_0)$ can be done in a similar manner, when we consider divisors $\hat{h}$ of $\hat{f}$ of degree at most $d$ in $y$, as we are interested only in divisors of bounded individual degree $d$. Pruning can be done again by invoking \Cref{general divisibility testing}.

\paragraph{Arbitrary fields:}

    The only problem with extending the proof to arbitrary fields is the use of \Cref{general divisibility testing}. Therefore, for fields with small characteristic, we assume that  $f$ is a product of $s$-sparse polynomials. In this case we can apply the "moreover" part of \Cref{general divisibility testing} and the argument goes through.
    
    \paragraph{Running time:}

    As in the previous proofs, the running time is determined by the size of the lists and the running time of \Cref{general divisibility testing}. The claim bound follows.
\end{proof}

We finish with the following straightforward corollary of the last theorem, which gives the promised partial solution to \Cref{q:dst-factor-product-nsd}.
\OpenProblemTheorem*

\begin{proof}
    By \Cref{General factorization algorithm} we can find in deterministic $\poly(n,d^d,s^{d\log \ell},\ell^d)$-time
    all the $s$-sparse divisors of $f$ of bounded individual degree $d$. Applying \Cref{general divisibility testing}, we can determine, for any two candidates $\phi$ and $\psi$, whether $\psi \mid \phi$. Since every irreducible factor of $f$ is $s$-sparse with individual degree at most $d$, it follows that the irreducible factors are precisely those candidates $\phi$ that are not divisible by any other candidate $\psi$. To compute the multiplicity of each obtained irreducible factor as a factor of $f$, just invoke \Cref{Recovering multiplicity of general factors}.

    As for the "moreover" part, use the "moreover" part of \Cref{General factorization algorithm} to get all the $(n,s,d)$-sparse divisors of $f$ in $\poly(n,{(d^2)}!,s^{d^3},\ell^d)$-time. Replace  \Cref{general divisibility testing} by its "moreover" part (or by the divisibility test of     \cite{volkovich2017some}) to check for every two candidates if one divides the other, and replace the use of \Cref{Recovering multiplicity of general factors} by the "moreover" part of \Cref{lem:Recovering multiplicity of primitive divisors} to compute multiplicities.
\end{proof}

\section{Factoring $(n,s,d)$-Sparse Polynomials}\label{sec:BSV based results}
In this section we apply our methods to obtain factorization results for  $(n,s,d)$-sparse polynomials, without any extra assumptions on the sparsity of their factors. For that, we rely on the sparsity upper bound of \Cref{thm: BSV bound}.

As discussed in \Cref{sec:BSV overview}, we give two algorithms: one for factoring a single $(n,s,d)$-sparse polynomial, and one for factoring
a product of such polynomials. The  first algorithm runs in time  $\poly(n,S)$ (where $S$ is the bound of \Cref{thm: BSV bound}), while the second runs in $\poly(n,d!,S^d)$-time. As we shall shortly see, the main reason for this difference is that it is easy to check if a given $S$-sparse
polynomial divides the input polynomial in case it is $s$-sparse, since in this case the quotient should be $S$-sparse as well
and therefore can be interpolated and reconstructed, but this is no longer the case for divisors of products of $s$-sparse polynomials. 

\subsection{Factorization of Sparse Polynomials}
In this subsection we prove the following theorem, which is an improvement to the algorithm of \Cref{thm: BSV factorization algorithm}.

\SparseFactorization*

Observe that, given only the bound in \Cref{thm: BSV bound}, this is the best one can hope for.
Furthermore, as can be easily seen from the proof, an improvement to \Cref{thm: BSV bound} automatically implies an improvement to the running time of our algorithm: if the bound in \Cref{thm: BSV bound} is improved to a smaller bound $M$, then the running time of our algorithm will become
$\poly(n,M,s^d)$.

\begin{proof}

    The proof follows the lines of the proof of \Cref{finding sparse divisors of sparse polynomials}, with some small modifications that we shall explain. The algorithm is the same for any field $\F$, regardless of the characteristic.
    
    Denote by $S$ an explicit sparsity bound that follows from \Cref{thm: BSV bound} for $(n+1,s,d)$-sparse polynomials, $S=\poly(n,s^{d^2\log n})$.
    
    \paragraph{Preprocessing step:}

    We perform the same preprocessing step as in the proof of \Cref{finding sparse divisors of sparse polynomials}, where we define $\hat{f}$ with respect to the generator $G=G_{(m)}$ for $m=(2ndS)^2$.
    
    \paragraph{Recursive call:} Since $f_0$ is an $(n,s,d)$-sparse polynomial, a recursive call to the algorithm will return a list $\cF'(f_0)$ containing its irreducible factors. Note that these factors may have sparsity $S$. By \Cref{thm:bound on the number of irreducible factors of sparse polynomials} the list has size at most $d\log s$, including multiplicities, when removing the variables it might contain (that is, we consider $\cF'(f_0)\setminus \{x_i\}_{1\leq i\leq n}$ instead of $\cF'(f_0)$). Therefore, $f_0$ has at most $s^d$ divisors not divisible by a monomial, that can be enumerated, and interpolated, in time $s^d\poly(S)$, by going over all subsets of irreducible divisors (taking multiplicities into account). We denote the list of all divisors not divisible by a monomial of $f_0$ by $\cF(f_0)$.

    \paragraph{Generating, and pruning, a list of candidate solutions: }
    
    Generating the list of candidates is done as in the proof of \Cref{finding sparse divisors of sparse polynomials}. Indeed,  $\cF(f_0)$ contains all the divisors (not divisible by a monomial) of $f_0$, so we can proceed in exactly the same way to find a list of candidates for divisors of $f$ (not divisible by a monomial). Regarding pruning, we note that in order to decide if a candidate polynomial $g$ is a divisor of $f$, we don't need to use \Cref{general divisibility testing}. Indeed, if $g\mid f$, then $f/g$ is $S$-sparse. Therefore, we can interpolate $f(G)/g(G)$ as an $S$-sparse polynomial $h$ and check whether $g\cdot h=f$. 
    
    This shows how to find all the divisors of $f$ not divisible by a monomial. To conclude the proof, we need to show how to use this information to get the factorization of $f$ into irreducible polynomials.
    We first show how to identify the irreducible factors among the divisors that we have   found. 
    Observe that if a divisor $g$ is reducible, then its factors are also divisors of $f$. We can therefore check whether there exists a pair of divisors $h_1,h_2$ such that $g=h_1\cdot h_2$, and return that $g$ is irreducible if no such $h_1,h_2$ were found.
    
    \paragraph{Computing the multiplicities: }
    So far, we have the list of irreducible factors of $f$ that are not monomials. Since the largest monomial dividing $f$ was computed in the preprocessing step, we are left with the problem of computing the multiplicity of each of the irreducible factors of $f$. This can be done as in the previous step. Going over all values of $k\geq 0$ one by one we check whether there exists a divisor $h$ of $f$ such that  $h\cdot g^k=f$. We stop at the first $k$ where such a divisor does not exist and return that the multiplicity of $g$ is $k-1$. Since, for $i<k$, all $g^i$ are divisors of $f$, their sparsities are all $\poly(S)$, hence this algorithm runs in $\poly(S)$-time.
    
    \paragraph{Running time:}
    It is clear that the running time is polynomial in the sparsity $S$, the number of variables $n$, and the size of the lists which is upper bounded by $s^d$.    
\end{proof}

\subsection{Factorization of Products of Sparse Polynomials}
We now prove the following theorem. Note that this theorem is different than \Cref{General factorization algorithm}; while \Cref{General factorization algorithm} finds only the $s$-sparse divisors of a product of polynomials in $\cCF(n,s,d)$, \Cref{improved BSV for products} computes the factorization of such products into irreducible factors, even if these are not $s$-sparse.

\SparseProductFactorization*

\begin{proof}

    Denote by $S=s^{O(d^2\log (n+1))}$ the sparsity bound given by \Cref{thm: BSV bound} for $(n+1,s,d)$-sparse polynomials.
    
    We follow the lines of the proof of \Cref{finding multiquadratic factors of a product of sparse polynomials}, but combine it the sparsity bound of \Cref{thm: BSV bound}.

    \paragraph{Preprocessing:} This is the same as in the proof of \Cref{finding multiquadratic factors of a product of sparse polynomials}, where $\hat{f}$ is defined as $\hat{f}=\tilde{f}(y,G)$, where $G=G_{(m)}$. Here, $m$ is as given by \Cref{thm:multi rational interpolation theorem} for $(n,S,d)$-sparse polynomials (that is, $m=\poly(n,d!,S^d)=\poly(n,d!,s^{d^3\log n})$). By \Cref{cla:reversed-monic bound}, $\hat{f}$ has at most $\ell d$ irreducible factors (as polynomials in $y$), and hence at most $O((\ell d)^d)$ divisors of degree at most $d$ in $y$.

    \paragraph{Recursive call:} By \Cref{lem:closure under taking free terms}, $f_0$ is a product of $\ell$ elements of $\cCF(n,s,d)$. Thus, by making one recursive call, we obtain its factorization into irreducible polynomials. Denote by $\cF(f_0)$ the set of irreducible factors of $f_0$.

    \paragraph{Generating, and pruning, a list of candidate solutions: }

    Since we have all the irreducible factors of $f_0$, we can proceed as in the proof of \Cref{finding multiquadratic factors of a product of sparse polynomials} and obtain a list of $(n,S,d)$-sparse  divisors of $f$, using  \Cref{thm:multi rational interpolation theorem} and \Cref{general divisibility testing}.

    \paragraph{Pruning the list:}

    We now have a set of candidates, all of which are $S$-sparse divisors of $f$. Note that if $g$ is such a divisor that is not irreducible, then every irreducible factor of $g$ must also be in the list. Therefore, to identify the irreducible factors, we invoke \Cref{general divisibility testing} (for $S$-sparse polynomials) to each pair of candidates $\phi, \psi$ and check if $\psi|\phi$. The candidates $\phi$ for which there are no other candidates that divide them are the desired irreducible factors.

    \paragraph{Computing the multiplicities:}

    We compute the multiplicities of the irreducible factors using  \Cref{Recovering multiplicity of general factors}.

    \paragraph{Arbitrary fields:}

    To obtain the result we replace the use of \Cref{thm:multi rational interpolation theorem} and \Cref{general divisibility testing} by their respective “moreover” parts. We also use \Cref{lem:Recovering multiplicity of primitive divisors} instead of \Cref{Recovering multiplicity of general factors}.

    \paragraph{Running time:}

    First consider fields of zero or large characteristic.
    The running time is determined by the number of divisors of $\hat{F}$, and the application of \Cref{thm:multi rational interpolation theorem},  \Cref{general divisibility testing}, and \Cref{Recovering multiplicity of general factors}
    for $(n+1,S,d)$-sparse polynomials. This yields the claimed $\poly(n,d!,S^d,(d\ell)^d)=\poly(n,s^{d^3\log n},(d\ell)^d)$ running time. 

    For arbitrary fields, the "moreover" parts of  \Cref{thm:multi rational interpolation theorem} and \Cref{general divisibility testing} and \Cref{lem:Recovering multiplicity of primitive divisors} imply a running time of $\poly(n,(d^2)!,S^{d^3},(d\ell)^d)=\poly(n,(d^2)!,s^{d^5\log n},\ell^d)$
\end{proof}

\begin{remark}
    In \Cref{thm:factor-of-product-of-nsd} we gave a $\poly(n,d^d,s^{d\log \ell},\ell^d)$-time algorithm for this problem, with the extra assumption that the irreducible
    factors of $f$ are in fact in $\cP(n,s,d)$. Hence, \Cref{improved BSV for products} is more general, and in fact gives a $\poly(n, s^{d^3\log n},(d\ell)^d)$ algorithm for \Cref{thm:factor-of-product-of-nsd}. That is, we have two algorithms for solving \Cref{q:dst-factor-product-nsd}: one is quasi-polynomial in the length of
    the product and polynomial in the number of variables, and the other is quasi-polynomial in the number of variables and polynomial in the number of terms in the product.
\end{remark}

\section{Future Work and Open Problems}\label{Section:open}
We conclude with a discussion of possible directions for future research.

First, it is somewhat disappointing that our main technical lemma,
\Cref{lem:G always preserves coprimality of factors},
is only valid over fields of zero or sufficiently large characteristic.
It would be interesting to extend this result to arbitrary fields.
As discussed in \Cref{Section 4}, some results are known in this direction;
the remaining challenge is to eliminate the need for the additional
non-degeneracy condition required in small characteristic.

Next, it would be extremely interesting to identify the irreducible factors
in \Cref{finding sparse divisors of sparse polynomials}.
In other words, one would like an efficient algorithm that, given an
$(n,s,d)$-sparse polynomial $f$, decides whether $f$ is irreducible
in time $\poly(n,s^d)$.
The techniques developed in this paper appear insufficient for addressing
this problem, as they only provide information about the coprimality of
$\phi(G_{(m,i)})$ and $\psi(G_{(m,i)})$ for $\phi,\psi \in \cCF(n,s,d)$,
and do not yield information about the number of irreducible factors.

It would also be interesting to generalize our rational interpolation theorem.
For instance, even the seemingly basic problem of deterministically
interpolating a quotient of coprime sparse polynomials in polynomial time
remains wide open.

Obtaining a polynomial-time algorithm for
\Cref{q:dst-factor-product-nsd}, thereby improving
\Cref{thm:factor-of-product-of-nsd}, also appears to require new ideas.
As observed throughout the paper, the main difficulty is that $c_0$ may
contain non-sparse irreducible factors, forcing the recursive call on $f_0$
to enumerate all possibilities and leading to a quasi-polynomial running time.
By contrast, the polynomial-time algorithm of
\Cref{finding multiquadratic factors of a product of sparse polynomials}
relies on the fact that $c_0$ (or $b$ in that setting) factors only into
sparse polynomials, due to
\Cref{factors of multiquadratic sparse polynomials are sparse}.
This structure, together with the ability to determine multiplicities via
\Cref{G preserves coprimality of factors when one factor is sparse},
suffices to obtain a polynomial-time algorithm.

Finally, any improvement to the bound in \Cref{thm: BSV bound} would directly
strengthen the results of \Cref{sec:BSV based results}, and would 
be of significant interest.

\bibliographystyle{alpha}
\bibliography{sources}

\appendix
\section{An Alternative Proof of a Result from \Cref{Section 4}}\label{appendixA}
In this section we describe an alternative proof of
\Cref{G preserves coprimality of factors when one factor is sparse}. While this lemma is valid over arbitrary fields, the proof presented here works only for zero characteristic fields or when the characteristic is larger than the individual degree. Thus, both \Cref{G preserves coprimality of factors when one factor is sparse} and \Cref{lem:G always preserves coprimality of factors} are stronger than what we present here. However, the proof method is slightly different, and we include it here for the possibility it would be useful for other applications.

While the proofs of \Cref{G preserves coprimality of factors when one factor is sparse} and \Cref{lem:G always preserves coprimality of factors} use the rank of the Sylvester matrix to study the factorization of a polynomial, our proof here uses an operator we call $\Delta_k$. The well known discriminant $\Delta$ of a polynomial (defined in \Cref{discriminant}) is an operator that can be used to detect the existence of irreducible factors of multiplicity greater than one. Its generalization $\Delta_k$, that we soon define, is an operator that detects irreducible factors of multiplicity at least $k+1$, for small values of $k$.
The second important property of $\Delta_k$ is that, like the discriminant, it does not  blow-up the sparsity of its input by too much.

We now give the definition of $\Delta_k$.

\begin{definition} \label{higher order discriminant}
    Let $f \in \mathbb{F}[x]$ be a non-constant polynomial. Define
    \begin{equation}
        \Delta_{k}(f)(\lambda_1,\ldots,\lambda_k) = \res(f, \lambda_1f^{(1)} + \ldots + \lambda_{k}{f^{(k)}}) \in \mathbb{F}[\lambda_1,\ldots,\lambda_k]
    \end{equation}
    where $f^{(i)}$ is the $i$-th derivative of $f$.
    
    For $f \in \mathbb{F}[x_0, x_1,\ldots,x_n]$, $\Delta_{k, x_0}(f) \in \mathbb{F}[x_1,\ldots,x_n,\lambda_1,\ldots,\lambda_k]$ is
    just $\Delta_k(f)$ when one views $f$ as an element of $\mathbb{F}(x_1,\ldots,x_n)[x_0]$. 
\end{definition}

\begin{example}
    For $k=1$, one sees that $\Delta_{1} = \lambda_1\Delta$. Thus, for example, $\Delta_1(f) = 0$ if and only
    if $\Delta(f)=0$.
\end{example}

As discussed, one should think of $\Delta_k$ as some generalization of $\Delta$ that captures the existence of factors of multiplicity
at least $k+1$, the same way $\Delta$ captures the existence of factors of multiplicity at least two.
The example justifies this statement for $k=1$; the general statement is given by the following lemma.

\begin{lemma} \label{property of higher discriminants}
    Let $f\in \mathbb{F}[x]$ be a non-constant polynomial of at most degree $d$, and suppose that $\operatorname{char}(\mathbb{F})$ is zero or larger than $d$.
    Then $f$ has an irreducible factor of multiplicity at least $k+1$ if and only if $\Delta_k(f) = 0$.
    In particular, if $f\in \mathbb{F}[x_0,x_1,\ldots,x_n]$ is such that $x_0 \in \textrm{var}(f)$ and $\deg_{x_0}(f)\leq d$, 
    under the same assumption on $\mathbb{F}$,
    then $f$ has an irreducible factor of multiplicity at least $k + 1$ that depends on $x_0$ if and only if $\Delta_{k, x_0}(f)=0$.  
\end{lemma}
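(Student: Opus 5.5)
Since $\Delta_k(f)$ is a resultant, the plan is to turn the statement into a gcd statement. By \Cref{basic resultant properties}, $\Delta_k(f)=0$ in $\F[\lambda_1,\ldots,\lambda_k]$ exactly when $f$ and $L:=\lambda_1f^{(1)}+\cdots+\lambda_kf^{(k)}$ have a common factor of positive degree in $x$ over the field $K=\F(\lambda_1,\ldots,\lambda_k)$. Here we must check that the Sylvester matrix is formed with the generic degree of $L$, which is $d-1$ because of the $\lambda_1$ term and since $\deg f\le d<\operatorname{char}\F$. Also, an irreducible $h\in\F[x]$ stays irreducible in $K[x]$, since $K$ is purely transcendental over $\F$. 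So it suffices to show: $\gcd_{K[x]}(f,L)\neq 1$ if and only if some irreducible $h\in\F[x]$ satisfies $h^{k+1}\mid f$.

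\emph{($\Leftarrow$)} Suppose $h^{k+1}\mid f$. Then $h\mid f^{(j)}$ for every $1\le j\le k$, because differentiating $h^{k+1}g$ a total of $j\le k$ times leaves every term divisible by $h^{k+1-j}$. Hence $h\mid L$, so $h$ is a common factor of $f$ and $L$, and the resultant vanishes.

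\emph{($\Rightarrow$)} Suppose $f$ and $L$ share a common irreducible factor $p\in K[x]$. Since $p\mid f$ and $f\in\F[x]$, we have $p=h$ up to units for some irreducible $h\in\F[x]$ dividing $f$. Write $f=h^eg$ with $h\nmid g$, and suppose for contradiction that $e\le k$. Now use the coefficients of the $\lambda_j$: since $L$ is linear in the $\lambda_j$ with coefficients in $\F[x]$, the condition $h\mid L$ in $K[x]$ means $h\mid f^{(j)}$ for every $j\le k$. In particular $h\mid f^{(e)}$. Compute $f^{(e)}$ by Leibniz: every term except one is divisible by $h$, and the exceptional term is $e!\,(h')^e g$ modulo $h$.

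Now $e!\neq 0$ because $e\le \deg f\le d<\operatorname{char}\F$, or the characteristic is $0$. Also $h'\neq 0$ and $\gcd(h,h')=1$, since $\deg h<\operatorname{char}\F$ and $h$ is irreducible. So $h\nmid e!(h')^eg$, which contradicts $h\mid f^{(e)}$. Therefore $e\ge k+1$. This Leibniz computation modulo $h$ is the main technical step and also the only place the characteristic hypothesis enters. The claim that all other terms are divisible by $h$ needs some care: differentiating $h^e$ exactly $e$ times leaves a residue of $e!(h')^e$ modulo $h$, while differentiating it fewer times leaves a multiple of $h$.

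\emph{The ``in particular'' part.} Apply the univariate statement over $\F(x_1,\ldots,x_n)$, which has the same characteristic, to $f$ viewed as a polynomial in $x_0$. By Gauss's lemma (\Cref{lem:gauss}), the irreducible factors of $f$ over $\F(x_1,\ldots,x_n)[x_0]$ of positive $x_0$-degree correspond, with the same multiplicities, to the irreducible factors of $f$ in $\F[x_0,\vx]$ that depend on $x_0$. Factors not involving $x_0$ are units over $\F(x_1,\ldots,x_n)$ and are therefore invisible to $\Delta_{k,x_0}$. This gives the equivalence as stated.
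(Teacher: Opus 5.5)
Your proof is correct, and for the univariate core it takes a different route from the paper. The paper passes to $\overline{\mathbb{F}}$. It uses the characteristic hypothesis to get that every irreducible factor of $f$ is separable, so an irreducible factor of multiplicity at least $k+1$ is the same thing as a root of multiplicity at least $k+1$. It then makes $f$ monic and writes $\Delta_k(f)=\prod_{f(\alpha)=0}\bigl(\lambda_1f^{(1)}(\alpha)+\cdots+\lambda_kf^{(k)}(\alpha)\bigr)$, a product of linear forms in the $\lambda_j$. This product vanishes iff some form is identically zero, i.e.\ iff $f^{(j)}(\alpha)=0$ for all $j\le k$ at some root $\alpha$, i.e.\ iff $\alpha$ has multiplicity at least $k+1$.

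You instead stay over $K=\mathbb{F}(\lambda_1,\ldots,\lambda_k)$ and argue with gcds rather than roots. Your step ``$h\mid L$ in $K[x]$ iff $h\mid f^{(j)}$ for every $j$'' plays the role of ``a linear form vanishes iff all its coefficients do''. Your Leibniz computation $f^{(e)}\equiv e!\,(h')^e g \pmod h$, together with $\gcd(h,h')=1$, replaces both the separability argument and the derivative criterion for root multiplicity.

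The paper's argument is shorter once the product-over-roots formula for the resultant is available. Yours avoids algebraic closures and that formula entirely. It also makes clear that the characteristic hypothesis is used only in the direction ($\Rightarrow$), through $e!\neq 0$ and $h'\neq 0$, while ($\Leftarrow$) needs no assumption on the characteristic.

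One small slip: the exact degree of $L$ over $K$ is $\deg f-1$, not $d-1$ as you wrote; the two agree only when $\deg f=d$. Your check that $L$ has this exact degree is a point the paper leaves implicit. Your treatment of the multivariate statement via Gauss's lemma matches the paper's.
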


\begin{proof}
    Note that as $\operatorname{char}(\mathbb{F})\in \{0\}\cup [d+1,\infty)$, and $\deg f\leq d$, every irreducible factor $\phi$ of $f$ is separable, that is, it factors to a product of different linear polynomials over $\overline{\mathbb{F}}$. One way to see it is to note that $\phi'\neq 0$ (because of the assumption on $\mathbb{F}$), and therefore $\phi'$ must be coprime to $\phi$, so by \Cref{basic resultant properties} we must have that $\Delta(\phi)=\res(\phi,\phi')\neq 0$, and hence (again by \Cref{basic resultant properties}) $\phi$ is square-free.
    
    Therefore, $f$ has an irreducible factors of multiplicity at least $k+1$ if and only if it has a root of multiplicity at least $k+1$ over $\overline{\mathbb{F}}$. 

    Suppose, without loss of generality, that $f$ is monic. In this case, by \Cref{basic resultant properties}, we know that:
    \begin{equation}
        \Delta_{k}(f) = \res(f, \lambda_1f^{(1)} + \ldots + \lambda_{k}{f^{(k)}}) = 
        \prod_{\alpha: f(\alpha)=0}{\left(\lambda_1f^{(1)}(\alpha) + \ldots + \lambda_{k}{f^{(k)}(\alpha)}\right)}    
    \end{equation}
    That is, $\Delta_{k}(f)$ is a product of linear functionals in $\overline{\mathbb{F}}[\lambda_1,\ldots,\lambda_k]$,
    and therefore vanishes if and only if one of these functionals vanishes.

    To this end, note that for a root $\alpha$ of $f$,
    a functional $\lambda_1f^{(1)}(\alpha) + \ldots + \lambda_{k}{f^{(k)}(\alpha)}$ vanishes
    if and only if $f^{(i)}(\alpha)=0$ for $0\leq i \leq k$,
    that is, if and only if $\alpha$ is a root of multiplicity at least $k+1$ of $f$. This proves the lemma.

    As for the "in particular" part, note that by Gauss' lemma (\Cref{lem:gauss}), irreducible factors of $f\in \mathbb{F}[x_0,x_1,\ldots,x_n]$ that depend on $x_0$
    correspond to irreducible factors of $f$ as an element of $\mathbb{F}(x_1,\ldots,x_n)[x_0]$.
    Therefore, the last statement of the lemma follows easily from the first by working over $\mathbb{F}(x_1,\ldots,x_n)$,
    noting that its characteristic is the characteristic of $\mathbb{F}$.
\end{proof}

\begin{remark}
    The condition on $\deg (f)$ and $\operatorname{char}(\mathbb{F})$ is necessary. For example, consider
    the polynomial $f(x,y) = x^{p}-y \in \mathbb{F}_p[x, y]$ for a prime $p$. $f$ is clearly irreducible, but
    $\Delta_{x}(f)=0$, as $f$ is a square (and, in fact, a $p$-th power) as an element of ${\overline{\mathbb{F}_p(y)}}[x]$.
\end{remark}

The lemma above provides one manner in which $\Delta_k$ generalizes $\Delta$. For our purpose, we need
to study another manner in which such a generalization holds - action on sparse polynomials.

\begin{lemma} \label{higher discriminants preserve sparsity}
    Let $f\in \mathbb{F}[x_0,x_1,\ldots,x_n]$ be an $s$-sparse polynomial with bounded individual degree $d$.
    Then, $\Delta_{k}{f} \in \mathbb{F}[x_1,\ldots,x_n,\lambda_1,\ldots,\lambda_k]$ is a $(2d)!{(ks)}^{2d}$-sparse
    polynomial of individual degree at most $2d^2$.
\end{lemma}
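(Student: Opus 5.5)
The plan is to mimic the proof of \Cref{basic resultant properties}\eqref{item:dim=deg}(a): write $\Delta_{k,x_0}(f)$ as the determinant of the Sylvester matrix $M_{x_0}(f,g)$, where
\[
g=\lambda_1 f^{(1)}+\cdots+\lambda_k f^{(k)},
\]
all derivatives being taken with respect to $x_0$. Then bound the sparsity of that determinant entry by entry.

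First, view $f=\sum_{j=0}^{d} f_j x_0^j$ with $f_j\in\F[\vx]$, and note that $\sum_j f_j$ has at most $s$ monomials in total. The $i$-th derivative is
\[
f^{(i)}=\sum_{j\ge i} j(j-1)\cdots(j-i+1)\, f_j\, x_0^{j-i}.
\]
Hence the coefficient of $x_0^{t}$ in $g$ is
\[
g_t=\sum_{i=1}^{k}\lambda_i\,(t+i)(t+i-1)\cdots(t+1)\, f_{t+i}.
\]
Each $g_t$ has at most $\sum_i \|f_{t+i}\|\le ks$ monomials, since the monomials $\lambda_i\cdot(\text{monomial of } f_{t+i})$ are distinct across different $i$. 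It is also linear in the $\lambda$'s and has individual degree at most $d$ in each $x_j$. Similarly, each $f_t$ is $s$-sparse with individual degree at most $d$.

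Second, $M_{x_0}(f,g)$ is a square matrix of size at most $2d\times 2d$, since $\deg_{x_0} g\le d-1\le d$. Its entries are either zero, some $f_t$, or some $g_t$, so every entry is $ks$-sparse. Expanding the determinant via the Leibniz formula gives at most $(2d)!$ permutation terms. Each term is a product of at most $2d$ entries, so it has at most $(ks)^{2d}$ monomials. This yields the claimed bound of $(2d)!\,(ks)^{2d}$ on the sparsity of $\Delta_{k,x_0}(f)=\res_{x_0}(f,g)$.

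For the individual degree in a variable $x_j$ with $j\ge1$, each term of the expansion is a product of at most $2d$ entries, each of degree at most $d$ in $x_j$, giving the bound $2d\cdot d=2d^2$. For $\lambda_i$, only the (at most $d$) rows coming from $g$ contribute, and each such entry is linear in $\lambda_i$, so the degree in $\lambda_i$ is at most $d\le 2d^2$.

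The only delicate point is the degree conventions when $\deg_{x_0} f$ or $\deg_{x_0} g$ is smaller than the nominal value; for instance, $g$ may have degree less than $d-1$ in $x_0$ because of characteristic or vanishing coefficients. I would handle this exactly as the paper does for ordinary resultants. Either use the Sylvester matrix with respect to the actual degrees in $\F(\vx,\vlambda)[x_0]$, which only shrinks the matrix, or use formal degrees $d$ and $d-1$, which changes the resultant by at most a leading-coefficient factor and does not affect the count. In either case the matrix dimension stays at most $2d$, so the bounds above hold. I expect no further obstacle, since the argument is the routine sparsity count already used in \Cref{basic resultant properties}.
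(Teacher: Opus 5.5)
Your proposal is correct and is essentially the paper's argument. The paper applies item (4)(a) of \Cref{basic resultant properties} directly to $f$ and $\lambda_1 f^{(1)}+\cdots+\lambda_k f^{(k)}$, noting that the latter is $ks$-sparse of individual degree $d$; you instead unpack that item's Leibniz-expansion count on the Sylvester matrix, and you also check explicitly that the degree in each $\lambda_i$ is at most $d$, which the paper leaves implicit.
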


\begin{proof}
    The lemma follows immediately by the \cref{basic resultant properties}, by noting that $\lambda_1f^{(1)}(\alpha) + \ldots + \lambda_{k}{f^{(k)}(\alpha)}$
    is a polynomial of sparsity $ks$ and bounded individual degree $d$.
\end{proof}

We now show how to use $\Delta_k$ to obtain a similar result to \Cref{G preserves coprimality of factors when one factor is sparse}. To do that, we start with the following lemma.

\begin{lemma}\label{G does not  increase multiplicity of largest factor}
    Suppose $\mathbb{F}$ is a field for which $\operatorname{char}(\mathbb{F})\in \{0\} \cup [d+1,\infty)$.
    Let $f$ be an $(n,s,d)$-sparse polynomial defined over $\mathbb{F}$, and fix some coordinate $i$.
    Suppose that $f$ does not  have an irreducible factor of multiplicity at least $k+1$ as a polynomial in $x_i$.
    Then there exists an explicit $m=\poly(n,d!,{(ks)}^{d})$ for which $f(G_{(m,i)})$ also does not 
    have an irreducible factor of multiplicity at least $k+1$ as a polynomial in $u$.
\end{lemma}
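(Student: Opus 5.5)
We will use the operator $\Delta_{k,x_i}$ together with the hitting property of the generator. First, if $x_i\notin\var(f)$ then $f(G_{(m,i)})$ is constant in $u$ and there is nothing to prove, so assume $x_i\in\var(f)$. Since $f$ has no irreducible factor of multiplicity at least $k+1$ as a polynomial in $x_i$, the "in particular" part of \Cref{property of higher discriminants} (applicable because $\deg_{x_i}f\le d$ and $\operatorname{char}(\mathbb{F})\in\{0\}\cup[d+1,\infty)$) gives $\Delta_{k,x_i}(f)\neq 0$ as a polynomial in $\F[\vx\setminus x_i,\lambda_1,\ldots,\lambda_k]$. By \Cref{higher discriminants preserve sparsity}, $\Delta_{k,x_i}(f)$ is $(2d)!(ks)^{2d}$-sparse with individual degree at most $2d^2$.

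Next we substitute the generator into the variables other than $x_i$, keeping the $\lambda_j$ free. View $\Delta_{k,x_i}(f)$ as a polynomial in $\vx\setminus x_i$ with coefficients in $\F[\lambda_1,\ldots,\lambda_k]$; some coefficient of a monomial in the $\lambda$'s is a nonzero polynomial $D\in\F[\vx\setminus x_i]$ that is again $(2d)!(ks)^{2d}$-sparse of individual degree $2d^2$. Taking $m=(n\cdot 2d^2\cdot (2d)!(ks)^{2d})^2=\poly(n,d!,(ks)^d)$, \Cref{lem:first property of G} (applied to ${G}^{\text{KS}}_{(m)}$, which is exactly what $G_{(m,i)}$ substitutes for the coordinates $\neq i$) yields $D({G}^{\text{KS}}_{(m)})\neq 0$, hence $\Delta_{k,x_i}(f)({G}^{\text{KS}}_{(m)},\vlambda)\neq 0$.

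The key step is to identify this substituted polynomial with $\Delta_{k,u}(f(G_{(m,i)}))$. Resultants are determinants of Sylvester matrices whose entries are coefficients in $x_i$, and substitution commutes with taking coefficients, derivatives in $x_i$, and determinants, provided the degrees in $x_i$ do not drop (otherwise the Sylvester matrix changes size). Since $f\in\cP(n,s,d)\subset\cCF(n,s,d)$, \Cref{G preserves degree} gives $\deg_u f(G_{(m,i)})=\deg_{x_i}f$ (with our $m$ at least $(nsd)^2$), so the leading coefficient survives; the derivative combination $\sum_j\lambda_j f^{(j)}$ has degree $\deg f-1$ with leading coefficient $\lambda_1\cdot\deg f\cdot\mathrm{LC}(f)$, which is also preserved, using $\deg f\le d<\operatorname{char}(\F)$ when the characteristic is positive. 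Hence $\Delta_{k,u}(f(G_{(m,i)}))=\Delta_{k,x_i}(f)({G}^{\text{KS}}_{(m)},\vlambda)\neq 0$, and applying \Cref{property of higher discriminants} again (over $\F(x,y,z,w)$, which has the same characteristic, with $\deg_u\le d$) shows $f(G_{(m,i)})$ has no irreducible factor of multiplicity $\ge k+1$ in $u$.

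I expect this degree-preservation and commutation step to be the only real obstacle. Everything else is a direct invocation of earlier lemmas.
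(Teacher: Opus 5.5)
Your proposal is correct and follows essentially the same route as the paper. The chain is the same: nonvanishing of $\Delta_{k,x_i}(f)$ via the higher-discriminant lemma, its sparsity bound, the hitting property of the generator, commutation of $\Delta_k$ with the substitution, and then the converse direction of the lemma. Two of your steps are more careful than the paper's. First, you extract a nonzero $\lambda$-coefficient $D$ before applying the hitting property. Second, you use degree preservation to justify $\Delta_{k,u}(f(G_{(m,i)}))=\Delta_{k,x_i}(f)(G_{(m,i)})$. The paper treats both loosely: it substitutes only some of the variables of a polynomial it describes as having $n+d-1$ variables, and it calls the commutation ``easy to see''.
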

    
\begin{proof}
    By \Cref{property of higher discriminants}, as $f$ does not  have an irreducible factor of multiplicity at least $k+1$ as a polynomial in $x_i$, we have that $\Delta_{k, x_i}(f) \neq 0$. By \Cref{higher discriminants preserve sparsity}, $\Delta_{k, x_i}(f)$ is a $\poly(d!{(ks)}^{d})$-sparse polynomial of bounded individual degree $2d^2$ on $n+d-1$ variables.
    Therefore, by \Cref{lem:first property of G}, for $m=((n+d)(d!{(ks)}^{d})(2d^2))^2=\poly(n,d!,{(ks)}^d)$, we know that $\Delta_{k, x_i}(f)({G}_{(m,i)})\neq 0$, where we substitute every coordinate $x_j$ for $j\neq i$ according to $G_{(m,i)}$ (leaving $\lambda_1,\ldots,\lambda_k$ alive).
    It is easy to see that $\Delta_{k,u}(f(G_{(m,i)})) = \Delta_{k, x_i}(f)(G_{(m,i)})$, so it follows
    that $\Delta_{k,u}(f(G_{(m,i)})) \neq 0$. The claim then follows from  the other direction of \Cref{property of higher discriminants}.
\end{proof}

We now apply the last lemma to prove the following weaker version of \Cref{G preserves coprimality of factors when one factor is sparse}.

\begin{lemma} \label{G preserves coprimality of factors when one factor is sparse using delta_k}
    Suppose $\mathbb{F}$ is a field for which $\operatorname{char}(\mathbb{F})\in \{0\} \cup [d+1,\infty)$.
    Let $\phi$ be an irreducible $s$-sparse polynomial of bounded individual degree $d$,
    and let $\psi$ be an irreducible factor of some $s$-sparse polynomial
    $g$ of bounded individual degree $d$, different than $\phi$, both defined over $\mathbb{F}$. Fix some $i$ for which $x_i\in \textrm{var}(\phi)$. Then, for 
    some explicit value of $m=\poly(n,d^{d^2},s^{d^3})$, we have that
    $\phi(G_{(m,i)})$ is coprime to $\psi(G_{(m,i)})$, when viewed as polynomials in $u$.
    Moreover, if $\phi$ is multiquadratic, taking $m=\poly(n,d^d,s^{d^2})$ actually suffices.
\end{lemma}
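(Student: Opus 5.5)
We use \Cref{G does not  increase multiplicity of largest factor} with an auxiliary sparse polynomial in which $\phi$ and $\psi$ occur with different multiplicities. Then a shared factor of $\phi(G_{(m,i)})$ and $\psi(G_{(m,i)})$ would create a factor of forbidden multiplicity after substitution.

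Let $t=v_\psi(g)\le d$, where $1\le t$ because $\psi\mid g$ and $\psi$ depends on $x_i$ (if $x_i\notin\var(\psi)$ the claim is trivial). Put
\[F:=\phi^{\,t+1}\cdot g .\]
Since $\phi\in\cP(n,s,d)$ and $\deg_{x_i}\phi\ge 1$, $\phi^{t+1}$ is $s^{t+1}\le s^{d+1}$-sparse. Its individual degree is at most $(t+1)d\le (d+1)d$, so $F$ is $s^{d+2}$-sparse with individual degree $D=O(d^2)$.

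View $F$ as a polynomial in $x_i$. Its irreducible factors depending on $x_i$ are $\phi$, with multiplicity $t+1+v_\phi(g)$, together with the irreducible factors of $g$ other than $\phi$. Each of the latter has multiplicity at most $d$, and $\psi$ has exactly $t$.

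The cleanest way to detect coprimality is to compare $F$ with the polynomial obtained by replacing $\phi^{t+1}$ by $\phi^{t}$. Suppose $\theta$ is an irreducible common factor of $\phi(G_{(m,i)})$ and $\psi(G_{(m,i)})$. Then in $F(G_{(m,i)})$ the multiplicity of $\theta$ is at least $(t+1)+v_\phi(g)+t\ge 2t+1$. Set $k:=\max(t+1+v_\phi(g),\,d)$. Every factor of $F$ depending on $x_i$ has multiplicity at most $k$, so $\Delta_{k,x_i}(F)\neq0$ by \Cref{property of higher discriminants}; this needs $\chr(\F)$ larger than $\deg_{x_i}F=O(d^2)$. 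The obstacle is that $2t+1$ need not exceed $k$.

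To fix this, raise the exponent on $\phi$ so that the gap between factor multiplicities exceeds any collision. Take $F=\phi^{a}g$ with $a=d+1$, so $k=a+v_\phi(g)\le 2d+1$. If $\theta$ divides both, its multiplicity in $F(G_{(m,i)})$ is at least $a+v_\phi(g)+t>k$. Hence $\Delta_{k,u}(F(G_{(m,i)}))=0$, while $\Delta_{k,x_i}(F)\ne0$. This contradicts \Cref{G does not  increase multiplicity of largest factor} applied to $F$ with sparsity $s^{O(d)}$, individual degree $O(d^2)$ and $k=O(d)$.

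Plugging into the bound $m=\poly(n,D!,(kS)^{D})$ gives $m=\poly(n,(d^2)!,s^{O(d^3)})$, i.e.\ $\poly(n,d^{d^2},s^{d^3})$. When $\phi$ is multiquadratic, $\phi^{a}$ has individual degree $2a=O(d)$, so $D=O(d)$ and $m=\poly(n,d^{d},s^{d^2})$.

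The main obstacle is the characteristic hypothesis. \Cref{property of higher discriminants} needs $\chr(\F)>\deg_{x_i}F$, but only $\chr(\F)>d$ is assumed. I would first check that the root-multiplicity argument survives: every irreducible factor of $F$ has degree at most $d$ in $x_i$, hence is separable when $\chr(\F)>d$. Its roots then have multiplicity exactly its exponent in $F$. The criterion ``$f^{(j)}(\alpha)=0$ for $j\le k$ iff multiplicity $>k$'' needs $k<\chr(\F)$, which can fail since $k\le 2d+1$. If it fails, I would reduce $k$ by splitting off $\phi$, taking $a=1$ and testing multiplicity threshold $k=\max(\text{multiplicities in } g\phi)\le d+1$. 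The case $\chr(\F)=d+1$ is not covered by this fix, and $p$-th power derivative issues could then make the argument break down.
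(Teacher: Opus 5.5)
Your main argument is the paper's proof. The paper also sets $\tilde g=\phi^{d+1}g$ and notes that $\phi$ is its factor of largest multiplicity, $d+1+v_\phi(g)$, since every other factor has multiplicity at most $d-v_\phi(g)$. It applies \Cref{G does not  increase multiplicity of largest factor} with that threshold and derives the contradiction from a common factor of $\phi(G_{(m,i)})$ and $\psi(G_{(m,i)})$, with the same sparsity and degree bookkeeping.

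Your worry about the characteristic is justified, and the paper's proof does not address it either. That lemma is stated for inputs whose individual degree is below $\operatorname{char}(\mathbb{F})$, but $\tilde g$ has individual degree $O(d^2)$. What the argument really needs is that every root multiplicity up to the threshold is below the characteristic, and this can fail. For example, if $\operatorname{char}(\mathbb{F})=p=d+1$, a root $\alpha$ of $\phi$ is a root of $\tilde g$ of multiplicity $r=d+1+v_\phi(g)\ge p$. Then $\tilde g^{(j)}(\alpha)=0$ for all $j\le r$; for $j=r$ this holds because $r!\equiv 0$. Hence $\Delta_{r,x_i}(\tilde g)$ vanishes identically. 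So with exponent $d+1$ the argument is valid only for $\operatorname{char}(\mathbb{F})=0$ or $\operatorname{char}(\mathbb{F})>2d$.

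Your patch with $a=1$ does not work, for a different reason. In $F=\phi g$ the largest multiplicity may belong to a third factor of $g$, while a collision only produces multiplicity $1+v_\phi(g)+t$, which need not exceed it. For instance, take $d=3$, $\phi\nmid g$, and $g=\psi\chi^2$ with $\psi,\chi$ of degree one in $x_i$: the maximum multiplicity is $2$ and a collision yields only $2$.

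The missing idea is to choose the exponent of $\phi$ so that $\phi$ still dominates but with multiplicity exactly $d$. Take $a=d-v_\phi(g)$, which is at least $1$ because $\psi$ depends on $x_i$, forcing $v_\phi(g)\le d-1$. In $F=\phi^{a}g$ every irreducible factor depending on $x_i$ has $x_i$-degree at most $d<p$, so it is separable, and its multiplicity is at most $d$. Hence each root $\alpha$ of $F$ has multiplicity $r\le d<p$. Writing $F=(x_i-\alpha)^r w$, we get $F^{(r)}(\alpha)=r!\,w(\alpha)\neq 0$, so $\Delta_{d,x_i}(F)\neq 0$.

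After substitution, a common factor $\theta$ of $\phi(G_{(m,i)})$ and $\psi(G_{(m,i)})$ would have multiplicity at least $d+t\ge d+1$ in $F(G_{(m,i)})$. Then $\theta$ divides every derivative of order at most $d$, which holds in any characteristic. So $\Delta_{d,u}(F(G_{(m,i)}))=0$, contradicting $\Delta_{d,x_i}(F)(G_{(m,i)})\neq 0$ for the usual choice of $m$.

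For the parameters, $F$ is $s^{d+1}$-sparse with individual degree $O(d^2)$, or $O(d)$ when $\phi$ is multiquadratic, and the threshold is $k=d$. So the bounds on $m$ are as stated, and the argument now covers every $\operatorname{char}(\mathbb{F})>d$, including $\operatorname{char}(\mathbb{F})=d+1$.
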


\begin{proof}
    Let $g=\prod{\psi} \cdot \phi^{k}$, each $\psi$ is irreducible not associate to $\phi$.
    From now on we view these polynomials as polynomials in $x_i$.

    Consider the polynomial $\tilde{g} = \phi^{d+1}g$. Its sparsity is at most $s^{d+2}$,
    its degree is at most $O(d^2)$, and $\phi$ divides it exactly $k+d+1$ times.
    As $\deg_{x_i}(g) \leq d$, every other irreducible factor of $\tilde{g}$ is of multiplicity at most $d-k < d+1$, so
    $\phi$ must be the factor of largest multiplicity of $\tilde{g}$.
    That is, $\tilde{g}$ has no factors of multiplicity more than $k+d+1$.
    By \cref{G does not  increase multiplicity of largest factor}, if
    $m=\poly(n,(d^2)!,{((k+d)s^{d+2})}^{d^2})=\poly(n,d^{d^2},s^{d^3})$
    the same is true for $\tilde{g}(G_{(m,i)})$ (as a polynomial in $u$).
    Thus, it must be the case that $\psi(G_{(m,i)})$ is prime to $\phi(G_{(m,i)})$ - 
    otherwise we would have had a factor of $\phi(G_{(m,i)})$ that appears
    at least $k+d+2$ times in the factorization of $\tilde{g}(G_{(m,i)})$ ($k+d+1$
    times that come from the $\phi$ factors of $\tilde{g}$, and one extra coming from the $\psi$ factor).

    As for the "moreover" part, note that if $\phi$ is multiquadratic then $\tilde{g}$ is of degree $O(d)$, so taking
    $m=\poly(n,d!,{((k+d)s^{d+2})}^{d})=\poly(n,d^{d},s^{d^3})$
    is enough.
\end{proof}

\begin{remark}
    In fact, the proof implies that $\phi(G_{(m,i)})$ is square free, since the generator $G$ preserves the non-vanishing of the discriminant $\Delta(\phi)$.
\end{remark}

\end{document}